\documentclass{article}
\usepackage[utf8]{inputenc}
\usepackage[margin=1in]{geometry}
\usepackage{complexity}
\usepackage{mathtools,amsmath,amssymb}
\usepackage{scrextend}
\usepackage{color,subfig}

\usepackage{footnote}
\newcommand{\astfootnote}[1]{
\let\oldthefootnote=\thefootnote
\setcounter{footnote}{0}
\renewcommand{\thefootnote}{\fnsymbol{footnote}}
\footnote{#1}
\let\thefootnote=\oldthefootnote
}

\usepackage{comment}
\usepackage{todonotes}
\usepackage{xspace}
\usepackage{multirow}
\usepackage{xcolor}
\RequirePackage[colorlinks=true]{hyperref}
\hypersetup{
	linkcolor=[rgb]{0,0,0.5},
	citecolor=[rgb]{0, 0.5, 0},
	urlcolor=[rgb]{0.5, 0, 0}
}
\usepackage{dsfont}

\usepackage{tikz}
\usetikzlibrary{positioning}
\usetikzlibrary{shapes,arrows}
\usetikzlibrary{shadows,arrows}
\newcommand*\circled[1]{\scalebox{.7}{\tikz[baseline=(char.base)]{\node[shape=circle,draw,inner sep=1.5pt] (char) {#1};}}}
\usepackage{tikzit}

\tikzstyle{Theorem}=[fill={red!5}, draw=red, shape=rectangle, text width=3.2cm, align=center, text=red]
\tikzstyle{Step}=[fill=white, shape=rectangle, text width=3cm, align=center]
\tikzstyle{Lemma}=[fill={blue!5}, draw=blue, shape=rectangle, text width=3.2cm, align=center, text=blue]


\usepackage{amsthm}
\usepackage{thmtools,thm-restate}

\numberwithin{equation}{section}
\declaretheoremstyle[bodyfont=\it,qed=\qedsymbol]{noproofstyle}

\declaretheorem[name=Observation,numbered=no]{observation*}

\declaretheorem[numberlike=equation]{problem}

\declaretheorem[numberlike=equation]{theorem}

\declaretheorem[name=Theorem,numbered=no]{theorem*}

\declaretheorem[numberlike=equation]{lemma}
\declaretheorem[name=Lemma,numbered=no]{lemma*}

\declaretheorem[numberlike=equation]{corollary}
\declaretheorem[name=Corollary,numbered=no]{corollary*}

\declaretheorem[name=Proposition,numbered=no]{proposition*}

\declaretheorem[numberlike=equation]{claim}
\declaretheorem[name=Claim,numbered=no]{claim*}

\declaretheorem[name=Conjecture,numbered=no]{conjecture*}

\declaretheorem[name=Question,numbered=no]{question*}

\declaretheoremstyle[bodyfont=\it]{defstyle} 

\declaretheorem[numberlike=equation,style=defstyle]{definition}
\declaretheorem[unnumbered,name=Definition,style=defstyle]{definition*}

\declaretheorem[unnumbered,name=Example,style=defstyle]{example*}

\declaretheorem[unnumbered,name=Notation=defstyle]{notation*}

\declaretheorem[unnumbered,name=Construction,style=defstyle]{construction*}

\declaretheoremstyle[]{rmkstyle}


\newcommand{\N}{\mathbb{N}}
\newcommand{\Nz}{\mathbb{N}_{\geq0}}
\newcommand{\Real}{\mathbb{R}}
\newcommand{\Exp}{\mathbb{E}}
\newcommand{\Var}{\mathsf{Var}}

\newcommand{\gap}{\textsf{gap}}
\newcommand{\diag}{\textsf{diag}}

\newcommand{\bA}{\mathbf{A}}

\newcommand{\be}{\mathbf{e}}
\newcommand{\bff}{\mathbf{f}}
\newcommand{\bw}{\mathbf{w}}

\newcommand{\bv}{\mathbf{v}}
\newcommand{\bx}{\mathbf{x}}
\newcommand{\bz}{\mathbf{z}}

\makeatletter
\def\moverlay{\mathpalette\mov@rlay}
\def\mov@rlay#1#2{\leavevmode\vtop{%
   \baselineskip\z@skip \lineskiplimit-\maxdimen
   \ialign{\hfil$\m@th#1##$\hfil\cr#2\crcr}}}
\newcommand{\charfusion}[3][\mathord]{
    #1{\ifx#1\mathop\vphantom{#2}\fi
        \mathpalette\mov@rlay{#2\cr#3}
      }
    \ifx#1\mathop\expandafter\displaylimits\fi}
\makeatother

\def\mycoauthor{1} 

\ifx\mycoauthor\undefined
\author{
Chi-Ning Chou\thanks{School of Engineering and Applied Sciences, Harvard University, Cambridge, Massachusetts, USA. Supported by NSF awards CCF 1565264 and CNS 1618026. Email: \texttt{chiningchou@g.harvard.edu}.}
\and Mien Brabeeba Wang\thanks{CSAIL, MIT, Cambridge, Massachusetts, USA. Email: \texttt{brabeeba@mit.edu}.}
}
\else
\author{
Chi-Ning Chou~\footnotemark[2]~\footnotemark[3]
\and Mien Brabeeba Wang~\footnotemark[2]~\footnotemark[4]
}  
\fi

\title{ODE-Inspired Analysis for the Biological Version of Oja's Rule in Solving Streaming PCA\footnote{Accepted for presentation at the Conference on Learning Theory (COLT) 2020.}}

\begin{document}

\maketitle

\ifx\mycoauthor\undefined
\else
\renewcommand{\thefootnote}{\fnsymbol{footnote}}
\footnotetext[2]{The two authors contribute equally to this paper.}
\footnotetext[3]{School of Engineering and Applied Sciences, Harvard University, Cambridge, Massachusetts, USA. Supported by NSF awards CCF 1565264 and CNS 1618026. Email: \texttt{chiningchou@g.harvard.edu}.}
\footnotetext[4]{CSAIL, MIT, Cambridge, Massachusetts, USA. Supported by NSF Awards CCF-1810758, CCF-0939370, CCF-1461559 and Akamai Presidential Fellowship. Email: \texttt{brabeeba@mit.edu}.}
\renewcommand{\thefootnote}{\arabic{footnote}}
\fi

\begin{abstract}
Oja's rule [Oja, Journal of mathematical biology 1982] is a well-known biologically-plausible algorithm using a Hebbian-type synaptic update rule to solve streaming principal component analysis (PCA). Computational neuroscientists have known that this biological version of Oja's rule converges to the top eigenvector of the covariance matrix of the input in the limit. 
However, prior to this work, it was open to prove any convergence rate guarantee.

In this work, we give the first convergence rate analysis for the biological version of Oja's rule in solving streaming PCA. Moreover, our convergence rate matches the information theoretical lower bound up to logarithmic factors and outperforms the state-of-the-art upper bound for streaming PCA.
Furthermore, we develop a novel framework inspired by ordinary differential equations (ODE) to analyze general stochastic dynamics. The framework abandons the traditional \textit{step-by-step} analysis and instead analyzes a stochastic dynamic in \textit{one-shot} by giving a closed-form solution to the entire dynamic. The one-shot framework allows us to apply stopping time and martingale techniques to have a flexible and precise control on the dynamic. We believe that this general framework is powerful and should lead to effective yet simple analysis for a large class of problems with stochastic dynamics.
\end{abstract}

\newpage
\tableofcontents
\newpage

\section{Introduction}
Brains processes high dimensional visual inputs constantly. In our eyes, 100 millions photoreceptors in the retina receive gigabytes of information per second \cite{Wandell1995, Snyder1977}. In addition, the retina is a highly convergent pathway: $100$ million photoreceptors converges the visual information onto one million retina ganglion cells in optical nerves~\cite{Ganguli2012}. Therefore, it is important to understand a neural implementation of the dimensionality reduction in the retina. Furthermore, many works in theoretical neuroscience~\cite{Atick1990, Atick1992, Hemmen1998} demonstrated from the efficient coding principle that the retina might implement Principal Component Analysis (PCA). Specifically, they showed that under natural image statistics, PCA-like solution recovers the center-surround receptive fields in the retina. However, their work only proposed PCA as an potential solution to the pathway and did not provide a dynamic to explain the learning process of PCA.

On the other hand, in the seminal work of~\cite{Oja82}, he proposed a mathematical model for the biological neural network that solves streaming PCA with several biologically-plausible properties: the network not only updates its synaptic weights locally but also normalizes the strength of synapses. This rule, now known as the biological version of Oja's rule (\textit{biological Oja's rule}\footnote{Also known as \textit{Oja's rule} in the literature. However, many works in the machine learning community use the name ``Oja's rule'' for \textit{non-biologically-plausible} variants of the original Oja's rule. Thus, in this paper we emphasize the term ``\textit{biological}'' to distinguish the two. See~\autoref{sec:related works} for more discussions on their differences.} in abbreviation), has been the subject of extensive theoretical~\cite{Oja82,OK85,Sanger89,HKP91,Oja92,Plumbley1995,DK96,Zufiria02,YYLT05,Duflo13,ACS13} and experimental~\cite{CL94,KDT94,HP94,Karayiannis96,CKS96,SLY06,SA06,LTYH09,ACS13} studies aimed at understanding its performance. Despite its popularity, the theoretical understanding of the biological Oja's rule cannot account for the biologically-realistic time scale in the retina-optical nerve pathway because the state-of-the-art theoretical analysis only provides a guarantee on convergence in the limit \cite{Duflo13}.

In practice, the retina can change its receptive field to adapt to environments with different illumination \cite{Shapley1984}, contrast \cite{Shapley1984, Baccus2002, Smirnakis1997}, spatial frequency \cite{Smirnakis1997, Hosoya2005}, orientation and temporal correlation \cite{Hosoya2005} in the time scale of seconds. This suggests that a plausible dynamic for explaining the retina-optical nerve pathway should have little or no dependency on the dimension, \textit{i.e.,} the number of neurons, which in this case is on the order of $100$ million.
Meanwhile, researchers have observed that the biological Oja's rule (and its variants) has fast convergence rates~\cite{HP94,Karayiannis96,SLY06,SA06,LTYH09} in simulations. 
Thus, to further our understanding in the retina-optical nerve pathway, it is important to give a theoretical analysis to show that the biological Oja's rule solves streaming PCA in a biologically-realistic time scale. This is nevertheless a challenging task and has remained elusive for almost 40 years~\cite{Oja82}.\\

In this work, we provide the first convergence rate analysis for the biological Oja's rule in solving streaming PCA.

\begin{theorem}[informal]
The biological Oja's rule efficiently solves streaming PCA with (nearly) optimal convergence rate. Specifically, the convergence rate we obtain matches the information theoretical lower bound up to logarithmic factors. 

Furthermore, the convergence rate has no dependency on the dimension when the initial weight vector is close to the top eigenvector or has a logarithmic dependency on the dimension when the initial vector is random. Therefore, the biological Oja's rule solves streaming PCA in a biologically-realistic time scale.
\end{theorem}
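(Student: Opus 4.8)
The plan is to replace step-by-step tracking of $w_t$ by a closed-form description of a single scalar potential, chosen so that its deterministic skeleton obeys exactly the linear ODE hiding inside Oja's rule. Diagonalize the input second-moment matrix as $A=\sum_{i=1}^d\lambda_i v_iv_i^\top$ with $\lambda_1>\lambda_2\ge\cdots\ge\lambda_d$ and gap $\gap=\lambda_1-\lambda_2$, and expand $w_t=\sum_i z_{t,i}v_i$. With $y_t=\langle w_t,x_t\rangle$ and $Q_t=w_t^\top A w_t$ the update reads, coordinatewise, $z_{t+1,i}=(1-\eta_t y_t^2)\,z_{t,i}+\eta_t y_t\langle x_t,v_i\rangle$, so $\Exp[z_{t+1,i}\mid\mathcal F_t]=A_{t,i}\,z_{t,i}$ with $A_{t,i}:=1+\eta_t(\lambda_i-Q_t)$. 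The point where the ODE intuition bites is that the normalization $Q_t$ is \emph{common to all coordinates} and therefore cancels in the ratios $r_{t,i}:=z_{t,i}/z_{t,1}$, whose drift is multiplication by $A_{t,i}/A_{t,1}=1-\eta_t(\lambda_1-\lambda_i)/A_{t,1}\le 1-\Theta(\eta_t\gap)$ — the discrete analogue of the $e^{-(\lambda_1-\lambda_i)t}$ contraction in the noiseless ODE $\dot w=Aw-(w^\top Aw)w$. I will therefore work with $\Phi_t:=\sum_{i\ge2}r_{t,i}^2=\|w_t-z_{t,1}v_1\|^2/z_{t,1}^2$, noting that $\sin^2\theta_t=\Phi_t/(1+\Phi_t)\le\Phi_t$, so an upper bound on $\Phi_t$ suffices.

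\textbf{One-shot unrolling.} Next I would unroll the recursion once and for all: writing $P_{t,i}=\prod_{s<t}A_{s,i}$ and $\xi_{r,i}=z_{r+1,i}-\Exp[z_{r+1,i}\mid\mathcal F_r]$ (a martingale difference), $z_{t,i}=P_{t,i}z_{0,i}+\sum_{r<t}\xi_{r,i}\prod_{r<s<t}A_{s,i}$. Dividing by $P_{t,1}$, the numerator of $r_{t,i}$ becomes $z_{0,i}\prod_{s<t}(A_{s,i}/A_{s,1})$ — carrying the contraction factor $\exp(-\Theta(\gap)\sum_{s<t}\eta_s)$ — plus a weighted martingale sum, and the denominator becomes $z_{0,1}$ plus a weighted martingale sum. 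The remaining work is to show the ``noise'' terms are lower order. For this I would bound the conditional variances of the $\xi_{r,i}$ (they are of order $\eta_r^2$ times a fourth-moment parameter $\sigma^2$ of the data times $\|w_r\|^2$), feed them into a Freedman/Azuma-type bound with the trajectory-dependent product weights, and show that with a step size that is constant over a short warm-up and then decays like $\eta_s\asymp1/(\gap s)$ the accumulated noise in $\|w_t-z_{t,1}v_1\|$ is $\tilde O(\sigma^2/(\gap^2 t))$ while $z_{t,1}$ stays bounded below by a constant (here using the self-normalizing property of the rule to keep $\|w_t\|=\Theta(1)$ throughout). Since $\Phi_t$ is a ratio of two stochastic quantities, all of this has to be run inside a stopping-time bootstrap: let $\tau$ be the first time $z_{t,1}^2$ drops below its floor or $\Phi_t$ leaves its target envelope; show the estimates above hold on $\{t\le\tau\}$ up to a small failure probability, then use the martingale tails to prove $\Pr[\tau<\infty]$ is tiny, upgrading per-step control to a bound that holds uniformly in $t$ with high probability.

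\textbf{Initialization and optimality.} For a warm start ($w_0$ close to $v_1$) one has $\Phi_0=O(1)$, no union bound over coordinates is incurred, and the $1/t$ regime applies from the start, giving a dimension-free rate $\tilde O(\sigma^2/(\gap^2 t))$. For a random unit $w_0$, with high probability $z_{0,1}^2=\tilde\Theta(1/d)$ and $\Phi_0=\tilde O(d)$; a warm-up of length $T_0$ with constant-order step size contributes a factor $\exp(-\Theta(\gap)\,T_0)$, so $T_0=\Theta(\log d/\gap)$ steps drive $\Phi$ down to $O(1)$ before the polynomial-rate phase, i.e.\ only a logarithmic dimension dependence. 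Finally, matching the information-theoretic lower bound amounts to checking that the resulting $\tilde O(\sigma^2/(\gap^2 T))$ bound coincides, up to logarithmic factors, with the known $\Omega(\sigma^2/(\gap^2 T))$ lower bound for streaming PCA.

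\textbf{Main obstacle.} The hard part is the coupled control of numerator and denominator of $\Phi_t$ over the infinite horizon: I need the martingale noise injected into $\|w_t-z_{t,1}v_1\|$ to stay negligible against the exponentially-then-polynomially decaying signal \emph{while} simultaneously certifying that $z_{t,1}$ never collapses, and both uniformly in $t$ — which is precisely why the closed-form one-shot expression plus stopping-time and martingale machinery is used in place of a classical drift argument. Pinning down the constants in the step-size schedule and the warm-up length so that the final rate is optimal up to only logarithmic factors is the piece I expect to demand the most care.
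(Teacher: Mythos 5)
Your route is genuinely different from the paper's, and it is worth comparing. The paper tracks the scalar $\bw_{t,1}^2$ (equivalently $1-\bw_{t,1}^2$), obtains a recursion $X_t = H X_{t-1} + A_t + B_t$ in which $H$ is a \emph{deterministic} constant obtained by linearizing the drift at a fixed point ($0$ in the global phase, $1$ in the local phase), absorbs the curvature and all state-dependence into the one-sided slack $B_t$, and then unrolls $X_T = H^T(X_0 + \sum_t H^{-t}(A_t+B_t))$ so that the weighted noise sum $\sum_t H^{-t}A_t$ is a genuine martingale with deterministic weights — this is what makes the pull-out lemma and maximal Freedman apply cleanly. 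You instead track the ratio potential $\Phi_t = \sum_{i\ge2} (z_{t,i}/z_{t,1})^2$ and unroll with the \emph{random} products $P_{t,i}=\prod_s A_{s,i}$, which is closer in spirit to the ML Oja's rule analyses (Jain et al., Allen-Zhu--Li) where the iterate really is a product and the normalization is deferred. That buys you a simpler-looking deterministic skeleton, but loses the clean martingale structure, because once you divide by $P_{t,1}$ the weights on the $\xi_{r,i}$ are themselves $\mathcal F_{t-1}$-measurable random variables. The paper's choice of a constant $H$ (and the constant step size that comes with it) is precisely what avoids this, and it also flags why the $Q_t$ ``cancellation'' you invoke holds only in conditional expectation, not pathwise: the update $z_{t+1,i}=(1-\eta y_t^2)z_{t,i}+\eta y_t\langle x_t,v_i\rangle$ is \emph{not} a product in $z$, so $r_{t+1,i}$ is not $A_{t,i}/A_{t,1}\cdot r_{t,i}$ plus a martingale increment in any almost-sure sense — this is exactly the structural difference between the biological and ML updates that the paper's Appendix~C discusses.

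The more serious gap is in global convergence. Your variance estimate ``$\eta_r^2\cdot\sigma^2\cdot\|w_r\|^2$'' and the Freedman step silently assume the per-step bounded difference and conditional variance shrink with $z_{t,1}$. In the random-initialization regime they do not: almost surely $|y_t|\le\|\bw_{t-1}\|\,\|\bx_t\|=O(1)$, which relative to $|\bw_{t,1}|=\Theta(1/\sqrt n)$ is $O(\sqrt n\,|\bw_{t,1}|)$. Feeding the $O(\eta)$ worst-case increment for $z_{t,1}$ (or the corresponding noise in your numerator) into Azuma/Freedman over the $T_0 = \Theta(\log n/(\gap\eta))$ warm-up steps forces $\eta$ (and hence $T$) to carry a polynomial factor in $n$, contradicting the claimed logarithmic dimension dependence. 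The paper's fix is an entire separate argument (Section~7 and the stopping time $\xi_{p,\delta}$ built from the auxiliary processes $f_{t,j}(\bw)=\sum_{i=2}^{j}\bx_{t,i}\bw_i/\bw_1$) that upgrades the almost-sure bound to $|y_t| = O(\Lambda|\bw_{t,1}|)$ with $\Lambda = O(\sqrt{\log(nT/\delta)})$ on a high-probability event, so that the noise scales like $\eta\Lambda\bw_{t,1}^2$ (proportional to the signal) rather than $\eta\bw_{t,1}$. Without something playing that role, your ``stopping-time bootstrap'' cannot close: the envelope you want to certify for $z_{t,1}$ depends on a bounded-difference condition you have not established. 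For the local regime your sketch is essentially sound and closer to the paper's Section~6, modulo the same caveat about random weights in the unrolled sum; for the global regime the missing cross-term control is a genuine hole, not a detail.
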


To show the (nearly) optimal convergence rate of biological Oja's rule in solving streaming PCA, we develop an ODE-inspired framework to analyze stochastic dynamics. 
Concretely, instead of the traditional \textit{step-by-step} analysis, our framework analyzes a dynamical system in \textit{one-shot} by giving a closed-form solution for the entire dynamic. The framework borrows ideas from ordinary differential equations (ODE) and stochastic differential equations (SDE) to obtain a closed-form characterization of the dynamic and uses stopping time and martingale techniques to precisely control the dynamic. This framework provides a more elegant and more general analysis compared with the previous step-by-step approaches. We believe that this novel framework can provide simple and effective analysis on other problems with stochastic dynamics.\\

We organize the rest of the introduction as follows. We first formally define biological Oja's rule and streaming PCA in~\autoref{sec:def} and state the main results and their biological relevance in~\autoref{sec:result}. In~\autoref{sec:technical overview}, we provide a technical overview on the proof and the analysis framework. Finally, we conclude the introduction with  a survey and comparison of related works in~\autoref{sec:related works}.

\subsection{Biological Oja's rule and streaming PCA}\label{sec:def}
In a biological neural network, two neurons primarily interact with each other via action potentials or instantaneous signals, \textit{a.k.a.}, "spikes", through \textit{synapses} between them.
The strength of a synapse might vary from time to time and is called the \textit{synaptic weight}.
The ability of a synaptic weight to strengthen or weaken over time is considered as a source for learning and long term memory in our brains.
While generally the update of a synaptic weight could depend on the \textit{spiking patterns} of the end neurons, it is common for neuroscientists to focus on the averaging behaviors of a spiking dynamic. Namely, they simplify the model by only considering the \textit{firing rate}, which is defined as the average number of spikes. This is known as the \textit{rate-based model}~\cite{WC72,WC73} and since the biological Oja's rule was defined on a rate-based model, this setting is going to be the focus of this work.

To understand how the biological Oja's rule works, consider the following baby example with two neurons $x$ and $y$.
Let $x_t,y_t\in\Real$ be the firing rates of neurons $x,y$ at time $t\in\N$ and let $w_t\in\Real$ be the synaptic weight from $x$ to $y$ at time $t$. In a biological neural network, $w_t$ could change over time and the dynamic is defined \textit{locally} on the previous synaptic weight as well as the firing rates of the end neurons. Namely, the synaptic weight from the neuron $x$ to $y$ has the following dynamic
\[
w_t = w_{t-1} + \eta_tF_t(w_{t-1},x_t,y_t)
\]
where $F_t$ is an update function and $\eta_t$ is the \textit{plasticity coefficient}, \textit{a.k.a.,} the \textit{learning rate}. Biologically, the update function should further follow the Hebb postulate,``cells that fire together wire together"~\cite{Hebb49}. One naive way to implement Hebbian learning is to set the update function as $F_t(w_{t-1},x_t,y_t)=x_ty_t$. However, the values of $w_t$ can grow unboundedly. The biological Oja's rule is a self-normalizing Hebbian rule with the following synaptic updates.
\[
w_t = w_{t-1} + \eta_ty_t\left(x_t-y_tw_{t-1}\right) \, .
\]
Using the above synaptic update rule, Oja~\cite{Oja82} configured a network that solves streaming PCA while keeping the norm of the weights stable. Before introducing the network, let us formally define the streaming PCA problem.

\paragraph{Streaming PCA.}
Principal component analysis (PCA)~\cite{Pearson01,Hotelling33} is a problem to find the top eigenvector of a covariance matrix of a dataset. Let $n$ be the dimension of the data. In the offline setting, one can compute the covariance matrix in $O(n^2)$ space and use the power method to approximate the top eigenvector.
As for its variant, the streaming PCA (a.k.a. the stochastic online PCA, see~\cite{CG90} for a survey on the literature), the input data arrives in a stream and the algorithm/dynamic only has limited amount of space, \textit{e.g.,} $O(n)$ space.
Streaming PCA is important for biological system because the information inherently arrives in a stream in a living system. On the other hand, it is also much more challenging than offline PCA (see for example~\cite{AL17}).
In the following, we formally define the streaming PCA problem.~\footnote{In related works, some (\textit{e.g.,}~\cite{AL17}) measure the error using $1-\langle\bw,\bv_1\rangle^2$, some (\textit{e.g.,}~\cite{Shamir16}) use $1-\bw^\top A\bw/\|A\|$, and some (\textit{e.g.,}~\cite{JJKNS16}) use $\sin^2(\bw,\bv_1)$. We remark that all of these error measures (including ours) are the same up to a constant multiplicative factor.

Also, some works emphasize other convergence notions such as the gap-free convergence~\cite{Shamir16}. Though we do not explicitly study the convergence of biological Oja's rule under these notions, we believe that our results could be easily extended to other convergence notions with comparable convergence rate and leave this for future work.}

\begin{problem}[Streaming PCA]\label{prob:streaming PCA}
Let $n,T\in\N$ and $\mathcal{D}$ be a distribution over the unit sphere of $\Real^n$. Suppose the input data $\bx_1,\bx_2,\dots,\bx_{T}\stackrel{\text{i.i.d.}}{\sim}\mathcal{D}$ are given one by one in a stream.
Let $A = \Exp_{\bx\sim\mathcal{D}}[\bx\bx^\top]$ be the covariance matrix and $\lambda_1\geq\lambda_2\geq\cdots\geq\lambda_n\geq0$ be the eigenvalues of $A$. Assume $\lambda_1>\lambda_2$ and let $\bv_1$ be the top eigenvector of $A$ of unit length. Then the goal of the streaming PCA problem is to output $\bw\in\Real^n$ such that $\frac{\langle\bw,\bv_1\rangle^2}{\|\bw\|_2^2}\geq1-\epsilon$.
\end{problem}

Since the inputs arrive in a stream, usually a streaming PCA algorithm/dynamic would maintain a solution $\bw_t\in\Real^n$ at each time $t\in\N$. Thus, the goal for a streaming PCA algorithm/dynamic would be achieving $\Pr\left[\frac{\langle\bw_T,\bv_1\rangle^2}{\|\bw\|_2^2}\geq1-\epsilon\right]\geq1-\delta$ with small $T$.

\paragraph{Biological Oja's rule in solving streaming PCA.}
Oja~\cite{Oja82} proposed a streaming PCA algorithm using $n$ input neurons and one output neuron. The firing rates of the input neurons at time $t$ are denoted by a vector $\bx_t\in\Real^n$ and the firing rate of the output neuron is denoted by a scalar $y_t\in\Real$. The synaptic weights at time $t$ from the input neurons to the output neuron are denoted by a vector $\bw_t\in\Real^n$. Note that the weight vector will be the output and ideally it will converge to the top eigenvector $\bv_1$.

The input stream $\bx_1,\bx_2,\dots,\bx_{T}$ arrives in the form of firing rates of the input neurons.
The firing rate of the output neuron is simply the inner product of the synaptic weight vector and the firing rate vector of the input neurons, \textit{i.e.,} $y_t=\bx_t^\top\bw_{t-1}$.
Now, from the biological Oja's rule, the dynamic of the synaptic weight vector is described by the following equation.

\begin{definition}[Biological Oja's rule]\label{def:bio oja}
For any initial vector $\bw_0\in\Real^n$ such that $\|\bw_0\|_2=1$, the dynamic of the biological Oja's rule is the following. For any $t\in\N$, define 
\begin{equation}\label{eq:oja}
\bw_t = \bw_{t-1} + \eta_ty_t\left(\bx_{t}-y_t\bw_{t-1}\right)
\end{equation}
where $y_t=\bx_t^\top\bw_{t-1}$ and $\bx_t$ is the input at time $t$. See also in~\autoref{fig:bio oja} for a pictorial definition of biological Oja's rule in solving streaming PCA.
\end{definition}

\begin{figure}[ht]
\centering
 \includegraphics[width=8cm]{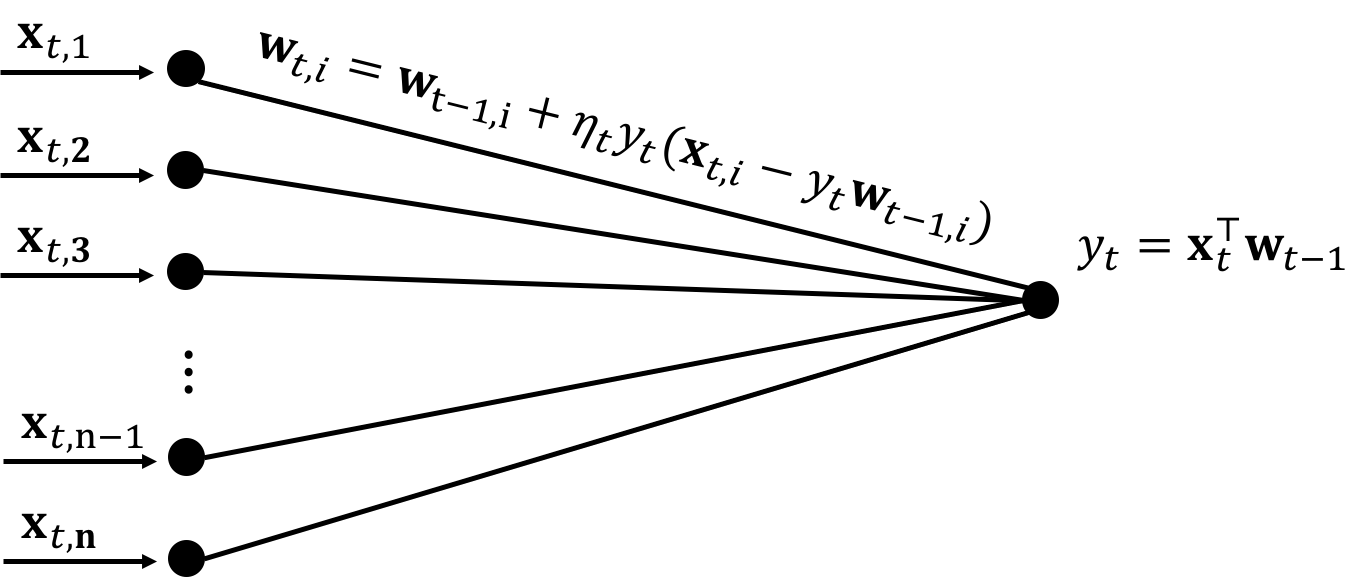}
\caption{A neural network that uses biological Oja's rule to solve streaming PCA. The firing rate vector $\bx_t$ is the input and the weight vector $\bw_t$ is the output at time $t$.}
\label{fig:bio oja}
\end{figure}

Follow from the definition, the biological Oja's rule is automatically \textit{biologically-plausible} in the following sense. First, the synaptic update rule is \textit{local}. Namely, each synapse only depends on the previous synaptic weight and the firing rates of the two end neurons. Second, with some simple calculations (\textit{e.g.,}~\autoref{lem:oja w norm ub}), biological Oja's rule achieves the \textit{synaptic scaling guarantee}~\cite{Abbott2000}, \textit{i.e.,} $\bw_{t,i}$ being bounded for all $t\in\N$ and $i\in[n]$. 
Thus, one can then interpret the convergence results of this work as showing further biological-plausibilities of the biological Oja's rule in the retina-optical nerve pathway. See~\autoref{sec:result} for more discussions.

\paragraph{Oja's derivation for the biological Oja's rule.}
Before going into more technical contents, it would be helpful to take a look at the original derivation for the biological Oja's rule. Initially, Oja wanted to use the following update rule with normalization\footnote{This update rule is doing a variant of power method with normalization. It is widely used in the machine learning community to solve streaming PCA. See~\autoref{sec:related works} for more discussion.} to solve the streaming PCA problem.
\begin{equation}\label{eq:oja normalized}
\bw_t = \frac{\left(I+\eta_t\bx_t\bx_t^\top\right)\bw_{t-1}}{\|\left(I+\eta_t\bx_t\bx_t^\top\right)\bw_{t-1}\|_2} \, .
\end{equation}
However, the normalization term $\|\left(I+\eta_t\bx_t^\top\bx_t\right)\bw_{t-1}\|_2^{-1}$ is \textit{global}\footnote{It is global because computing the $\ell_2$ norm requires the information from \textit{every} neurons.} and does not seem to have a biologically-plausible implementation. To bypass this issue, Oja applied \textit{Taylor's expansion} on the normalization term and truncated the second order terms of $\eta_t$. This exactly results in the biological Oja's rule (\textit{i.e.,}~\autoref{eq:oja}). See~\autoref{sec:oja derivation} for more details on the derivation.

Also, to see why intuitively biological Oja's rule could solve streaming PCA, one can check that any eigenvector $\bv$ of $A$ of unit length with eigenvalue $\lambda$ is a fixed point of the biological Oja's rule in expectation. 
Specifically, the expectation of the update term $y_t(\bx_t-y_t\bw_{t-1})$ with $\bw_{t-1}=\bv$ is the following.
\[
\Exp\left[\bx_t^\top\bv\bx_t-(\bx_t^\top\bv)^2\bv\right] = A\bv- \bv^\top A\bv\bv = \lambda\bv-\lambda\|\bv\|_2^2\bv=0 \, .
\]
The first equality follows from for all $i,j\in[n]$, $\Exp[\bx_{t,i}\bx_{t,j}]=\lambda_i\cdot\mathbf{1}_{i=j}$, and the second equality follows from $A\bv=\lambda\bv$. By checking the Hessian at the top eigenvector $\bv_1$, one can even see that $\bv_1$ is a \textit{stable} fixed point.

\paragraph{Previous works: Convergence in the limit results.}
There were many previous works on analyzing the convergence of biological Oja's rule in solving streaming PCA~\cite{Oja82,OK85,Sanger89,HKP91,Oja92,Plumbley1995,DK96,Zufiria02,YYLT05,Duflo13}. However, their works only proved guarantee on convergence in the limit. For example, Duflo~\cite{Duflo13} showed that $\bw_t$ converges to the top eigenvector of $A$ in the limit under some constraints on the learning rates.
\begin{theorem}[\cite{Duflo13}, informal]
Let $\bw_0$ be a random unit vector in $\Real^n$. If $\eta_{t}\leq\frac{1}{2}$ for all $t\in\N$, $\sum_{t=0}^\infty\eta_t=\infty$, and $\sum_{t=0}^\infty\eta_t^2<\infty$, then $\lim_{t\rightarrow\infty}\langle\bw_t,\bv_1\rangle^2=1$ almost surely.
\end{theorem}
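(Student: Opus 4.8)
The plan is to recognize \autoref{eq:oja} as a \emph{stochastic approximation} (Robbins--Monro) scheme and to invoke the ODE method. Writing the update as $\bw_t=\bw_{t-1}+\eta_t\,G(\bw_{t-1},\bx_t)$ with $G(\bw,\bx)=(\bx^\top\bw)\bigl(\bx-(\bx^\top\bw)\bw\bigr)$, its conditional mean is the \emph{Oja vector field}
\begin{equation*}
h(\bw):=\Exp_{\bx\sim\mathcal{D}}\bigl[G(\bw,\bx)\bigr]=A\bw-(\bw^\top A\bw)\,\bw,
\end{equation*}
which is tangent to every sphere centered at the origin. The associated flow $\dot\bw=h(\bw)$ has equilibrium set $\{0\}\cup\{\pm\bv_1,\dots,\pm\bv_n\}$, and linearizing $h$ on the tangent space at $\bv_i$ gives eigenvalues $\lambda_j-\lambda_i$ for $j\neq i$; since $\lambda_1>\lambda_2\ge\cdots$, the only linearly stable equilibria are $\pm\bv_1$, while each $\pm\bv_i$ with $i\ge2$ is a saddle whose unstable direction is exactly $\bv_1$. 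The conclusion $\langle\bw_t,\bv_1\rangle^2\to1$ is precisely the assertion that $\bw_t\to\pm\bv_1$.

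Concretely I would proceed in four steps. (i) \textbf{Compactness.} The norm bound for biological Oja's rule (\autoref{lem:oja w norm ub}) keeps $\|\bw_t\|_2$ bounded away from $\infty$, and expanding \autoref{eq:oja} gives $\|\bw_t\|_2^2-1=(\|\bw_{t-1}\|_2^2-1)(1-2\eta_t y_t^2)+\eta_t^2\,y_t^2\|\bx_t-y_t\bw_{t-1}\|_2^2$, a recursion that contracts toward $1$ once the direction of $\bw_t$ has aligned with $\bv_1$; together with $\sum_t\eta_t^2<\infty$ this yields $\|\bw_t\|_2\to1$ almost surely, so it suffices to track the direction. (ii) \textbf{Noise.} Split $G(\bw_{t-1},\bx_t)=h(\bw_{t-1})+M_t$ with $M_t$ a bounded martingale-difference sequence (bounded because $\|\bx_t\|_2\le1$ and the iterates are bounded); since $\sum_t\eta_t^2<\infty$, the $L^2$-bounded martingale $\sum_t\eta_t M_t$ converges almost surely, which is the hypothesis making $(\bw_t)$ an asymptotic pseudotrajectory of $\dot\bw=h(\bw)$. (iii) \textbf{Lyapunov descent.} Take $V(\bw)=1-\langle\bw,\bv_1\rangle^2/\|\bw\|_2^2$; a direct computation gives, on the unit sphere, $\langle\nabla V(\bw),h(\bw)\rangle\le-2(\lambda_1-\lambda_2)\,V(\bw)\bigl(1-V(\bw)\bigr)$, strictly negative away from $\pm\bv_1$ and the saddles. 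Feeding the descent inequality plus the summable quadratic error into the Robbins--Siegmund almost-supermartingale theorem shows $V(\bw_t)$ converges almost surely and that its limit set lies in the equilibrium set of $h$; because $\sum_t\eta_t=\infty$, the drift acts for ``infinite time'', so the limit cannot be a value strictly between the minimum $0$ and the saddle value, i.e.\ $\bw_t$ converges either to $\pm\bv_1$ or to some $\pm\bv_i$ with $i\ge2$.

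The crux is step (iv): ruling out convergence to the saddles $\pm\bv_i$, $i\ge2$. The deterministic flow does not do this on its own, since a trajectory launched on a stable manifold would converge to the saddle, so one must exploit the randomness. Two ingredients finish the proof: the random unit-vector initialization makes $\langle\bw_0,\bv_1\rangle\neq0$ with probability one, hence $\bw_0$ avoids the measure-zero union of the saddles' stable manifolds; and, more robustly, the martingale noise $M_t$ has a nonvanishing component along the unstable direction $\bv_1$ near each saddle, so the standard non-convergence-to-unstable-equilibria results for stochastic approximation (Pemantle; Brandi\`ere--Duflo) apply. Hence almost surely $\bw_t\to\pm\bv_1$, i.e.\ $\langle\bw_t,\bv_1\rangle^2\to1$. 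I expect the main obstacle to be exactly this saddle-avoidance argument, together with the bookkeeping needed to show $\langle\bw_t,\bv_1\rangle$ never collapses to $0$ along the way; a secondary nuisance is that the step sizes are only assumed $\le\tfrac12$ and square-summable, not monotone, so one genuinely needs the almost-supermartingale machinery rather than a naive telescoping estimate.
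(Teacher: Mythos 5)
The paper does not prove this theorem; it cites it to~\cite{Duflo13} and remarks only that such convergence-in-the-limit results rely on the Kushner--Clark method or Lyapunov theory. Your sketch is precisely that standard stochastic-approximation argument: identify the Oja vector field $h(\bw)=A\bw-(\bw^\top A\bw)\bw$, whose equilibria on the sphere are $\pm\bv_1$ (stable) and $\pm\bv_i$ for $i\ge2$ (saddles, unstable along $\bv_1$ with rate $\lambda_1-\lambda_i$); use $\sum_t\eta_t^2<\infty$ to make $\sum_t\eta_tM_t$ an a.s.\ convergent martingale so the iterates form an asymptotic pseudotrajectory of $\dot\bw=h(\bw)$; run a Robbins--Siegmund descent on $V(\bw)=1-\langle\bw,\bv_1\rangle^2/\|\bw\|_2^2$ with $\sum_t\eta_t=\infty$ forcing the limit set into the equilibrium set; and exclude the saddles via Pemantle/Brandi\`ere--Duflo non-convergence to unstable equilibria. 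Your computations check out: on the unit sphere $\langle\nabla V,h\rangle\le-2(\lambda_1-\lambda_2)V(1-V)$, and the norm recursion $\|\bw_t\|_2^2-1=(\|\bw_{t-1}\|_2^2-1)(1-2\eta_ty_t^2)+\eta_t^2y_t^2\|\bx_t-y_t\bw_{t-1}\|_2^2$ is the same identity underlying Lemma~\ref{lem:oja w norm ub}. This is a genuinely different route from the paper's own one-shot ODE-trick/stopping-time framework, and the tradeoff is exactly the one the paper uses to motivate its approach: the Lyapunov drift above degenerates at both $V=0$ and $V=1$, which suffices for almost-sure convergence but yields no rate, whereas the paper linearizes separately around the two fixed points and pays for the degeneracy with explicit martingale/stopping-time bookkeeping to extract a nearly optimal rate. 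Two parts of your step~(iv) should be treated as load-bearing rather than as back-ups: measure-zero initialization off the stable manifolds does \emph{not}, in the stochastic dynamics, preclude the iterates from later approaching a saddle, so the nondegenerate-noise form of the avoidance theorem is essential; and the hypothesis $\eta_t\le\tfrac12$ alone does not make the norm recursion a clean contraction (one can have $2\eta_ty_t^2>1$), so square-summability is genuinely needed to control the transient before the asymptotic argument applies.
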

The proofs of these previous analyses are usually based on tools from dynamical system such as the Kushner-Clark method or Lyapunov theory. Note that these proof techniques are not sufficient for providing convergence rate guarantee.

Prior to this work, there had been no efficiency guarantee for the biological Oja's rule. 
The main technical barrier is due to the non-linear terms in the update rule which introduces correlations in the traditional step-by-step analysis and thus naive analysis would not work.
We explain the difficulty further in~\autoref{sec:technical overview} and~\autoref{sec:compare bio ML}.
Given this situation, natural questions on the frontier would then be:
\begin{quote}
	\textbf{Question}: What is the convergence rate of biological Oja's rule in solving streaming PCA? Is the convergence rate biologically-realistic?
\end{quote}

\subsection{Our results}\label{sec:result}

In this paper, we answer the above questions by giving the first convergence rate guarantee for the biological Oja's rule in solving streaming PCA.
Furthermore, the convergence rate matches the information-theoretic lower bound for streaming PCA up to logarithmic factors.
In terms of the techniques, we develop an ODE-inspired framework to analyze stochastic dynamics.
We believe this general framework of using tools and insights from ODE and SDE in analyzing stochastic dynamics is elegant and powerful. We provide more details and intuitions on the ODE-inspired framework in the section on the technical overview (see~\autoref{sec:technical overview}). Also, as a byproduct, our convergence rate guarantee for biological Oja's rule outperforms the state-of-the-art upper bound for streaming PCA (using other variants of Oja's rule).

There are two common convergence notions in the streaming PCA literature. The \textit{global convergence} requires the algorithm/dynamic to start from a random initial vector while the \textit{local convergence} allows the algorithm/dynamic to start from an initial vector that is highly correlated to the top eigenvector of the covariance matrix.
Now, we are ready to state our main theorem as follows.

\begin{theorem}[Global and local convergence]\label{thm:informal}
With the setting in~\autoref{prob:streaming PCA} and dynamic in~\autoref{def:bio oja}, let $\gap:=\lambda_1-\lambda_2>0$.
For any $\epsilon,\delta\in(0,1)$, we have the following results.\\
$\bullet$ (Local Convergence) Suppose $\frac{\langle\bw_0, \bv_1\rangle^2}{\|\bw_0\|_2^2} = \Omega(1)$. For any $n\in\N$, $\delta,\epsilon\in(0,1))$, let
\[
\eta = \tilde{\Theta}\left(\frac{\epsilon\gap}{\lambda_1}\right),\ T = \Theta\left(\frac{\lambda_1}{\epsilon\gap^2}\cdot \log^2\left(\frac{1}{\epsilon},\, \frac{1}{\delta}\right)\right).
\]
Then, we have 
\[
\Pr\left[\frac{\langle\bw_T, \bv_1\rangle^2}{\|\bw_T\|_2^2} < 1-\epsilon\right] < \delta \,.
\]
$\bullet$ (Global Convergence) Suppose $\bw_0$ is uniformly sampled from the unit sphere of $\Real^n$. For any $n\in\N$, $\delta,\epsilon\in(0,1)$, let
\[
\eta = \tilde{\Theta}\left(\frac{(\epsilon\wedge\delta^2)\gap}{\lambda_1}\right),\,T = \Theta\left(\frac{\lambda_1}{(\epsilon\wedge\delta^2)\gap^2}\cdot\log^3\left(\frac{1}{\epsilon},\, \frac{1}{\delta},\, \frac{1}{\gap}, n\right)\right)
\]
Then, we have
\[
\Pr\left[\frac{\langle\bw_T, \bv_1\rangle^2}{\|\bw_T\|_2^2} < 1-\epsilon\right] < \delta \,.
\]
The notation $a\wedge b$ stands for $\min\lbrace a, b\rbrace$ and $\tilde{\Theta}$ hides the poly-logrithmic factors with respect to $\epsilon^{-1}, \delta^{-1}, \gap^{-1}, n$.
\end{theorem}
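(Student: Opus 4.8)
The plan is to turn the non-linear recursion \autoref{eq:oja} into a conditionally \emph{linear} one by observing that its non-linearity is only a scalar. Writing \autoref{eq:oja} as $\bw_t=\bigl((1-\eta y_t^2)I+\eta\bx_t\bx_t^\top\bigr)\bw_{t-1}$ exhibits $\bw_t=M_tM_{t-1}\cdots M_1\bw_0$ with $M_s=(1-\eta y_s^2)I+\eta\bx_s\bx_s^\top$; the factor $(1-\eta y_s^2)I$ depends on the whole history $\bx_1,\dots,\bx_{s-1}$, but being a multiple of the identity it only rescales $\|\bw_t\|_2$ and never changes the \emph{direction} of $\bw_t$. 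So I would track the eigen-coordinates $z_{t,i}:=\langle\bw_t,\bv_i\rangle$, which obey the scalar recursion $z_{t,i}=(1-\eta y_t^2)z_{t-1,i}+\eta y_t\langle\bx_t,\bv_i\rangle$, and reformulate the goal $\langle\bw_T,\bv_1\rangle^2/\|\bw_T\|_2^2\ge 1-\epsilon$ as the statement that $\sum_{i\ge 2}(z_{T,i}/z_{T,1})^2\le\epsilon/(1-\epsilon)$.

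Next I would write the \emph{one-shot closed form} obtained by unrolling the recursion in variation-of-parameters style: $z_{t,i}=P_{0,t}\bigl(z_{0,i}+\sum_{s=1}^t c_s\langle\bx_s,\bv_i\rangle\bigr)$, where $P_{0,t}:=\prod_{r=1}^t(1-\eta y_r^2)$ and $c_s:=\eta y_s/P_{0,s}$ \emph{do not depend on} $i$, so that the awkward history-dependent product $P_{0,t}$ cancels in every direction ratio $z_{t,i}/z_{t,1}$, leaving an explicit ratio of two random series. The rest is a bias/variance split: conditioned on $\mathcal{F}_{s-1}$ one has $\Exp[y_s\langle\bx_s,\bv_i\rangle]=\langle\bw_{s-1},A\bv_i\rangle=\lambda_i z_{s-1,i}$ and $\Exp[y_s^2]=\bw_{s-1}^\top A\bw_{s-1}=:R_{s-1}$, so $z_{t,i}$ has multiplicative drift $1+\eta(\lambda_i-R_{s-1})$ plus a mean-zero martingale increment of order $\eta$. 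In drift the ratio $z_{t,i}/z_{t,1}$ is therefore multiplied by $\approx 1+\eta(\lambda_i-\lambda_1)\le 1-\eta\,\gap$ — the Euler discretization of the linearized Oja flow $\tfrac{d}{dt}(z_i/z_1)=-(\lambda_1-\lambda_i)(z_i/z_1)$, so the spectral gap produces a net contraction \emph{regardless of} where $R_{s-1}$ lies — while the noise injected per step into the error potential has variance $\Theta(\eta^2\lambda_1)$. Balancing the contraction rate $\eta\,\gap$ per step against this injected variance yields a stationary error floor $\Theta(\eta\lambda_1/\gap)$; asking it to be $\le\epsilon$ forces $\eta=\tilde\Theta(\epsilon\,\gap/\lambda_1)$, and contracting the initial error down to $\epsilon$ then takes $T=\Theta(\log(\cdot)/(\eta\,\gap))=\Theta(\lambda_1/(\epsilon\,\gap^2)\cdot\mathrm{polylog})$ steps. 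The Taylor-truncation error inherited by \autoref{eq:oja} from \autoref{eq:oja normalized} contributes only a lower-order $O(\eta^2)$ bias and does not affect this count.

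To make the conditioning and the implicit ``once the norm is under control'' caveats rigorous, I would run the whole argument up to a stopping time $\tau$: the first $t$ at which $\|\bw_t\|_2$ leaves $[\tfrac{1}{2},2]$ (controllable from the simple recursion for $\|\bw_t\|_2^2$, cf.\ \autoref{lem:oja w norm ub}), or the signal $z_{t,1}^2$ drops below a fixed fraction of $z_{0,1}^2$, or $\sum_{i\ge2}(z_{t,i}/z_{t,1})^2$ exceeds $\epsilon$. For $t<\tau$ everything is uniformly bounded, so a Freedman/Azuma-type concentration inequality applied to the error martingale shows $\Pr[\tau<T]\le\delta$ and that at time $T$ the error is $<\epsilon$ — which is exactly the local statement, since there $z_{0,1}^2=\Omega(1)$ and no preliminary phase is needed. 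For global convergence a uniformly random $\bw_0$ has $z_{0,1}^2\ge\Theta(\delta^2/n)$ only with probability $1-O(\delta)$, so the initial error can be as large as $\Theta(n/\delta^2)$; the same closed form, run for an extra $\tilde O(\log(n/\delta)/(\eta\,\gap))$ steps, contracts this to $O(1)$, after which the local analysis applies. Surviving this warm-up while the signal is still only $\Theta(\delta^2/n)$ additionally requires a learning rate as small as $\tilde\Theta(\delta^2\gap/\lambda_1)$ (so the martingale noise cannot flip or erase the tiny signal before the drift amplifies it), which is the source of the $\epsilon\wedge\delta^2$ and of the extra $\log n$ factor in the global bound.

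The crux — and the reason the traditional step-by-step analysis fails — is that $M_s$ depends on $\bx_1,\dots,\bx_{s-1}$ through $y_s$, so the matrices in $M_t\cdots M_1$ are neither independent nor stationary and the usual random-matrix-product machinery does not apply. The way around it is precisely the scalar-cancellation observation above: the history-dependent content of $M_s$ is the identity-multiple $(1-\eta y_s^2)I$, which survives only in $\|\bw_t\|_2$ and cancels in every direction ratio, leaving a one-shot closed form whose directional part is driven by i.i.d.\ rank-one increments $\eta\bx_s\bx_s^\top$. Beyond this, the two places I expect to spend the most effort are (i) keeping $z_{t,1}$ bounded away from $0$ simultaneously for all $t\le T$ (folded into $\tau$ and controlled by a supermartingale estimate on $\log z_{t,1}^2$), and (ii) showing that $\|\bw_t\|_2$ stays near $1$ and the truncation bias is uniformly $O(\eta^2)$ per step, so that the ``ODE drift'' $1+\eta(\lambda_i-R_{s-1})$ faithfully describes the stochastic dynamic throughout the run.
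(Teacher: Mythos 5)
Your central observation — rewriting the update as $\bw_t = \bigl((1-\eta y_t^2)I + \eta\bx_t\bx_t^\top\bigr)\bw_{t-1}$, passing to the eigen-coordinates $z_{t,i}$, and noticing that the common factor $P_{0,t}=\prod_{r\le t}(1-\eta y_r^2)$ cancels in every ratio $z_{t,i}/z_{t,1}$ — is correct and genuinely different from what the paper does. The paper reduces to a diagonal covariance, tracks the scalar $\bw_{t,1}^2$ directly, linearizes around the fixed point $0$ or $1$ so that the multiplier $H$ in the ODE trick is a \emph{deterministic constant} (\autoref{lem:discrete phase 1 linearization}, \autoref{lem:discrete phase 2 linearization}), and pushes all history dependence into the noise $A_t+B_t$ that is then controlled by a stopped martingale and the pull-out lemma (\autoref{lem:discrete phase 2 stopping time}). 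You instead try to reduce to the tangent potential $\sum_{i\ge 2}(z_{t,i}/z_{t,1})^2$ familiar from the ML Oja literature.

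There is, however, a real gap in the reduction. You write that factoring out the scalar leaves ``a one-shot closed form whose directional part is driven by i.i.d.\ rank-one increments $\eta\bx_s\bx_s^\top$,'' but that is not what the algebra gives. Factoring out the identity multiple yields
\[
M_s = (1-\eta y_s^2)\Bigl(I + \tfrac{\eta}{1-\eta y_s^2}\bx_s\bx_s^\top\Bigr),
\]
and in your one-shot formula the weight in front of $\langle\bx_s,\bv_i\rangle$ is $c_s = \eta y_s/P_{0,s}$. Both of these still depend on the whole path (through $y_s=\bx_s^\top\bw_{s-1}$ and through $P_{0,s}$, which is not even $\mathcal{F}_{s-1}$-measurable since it contains $y_s$). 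So the numerator and denominator in $z_{t,i}/z_{t,1}$ are not sums of i.i.d.\ or even conditionally centered increments, and the random-matrix-product technology for ML Oja's rule does not transfer: the biological update is a history-dependent \emph{perturbation} of the ML update with effective per-step learning rate $\eta/(1-\eta y_s^2)=\eta(1+O(\eta))$, not the same object. This is precisely the ``coupling'' reduction the paper considers and explicitly declines to use, remarking in the related-works discussion that it ``seems to be more difficult than direct analysis'' and leaving it as an open problem. Your second hand-wave compounds this: the claim that the ratio $z_{t,i}/z_{t,1}$ ``is in drift multiplied by $1+\eta(\lambda_i-\lambda_1)$'' replaces the drift of a ratio by the ratio of drifts. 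Taylor-expanding $1/z_{t,1}$ introduces conditional-variance and cross-covariance corrections of size $O(\eta^2/z_{t-1,1}^2)$; in the local phase these are benign, but in the global phase $z_{t-1,1}^2$ can be as small as $\Theta(1/n)$, and controlling those corrections is essentially the content of the paper's cross-term analysis in \autoref{sec: D main} (the stopping times $\psi,\xi$ bounding $f_{t,n}(\bw)=\sum_{i\ge2}\bx_{t,i}\bw_i/\bw_1$, which is exactly your $y_t$-over-$z_{t,1}$ object). Your proposal names none of this machinery. Until the history-dependence of $c_s$ and the higher-order drift of the ratio are handled, the sketch does not yet close, and it is far from obvious that it produces the paper's tight $\log$ dependencies rather than extra factors.
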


\paragraph{Biological perspectives.}
Our results provide further theoretical evidences for the biological plausibility of biological Oja's rule to be a likely candidate of the dimensionality reduction in the retina-optical nerve pathway. Specifically, we show that \textit{``biological Oja's rule is a local Hebbian learning rule with bounded synaptic weights that functions in a biologically-realistic time scale."} In particular, in this work we demonstrate that biological Oja's rule  does not have any dependency on the dimension (\textit{i.e.,} $n$, the number of neurons) in the local convergence setting while the dependency is logarithmic in the global convergence setting. 
Moreover, in the local convergence setting, the dependency of the convergence rate on the failure probability $\delta$ is inverse-logarithmic in stead of $O(1/\delta)$.

Furthermore, we prove the \textit{for-all-time} guarantee of the biological Oja's rule as an corollary of the techniques used in the proof for the main theorems. By for-all-time guarantee we refer to the behavior of a dynamic that \textit{always} stays around the optimal solution after convergence. Especially, the dynamic would not temporarily leave the neighborhood of the optimal solution. The for-all-time guarantee is of biological importance because a biological system constantly adapts and functions, and it is not enough for a mechanism to hold for only a brief moment. 
We state the theorem for the for-all-time guarantee as follows.
\begin{theorem}[For-all-time guarantee with slowly diminishing rate]\label{thm:for all time informal}
With the setting in~\autoref{prob:streaming PCA} and dynamic in~\autoref{def:bio oja}, let $\gap:=\lambda_1-\lambda_2>0$.
For any $\epsilon,\delta\in(0,1)$, suppose $\frac{\langle\bw_{0},\bv_1\rangle^2}{\|\bw_0\|_2^2}\geq1-\epsilon/2$. For any $t\in\N$, there exists $\eta_t\geq \Theta\left(\frac{\epsilon\cdot\gap}{\lambda_1\log(t/\delta)}\right)$ such that
\[
\Pr\left[\forall t\in\N,\ \frac{\langle\bw_t,\bv_1\rangle^2}{\|\bw_t\|_2^2}\geq1-\epsilon\right]\geq1-\delta \, .
\]
\end{theorem}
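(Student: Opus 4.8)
The plan is to reduce everything to controlling a single scalar, $\Gamma_t := \|\bw_t^\perp\|_2^2/\langle\bw_t,\bv_1\rangle^2$, where $\bw_t^\perp$ is the component of $\bw_t$ orthogonal to $\bv_1$. Since the error satisfies $1-\frac{\langle\bw_t,\bv_1\rangle^2}{\|\bw_t\|_2^2}=\frac{\Gamma_t}{1+\Gamma_t}\le\Gamma_t$, it suffices to fix $\epsilon'=\Theta(\epsilon)$ and prove $\Pr[\exists t\in\N:\Gamma_t>\epsilon']<\delta$; note that whenever $\Gamma_t\le\epsilon'<1$ we have $\langle\bw_t,\bv_1\rangle\neq0$ (using that $\|\bw_t\|_2$ is bounded away from $0$ and $\infty$, cf.~\autoref{lem:oja w norm ub}), so $\Gamma_t$ and the associated tangent vector $\bz_t:=\bw_t^\perp/\langle\bw_t,\bv_1\rangle$ (with $\|\bz_t\|_2^2=\Gamma_t$) are well defined. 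Accordingly I would introduce the stopping time $\tau:=\inf\{t:\Gamma_t>\epsilon'\}$ and carry out all estimates on the stopped process, so that $\Gamma_{s-1}\le\epsilon'$ may be assumed for $s\le\tau$; the target becomes $\Pr[\tau<\infty]<\delta$. This is the same quantity and stopping-time setup used for the local-convergence part of \autoref{thm:informal}; the genuinely new ingredient here is a union bound that is uniform over all time, which is what forces the $1/\log(t/\delta)$ decay of the learning rate.

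First I would record a one-step recursion. Writing $g_t:=\langle\bx_t,\bv_1\rangle$ and projecting \autoref{eq:oja} onto $\bv_1$ and onto $\bv_1^\perp$, both $\langle\bw_t,\bv_1\rangle$ and $\bw_t^\perp$ acquire the same scalar factor $(1-\eta_ty_t^2)$, which cancels in $\bz_t$, leaving $\bz_t=\frac{\bz_{t-1}+\alpha_t\bx_t^\perp}{1+\alpha_tg_t}$ with $\alpha_t=\Theta(\eta_t)\cdot y_t/\langle\bw_{t-1},\bv_1\rangle$ and $|\alpha_t|=O(\eta_t)$ on $\{t\le\tau\}$. Expanding $\|\bz_t\|_2^2$, Taylor-expanding $(1+\alpha_tg_t)^{-2}$, and taking conditional expectation using $\Exp[\bx_t\bx_t^\top]=A$, $A\bv_1=\lambda_1\bv_1$, $\|\bx_t\|_2=1$, yields, for $t\le\tau$,
\[
\Exp[\Gamma_t\mid\mathcal F_{t-1}]\le(1-2\eta_t\gap)\,\Gamma_{t-1}+C\eta_t^2\lambda_1,
\]
together with an a.s.\ increment bound $|\Gamma_t-\Gamma_{t-1}|=O(\eta_t\lambda_1)$ and a conditional-variance bound $\mathsf{Var}(\Gamma_t\mid\mathcal F_{t-1})=O(\eta_t^2\lambda_1\Gamma_{t-1}+\eta_t^4\lambda_1^2)$. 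This is the only step-by-step computation; everything after is one-shot assembly.

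Next I would solve the recursion in closed form by a discrete integrating-factor (Duhamel) step — the analogue of the ODE-inspired method: with $P_t:=\prod_{s=1}^t(1-2\eta_s\gap)$ and $\xi_s:=\Gamma_s-\Exp[\Gamma_s\mid\mathcal F_{s-1}]$ (a martingale difference up to $\tau$),
\[
\Gamma_t=P_t\Gamma_0+C\lambda_1\sum_{s=1}^t\tfrac{P_t}{P_s}\eta_s^2+\sum_{s=1}^t\tfrac{P_t}{P_s}\xi_s .
\]
The first two terms are deterministic: using $\Gamma_0\le\epsilon/2$, monotonicity of $\eta_s$, and $\tfrac{P_t}{P_s}\le e^{-2\gap\sum_{r=s+1}^t\eta_r}$, a geometric-sum estimate gives $P_t\Gamma_0+C\lambda_1\sum_s\tfrac{P_t}{P_s}\eta_s^2\le\Gamma_0+O(\lambda_1\eta_t/\gap)\le\epsilon'/2$ for all $t$, precisely because $\eta_t=O(\epsilon\gap/\lambda_1)$. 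For the stochastic term the crucial point is that the weights $P_t/P_s\le1$ are geometrically small away from $s=t$, so for each fixed $t$ the martingale $u\mapsto\sum_{s\le u}\tfrac{P_t}{P_s}\xi_s$ has increments bounded by $O(\eta_s\lambda_1)$ and total conditional variance $\sum_s(P_t/P_s)^2O(\eta_s^2\lambda_1\Gamma_{s-1})=O(\epsilon'\lambda_1\eta_t/\gap)$ — a quantity that does \emph{not} grow with $t$ (the sum is dominated by an $O(1/(\eta_t\gap))$-length window). Freedman's inequality then gives
\[
\Pr\Big[\Big|\textstyle\sum_{s\le t}\tfrac{P_t}{P_s}\xi_s\Big|\ge\tfrac{\epsilon'}{2}\Big]\le2\exp\!\Big(-\Omega\big(\tfrac{\epsilon'\gap}{\lambda_1\eta_t}\big)\Big),
\]
and with $\eta_t=\Theta\big(\frac{\epsilon\gap}{\lambda_1\log(t/\delta)}\big)$ the exponent is $\Omega(\log(t/\delta))$, so the right side is at most $\delta/(Ct^2)$. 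Summing over $t\ge t_0$ yields $\Pr[\tau<\infty]<\delta$; on the complement $\Gamma_t\le\epsilon'$, hence the error is $\le\epsilon$, for all $t$ — which also retroactively validates having used $\Gamma_{s-1}\le\epsilon'$ inside the variance bound.

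The main obstacle is exactly the step-size calibration in the last display: getting away with $\eta_t$ shrinking only like $1/\log(t/\delta)$ — rather than a polynomial rate, which would be useless for a biologically realistic time scale — requires (i) the integrating-factor representation so that the stopped martingale's quadratic variation is controlled uniformly in $t$ by a short window near the present, and (ii) a bootstrap: that variance bound depends on $\sup\Gamma$ over the window, which is what we are trying to control, so the Freedman estimate must be run against $\tau$ (where $\Gamma\le\epsilon'$ by definition) to close the loop. A secondary nuisance is the $O(\eta_t^2)$ non-martingale remainders from the Taylor expansion of $(1+\alpha_tg_t)^{-2}$ and from the $(1-\eta_ty_t^2)$ factors; these are absorbed into the deterministic noise floor but must be tracked, as must the two-sided boundedness of $\|\bw_t\|_2$ (\autoref{lem:oja w norm ub}) needed to keep $\bz_t$ well defined throughout.
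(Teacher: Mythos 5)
Your proposal is correct, but it takes a genuinely different route from the paper's. The paper passes to the diagonal basis and works with $\tilde{\bw}_t = \bw_{t,1}^2-1$: it linearizes at $1$, applies the ODE trick, and establishes ``finite continual learning'' (\autoref{thm: finite continual learning}) on windows of length $\Theta\big(1/(\eta(\lambda_1-\lambda_2))\big)$ via a \emph{maximal} Freedman inequality on a stopped noise martingale together with the chain-condition pull-out lemma (\autoref{lem:discrete phase 2 stopping time}); the for-all-time version (\autoref{thm: continual learning}) then stitches windows together with failure budgets $\delta_i=\delta/(2i^2)$, re-choosing $\eta$ on each window. You instead track the tangent ratio $\Gamma_t=\|\bw_t^\perp\|_2^2/\langle\bw_t,\bv_1\rangle^2$ (which cancels the common scalar factor $(1-\eta_t y_t^2)$ exactly), derive the same Duhamel/integrating-factor representation, and apply ordinary (non-maximal) Freedman to the stopped martingale $\sum_{s\le u\wedge\tau}(P_t/P_s)\xi_s$ separately at each fixed horizon $t$, allocating a failure budget $\delta/(Ct^2)$ per time step. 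Bounding $\Pr[\tau<\infty]=\sum_t\Pr[\tau=t]$ then needs no pull-out lemma at all: on $\{\tau=t\}$ one already has $\tau\ge t$, so the stopped and unstopped noise at time $t$ coincide and the stopping time comes out for free. This is a genuine simplification for this particular theorem; the paper's maximal-inequality-plus-pull-out machinery earns its keep elsewhere, where a sup bound inside a window is needed for the interval-to-interval handoff. Both routes land on $\eta_t=\Theta\big(\epsilon(\lambda_1-\lambda_2)/(\lambda_1\log(t/\delta))\big)$; yours pays one extra $\log t$ inside the per-step budget, absorbed in the $\Theta$. One small imprecision worth fixing: the a.s.\ increment bound should read $|\Gamma_t-\Gamma_{t-1}|=O\big(\eta_t\Gamma_{t-1}^{1/2}+\eta_t^{3/2}\big)$ rather than $O(\eta_t\lambda_1)$ (compare the bounded-difference item of \autoref{lem: local bound}); this does not change your Freedman exponent, which is variance-dominated, but the $\lambda_1$ factor is not where you placed it.
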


We should further notice that the learning rate is slowly-diminishing, \textit{i.e.,} $\eta_t=\Theta(1/\log t)$ instead of the commonly used $\eta_t=O(1/t)$, in the for-all-time guarantee (\textit{i.e.,}~\autoref{thm:for all time informal}). Because our learning rate is slowly diminishing, when the environment changes, the learning rate is still large enough to do efficient learning. This allows the sensory system to continuously adapt to changing environments without taking a long time to adapt or reset the learning rate. This suggests the capability of \textit{continual adaptation}, which is crucial in the biological scenario. For example, if a person walks into a new environment, the retina cells need to quickly adapt to the new environment and this cannot be achieved if the learning rate already diminished too fast in the previous environment. 

We remark that prior to this work, the for-all-time guarantee with slowly diminishing learning rates was even unknown to any streaming PCA algorithms. The convergence in the limit result for biological Oja's rule requires $\eta_t=o(1/\sqrt{t})$~\cite{Duflo13} and the convergence rate analysis for non-biologically-plausible variants of Oja's rule requires $\eta_t=\tilde{O}(1/t)$~\cite{JJKNS16,AL17,LWLZ18} or $\eta_t=O(1/\sqrt{t})$~\cite{Shamir16}. In particular, all previous works satisfy $\sum_{t}\eta_t^2 < \infty$ while in this work we can achieve for-all-time convergence with much weaker assumptions $\eta_t=\Theta(1/\log t)$ (hence $\sum_t \eta_t^2 = \infty$) for the biological Oja's rule.

\subsection{Technical overview}\label{sec:technical overview}

In this work, we give the first efficiency guarantee for the biological Oja's rule in solving streaming PCA with an (nearly) optimal convergence rate.
In this subsection, we highlight three technical insights of our analysis which lead us to a clean understanding in how the biological Oja's rule solves streaming PCA. In short, our high-level strategy is to first consider the \textit{continuous} version of the Oja's rule where the learning rate $\eta$ is set to be infinitesimal.
In the continuous setting, the dynamic can be fully understood by tools from the theory of ordinary differential equations (ODE) or stochastic differential equations (SDE). With the inspiration from the continuous analysis, we are able to identify the right tools (\textit{e.g.,} linearization at two different centers, etc.) to tackle the discrete dynamic.

Before we start, let us recall the problem setting and the goal.
For simplicity, here we consider the \textit{diagonal case} where the covariance matrix $A$ is a diagonal matrix, \textit{i.e.,} $A=\diag(\lambda_1\dots,\lambda_n)$ with $\lambda_1>\lambda_2\geq\lambda_3\geq\cdots\geq\lambda_n\geq0$. Thus the top eigenvector of $A$ is $\mathbf{e}_1$, \textit{i.e.,} the indicator vector for the first coordinate, and the goal becomes showing that $\bw_{t,1}^2$ efficiently converges to $1$ when $t\rightarrow\infty$. A reduction from the general case to the diagonal case is provided in~\autoref{sec:diagonal}.

\paragraph{Insight 1: Inspiration from the continuous dynamics.}
The first insight is to analyze the biological Oja's rule in a way inspired by its continuous analog. 
The advantage to consider the continuous dynamics is that not only it captures the inherent dynamics but also we can apply the theory of ODE and SDE to obtain \textit{closed-form} solutions. 
Thus, the continuous dynamic would serve as a hint on how to derive a tight and closed-form analysis for the discrete dynamic.

Interestingly, the continuous SDE of the biological Oja's rule degenerates into a simple deterministic ODE almost surely (see~\autoref{sec:continuous Oja} for a derivation). Specifically, for any $t\geq0$, we have
\begin{equation}\label{eq:intro continuous}
\frac{d\bw_{t, 1}}{dt} \geq (\lambda_1 - \lambda_2)\bw_{t, 1}(1 - \bw_{t, 1}^2)\ \ \text{ and }\ \ \|\bw_t\|_2=1
\end{equation}
almost surely. Furthermore, observe that the continuous Oja's rule is non-decreasing and has three fixed points 0 and $\pm1$ for $\bw_{t,1}$ while the first is unstable and the later two are stable. Namely, in the continuous dynamic, $\bw_t$ will eventually converge to $\pm\be_1$, \textit{i.e.,} the top eigenvector of $A$.

Note that in a discrete stochastic dynamic, there are two sources of the noises: (i) the intrinsic stochasticity from its continuous analog and (ii) the noise due to discretization. Thus,~\autoref{eq:intro continuous} suggests that the noise in the biological Oja's rule only comes from discretization since the continuous Oja's rule is deterministic.

In addition to the limiting behavior, one can also read out finer structures of the continuous dynamic from~\autoref{eq:intro continuous} by solving the differential equation using standard tools from dynamical system. The right hand side (RHS) of the inequality in~\autoref{eq:intro continuous} is non-linear which usually does not have a clean solution. A natural idea from dynamical system would then be \textit{linearizing} the differential equation around fixed points and applying the \textit{exact} solution for a linear ordinary differential equation. Moreover, as there are three fixed points in~\autoref{eq:intro continuous}, one can linearize the differential equation with center being either 0 or $\pm1$. For simplicity, we focus on the two fixed points $0$ and $1$ while $-1$ can be analyzed similarly due to symmetry.

For example, we can linearize at 0 by lower bounding the RHS of~\autoref{eq:intro continuous} by $\epsilon(\lambda_1-\lambda_2)\bw_{t,1}$ for any $\bw_{t,1}\in[0,\sqrt{1-\epsilon}]$ (see Figure~\ref{fig:continuous 1 sided 0}). 
Similarly, we can linearize at 1 by using $\bw_{0,1}(\lambda_1-\lambda_2)(1-\bw_{t,1})$ for any $\bw_{t,1}\in[\bw_{0,1},1]$ (see Figure~\ref{fig:continuous 1 sided 1}).
Another choice would be \textit{linearizing at both 0 and 1}. Concretely, we linearize at 0 for $\bw_{t,1}\in[0,2/3]$ and linearize at 1 for $\bw_{t,1}\in[2/3,1]$ (see Figure~\autoref{fig:continuous 2 sided}).

\begin{figure}[ht]
    \centering
    \subfloat[Linearization only at $0$.]{%
		\includegraphics[width=0.3\textwidth]{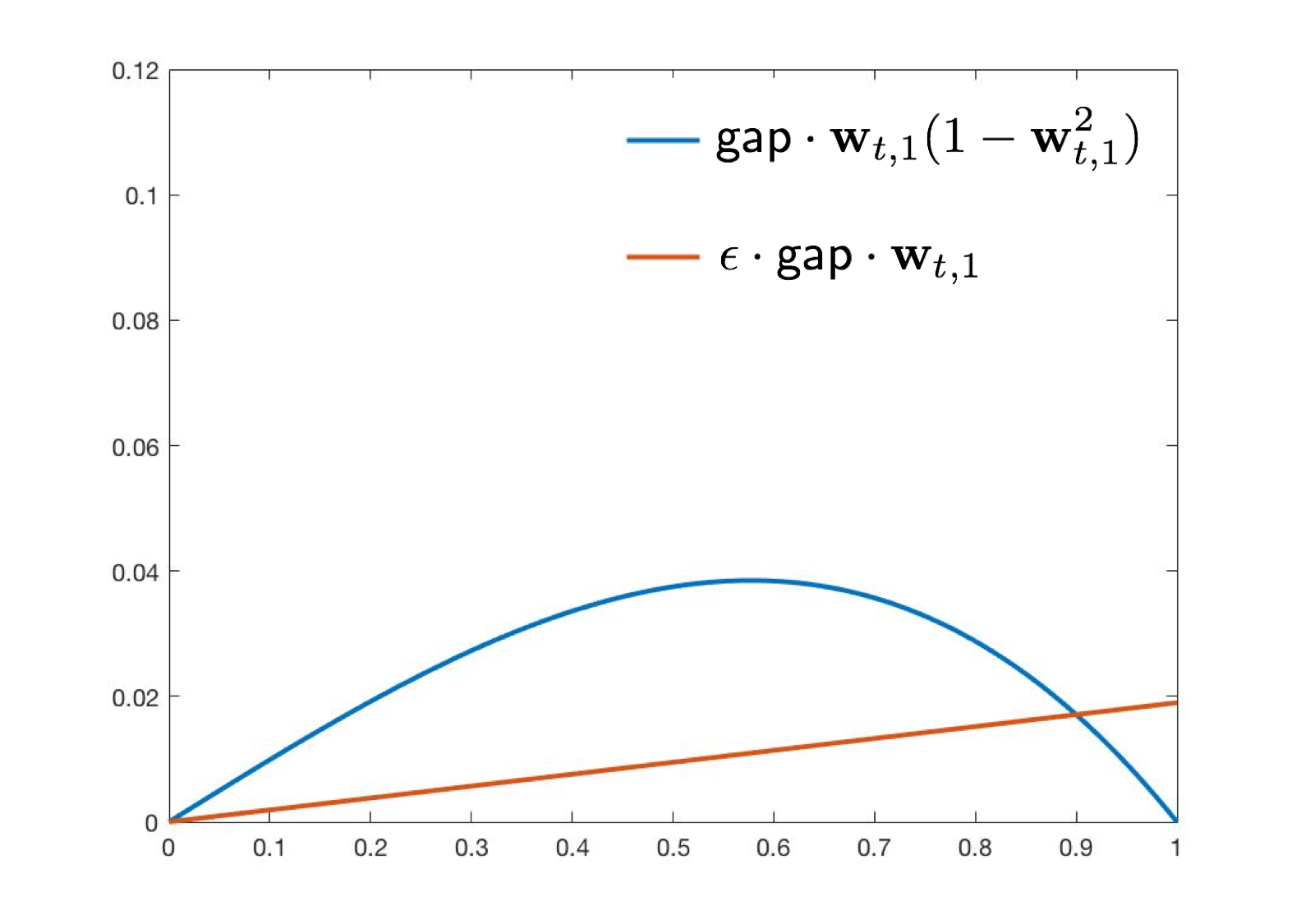}%
		\label{fig:continuous 1 sided 0}%
	}\qquad
	\subfloat[Linearization only at $1$.]{%
		\includegraphics[width=0.3\textwidth]{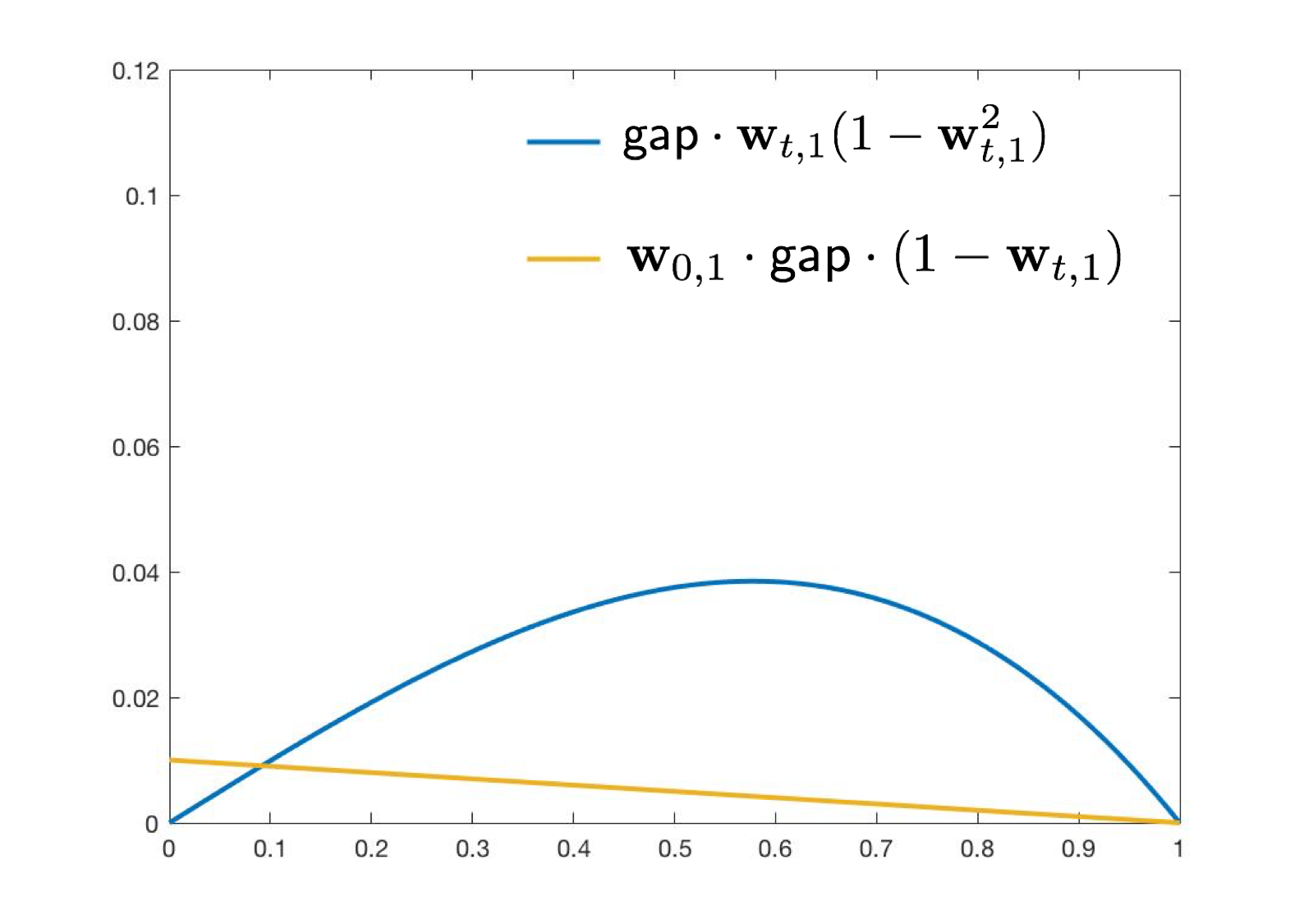}%
		\label{fig:continuous 1 sided 1}%
	}\qquad
	\subfloat[Linearization at both $0$ and $1$.]{%
		\includegraphics[width=0.3\textwidth]{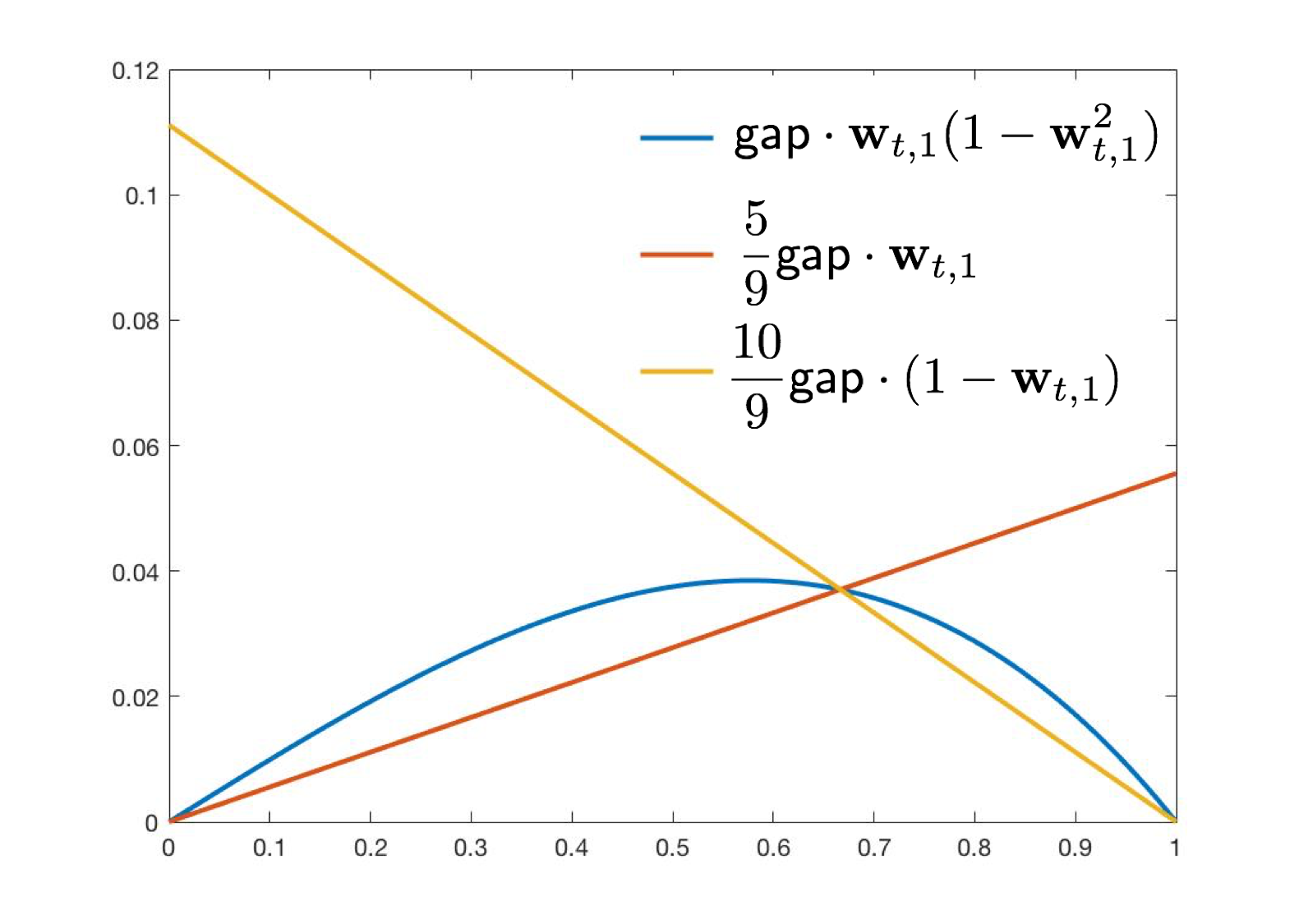}%
		\label{fig:continuous 2 sided}%
	}
    \caption{Comparison between one-sided linearization and two-sided linearization.}
    \label{fig:continuous}
\end{figure}

The main difference between linearizing only at a single fixed point and linearizing at two fixed points is the \textit{slope} in the linearization.
Note that the slopes of the linearizations in Figure~\autoref{fig:continuous 1 sided 0} and Figure~\autoref{fig:continuous 1 sided 1} are $\epsilon(\lambda_1-\lambda_2)$ and $\bw_{0,1}(\lambda_1-\lambda_2)$ respectively while the slope is of the order $\Omega(\lambda_1-\lambda_2)$ in Figure~\ref{fig:continuous 2 sided}.
As the slope corresponds to the \textit{speed} of the convergence, the extra $\epsilon$ or $\bw_{0,1}$ in the slope of linearization at a single fixed point would result in an extra $\epsilon^{-1}$ or $\bw_{0,1}^{-1}$ in the convergence rate. See~\autoref{fig:continuous} for a pictorial explanation.\\

Another key inspiration from the continuous dynamic is the \textit{ODE trick} which provides a close form characterization of the dynamic in terms of the drifting term captured by the continuous dynamic and the noise term originated from the linearization and discretization.
The ODE trick is inspired by the solution to a linear ordinary differential equation (linear ODE). Consider the following simple linear ODE
\[
\frac{d y(t)}{dt} = ay(t)+b(t)
\]
for some constant $a$ and function $b(t)$. To put into the context, one can think of $a$ as the drifting term and $b(t)$ as the noise term in the continuous Oja's rule due to the linearization\footnote{In the biological Oja's rule, the \textit{discretization} also contributes in the noise term.}. By the standard tool for solving linear ODE, the solution of $y(t)$ at $t=T$ is
\begin{equation}\label{eq:intro ODE trick}
y(T) = e^{aT}\cdot\left(y(0)+\int_0^\top e^{-at}b(t)dt\right) \, .
\end{equation}
From the above equation, one can see that the solution of a linear ODE extracts the drifting term into a \textit{multiplier} $e^{aT}$ and decouples the initial condition $y(0)$ with the noise term $\int_{0}^\top e^{-at}b(t)dt$. As a consequence, once we can show that the noise term is much smaller than the initial value, then $y(T)$ is dominated by the drifting term $e^{aT}y(0)$ and thus we are able to analyze the progress of $y(T)$.

To sum up, the continuous dynamic informs us to linearize the biological Oja's rule at different centers in different phases of the analysis. Further, the ODE trick provides us a closed-form approximation to the dynamic.
We are then able to analyze the biological Oja's rule in \textit{one-shot} rather than doing the traditional step-by-step analysis.

\paragraph{Insight 2: One-shot analysis instead of step-by-step analysis.}
The second insight of this work is performing an \textit{one-shot analysis} instead of the traditional step-by-step analysis (\textit{e.g.,}~\cite{AL17}). 
\subparagraph{Traditional step-by-step analysis}
To see the difference, let us illustrate how would the step-by-step analysis on the biological Oja's rule work as follows. Denote the natural filtration as $\{\mathcal{F}_t\}$ where $\mathcal{F}_t$ is the $\sigma$-algebra generated by $\bx_1,\bx_2,\dots,\bx_t$. For any $t\in\N$, we have
\begin{align*}
\Exp\left[\bw_{t,1}\right] &= \Exp\left[\Exp\left[\bw_{t-1,1}+\eta_t(\bx_t^\top\bw_{t-1})\bx_{t,1}-\eta_t(\bx_t^\top\bw_{t-1})^2\bw_{t-1,1}\ |\ \mathcal{F}_{t-1}\right]\right]\\
&=\Exp\left[\bw_{t-1,1}+\eta_t\lambda_1\bw_{t-1,1}-\eta_t\left(\sum_{i=1}^n\lambda_i\bw_{t-1,i}^2\right)\bw_{t-1,1}\right]
\end{align*}
where the second equation is due to the fact that for any $i,j\in[n]$, $\Exp[\bx_{t,i}\bx_{t,j}\ |\ \mathcal{F}_{t-1}]=A_{ij}=\lambda_i\cdot\mathbf{1}_{i=j}$ and for any $i\in[n]$, $\Exp[\bw_{t-1,i}\ |\ \mathcal{F}_{t-1}]=\bw_{t-1,i}$. In a step-by-step analysis, one then argues that the expectation $\Exp[\bw_{t,1}]$ would be improved from $\Exp[\bw_{t-1,1}]$ by a certain factor. Then, an induction on each step followed by showing concentration would give some convergence rate guarantee.
However, there are two difficulties in getting optimal convergence rate
 (these difficulties usually also appear in the step-by-step analysis for other problems). 
\begin{itemize}
\item First, there are some non-linear terms of $\bw_{t-1,1}$ in the update noise. This usually requires some hacks tailored to the specific problem to enable the analysis.
\item Second, the improvement factor at each step can depend on $\bw_{t-1}$ and at worst case, the dynamic can show no improvement or even deteriorate. Taking expectation loses precise controls of the values of $\bw_{t-1}$. This makes naive martingale analysis difficult to work and probably requires more ad hoc tricks.
\end{itemize}
For instance, the first difficulty is exactly what~\cite{AL17} encountered in their analysis for a variant of the biological Oja's rule. They resolved the first difficulty by decomposing the non-linear term through careful variable substitution, but as a result they incur unnecessary logrithmic costs. The biological Oja's rule, in addition to having the first difficulty, also has the second difficulty (see~\autoref{sec:compare bio ML} for more discussions). Therefore, applying the traditional step-by-step analysis on the biological Oja's rule will encounter great obstacles.

\subparagraph{Our one-shot analysis}
In this work, we use an \textit{one-shot} analysis to avoid the complication of a step-by-step analysis. Namely, instead of looking at the process iteratively, we study the entire dynamic at once. Two key ingredients are needed to implement such an one-shot analysis: (i) a closed-form characterization of the dynamic and (ii) stopping time techniques. As discussed in the previous discussion, the continuous dynamic of the biological Oja's rule inspires us to get a closed-form lower bound for $\bw_{t,1}$ by the \textit{ODE trick}. Concretely, as a simplified example\footnote{In general, the multiplier term also varies with respect to time $t$.}, we have
\begin{equation}\label{eq:intro one shot}
\bw_{T,1} = H^\top \cdot\left(\bw_{0,1} + \sum_{t=1}^{T}\frac{N_{t}}{H^{t}}\right)
\end{equation}
where $H>1$ is the multiplier term and $\{N_{t}\}$ is the noise term which forms a martingale on the natural filtration. See~\autoref{cor:discrete phase 1 ODE} and~\autoref{cor:discrete phase 2 ODE} for a precise formulation of $H$ and $\{N_t\}$ in our analysis. Intuitively, one should think of $H^\top \bw_{0,1}$ as the \textit{drifting term} and the other part as the \textit{noise term}. The goal of the ODE trick in the discrete dynamic is to show that the drifting term dominates the noise term.

To show that the noise in~\autoref{eq:intro one shot} is small, Azuma's inequality (see~\autoref{lem:azuma}) would be a natural tool to start with. However, the \textit{bounded difference} condition in Azuma's inequality would immediately cause an issue: the noise at time $t$ is correlated with $\bw_{t-1,1}$ and thus one cannot get a small bounded difference almost surely. For example, suppose the bounded difference of $\{N_t\}$ at time $t$ is at most $\bw_{t-1,1}^2$. Since we do not yet know the behavior of $\bw_{t-1,1}$, we can only upper bound the bounded difference of $\{N_t\}$ in the worst case\footnote{This is because we are able to upper bound $\bw_{t-1,1}$ by $1+o(1)$ almost surely. See~\autoref{sec:oja bounded}. Note that there are ways to get better bounded difference condition in the worst case but this is still not sufficient.} by $1+o(1)$. In the meantime, both $\bw_{t,1}^2$ and the noise are expected to be very small in the early stage of the dynamic with high probability. 

To circumvent this obstacle, we consider the \textit{stopped process} of the original martingale in which the bounded difference is under control. 
For example, consider the above situation where the noise term $\{N_t\}$ is a martingale and a stopping time $\tau$ for the event $\{\bw_{\tau,1}^2\geq0.1\}$. The stopped process, denoted by $\{N_{t\wedge\tau}\}$ where $t\wedge\tau=\min\{t,\tau\}$, is a process that simulates $\{N_t\}$ and \textit{stops} at the first time $t^*$ such that $\bw_{t^*,1}^2\geq0.1$.
It is known that a stopped process of a martingale is also a martingale. 
Furthermore, the bounded difference of the stopped process $\{N_{t\wedge\tau}\}$ would be $0.1$ almost surely by the choice of $\tau$.
It turns out that this improvement in the bounded difference condition drastically increases the quality of Azuma's inequality and gives the desiring concentration for the stopped process.

There is one last missing step before showing the dominance of $\bw_{0,1}$ in~\autoref{eq:intro one shot}: we have to show that the concentration for the stopped process $\{N_{t\wedge\tau}\}$ can be extended to the original process $\{N_{t}\}$.
We achieve this task by developing a \textit{pull-out lemma} which is able to utilize the structure of the martingale and pull out the stopping time from a concentration inequality.

\paragraph{Insight 3: Maximal martingale inequality and pull-out lemma.}

In general, there is no hope to pull out the stopping time from a concentration inequality for the stopped process without blowing up the failure probability.
The naive union bound would give a blow-up of factor $T$ in the failure probability and it is undesirable.

Let $M_t=\sum_{t'=1}^\top H^{-t'}N_{t'}$ be the noise term in the ODE trick (\textit{i.e.,}~\autoref{eq:intro one shot}) and $\tau$ be a stopping time that ensures good bounded difference condition. Note that as $\{N_t\}$ is a martingale, we know that $\{M_{t\wedge\tau}\}$ is also a martingale.
There are two key ingredients to pull out the stopping time from $\{M_{t\wedge\tau}\}$, \textit{i.e.,} the stopped process of the noise term.

First, we use the \textit{maximal} concentration inequality (\textit{e.g.,}~\autoref{lem:maximal azuma}) which gives the following stronger guarantee than the traditional Azuma's inequality.
\begin{equation}\label{eq:intro pull out 1}
\Pr\left[\sup_{1\leq t\leq T}|M_{t\wedge\tau}-M_0|\geq a\right]<\delta
\end{equation}
for some $a>0$, $T\in\N$, and $\delta\in(0,1)$. Note that the maximal concentration inequality gives concentration for any $1\leq t\leq T$ without paying an union bound.

Second, we identify a \textit{chain structure} on the martingale and the stopping time $\tau$ we are working with. 
Concretely, we are able to show that for all $t\in[T]$,
\begin{equation}\label{eq:intro pull out 2}
\Pr\left[\tau\geq t+1\ \Big|\ \sup_{1\leq t'\leq t}|M_{t'}-M_0|<a\right] = 1 \, .
\end{equation}
Namely, if the bad event has not happened, then the martingale would not stop immediately. Intuitively,~\autoref{eq:intro pull out 2} holds due to the ODE trick because $\{\sup_{1\leq t'\leq t}|M_{t'}-M_0|<a\}$ implies the noise term to be small and thus the drifting term dominates. As $\tau$ is properly chosen such that the martingale would not stop if the process $\bw_t$ followed the drifting term, we know that $\tau\geq t+1$.

Combining the above two ingredients (\textit{i.e.,}~\autoref{eq:intro pull out 1} and~\autoref{eq:intro pull out 2}), we are able to show in the pull-out lemma that
\[
\Pr\left[\sup_{1\leq t\leq T}|M_{t}-M_0|\geq a\right]<\delta \, ,
\]
\textit{i.e.,} the stopping time has been \textit{pulled out}.\\

Let us end this subsection with a high-level sketch on the proof for the pull-out lemma. The key idea is to consider another stopping time $\tau'$ for the event $\{|M_{\tau'}-M_0|\geq a\}$ and partition the probability space of the error event $\{\sup_{1\leq t\leq T}|M_{t}-M_0|\geq a\}$ in to two parts $P_1$ and $P_2$ with the following properties. In $P_1$, we can show that 
\begin{equation*}
\Pr\left[\sup_{1\leq t\leq T}|M_{t}-M_0|\geq a,\ P_1\right]=\Pr\left[\sup_{1\leq t\leq T}|M_{t\wedge\tau}-M_0|\geq a,\ P_1\right] \, .
\end{equation*}
As for $P_2$, we use the chain condition in~\autoref{eq:intro pull out 2} to show that the probability of error event is 0 based on a  \textit{diagonal argument}. Thus, we have
\begin{align*}
\Pr\left[\sup_{1\leq t\leq T}|M_{t}-M_0|\geq a\right]&=\Pr\left[\sup_{1\leq t\leq T}|M_{t}-M_0|\geq a,\ P_1\right]+\Pr\left[\sup_{1\leq t\leq T}|M_{t}-M_0|\geq a,\ P_2\right]\\
&=\Pr\left[\sup_{1\leq t\leq T}|M_{t\wedge\tau}-M_0|\geq a,\ P_1\right]+0\\
&\leq\Pr\left[\sup_{1\leq t\leq T}|M_{t\wedge\tau}-M_0|\geq a\right]<\delta \, .
\end{align*}
See~\autoref{sec: local convergence} and~\autoref{fig:intuition pull out} for more details on the chain condition for biological Oja's rule and how to partition the probability space of the error event.

\subsection{Related works}\label{sec:related works}

\paragraph{Related theory work on biological Oja's variants.}
Computational neuroscientists have proposed several variants of the biological Oja's rule to solve streaming PCA~\cite{Oja82, Oja92,Sanger89,Foldiak1989,Leen1991,Rubner1989,KDT94,PHC15}. In the single neuron case, Oja used stochastic approximation theory~\cite{Kushner1978} to prove the global convergence in the limit~\cite{Oja82}. In the multi-neurons case, Hornik and Kuan similarly demonstrated the connection between the discrete dynamics and the associated ODE~\cite{Hornik1992} from Kushner-Clark theorem~\cite{Kushner1978}. However, most existing analysis on the multi-neurons case shows only local convergence~\cite{Sanger89,Foldiak1989,Leen1991,Rubner1989,KDT94,PHC15}. Even for the convergence in the limit, the global convergence for most networks in the multi-neurons case is difficult to show. Yan et al. provided the only global analysis on Oja's multi-neurons subspace network~\cite{Oja92,Yan1994,Yan1998}. Previously there is no work showing the convergence rate on the discrete dynamics. This paper shows the first convergence rate bound on the biological Oja's rule.

\paragraph{Oja's rule in machine learning.}
Unlike the situation in the biological Oja's rule, a line of recent exciting results~\cite{HP14,DOR15,BDWY16,Shamir16,JJKNS16,AL17} showed convergence rate analysis for variants of Oja's rule in the machine learning community. Since the update rules of these works are not biologically-plausible, we call them \textit{ML Oja's rules} to distinguish from the biological Oja's rule.

To see the difference between the biological Oja's rule and the ML Oja's rule, let us take the update rule from~\cite{Shamir16,JJKNS16,AL17} as an example. Note that the other variants of ML Oja's rule also have the similar fundamental difference to the biological Oja's rule as illustrated by the following example. Let $\bw_t\in\Real^n$ be the output vector at time $t=0,1,\dots,T$, the update rule is
\[
\bw_t = \prod_{t'=1}^{t}\left(1+\eta_{t'}\bx_{t'}\bx_{t'}^\top\right)\bw_0
\]
and the output is $\bw_T/\|\bw_T\|_2$. Note that the above update rule is equivalent to~\autoref{eq:oja normalized}, \textit{i.e.,} applying Taylor's expansion on the ML Oja's rule and truncating the higher-order terms would result in biological Oja's rule.

A natural idea would be trying to \textit{couple} the biological Oja's rule with the ML Oja's rule by showing that for all $t\in\N$, the weight vectors from the two dynamics would be close to each other. However, this seems to be more difficult than direct analysis and we leave it as an interesting open problem to investigate whether this is the case.
Moreover, the corresponding continuous dynamics suggest an intrinsic difference between the two: the continuous version of the ML Oja's rule can be tightly characterized by a single linear ODE while that of the biological Oja's rule requires two linear ODEs in different regimes for tight analysis. See~\autoref{sec:continuous Oja} and~\autoref{sec:compare bio ML} for more details. 

To sum up, the biological Oja's rule and the ML Oja's rule are similar but the analysis of the later cannot be directly applied to the former. While following the proof idea for the ML Oja's rule might give some hints on how to analyze the biological Oja's rule, in this work we develop a completely different framework (as briefly discussed in~\autoref{sec:technical overview}). This framework not only gives the first and nearly optimal convergence rate guarantee for the biological Oja's rule, but also could improve the convergence rate of the ML Oja's rule with better logarithmic dependencies and we leave it as a future work.

\paragraph{Comparing with other streaming PCA algorithms.}
Streaming PCA is a well-studied and challenging computational problem. Many works~\cite{DOR15,Shamir16,LWLZ18,JJKNS16,AL17} provided theoretical guarantees for streaming PCA algorithms. Interestingly, all of the streaming PCA algorithms in these works are some variants of the biological Oja's rule.

Recall that there are two standard convergence notions: the global convergence where $\bw_0$ is an uniformly random unit vector and the local convergence where $\bw_0$ is constantly correlated with the top eigenvector. There are 5 parameters of interest: the dimension $n\in\N$, the eigenvalue gap $\gap:=\lambda_1-\lambda_2\in(0,1)$, the top eigenvalue $\lambda_1\in(0,1)$, the error parameter $\epsilon\in(0,1)$, and the failure probability $\delta\in(0,1)$. Ideally, the goal is to achieve the information-theoretic lower bound $\Omega(\lambda_1\gap^{-2}\epsilon^{-1}\log(\delta^{-1}))$ given by~\cite{AL17}. Prior to this work, the state-of-the-art for both global and local convergences are achieved by~\cite{AL17} using ML Oja's rule (see the second to last row of~\autoref{table:compare}). In this work, as a byproduct, the convergence rate we get for the biological Oja's rule outperforms~\cite{AL17} by a logarithmic factor in both settings. 
See~\autoref{table:compare} for a summary.

\begin{table}[ht]
\centering
\def\arraystretch{2}
\begin{tabular}{|c|c|c|l|c|l|c|}
\hline
\multirow{2}{*}{\small{Algorithm}}           & \multirow{2}{*}{Reference}   & \multirow{2}{*}{$\substack{\text{Any}\\\text{Input}}$}  &  \multicolumn{2}{|c|}{Global Convergence} & \multicolumn{2}{|c|}{Local Convergence}                             \\ 
\cline{4-7} & & & \multicolumn{1}{|c|}{$\substack{\text{Convergence}\\\text{Rate}}$} & $\substack{\text{Degree in}\\\text{Log Terms}\footref{foot:log}}$ & \multicolumn{1}{|c|}{$\substack{\text{Convergence}\\\text{Rate}}$} & $\substack{\text{Degree in}\\\text{Log Terms}\footref{foot:log}}$ \\
\hline
$\substack{\text{Biological}\\\text{Oja's Rule}}$ &  This Work  & {\color{red}Y}  &  $\tilde{O}\left(\frac{\lambda_1}{\gap^2}\cdot\frac{1}{\epsilon\wedge\delta^2}\right)$ & $3$ & $\tilde{O}\left(\frac{\lambda_1}{\gap^2}\cdot\frac{1}{\epsilon}\right)$ & $2$ \\ \hline
\multirow{5}{*}{$\substack{\text{ML}\\\text{Oja's Rule}}$} & \cite{DOR15} & N  &  $\tilde{O}\left(\frac{n}{\gap^2}\cdot\frac{1}{\epsilon}\right)$~\footref{foot:delta} & -~\footref{foot:log degree} & $\tilde{O}\left(\frac{n}{\gap^2}\cdot\frac{1}{\epsilon}\right)$~\footref{foot:delta} & -~\footref{foot:log degree} \\ 
\cline{2-7} & \cite{Shamir16} & {\color{red}Y} &  $\tilde{O}\left(\frac{n}{\gap^2}\cdot\frac{1}{\epsilon}\right)$~\footref{foot:delta} & -~\footref{foot:log degree} & $\tilde{O}\left(\frac{n}{\gap^2}\cdot\frac{1}{\epsilon}\right)$~\footref{foot:delta} & -~\footref{foot:log degree} \\
\cline{2-7} & \cite{LWLZ18} & N &  $\tilde{O}\left(\frac{\lambda_1n}{\gap^2}\cdot\frac{1}{\epsilon\delta^6}\right)$ & -~\footref{foot:log degree} & $\tilde{O}\left(\frac{\lambda_1n}{\gap^2}\cdot\frac{1}{\epsilon\delta^4}\right)$ & -~\footref{foot:log degree} \\
\cline{2-7} & \cite{JJKNS16} & {\color{red}Y} &  $\tilde{O}\left(\frac{\lambda_1}{\gap^2}\cdot\frac{1}{\epsilon\delta^3}\right)$ & $2$ &     $\tilde{O}\left(\frac{\lambda_1}{\gap^2}\cdot\frac{1}{\epsilon\delta^3}\right)$ & $2$  \\
\cline{2-7}   & \cite{AL17} & {\color{red}Y} &  $\tilde{O}\left(\frac{\lambda_1}{\gap^2}\cdot\frac{1}{\epsilon\wedge\delta^2}\right)$ & $\geq 4$ & $\tilde{O}\left(\frac{\lambda_1}{\gap^2}\cdot\frac{1}{\epsilon}\right)$ & $\geq3$ \\ \hline
$\substack{\text{Any}\\\text{Algorithm}}$    & \cite{AL17}         & \multicolumn{5}{|c|}{$\Omega\left(\frac{\lambda_1}{\gap^2}\cdot\frac{\log\frac{1}{\delta}}{\epsilon}\right)$~\footref{foot:lower bound}\ \ \ \ \ \ \ }               \\ \hline
\end{tabular}
\caption{Convergence rate for biological Oja's rule and ML Oja's rule in solving streaming PCA. The ``Any Input'' column indicates that whether the analysis has higher moment conditions on the unknown distribution $\mathcal{D}$. Note that having higher moment conditions would drastically simplify the problem because the non-linear terms in the update rule can then be non-trivially replaced with the first order term.}
\label{table:compare}
\end{table}

\renewcommand{\thefootnote}{\fnsymbol{footnote}}
\footnotetext[1]{\label{foot:log}Let $f(\log n, \log(1/\epsilon),\log(1/\delta),\log(1/\gap))$ be the polynomial of the logarithmic dependencies in the convergence rate. We compare the maximum degree of $f$ among different analyses. Note that this measure makes sense when $n,1/\epsilon,1/\delta,1/\gap$ are polynomially related.}
\footnotetext[2]{\label{foot:delta}Both~\cite{DOR15} and~\cite{Shamir16} cannot handle arbitrary failure probability so we ignore their $\delta$ dependency on the table.}
\footnotetext[3]{\label{foot:log degree}In~\cite{DOR15,Shamir16,LWLZ18}, their convergence rates are far from the information-theoretic lower bound. So we do not trace down their logarithmic dependencies.}
\footnotetext[4]{\label{foot:lower bound}In~\cite{AL17}, they only stated $\Omega(\frac{\lambda_1}{\gap}\cdot\frac{1}{\epsilon})$ lower bound. We observe that their lower bound can be improved by a $\log(1/\delta)$ factor using the fact that distinguishing a fair coin from a biased coin with probability at least $\delta$ requires $\Omega(\log(1/\delta))$ samples.}

\renewcommand{\thefootnote}{\arabic{footnote}}

\paragraph{Algorithms inspired by biological neural networks.}
In recent years, the study of the algorithmic aspect of mathematical models for biological neural networks is an emerging field in theoretical CS.
For example, the efficiency of spiking neural networks in solving the \textit{winner-take-all (WTA)} problem~\cite{LMP17BDA,LMP17ITCS,LMP17DISC,LM18,SCL19}, the efficiency of spiking neural networks in storing temporal information~\cite{WL19,Merav2019}, assemblies~\cite{LMPV18,PV19}, spiking neural networks in solving optimization problems~\cite{CCL19,Pehlevan19} and learning hierarchically structured concepts~\cite{LM19}. Under this context, this work provides an algorithmic insight in a plasticity learning rule that solves streaming PCA.

\section{Preliminaries}\label{sec:prelim}
In this section, we introduce the mathematical notations and tools that we use in this work.
\subsection{Notations}
We use $\N=\{1,2,\dots\}$ and $\N_{\geq0}=\{0,1,\dots\}$. For each $n\in\N$, denote $[n]=\{1,2,\dots,n\}$ and $[n]_{\geq0}=\{0,1,\dots,n\}$. For a vector indexed by time $t$, \textit{e.g.,} $\bw_t$, its $i^\text{th}$ coordinate is denoted by $\bw_{t,i}$. The notation $\tilde{O}$ (similarly, $\tilde{\Omega}$ and $\tilde{\Theta}$) is the same as the big-O notation by ignoring extra poly-logarithmic term. 
$\mathbf{1}_E$ stands for the indicator function for a probability event $E$. We sometimes abuse the big O notation by $y=O(x)$ meaning $|y|=O(x)$ and this will be clear in the context. Throughout the paper, $\lambda$ is used to denote the vector $(\lambda_1,\lambda_2,\dots,\lambda_n)$ where $\lambda_1>\lambda_2\geq\cdots\geq\lambda_n\geq0$ are the eigenvalues of the covariance matrix $A$. $\diag(\lambda)$ denotes the diagonal matrix with $\lambda$ on the diagonal. We will follow the convention of stochastic process and denote $\min\lbrace a, b\rbrace$ as $a\wedge b$. We say an event happens {\it almost surely} if it happens with probability one.

\subsection{Probability toolbox}
\paragraph{Random process and concentration inequality.}
Random process is a central tool in this paper. Let us start with the most general definition on adapted random process.

\begin{definition}[Adapted random process]
Let $\{X_t\}_{t\in\Nz}$ be a sequence of random variables and $\{\mathcal{F}_t\}_{t\in\Nz}$ be a filtration. We say $\{X_t\}_{t\in\Nz}$ is an adapted random process with respect to $\{\mathcal{F}_t\}_{t\in\Nz}$ if for each $t\in\Nz$, the $\sigma$-algebra generated by $X_0,X_1,\dots,X_t$ is contained in $\mathcal{F}_t$.
\end{definition}

In most of the situation, we use $\mathcal{F}_t$ to denote the \textit{natural filtration} of $\{X_t\}_{t\in\Nz}$, namely, $\mathcal{F}_t$ is defined as the $\sigma$-algebra generated by $X_0,X_1,\dots,X_t$. One of the most common adapted processes is the martingale.

\begin{definition}[Martingale]
Let $\{M_t\}_{t\in\Nz}$ be a sequence of random variables and let $\{\mathcal{F}_t\}_{\N}$ be its natural filtration. We say $\{M_t\}_{t\in\Nz}$ is a martingale if for each $t\in\N$, $\Exp[M_{t+1}\ |\ \mathcal{F}_t]=M_t$.
\end{definition}

Note that for any adapted random process $\{X_t\}_{t\in\Nz}$, one can always turn it into a martingale by defining $M_0=X_0$ and for any $t\in\N$, let $M_t=X_t-\Exp[X_t\ |\ \mathcal{F}_{t-1}]$.
When the difference of a martingale can be bounded almost surely, \textit{the Azuma's inequality} provides an useful concentration inequality with exponential tail.

\begin{lemma}[Azuma's inequality~\cite{Azuma67}]\label{lem:azuma}
Let $\{M_t\}_{t\in\Nz}$ be a martingale. Let $T\in\N$ and $a,c\geq0$ be some constants. Suppose for each $t=1,2,\dots,T$, $|M_{t}-M_{t-1}|\leq c$ almost surely, then we have
\[
\Pr\left[|M_T-M_0|\geq a\right]< \exp\left(-\Omega\left(\frac{a^2}{c^2T}\right)\right) \, .
\]
\end{lemma}

The following maximal Azuma's inequality shows that one can even get union bound for free with the help of Doob's inequality.

\begin{lemma}[Maximal Azuma's inequality~{\cite[Section~3]{HMRR13}}]\label{lem:maximal azuma}
Let $\{M_t\}_{t\in\Nz}$ be a martingale. Let $T\in\N$ and $a,c\geq0$ be some constants. Suppose for each $t=1,2,\dots,T$, $|M_{t}-M_{t-1}|\leq c$ almost surely, then we have
\[
\Pr\left[\sup_{0\leq t\leq T}|M_t-M_0|\geq a\right]< \exp\left(-\Omega\left(\frac{a^2}{c^2T}\right)\right) \, .
\]
\end{lemma}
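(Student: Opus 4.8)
The final statement to prove is the maximal Azuma's inequality (Lemma~\ref{lem:maximal azuma}): for a martingale $\{M_t\}_{t\in\Nz}$ with bounded differences $|M_t - M_{t-1}| \le c$ almost surely for $t = 1, \dots, T$, one has $\Pr[\sup_{0 \le t \le T} |M_t - M_0| \ge a] < \exp(-\Omega(a^2/(c^2 T)))$.

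\textbf{Plan.} The idea is to combine the exponential moment bound that underlies ordinary Azuma (Lemma~\ref{lem:azuma}) with Doob's maximal inequality applied to a suitable submartingale. First I would reduce to $M_0 = 0$ by replacing $M_t$ with $M_t - M_0$, which is still a martingale with the same bounded differences. Next, fix $\lambda > 0$ and consider the process $Z_t^{(\lambda)} := e^{\lambda M_t}$. Since $x \mapsto e^{\lambda x}$ is convex, $\{Z_t^{(\lambda)}\}$ is a nonnegative submartingale with respect to the natural filtration. Doob's maximal inequality for nonnegative submartingales then gives, for any $u > 0$, $\Pr[\sup_{0 \le t \le T} Z_t^{(\lambda)} \ge u] \le \Exp[Z_T^{(\lambda)}]/u = \Exp[e^{\lambda M_T}]/u$. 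Taking $u = e^{\lambda a}$ and noting that $\sup_t M_t \ge a$ is equivalent to $\sup_t Z_t^{(\lambda)} \ge e^{\lambda a}$, we get $\Pr[\sup_{0 \le t \le T} M_t \ge a] \le e^{-\lambda a}\Exp[e^{\lambda M_T}]$.

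\textbf{Bounding the MGF.} The remaining step is the standard one from the proof of Azuma's inequality: bound $\Exp[e^{\lambda M_T}]$. Writing $M_T = \sum_{t=1}^T (M_t - M_{t-1})$ and conditioning iteratively, each increment $D_t := M_t - M_{t-1}$ satisfies $\Exp[D_t \mid \mathcal{F}_{t-1}] = 0$ and $|D_t| \le c$, so by Hoeffding's lemma $\Exp[e^{\lambda D_t} \mid \mathcal{F}_{t-1}] \le e^{\lambda^2 c^2 / 2}$. Peeling off one factor at a time yields $\Exp[e^{\lambda M_T}] \le e^{\lambda^2 c^2 T / 2}$. Combining, $\Pr[\sup_t M_t \ge a] \le \exp(\lambda^2 c^2 T/2 - \lambda a)$, and optimizing over $\lambda$ (taking $\lambda = a/(c^2 T)$) gives $\exp(-a^2/(2c^2 T))$. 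Applying the identical argument to the martingale $\{-M_t\}$ controls $\sup_t(-M_t) \ge a$, i.e.\ $\inf_t M_t \le -a$, and a union bound over the two one-sided events gives $\Pr[\sup_{0 \le t \le T}|M_t - M_0| \ge a] \le 2\exp(-a^2/(2c^2 T)) = \exp(-\Omega(a^2/(c^2 T)))$, as claimed.

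\textbf{Main obstacle.} There is no real obstacle here — this is a textbook result and the proof is essentially the juxtaposition of two classical facts (Doob's maximal inequality and the Hoeffding MGF bound). The only point requiring a little care is the justification that $\{e^{\lambda M_t}\}$ is a genuine submartingale (needing integrability of the exponential moments, which follows from the a.s.\ boundedness of the increments) and the translation between the event $\{\sup_t |M_t - M_0| \ge a\}$ and events about $\sup_t e^{\pm\lambda M_t}$. Since the bound is only needed up to the $\exp(-\Omega(\cdot))$ precision, the factor of $2$ from the union bound and the constant $1/2$ in the exponent are immaterial, so I would not bother optimizing constants.
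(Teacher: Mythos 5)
Your proof is correct and is the standard textbook argument for maximal Azuma--Hoeffding: exponentiate the martingale, invoke Doob's submartingale maximal inequality, bound the moment generating function via Hoeffding's lemma, optimize $\lambda$, and union-bound the two one-sided tails. The paper itself does not prove this lemma but cites it from~\cite{HMRR13}, and your argument is essentially the one given there, so there is nothing to reconcile.
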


The Azuma's inequality can be strengthen by considering the conditional variance. This is known as the Freedman's inequality.

\begin{lemma}[Freedman's inequality~{\cite{F75}}]\label{lem:freedman}
Let $\{M_t\}_{t\in\Nz}$ be a martingale. Let $T\in\N$ and $a,c,\sigma_t\geq0$ be some constants for all $t\in[T]$. Suppose for each $t=1,2,\dots,T$, $|M_{t}-M_{t-1}|\leq c$ almost surely and $\Var[M_{t} - M_{t-1}\ |\ \mathcal{F}_{t-1}]\leq\sigma_t^2$, then we have
\[
\Pr\left[\sup_{0\leq t\leq T}|M_t-M_0|\geq a\right]< \exp\left(-\Omega\left(\frac{a^2}{\sum_{t=1}^{T}\sigma_t^2+ca}\right)\right) \, .
\]
\end{lemma}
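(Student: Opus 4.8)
The plan is to prove this variance-sensitive tail bound by the exponential supermartingale method (the martingale analogue of the Chernoff bound), which is precisely what replaces the crude factor $c^2T$ of Azuma's inequality (\autoref{lem:maximal azuma}) by the smaller denominator $\sum_t\sigma_t^2 + ca$. First I would replace $M_t$ by $M_t - M_0$ (still a martingale, now starting at $0$) and work with the martingale differences $D_t := M_t - M_{t-1}$, which satisfy $\Exp[D_t\mid\mathcal{F}_{t-1}] = 0$, $|D_t|\le c$, and $\Exp[D_t^2\mid\mathcal{F}_{t-1}] = \Var[D_t\mid\mathcal{F}_{t-1}]\le\sigma_t^2$ almost surely.

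\textbf{Step 1 (pointwise MGF estimate).} The heart of the matter is that for every real $y\le c$ and every $\lambda>0$,
\[
e^{\lambda y}\ \le\ 1 + \lambda y + \frac{e^{\lambda c}-1-\lambda c}{c^2}\,y^2 ,
\]
which follows from the monotonicity of $u\mapsto(e^u-1-u)/u^2$ on $\Real$, applied with $u=\lambda y\le\lambda c$; crucially this holds for negative $y$ as well, so the two-sided bound $|D_t|\le c$ suffices. Substituting $y=D_t$, taking $\Exp[\,\cdot\mid\mathcal{F}_{t-1}]$, discarding the vanishing linear term, bounding the $y^2$ term by $\sigma_t^2$, and using $1+x\le e^x$ gives
\[
\Exp\!\left[e^{\lambda D_t}\,\middle|\,\mathcal{F}_{t-1}\right]\ \le\ \exp\!\left(\frac{e^{\lambda c}-1-\lambda c}{c^2}\,\sigma_t^2\right).
\]

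\textbf{Step 2 (supermartingale and Doob), Step 3 (optimize and symmetrize).} Setting $V_t := \sum_{s=1}^t\sigma_s^2$ and $Z_t := \exp\!\big(\lambda M_t - \tfrac{e^{\lambda c}-1-\lambda c}{c^2}V_t\big)$, Step~1 shows $\Exp[Z_t\mid\mathcal{F}_{t-1}]\le Z_{t-1}$, so $\{Z_t\}$ is a nonnegative supermartingale with $Z_0=1$; Doob's maximal inequality gives $\Pr[\sup_{0\le t\le T}Z_t\ge\rho]\le 1/\rho$. Since $V_t\le V_T$ and $e^{\lambda c}-1-\lambda c\ge0$, the event $\{M_t\ge a\}$ forces $Z_t\ge\exp(\lambda a - \tfrac{e^{\lambda c}-1-\lambda c}{c^2}V_T)$, so
\[
\Pr\!\left[\sup_{0\le t\le T}M_t\ge a\right]\ \le\ \exp\!\left(-\lambda a + \frac{e^{\lambda c}-1-\lambda c}{c^2}\,V_T\right).
\]
Choosing $\lambda=\tfrac1c\log(1+ca/V_T)$ turns the exponent into $-\tfrac{V_T}{c^2}\big((1+u)\log(1+u)-u\big)$ with $u=ca/V_T$, and the elementary bound $(1+u)\log(1+u)-u=\Omega\big(u^2/(1+u)\big)$ for $u\ge0$ collapses this to $-\Omega\big(a^2/(V_T+ca)\big)$. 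Running the identical argument on $\{-M_t\}$ and union-bounding the two one-sided events (a factor $2$, absorbed into the $\Omega$) yields the claimed two-sided maximal inequality.

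The only genuinely delicate ingredient is the pointwise estimate of Step~1: getting the coefficient of $y^2$ in the form $(e^{\lambda c}-1-\lambda c)/c^2$ — so that the \emph{conditional variance} $\sigma_t^2$, rather than $c^2$, enters the exponent — relies both on the monotonicity of $(e^u-1-u)/u^2$ and on the inequality remaining valid for all $y\le c$ (not just $0\le y\le c$), which is what lets a two-sided difference bound feed into a one-sided tail. Everything downstream is the standard Chernoff bookkeeping plus a one-variable optimization, so I expect no further obstacle.
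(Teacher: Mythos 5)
The paper does not prove this lemma; it is stated with a citation to Freedman [F75] and used as a black box, so there is no internal proof to compare against. Your argument is a correct, self-contained derivation of the stated maximal Bernstein--Freedman bound, and it is in fact the standard one: the Bennett-type pointwise estimate $e^{\lambda y}\le 1+\lambda y+\frac{e^{\lambda c}-1-\lambda c}{c^2}y^2$ for $y\le c$ (via monotonicity of $u\mapsto(e^u-1-u)/u^2$, which you can see cleanly from the representation $(e^u-1-u)/u^2=\int_0^1(1-s)e^{su}\,ds$), the compensated exponential supermartingale $Z_t$, Ville/Doob's maximal inequality for nonnegative supermartingales, the Chernoff optimization $\lambda=\tfrac1c\log(1+ca/V_T)$ yielding the exponent $-\tfrac{V_T}{c^2}\big((1+u)\log(1+u)-u\big)$, the elementary bound $(1+u)\log(1+u)-u\ge\tfrac{u^2}{2+2u/3}$, and finally symmetrization on $\pm M_t$ to upgrade the one-sided maximal bound to the stated two-sided one. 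Every step checks out, including the subtleties you flag: the one-sided hypothesis $y\le c$ is exactly what the MGF lemma needs (so the two-sided difference bound $|D_t|\le c$ more than suffices), the linear term drops out because $D_t$ is a martingale difference, and the supremum over $t$ is free once you appeal to Doob rather than a fixed-time Markov bound. The only cosmetic caveat is that you implicitly use $V_T>0$ when setting $\lambda$; if $V_T=0$ the differences vanish almost surely and the statement is trivial, so this is harmless.
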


Finally, we state a corollary of Freedman's inequality for adapted random process with small conditional expectation.
\begin{corollary}\label{cor:freedman}
Let $\{M_t\}_{t\in\Nz}$ be a random process. Let $T\in\N$ and $a,c,\sigma_t,\mu_t\geq0$ be some constants for all $t\in[T]$. Suppose for each $t=1,2,\dots,T$, $|M_{t}-M_{t-1}|\leq c$ almost surely, $\Var[M_{t} - M_{t-1}\ |\ \mathcal{F}_{t-1}]\leq\sigma_t^2$, and $|\Exp[M_{t}-M_{t-1}\ |\ \mathcal{F}_{t-1}]|\leq\mu_t$, then we have
\[
\Pr\left[\sup_{0\leq t\leq T}|M_t-M_0|\geq a+\max_{1\leq t\leq T}\sum_{t=1}^T \mu_{t}\right]< \exp\left(-\Omega\left(\frac{a^2}{\sum_{t=1}^{T}\sigma_t^2+ca}\right)\right) \, .
\]
\end{corollary}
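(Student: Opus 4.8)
The plan is to reduce the statement to the martingale version of Freedman's inequality (\autoref{lem:freedman}) via the \emph{Doob decomposition}, absorbing the conditional-expectation ``drift'' of $\{M_t\}$ into the additive term $\sum_{t=1}^{T}\mu_t$. The point is that $\{M_t\}$ is only assumed to be an adapted process with a controlled one-step conditional mean, not a martingale, so \autoref{lem:freedman} cannot be applied to it directly; the decomposition isolates a genuine martingale to which it does apply.

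First I would note that we may assume $\mu_t \le c$ without loss of generality: by conditional Jensen, $|\Exp[M_t - M_{t-1}\mid\mathcal{F}_{t-1}]| \le \Exp[|M_t - M_{t-1}|\mid\mathcal{F}_{t-1}] \le c$ almost surely, so replacing $\mu_t$ by $\min\{\mu_t, c\}$ preserves all hypotheses and only shrinks the claimed deviation threshold. Next I would set the predictable part $A_t := \sum_{s=1}^{t}\Exp[M_s - M_{s-1}\mid\mathcal{F}_{s-1}]$, with $A_0 := 0$, and the centered part $N_t := M_t - M_0 - A_t$, with $N_0 = 0$. A one-line check shows $\Exp[N_t - N_{t-1}\mid\mathcal{F}_{t-1}] = \Exp[M_t - M_{t-1}\mid\mathcal{F}_{t-1}] - (A_t - A_{t-1}) = 0$, so $\{N_t\}$ is a martingale with respect to $\{\mathcal{F}_t\}$. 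Then I would verify the two quantitative inputs of \autoref{lem:freedman} for $\{N_t\}$: since $N_t - N_{t-1} = (M_t - M_{t-1}) - \Exp[M_t - M_{t-1}\mid\mathcal{F}_{t-1}]$, we get $|N_t - N_{t-1}| \le c + \mu_t \le 2c$ almost surely, and because subtracting the $\mathcal{F}_{t-1}$-measurable quantity $\Exp[M_t - M_{t-1}\mid\mathcal{F}_{t-1}]$ leaves the conditional variance unchanged, $\Var[N_t - N_{t-1}\mid\mathcal{F}_{t-1}] = \Var[M_t - M_{t-1}\mid\mathcal{F}_{t-1}] \le \sigma_t^2$. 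Applying \autoref{lem:freedman} to $\{N_t\}$ (the difference bound $2c$ is absorbed by the $\Omega(\cdot)$ in the exponent) yields
\[
\Pr\left[\sup_{0\le t\le T}|N_t| \ge a\right] < \exp\left(-\Omega\left(\frac{a^2}{\sum_{t=1}^{T}\sigma_t^2 + ca}\right)\right).
\]

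Finally I would combine the two pieces. For every $t \le T$ we have $|A_t| \le \sum_{s=1}^{t}\mu_s \le \sum_{s=1}^{T}\mu_s$, so by the triangle inequality $|M_t - M_0| = |N_t + A_t| \le |N_t| + \sum_{s=1}^{T}\mu_s$; hence the event $\{\sup_{0\le t\le T}|M_t - M_0| \ge a + \sum_{t=1}^{T}\mu_t\}$ is contained in $\{\sup_{0\le t\le T}|N_t| \ge a\}$, and the displayed tail bound transfers verbatim. The whole argument is essentially bookkeeping around the Doob decomposition; the only mild points to be careful about are the $\mu_t \le c$ normalization and the fact that centering by an $\mathcal{F}_{t-1}$-measurable shift does not inflate the conditional variance, so I do not anticipate a genuine obstacle here.
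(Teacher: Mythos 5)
Your proof is correct, and it is the canonical argument: the paper does not supply a proof of this corollary, and the Doob decomposition into a predictable drift $A_t$ plus a martingale $N_t$, followed by an application of Freedman's inequality (\autoref{lem:freedman}) to $N_t$ and a triangle-inequality transfer back to $M_t$, is exactly the intended reduction. The only cosmetic point worth noting is that the expression $\max_{1\leq t\leq T}\sum_{t=1}^T\mu_t$ in the statement is a typo for $\max_{1\leq t'\leq T}\sum_{t=1}^{t'}\mu_t$, which by nonnegativity of the $\mu_t$ equals $\sum_{t=1}^T\mu_t$, as you correctly read it.
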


\paragraph{Stopping time.}
One powerful technique for studying martingale is the notion of \textit{stopping time} defined as follows.
\begin{definition}[Stopping time]\label{def:stopping time}
Let $\{X_t\}_{t\in\Nz}$ be an adapted random process associated with filtration $\{\mathcal{F}_t\}_{t\in\Nz}$. An integer-valued random variable $\tau$ is a stopping time for $\{X_t\}_{t\in\Nz}$ if for all $t\in\N$, $\{\tau=t\}\in\mathcal{F}_t$.
\end{definition}

Let $\{M_t\}_{t\in\Nz}$ be a martingale, the most common stopping time for $\{M_t\}_{t\in\Nz}$ is of the following form. For any $a\in\Real$, let
\[
\tau := \min_{M_t\geq a}t \, .
\]
Namely, $\tau$ is the first time when the martingale becomes at least $a$. For convenience, in the rest of the paper, we would define stopping time of this form by saying ``$\tau$ is the stopping time for the event $\{M_t\geq a\}$''.

Given a martingale $\{M_t\}_{t\in\Nz}$ and a stopping time $\tau$, it is then natural to consider the corresponding \textit{stopped process} $\{M_{t\wedge\tau}\}_{t\in\Nz}$ where $t\wedge\tau=\min\{t,\tau\}$ is also a random variable. An useful and powerful fact here is that the stopped process of a martingale is also a martingale. See~\cite[Theorem~10.9]{W91} for a proof for this classic result. 

We have the following identity for the stopped process.
\begin{lemma}[The difference of a stopped process]\label{lem: stop difference}
Given a stochastic process $M_t$ and a stopping time $\tau$. We have
	\[
	M_{t\wedge\tau} - M_{(t-1)\wedge\tau} =  \mathbf{1}_{\tau \geq t}(M_t - M_{t-1}) \, .
	\]
\end{lemma}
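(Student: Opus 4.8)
The plan is a direct case analysis on the value of $\tau$ relative to $t$, using only that $\tau$ is integer-valued (so that for a fixed $t\in\N$ the events $\{\tau\geq t\}$ and $\{\tau\leq t-1\}$ partition the sample space) and the elementary identity $a\wedge\tau=a$ whenever $\tau\geq a$. No martingale property or adaptedness is needed here; the statement is purely pathwise.

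First I would fix $t\in\N$ and condition on the event $\{\tau\geq t\}$. On this event $t\wedge\tau=t$ and, since $\tau\geq t>t-1$, also $(t-1)\wedge\tau=t-1$, so the left-hand side equals $M_t-M_{t-1}$; meanwhile $\mathbf{1}_{\tau\geq t}=1$, so the right-hand side also equals $M_t-M_{t-1}$, and the two sides agree. Next I would condition on the complementary event $\{\tau\leq t-1\}$. There $\tau\leq t-1<t$, hence $t\wedge\tau=\tau=(t-1)\wedge\tau$, so the left-hand side is $M_\tau-M_\tau=0$; and $\mathbf{1}_{\tau\geq t}=0$, so the right-hand side is $0$ as well. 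Since $\{\tau\geq t\}\cupdot\{\tau\leq t-1\}$ is the whole sample space, the claimed identity holds pointwise, and therefore almost surely.

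There is essentially no obstacle: the only thing to be slightly careful about is the integrality of $\tau$, which is guaranteed by \autoref{def:stopping time}, so that no value of $\tau$ falls strictly between $t-1$ and $t$ and the two cases are genuinely exhaustive. If one wished, the same computation can be written in one line as $M_{t\wedge\tau}-M_{(t-1)\wedge\tau}=\sum_{s\leq t}(M_s-M_{s-1})\mathbf{1}_{\tau\geq s}-\sum_{s\leq t-1}(M_s-M_{s-1})\mathbf{1}_{\tau\geq s}$ telescoped, but the two-case argument above is cleaner and is what I would present.
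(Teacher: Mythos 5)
Your proposal is correct and takes essentially the same approach as the paper: the paper expands $M_{t\wedge\tau}$ and $M_{(t-1)\wedge\tau}$ as sums weighted by the indicators $\mathbf{1}_{\tau\geq t}$, $\mathbf{1}_{\tau<t}$, etc., and simplifies algebraically, which is just the compact indicator-notation version of your two-case pathwise argument on $\{\tau\geq t\}$ versus $\{\tau\leq t-1\}$. Your observation that integrality of $\tau$ makes the two cases exhaustive is the same fact the paper silently uses when it combines $\mathbf{1}_{\tau\geq t-1}$ and $\mathbf{1}_{\tau=t-1}$ into $\mathbf{1}_{\tau\geq t}$.
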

\begin{proof}
We have
\begin{align*}
M_{t\wedge\tau} - M_{(t-1)\wedge\tau}&= \mathbf{1}_{\tau \geq t}M_t + \mathbf{1}_{\tau < t}M_{\tau} - \mathbf{1}_{\tau \geq t-1}M_{t-1} - \mathbf{1}_{\tau < t-1}M_{\tau}\\
&= \mathbf{1}_{\tau \geq t}M_t - \mathbf{1}_{\tau \geq t-1}M_{t-1} +  \mathbf{1}_{\tau = t-1}M_{\tau} \, . \\
\intertext{Since $\tau = t-1$ at the last term, we can combine the last two terms to have }
&=\mathbf{1}_{\tau \geq t}M_t - \mathbf{1}_{\tau \geq t}M_{t-1} \\
&= \mathbf{1}_{\tau \geq t}(M_t - M_{t-1})
\end{align*}
as desired. 
\end{proof}

Here we also define a \textit{shifted stopping process} which is an useful variant used in more complicated situations.
\begin{restatable}[The shifted stopped process]{definition}{starstopping}\label{def:star stopped}
Given an adapted stochastic process $M_t$ with respect to filtration $\mathcal{F}_t$ and a stopping time $\tau$, we define a new adapted process $M_{t\star \tau}$ with respect to $\mathcal{F}_t$ to be 
\[
M_{t\star \tau} = \mathbf{1}_{\tau > t}M_t + \mathbf{1}_{\tau \leq t}M_{\tau - 1} \, .
\]
Given $t\in \mathbb{N}$, we define a random variable $t\star \tau$ as 
\[
t\star \tau = \mathbf{1}_{\tau > t}t + \mathbf{1}_{\tau \leq t}(\tau - 1) \, .
\]
\end{restatable}
Intuitively, a shifted stopped process is the original process which moves one step back if the stopping time stops.

\begin{lemma}[The difference of a shifted stopped process]\label{lem: star difference}
Given a stochastic process $M_t$ and a stopping time $\tau$. We have
	\[
	M_{t\star\tau} - M_{(t-1)\star\tau} =  \mathbf{1}_{\tau > t}(M_t - M_{t-1}) \, .
	\]
\end{lemma}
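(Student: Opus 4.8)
The statement is the exact shifted analogue of Lemma~\ref{lem: stop difference}, and the proof should proceed in the same style: expand both shifted stopped values using \autoref{def:star stopped}, then simplify the resulting combination of indicators. The plan is to write
\[
M_{t\star\tau} - M_{(t-1)\star\tau} = \mathbf{1}_{\tau > t}M_t + \mathbf{1}_{\tau \leq t}M_{\tau-1} - \mathbf{1}_{\tau > t-1}M_{t-1} - \mathbf{1}_{\tau \leq t-1}M_{\tau-1},
\]
and then group the two terms carrying the factor $M_{\tau-1}$ together. Since $\mathbf{1}_{\tau \leq t} - \mathbf{1}_{\tau \leq t-1} = \mathbf{1}_{\tau = t}$, and on the event $\{\tau = t\}$ we have $M_{\tau-1} = M_{t-1}$, these collapse to $\mathbf{1}_{\tau = t}M_{t-1}$, leaving
\[
M_{t\star\tau} - M_{(t-1)\star\tau} = \mathbf{1}_{\tau > t}M_t - \mathbf{1}_{\tau > t-1}M_{t-1} + \mathbf{1}_{\tau = t}M_{t-1}.
\]

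\textbf{Finishing.} The remaining step is to rewrite $\mathbf{1}_{\tau > t-1}$ as $\mathbf{1}_{\tau \geq t} = \mathbf{1}_{\tau > t} + \mathbf{1}_{\tau = t}$, so that $-\mathbf{1}_{\tau > t-1}M_{t-1} + \mathbf{1}_{\tau = t}M_{t-1} = -\mathbf{1}_{\tau > t}M_{t-1}$, and hence the whole expression equals $\mathbf{1}_{\tau > t}(M_t - M_{t-1})$, as claimed. I would note that, as usual, one only needs $M_t$ to be a sequence of random variables and $\tau$ to be integer-valued for the identity to hold pointwise; no adaptedness or integrability is used here (adaptedness matters only when one later wants $\{M_{t\star\tau}\}$ to be adapted, which is asserted in \autoref{def:star stopped} itself).

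\textbf{Alternative.} A purely case-based argument would be just as quick: if $\tau > t$ then both shifts equal the unstopped values and the difference is $M_t - M_{t-1}$; if $\tau = t$ then $M_{t\star\tau} = M_{\tau-1} = M_{t-1} = M_{(t-1)\star\tau}$, a zero difference; and if $\tau \leq t-1$ then both shifts equal $M_{\tau-1}$, again a zero difference. In all three cases the common value is $\mathbf{1}_{\tau > t}(M_t - M_{t-1})$.

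\textbf{Main obstacle.} There is essentially none — this is a bookkeeping identity about indicators, and the only thing to be careful about is the off-by-one in the shift (the ``$\tau = t$'' boundary case, where the shifted process uses $M_{\tau - 1}$ rather than $M_\tau$), which is exactly the place the algebra above is engineered to handle cleanly.
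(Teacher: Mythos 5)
Your proposal is correct and follows exactly the same expansion-and-indicator-bookkeeping route as the paper's own proof (expand via \autoref{def:star stopped}, collapse $\mathbf{1}_{\tau\le t}-\mathbf{1}_{\tau\le t-1}$ to $\mathbf{1}_{\tau=t}$, use $M_{\tau-1}=M_{t-1}$ on that event, then absorb into $\mathbf{1}_{\tau>t}$). The case-based alternative you mention is a fine sanity check but adds nothing beyond the algebraic version.
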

\begin{proof}
We have
\begin{align*}
M_{t\star\tau} - M_{(t-1)\star\tau}&= \mathbf{1}_{\tau > t}M_t + \mathbf{1}_{\tau \leq t}M_{\tau - 1} - \mathbf{1}_{\tau > t-1}M_{t-1} - \mathbf{1}_{\tau \leq t-1}M_{\tau - 1}\\
&= \mathbf{1}_{\tau > t}M_t - \mathbf{1}_{\tau > t-1}M_{t-1} +  \mathbf{1}_{\tau = t}M_{\tau - 1} \, . \\
\intertext{Since $\tau = t$ at the last term, we can combine the last two terms to have }
&=\mathbf{1}_{\tau > t}M_t - \mathbf{1}_{\tau > t}M_{t-1} \\
&= \mathbf{1}_{\tau > t}(M_t - M_{t-1})
\end{align*}
as desired. 
\end{proof}

\paragraph{Brownian motion.}

In~\autoref{sec:continuous Oja}, we consider a continuous version of biological Oja's rule by modeling the input stream as a Brownian motion. Here, we provide background that is sufficient for the readers to understand the discussion there.

First, we introduce the 1-dimensional Brownian motion using the following operational definition. In the following, we use $N(\mu,\sigma^2)$ to denote the Gaussian distribution with mean $\mu$ and variance $\sigma^2$.
\begin{definition}[1-dimensional Brownian motion]\label{def:1 dim BM}
Let $\{\beta_t\}_{t\geq0}$ be a real-valued random process. We say $\{\beta_t\}_{t\geq0}$ is a 1-dimensional Brownian motion if the following holds.
\begin{itemize}
\item $\beta_0 = 0$ and $\beta_t$ is almost surely continuous.
\item For any $t_1, t_2, t_3, t_4$ such that $0\leq t_1 < t_2\leq t_3 < t_4$, $\beta_{t_{2}}-\beta_{t_{1}}$ is independent from $\beta_{t_{4}}-\beta_{t_{3}}$.
\item For any $t_1, t_2$ such that $0\leq t_1<t_2$, $\beta_{t_2}-\beta_{t_1}\sim N(0,t_2-t_1)$.
\end{itemize}
\end{definition}

With the above definition, it is then natural to consider some variants such as putting $n$ independent copies of 1-dimensional Brownian motion into a vector, \textit{i.e.,} the $n$-dimensional Brownian motion, or applying linear operations on an $n$-dimensional Brownian motion, or considering the \textit{calculus} on Brownian motion by looking at $d\beta_t=\lim_{\Delta\rightarrow0}\beta_{t+\Delta}-\beta_t$. The role of Brownian motion in the study of continuous random process is similar to Gaussian random variance in discrete random process and many properties in the discrete world directly extend to the continuous world. One property of Brownian motion though obviously does not hold in the discrete setting and might be counter-intuitive for people who see this for the first time. This is the \textit{quadratic variation} of Brownian motion as stated below.
\begin{lemma}[Quadratic variation of Brownian motion]\label{lem:quadratic variation}
Let $\{\beta_t\}_{t\geq0}$ and $\{\beta'_t\}_{t\geq0}$ be two independent 1-dimensional Brownian motions. The following holds almost surely.
\[
d\beta_t^2=dt\ \ \text{and}\ \ d\beta_td\beta'_t=0 \, .
\]
\end{lemma}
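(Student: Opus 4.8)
The plan is to read the two differential identities in their standard meaning — as statements about the quadratic and cross variation of Brownian motion — and then verify them by a second-moment computation upgraded to almost sure convergence via Borel--Cantelli. Concretely, $d\beta_t^2 = dt$ is shorthand for the assertion that for every $T>0$ and every sequence of partitions $0 = t_0^{(m)} < t_1^{(m)} < \cdots < t_{k_m}^{(m)} = T$ whose mesh tends to $0$, the sums $S_m := \sum_i (\beta_{t_{i+1}^{(m)}} - \beta_{t_i^{(m)}})^2$ converge to $T$; likewise $d\beta_t\, d\beta'_t = 0$ encodes that the cross sums $C_m := \sum_i (\beta_{t_{i+1}^{(m)}} - \beta_{t_i^{(m)}})(\beta'_{t_{i+1}^{(m)}} - \beta'_{t_i^{(m)}})$ converge to $0$. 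So the first step is just to make this translation explicit.

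Second, I would carry out the moment computation for $S_m$. Write $\Delta_i\beta = \beta_{t_{i+1}^{(m)}} - \beta_{t_i^{(m)}}$ and $\Delta_i = t_{i+1}^{(m)} - t_i^{(m)}$. The defining properties of Brownian motion give $\Delta_i\beta \sim N(0,\Delta_i)$ with the $\Delta_i\beta$ mutually independent, so $\Exp[S_m] = \sum_i \Delta_i = T$, while the Gaussian fourth-moment identity $\Exp[(\Delta_i\beta)^4] = 3\Delta_i^2$ (hence $\Var[(\Delta_i\beta)^2] = 2\Delta_i^2$) together with independence yields $\Var[S_m] = 2\sum_i \Delta_i^2 \le 2T\max_i\Delta_i \to 0$. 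Thus $S_m \to T$ in $L^2$, and hence in probability. To obtain the almost sure statement that the lemma asserts, I would specialize to a geometrically shrinking mesh, e.g. the dyadic partitions of $[0,T]$, for which $\Var[S_m] = O(2^{-m})$ is summable; Chebyshev plus Borel--Cantelli then gives $S_m \to T$ almost surely along that family.

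Third, the cross term is handled identically: by independence of $\{\beta_t\}$ and $\{\beta'_t\}$ we get $\Exp[C_m] = 0$ and $\Var[C_m] = \sum_i \Exp[(\Delta_i\beta)^2]\,\Exp[(\Delta_i\beta')^2] = \sum_i \Delta_i^2 \le T\max_i\Delta_i \to 0$, and restricting again to dyadic partitions makes the variances summable, so $C_m \to 0$ almost surely. Finally I would note that these limit identities are exactly the content of the symbolic rules $d\beta_t^2 = dt$ and $d\beta_t\, d\beta'_t = 0$ as they are used in It\^o-type manipulations: for a bounded continuous adapted integrand $f$ on $[0,T]$, the same variance bound (after pulling the values $f(t_i^{(m)})$ out of each summand) shows $\sum_i f(t_i^{(m)})(\Delta_i\beta)^2 \to \int_0^T f_t\, dt$ and $\sum_i f(t_i^{(m)})(\Delta_i\beta)(\Delta_i\beta') \to 0$, which is the form in which these rules get applied in \autoref{sec:continuous Oja}.

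The one genuine subtlety — the step I would flag as the main obstacle — is the mode of convergence. The clean computation above only delivers $L^2$ (hence in-probability) convergence for an arbitrary mesh-shrinking sequence, whereas the lemma is phrased ``almost surely''. Bridging that gap honestly requires either committing to a fast-shrinking family of partitions (dyadic) so that Borel--Cantelli applies, or invoking path continuity of Brownian motion to interpolate between such partitions; since the paper only needs these identities as heuristic bookkeeping for the continuous dynamics, presenting the dyadic-partition version is adequate, and that is what I would write up.
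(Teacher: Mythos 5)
Your proposal is correct, and it is more than the paper does: the paper omits the proof of this lemma entirely, simply deferring to a standard textbook (it says ``We omit the proof of~\autoref{lem:quadratic variation} here and refer the interested readers to standard textbook such as~\cite{L16}''). What you have written is the standard textbook argument, and it is sound. The second-moment computation is exactly right: $\Exp[(\Delta_i\beta)^4]=3\Delta_i^2$ gives $\Var[(\Delta_i\beta)^2]=2\Delta_i^2$, so $\Var[S_m]=2\sum_i\Delta_i^2\le 2T\max_i\Delta_i\to 0$ and $S_m\to T$ in $L^2$; independence of $\beta$ and $\beta'$ gives $\Var[C_m]=\sum_i\Delta_i^2\to 0$ and $C_m\to 0$ in $L^2$. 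You are also right to flag the mode-of-convergence subtlety as the one genuine gap between ``in probability'' and ``almost surely'': almost sure convergence of the quadratic variation sums holds only along partition sequences whose meshes shrink fast enough (e.g.\ dyadic, or any sequence with $\sum_m\max_i\Delta_i^{(m)}<\infty$), and there are refining mesh-to-zero sequences along which a.s.\ convergence actually fails. Your dyadic-plus-Borel--Cantelli fix is the standard and correct repair, and it is the right level of rigor given that the paper only uses these identities as heuristic bookkeeping for the continuous dynamics in \autoref{sec:continuous Oja}. One small point you could tighten: after establishing convergence along dyadic partitions, the interpolation to arbitrary partitions via path continuity that you allude to in your last sentence is not quite automatic (the uniform-continuity argument works cleanly for passing to refinements, but not for comparing two unrelated partitions of comparable mesh); for the paper's purposes, committing to dyadic partitions and stopping there, as you propose, is the cleaner statement.
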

We omit the proof of~\autoref{lem:quadratic variation} here and refer the interested readers to standard textbook such as~\cite{L16} for more details on Brownian motion.

\subsection{ODE toolbox}

\begin{lemma}[ODE trick for scalar]\label{lem:ODE trick}
Let $\{X_t\}_{t\geq\Nz}$, $\{A_t\}_{t\in\N}$, and $\{H_t\}_{t\in\N}$ be sequences of random variables with the following dynamic
\begin{equation}\label{eq:ODE trick 1}
X_{t} = H_tX_{t-1}+A_t
\end{equation}
for all $t\in\N$. Then for all $t_0,t\in\Nz$ such that $t_0<t$, we have
\[
X_t = \prod_{i=t_0+1}^T H_i\cdot\left(X_{t_0}+\sum_{i=t_0+1}^T \frac{A_i}{\prod_{j=t_0+1}^iH_{j}}\right) \, .
\]
\end{lemma}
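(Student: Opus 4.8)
# Proof Proposal

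The statement is a straightforward unrolling of the recursion \eqref{eq:ODE trick 1}, and the natural approach is induction on $t$ starting from the base case $t = t_0$ (where the formula reads $X_{t_0} = X_{t_0}$, the empty product being $1$ and the empty sum being $0$). So the plan is to fix $t_0$ and induct on $t \geq t_0$.

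For the inductive step, I would assume the closed form holds at time $t - 1$, i.e.
\[
X_{t-1} = \prod_{i=t_0+1}^{t-1} H_i\cdot\left(X_{t_0}+\sum_{i=t_0+1}^{t-1} \frac{A_i}{\prod_{j=t_0+1}^iH_{j}}\right),
\]
then substitute this into $X_t = H_t X_{t-1} + A_t$. Multiplying through by $H_t$ absorbs it into the leading product, turning $\prod_{i=t_0+1}^{t-1}H_i$ into $\prod_{i=t_0+1}^{t}H_i$; the only thing to check is that the new term $A_t$ can be written as $\prod_{i=t_0+1}^{t}H_i \cdot \frac{A_t}{\prod_{j=t_0+1}^{t}H_j}$, i.e.\ that $A_t$ is exactly the $i = t$ summand of the sum with leading product factored out. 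This is just cancellation of the product $\prod_{j=t_0+1}^t H_j$, valid provided none of the $H_j$ vanish — a point I would either note as a genuine hypothesis or sidestep by treating the identity purely formally (the product telescopes regardless, since $\prod_{i=t_0+1}^t H_i / \prod_{j=t_0+1}^t H_j = 1$ as a ratio of identical expressions, so one can also just verify $H_t X_{t-1} + A_t$ equals the claimed expression for $X_t$ by distributing $\prod_{i=t_0+1}^t H_i$ back in and matching terms). Combining the $H_t$-scaled sum with the new $A_t$ term reindexes the sum to run up to $i = t$, which is exactly the claimed formula at time $t$.

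I should also reconcile the indexing in the statement, which as written uses $T$ in the upper limits (e.g.\ $\prod_{i=t_0+1}^T H_i$) while the left-hand side is $X_t$; I would read $T$ as $t$ throughout (presumably a typo in the display), and phrase the proof with $t$ consistently. There is essentially no obstacle here: the lemma is the discrete analog of the integrating-factor solution \eqref{eq:intro ODE trick} to a linear ODE, and the proof is a one-line induction. If anything, the only care needed is bookkeeping with empty products/sums in the base case and making sure the reindexing of the sum in the inductive step is stated cleanly. I would keep the write-up to a few lines: state the base case, do the substitution, factor and reindex, done.
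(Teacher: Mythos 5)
Your proof is correct and is essentially the same argument as the paper's, which divides both sides of the recursion by $\prod_{j=t_0+1}^{i}H_j$ and telescopes — the telescoping sum and your induction on $t$ are two presentations of the same unrolling. Your observations that $T$ should read $t$ and that the manipulation implicitly requires the $H_j$ to be nonzero (or else must be read as a formal identity) are both reasonable and go slightly beyond what the paper says, but do not change the argument.
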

\begin{proof}[Proof of~\autoref{lem:ODE trick}]
For each $t_0<i\leq t$, dividing~\autoref{eq:ODE trick 1} with $\prod_{j=t_0+1}^iH_{j}$ on both sides, we have
\[
\frac{X_i}{\prod_{j=t_0+1}^iH_{j}} = \frac{X_{i-1}}{\prod_{j=t_0+1}^{i-1}H_{j}} + \frac{A_i}{\prod_{j=t_0+1}^iH_{j}} \, .
\]
By telescoping the above equation from $t=t_0+1$ to $t$, we get the desiring expression.
\end{proof}

\begin{lemma}[ODE trick for vector]\label{lem:ODE trick matrix}
Let $\{X_t\}_{t\in\Nz}$, $\{A_t\}_{t\in\N}$ be sequences of $m_t$-dimensional random variables and $\{H_t\}_{t\in\N}$ be a sequence of random $m_t\times m_{t-1}$ matrices with the following dynamic
\begin{equation}\label{eq:ODE trick 1 matrix}
X_{t} = H_tX_{t-1}+A_t
\end{equation}
for all $t\in\N$. Then for all $t_0,t\in\Nz$ such that $t_0<t$, we have
\[
X_t = \prod_{i=t_0+1}^T H_iX_{t_0}+\sum_{i=t_0+1}^T \prod_{j=i+1}^T H_{t-j}A_i \, .
\]
\end{lemma}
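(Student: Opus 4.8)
The plan is to unroll the one-step recursion \autoref{eq:ODE trick 1 matrix}, exactly as in the scalar case (\autoref{lem:ODE trick}), but keeping track of the order in which the matrices $H_t$ are multiplied, since they need not commute. The cleanest route is induction on $t$ for fixed $t_0$; a telescoping argument that left-multiplies the $i$-th copy of the recursion by the appropriate partial product $H_t H_{t-1}\cdots H_{i+1}$ would work equally well. Throughout I would fix at the outset the convention that an empty product of matrices is the identity, so that the ``$i=t$'' term of the sum in the statement reads simply $A_t$.

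For the base case $t = t_0+1$, the asserted identity reduces to $X_{t_0+1} = H_{t_0+1}X_{t_0} + A_{t_0+1}$, which is precisely \autoref{eq:ODE trick 1 matrix}. For the inductive step, suppose the formula holds with $t-1$ in place of $t$, i.e. $X_{t-1} = (H_{t-1}\cdots H_{t_0+1})X_{t_0} + \sum_{i=t_0+1}^{t-1}(H_{t-1}\cdots H_{i+1})A_i$. Substituting this into $X_t = H_t X_{t-1} + A_t$ and distributing $H_t$ across the sum turns the first term into $(H_t H_{t-1}\cdots H_{t_0+1})X_{t_0}$ and each summand $H_t(H_{t-1}\cdots H_{i+1})A_i$ into $(H_t\cdots H_{i+1})A_i$, while the leftover $A_t$ is exactly the missing $i=t$ term of the sum. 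This yields the claimed expression (with the matrix products written in decreasing-index order), completing the induction; a final line just rewrites the index ranges of the products to match the display in the statement, again reading empty products as the identity.

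I do not expect a genuine obstacle here: the lemma is an elementary algebraic identity, the matrix analogue of the telescoping in \autoref{lem:ODE trick}. The one point that actually needs care, and the only place the vector/matrix version departs from the scalar proof, is non-commutativity: one cannot ``divide'' the recursion by a partial product the way the scalar argument does, so the bookkeeping must consistently \emph{append} each new $H_t$ on the \emph{left} of the accumulated product. Getting that ordering right, and nailing down the empty-product convention, is the entirety of the work.
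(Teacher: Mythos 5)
Your proof is correct and takes essentially the same route as the paper, which simply states ``the proof is a direct induction''; you supply the details of that induction, with the right attention to the non-commutative ordering of the partial products. You also (implicitly) correct what appear to be typos in the displayed formula — $T$ for $t$, and $H_{t-j}$ where an ordered product $H_t\cdots H_{i+1}$ is clearly intended — by reading it in the only sensible way.
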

\begin{proof}[Proof of~\autoref{lem:ODE trick matrix}]
The proof is a direct induction.
\end{proof}

\subsection{Approximation toolbox}
Here we state some useful inequalities. Since some are quite standard, the proofs are omitted.
\begin{lemma}\label{lem:approx 1}
For any $x\in(-0.5,1)$,
$$
1+x\leq e^x\leq1+x+x^2\leq1+2x \, .
$$
In fact for all $x\geq0$, the first inequality holds.
\end{lemma}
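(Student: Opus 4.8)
The plan is to verify the three links of the chain $1+x\le e^x\le 1+x+x^2\le 1+2x$ one at a time, each by a short calculus argument. For the leftmost inequality I would set $f(x)=e^x-1-x$, observe $f(0)=0$ and $f'(x)=e^x-1$, which is negative for $x<0$ and positive for $x>0$; hence $f$ has global minimum value $0$ at $x=0$, so $e^x\ge 1+x$ for \emph{all} real $x$, which in particular yields the ``in fact'' clause for $x\ge 0$.

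For the middle inequality $e^x\le 1+x+x^2$ on $(-0.5,1)$ I would split at $0$. For $x\in[0,1)$, expand $e^x=\sum_{k\ge 0}x^k/k!$ and bound the quadratic-and-higher tail using $k!\ge 2^{k-1}$ for $k\ge 2$: this gives $\sum_{k\ge 2}x^k/k!\le 2\sum_{k\ge 2}(x/2)^k=\frac{x^2/2}{1-x/2}\le x^2$, where the final step uses $x<1$. For $x\in(-0.5,0)$, set $g(x)=1+x+x^2-e^x$; then $g(0)=g'(0)=0$ and $g''(x)=2-e^x>0$ on that interval (since $e^x<1$ there), so $g'$ is increasing and hence negative on $(-0.5,0)$, making $g$ decreasing, so $g(x)>g(0)=0$. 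Combining the two cases covers all of $(-0.5,1)$.

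For the rightmost inequality, $1+x+x^2\le 1+2x$ is equivalent to $x(x-1)\le 0$, which holds exactly for $x\in[0,1]$; thus the complete chain is valid on $[0,1)$ while its first two links hold on all of $(-0.5,1)$ --- consistent with how the lemma is applied later, where the relevant $x$ is always a nonnegative learning-rate-type quantity. There is no real obstacle in this lemma; the only care needed is to choose an exponential-tail bound that does not degrade as $x\to 1^-$ (the geometric bound above stays at most $x^2$ on the whole half-open interval) and to be precise about which sub-interval of $(-0.5,1)$ each link is genuinely valid on.
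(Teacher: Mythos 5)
The paper omits a proof of this lemma as ``standard,'' so there is no argument of theirs to compare against; your proof is correct and self-contained, and the individual steps (the global minimum of $e^x-1-x$ at $0$, the Taylor-tail bound $\sum_{k\ge2}x^k/k!\le\frac{x^2/2}{1-x/2}\le x^2$ using $k!\ge 2^{k-1}$, the convexity argument $g''(x)=2-e^x>0$ for $x<0$, and the factorization $x(x-1)\le 0$) all check out. You also correctly flag a small imprecision in the lemma as stated: the final link $1+x+x^2\le 1+2x$ is equivalent to $x\in[0,1]$ and therefore fails on $(-0.5,0)$, so the full chain only holds on $[0,1)$, while the first two links hold on all of $(-0.5,1)$ (indeed the first holds for all real $x$). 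This is harmless in context, since the lemma is invoked only with nonnegative learning-rate-type arguments, but it is a genuine slip in the statement that a careful reader should notice --- as you did.
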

\begin{lemma}\label{lem:approx 2}
For any $x\in(0,0.5)$ and $t\in\N$,
$$
1+\frac{xt}{2}\leq e^{\frac{xt}{2}}\leq(1+x)^\top \leq e^{xt} \, .
$$
\end{lemma}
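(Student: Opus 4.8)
The plan is to read $(1+x)^\top$ in the displayed statement as $(1+x)^t$ and to prove the three inequalities in the chain $1+\tfrac{xt}{2}\le e^{xt/2}\le (1+x)^t\le e^{xt}$ one at a time, each by invoking \autoref{lem:approx 1} with a suitably chosen argument. The first and the last are immediate: applying $1+y\le e^y$ (which \autoref{lem:approx 1} asserts for every $y\ge 0$) with $y=xt/2\ge 0$ gives $1+\tfrac{xt}{2}\le e^{xt/2}$, and applying $1+x\le e^{x}$ and then raising both (positive) sides to the $t$-th power gives $(1+x)^t\le e^{xt}$.

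For the middle inequality I would take $t$-th roots --- both $e^{xt/2}=(e^{x/2})^t$ and $(1+x)^t$ are positive --- so it suffices to show $e^{x/2}\le 1+x$ for $x\in(0,0.5)$. Here I would use the upper bound $e^{y}\le 1+y+y^2$ from \autoref{lem:approx 1} with $y=x/2$, which lies in $(0,0.25)\subset(-0.5,1)$ so the hypothesis is met; this yields $e^{x/2}\le 1+\tfrac{x}{2}+\tfrac{x^2}{4}$. It then remains to check $\tfrac{x}{2}+\tfrac{x^2}{4}\le x$, i.e.\ $\tfrac{x^2}{4}\le\tfrac{x}{2}$, i.e.\ $x\le 2$, which holds comfortably on $(0,0.5)$. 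Raising the resulting $e^{x/2}\le 1+x$ back to the $t$-th power finishes the middle inequality.

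Since every step reduces to \autoref{lem:approx 1}, I do not expect any genuine obstacle; the only points requiring care are (i) confirming that the argument fed to \autoref{lem:approx 1} stays in its stated range of validity in each application (in particular $x/2\in(-0.5,1)$), and (ii) using monotonicity of $y\mapsto y^t$ on the nonnegatives to justify raising inequalities to the $t$-th power. If one preferred to bypass the quadratic bound for the middle step, an alternative is to set $f(x)=1+x-e^{x/2}$, observe $f(0)=0$ and $f'(x)=1-\tfrac12 e^{x/2}>0$ for $x<2\ln 2$, and conclude $f>0$ on $(0,0.5)$; I would nonetheless keep the first argument, as it relies only on the already-established \autoref{lem:approx 1}.
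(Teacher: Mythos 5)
Your proof is correct, and each of the three inequalities does reduce cleanly to \autoref{lem:approx 1} exactly as you argue: the two outer ones via $1+y\le e^y$ (applied once with $y=xt/2$ and once with $y=x$ and then raised to the $t$-th power), and the middle one via $e^{y}\le 1+y+y^2$ with $y=x/2\in(0,0.25)\subset(-0.5,1)$, followed by the elementary check $x^2/4\le x/2$ on $(0,0.5)$. There is, however, nothing in the paper to compare this against: the lemma sits in the approximation toolbox of \autoref{sec:prelim}, which explicitly says that these inequalities are standard and that their proofs are omitted, so you are not reproducing or deviating from a paper argument but rather supplying one where the authors chose not to. You are also right that $(1+x)^\top$ in the displayed statement must be read as $(1+x)^t$; this is a typo from a stray $\texttt{t}\to\top$ replacement, the same artifact that produces spurious $\top$'s in a few other spots in the source (for instance in the summation and product limits of \autoref{lem:ODE trick}).
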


\begin{lemma}
\label{lem:approx log}
For any $\epsilon\in (0, 1)$, we have
\[
\left(\frac{\epsilon}{8}\right)^{1 - \frac{1}{\log\frac{8}{\epsilon}}} =  \frac{\epsilon}{4}.
\]
\end{lemma}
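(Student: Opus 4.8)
The plan is to reduce the identity to the elementary fact that $a^{1/\log a} = 2$ for every $a > 1$, where $\log = \log_2$ as throughout the paper. First I would observe that for $\epsilon \in (0,1)$ we have $8/\epsilon > 8 > 1$, so $\log(8/\epsilon) > 0$; in particular the exponent $1 - \frac{1}{\log(8/\epsilon)}$ is well-defined and the manipulations below are legitimate.

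Next I would split the left-hand side multiplicatively: writing $a := 8/\epsilon$,
\[
\left(\frac{\epsilon}{8}\right)^{1 - \frac{1}{\log(8/\epsilon)}} = \frac{\epsilon}{8}\cdot\left(\frac{8}{\epsilon}\right)^{\frac{1}{\log(8/\epsilon)}} = \frac{\epsilon}{8}\cdot a^{1/\log a}.
\]
The claim $a^{1/\log a} = 2$ is immediate by taking $\log$ of both sides: $\log\bigl(a^{1/\log a}\bigr) = \frac{\log a}{\log a} = 1 = \log 2$, and $\log$ is injective on $(0,\infty)$. Substituting back gives $\frac{\epsilon}{8}\cdot 2 = \frac{\epsilon}{4}$, which is the claimed value.

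There is essentially no obstacle; the only points needing a word of care are that $\log$ denotes $\log_2$ (so that $\log 2 = 1$, which is exactly what makes the constants $8$ and $4$ match), and that $\log(8/\epsilon)\neq 0$, so that the exponent and the step $a^{1/\log a}=2$ make sense — both guaranteed by $\epsilon < 1$. Alternatively, one can skip the intermediate identity and simply take $\log$ of both sides of the claimed equality directly: the left-hand side becomes $\bigl(1 - \frac{1}{\log(8/\epsilon)}\bigr)\log(\epsilon/8) = -\log(8/\epsilon) + 1$, while the right-hand side becomes $\log(\epsilon/4) = \log(\epsilon/8) + \log 2 = -\log(8/\epsilon) + 1$; equality of the two logarithms together with positivity of both sides then yields the lemma.
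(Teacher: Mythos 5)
Your proof is correct and is essentially the paper's argument: both split $(\epsilon/8)^{1-1/\log(8/\epsilon)}$ multiplicatively as $\frac{\epsilon}{8}\cdot(8/\epsilon)^{1/\log(8/\epsilon)}$ and then evaluate the second factor by taking $\log_2$. You package the middle factor more cleanly as the general fact $a^{1/\log a}=2$, while the paper expands $\log(8/\epsilon)=3+\log(1/\epsilon)$ and computes numerically, but the underlying computation and the reliance on $\log=\log_2$ are the same.
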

\begin{proof}
Rewrite the expression as the follows.
\[
\left(\frac{\epsilon}{8}\right)^{1 - \frac{1}{\log\frac{8}{\epsilon}}} = \epsilon\cdot \left(\frac{8}{\epsilon}\right)^{\frac{1}{\log\frac{8}{\epsilon}}}\cdot \frac{1}{8}.
\]
It suffices to show that  $\left(\frac{8}{\epsilon}\right)^{\frac{1}{\log\frac{8}{\epsilon}}}\cdot \frac{1}{8} = \frac{1}{4}$. Consider the logarithm of the term, we have
\begin{align*}
\log\left(\left(\frac{8}{\epsilon}\right)^{\frac{1}{\log\frac{8}{\epsilon}}}\cdot \frac{1}{8}\right) &= \frac{1}{\log\frac{8}{\epsilon}}\left(3 + \log\frac{1}{\epsilon}\right) - 3 = 1-3 = -2
\end{align*}
as desired.
\end{proof}

\section{Analyzing the Continuous Version of Oja's Rule}
\label{sec:continuous Oja}
In this section, we introduce the continuous version of Oja's rule and analyze its convergence rate.
The analysis here serves as an inspiration for attacking the discrete dynamic.
To model the continuous dynamics, we use \textit{Brownian motion} to capture the continuous stream of inputs. Surprisingly, it turns out that this continuous version of Oja's rule is \textit{deterministic}. Thus, we are able to use the tools from ODE to easily give an exact characterization of how it converges to the top eigenvector of the covariance matrix.
As a disclaimer, since the analysis for continuous Oja's rule is mainly for intuition, we would omit some mathematical details and point the interested readers to the corresponding resources.

\subsection{Continuous Oja's rule is deterministic}
In the rest of the section we are going to focus on the \textit{diagonal case} where the covariance matrix $A=\diag(\lambda)$ and the goal is showing that $\bw_{t,1}$ goes to 1. This is sufficient since there is a reduction from the general case to the diagonal case as explained in~\autoref{sec:diagonal}.

Intuitively, the continuous dynamic is the limiting process of biological Oja's rule with learning rate $\eta$ going to $0$. Formally, for each $i\in[n]$, let $(\beta_t^{(i)})_{t\geq0}$ be an independent 1-dimensional Brownian motion and let $(B_t)_{t\geq0}$ be an $n$-dimensional random process with the $i^\text{th}$ entry being $B_{t,i}=\sqrt{\lambda_i}\beta_t^{(i)}$ for each $t\geq0$. Now, the difference of $B_t$ should then be thought of as $\eta\bx_t$.

Concretely, to see why $(B_t)_{t\geq0}$ captures the input behavior of streaming PCA in the continuous setting, let us first discretize $(B_t)_{t\geq0}$ using constant step size $\Delta>0$. Now, observe that for each $t\geq0$, $B_{t+\Delta}-B_t$ is an isotropic Gaussian vector with the variance of the $i^\text{th}$ entry being $\lambda_i\cdot\Delta$. Namely,
\begin{equation}\label{eq:discrete quadratic variation}
\frac{1}{\Delta}\Exp\left[\left(B_{t+\Delta}-B_t\right)\left(B_{t+\Delta}-B_t\right)^\top\right] = \diag(\lambda) \, .
\end{equation}
Thus, by discretizing $B_t$ into intervals of length $\Delta>0$, $\left\{\frac{1}{\sqrt{\Delta}}\left(B_{j\cdot\Delta}-B_{(j-1)\cdot\Delta}\right)\right\}_{j\in\N}$ naturally forms a stream of i.i.d. input\footnote{Though here is a caveat that the length of the input vector might not be $1$. Nevertheless, the point of continuous dynamic is not to exactly characterize the limiting behavior of discrete Oja's rule. Instead, the goal here is to capture the intrinsic properties of the biological Oja's rule.} with covariance matrix being $A$. To put this into the context of biological Oja's rule, one should think of $\eta=\Delta$, $\bx_j=\frac{1}{\sqrt{\Delta}}\Delta B_j$, and $y_j=\bx_j^\top\bw_{j-1}$ for each $j\in\N$ where $\Delta B_j=\left(B_{j\cdot\Delta}-B_{(j-1)\cdot\Delta}\right)$~\footnote{Here we abuse the notation of $\Delta$. When we write $\Delta B_j$, the $\Delta$ is regarded as an \textit{operator} instead of the interval length.}.
Then, we get the following dynamic.
\begin{align*}
\bw_j &= \bw_{j-1} + \eta\cdot y_j\left(\bx_j - y_j\bw_{j-1}\right)\\
&= \bw_{j-1} + \Delta B_j^\top\bw_{j-1}\Delta B_j-\left[\Delta B_j^\top\bw_{j-1}\right]^2\bw_{j-1} \, .
\end{align*}

The above dynamics becomes continuous once we let $\Delta\rightarrow0$. Formally, we replace\footnote{This replacement might look weird for those who have not seen Brownian motion before. But this is standard and can be found in textbook such as~\cite{L16}.} $B_{t+\Delta}-B_t$ with $dB_t$ and index the weight vector by $t\geq0$, \textit{i.e.,} $(\bw_t)_{\geq0}$. The above dynamic becomes the following SDE.
\begin{equation}\label{eq:continuous Oja 1}
d\bw_t = dB_t^\top\bw_tdB_t - \left(dB_t^\top\bw_t\right)^2\bw_t \, .
\end{equation}
It might look absurd at first glance (for those who have not seen stochastic calculus before) that there is a quadratic term of $dB_t$ in~\autoref{eq:continuous Oja 1}. Nevertheless, it is in fact mathematically well-defined and we recommend standard resource such as~\cite{L16} for more details. Intuitively, the quadratic term (which is formally called the \textit{quadratic variation}) of a Brownian motion should be thought of as a \textit{deterministic} quantity. Concretely, let $(\beta_t)_{\geq0}$ be a Brownian motion, we have $d\beta_t^2=dt$ almost surely (see~\autoref{lem:quadratic variation}). Thus, for the $(B_t)_{t\geq0}$ defined here, we would have
\[
dB_{t,i}dB_{t,j} = \left\{\begin{array}{ll}
\lambda_idt & , i=j \\
0   & , i\neq j
\end{array}\right.
\]
for each $i,j\in[n]$.
As a consequence, the continuous Oja's rule defined in~\autoref{eq:continuous Oja 1} can be rewritten as the following \textit{deterministic} process almost surely.
\begin{equation}\label{eq:continuous Oja 2}
d\bw_t = \left[\diag(\lambda)\bw_t - \bw_t^\top\diag(\lambda)\bw_t\bw_t\right]dt \, .
\end{equation}
With the continuous Oja's rule being deterministic as in~\autoref{eq:continuous Oja 2}, it is then not difficult to have a tight analysis on its convergence using tools from ODE as explained in the next subsection.

\subsection{One-sided versus two-sided linearization}\label{sec:continuous 1 sided}

In this subsection, we analyze~\autoref{eq:continuous Oja 2} by linearizing the dynamic at $0$ and $1$ respectively and get two incomparable convergence rates (\autoref{thm:continuous lin at 0} and~\autoref{thm:continuous lin at 1}).

\begin{restatable}[Linearization at $0$]{theorem}{contlinzero}\label{thm:continuous lin at 0}Suppose $\bw_{0,1}>0$. For any $\epsilon\in(0,1)$, when $t\geq \Omega\left(\frac{\log(1/\bw_{0,1}^2)}{\epsilon (\lambda_1 - \lambda_2)}\right)$, we have
$\bw_{t, 1}^2 > 1 - \epsilon$.
\end{restatable}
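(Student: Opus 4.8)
The plan is to exploit the differential inequality
\[
\frac{d\bw_{t,1}}{dt} \;\geq\; (\lambda_1-\lambda_2)\,\bw_{t,1}\bigl(1-\bw_{t,1}^2\bigr)
\]
together with two structural facts about the continuous dynamic: that $\|\bw_t\|_2 = 1$ forces $\bw_{t,1}\in[-1,1]$, and that consequently the right-hand side is nonnegative whenever $\bw_{t,1}\in[0,1]$, so $\bw_{t,1}$ is non-decreasing and stays strictly positive once $\bw_{0,1}>0$. First I would dispose of the trivial case $\bw_{0,1}^2\geq 1-\epsilon$ (where there is nothing to prove) and otherwise introduce the first hitting time $\tau := \inf\{t\geq 0 : \bw_{t,1}^2 \geq 1-\epsilon\}$. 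By monotonicity it suffices to show $\tau = O\!\bigl(\tfrac{\log(1/\bw_{0,1}^2)}{\epsilon(\lambda_1-\lambda_2)}\bigr)$, since $\bw_{t,1}^2$ never drops back below $1-\epsilon$ afterwards.

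The core step is the linearization at $0$ (Figure~\ref{fig:continuous 1 sided 0}). For every $t<\tau$ we have $\bw_{t,1}^2 < 1-\epsilon$, hence $1-\bw_{t,1}^2 > \epsilon$, so the inequality above degenerates to the linear lower bound $\frac{d\bw_{t,1}}{dt} > \epsilon(\lambda_1-\lambda_2)\,\bw_{t,1}$ with $\bw_{t,1}>0$. A standard comparison (Grönwall) argument — dividing by $\bw_{t,1}$ and integrating, equivalently the continuous analog of the ODE trick of~\autoref{lem:ODE trick} — then gives $\bw_{t,1} > \bw_{0,1}\exp\bigl(\epsilon(\lambda_1-\lambda_2)\,t\bigr)$ for all $t<\tau$. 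I would emphasize here that the slope of the linearization is $\epsilon(\lambda_1-\lambda_2)$, which is exactly the source of the $\epsilon^{-1}$ factor appearing in the final rate.

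Finally I would combine the two bounds at the hitting time. By continuity of $t\mapsto\bw_{t,1}$, letting $t\uparrow\tau$ in the previous estimate yields $\sqrt{1-\epsilon}\geq\bw_{\tau,1}\geq\bw_{0,1}\exp\bigl(\epsilon(\lambda_1-\lambda_2)\tau\bigr)$, and solving for $\tau$ gives
\[
\tau \;\leq\; \frac{1}{\epsilon(\lambda_1-\lambda_2)}\log\frac{\sqrt{1-\epsilon}}{\bw_{0,1}} \;\leq\; \frac{1}{2\epsilon(\lambda_1-\lambda_2)}\log\frac{1}{\bw_{0,1}^2},
\]
which is the claimed bound. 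The only delicate point is the bookkeeping around $\tau$ — verifying that $\bw_{t,1}$ stays in $(0,\sqrt{1-\epsilon})$ on $[0,\tau)$ so that the linear lower bound is legitimately applicable — but since the continuous dynamic is deterministic and monotone this is routine, and I do not expect a genuine obstacle, consistent with the excerpt's remark that the continuous analysis is intended "mainly for intuition."
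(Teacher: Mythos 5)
Your argument is correct and is essentially the paper's own proof: both linearize the drift term at $0$ using $1-\bw_{t,1}^2>\epsilon$ on the region of interest, apply the resulting exponential lower bound via Gr\"onwall, invoke the monotonicity of $\bw_{t,1}$ from Lemma~\ref{lem:continuous properties}, and exploit the constraint $\|\bw_t\|_2=1$. The paper phrases the last step as a contradiction at a fixed time $\tau$ (it would force $\bw_{\tau,1}>1$), whereas you bound the first hitting time of $1-\epsilon$ directly; this is a stylistic difference only.
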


\begin{restatable}[Linearization at $1$]{theorem}{contlinone}\label{thm:continuous lin at 1} Suppose $\bw_{0,1}>0$.
For any $\epsilon\in(0,1)$, when $t\geq \Omega\left(\frac{\log(1/\epsilon)}{\bw_{0,1}(\lambda_1 - \lambda_2)}\right)$, we have
$\bw_{t, 1}^2 > 1 - \epsilon$.
\end{restatable}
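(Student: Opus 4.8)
The plan is to reduce the vector equation \eqref{eq:continuous Oja 2} to a scalar differential inequality for $\bw_{t,1}$, linearize that inequality around the stable fixed point $1$, and then solve the resulting linear comparison ODE exactly via the integrating-factor formula — that is, the continuous incarnation of the ODE trick (\autoref{lem:ODE trick} / \eqref{eq:intro ODE trick}).

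First I would extract the scalar dynamic. Reading off the first coordinate of \eqref{eq:continuous Oja 2} gives $\frac{d\bw_{t,1}}{dt} = \bigl(\lambda_1 - \sum_{i}\lambda_i\bw_{t,i}^2\bigr)\bw_{t,1}$, and differentiating $\|\bw_t\|_2^2$ against \eqref{eq:continuous Oja 2} shows $\|\bw_t\|_2\equiv 1$ whenever $\|\bw_0\|_2=1$. Combining $\sum_{i\ge 2}\bw_{t,i}^2 = 1-\bw_{t,1}^2$ with $\lambda_i\le\lambda_2$ for $i\ge 2$ yields $\sum_i\lambda_i\bw_{t,i}^2\le \lambda_1\bw_{t,1}^2 + \lambda_2(1-\bw_{t,1}^2)$, hence the differential inequality $\frac{d\bw_{t,1}}{dt}\ge(\lambda_1-\lambda_2)\,\bw_{t,1}(1-\bw_{t,1}^2)$, matching \eqref{eq:intro continuous}. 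Since the right-hand side is non-negative on $[0,1]$ and $\bw_{0,1}>0$, the trajectory is non-decreasing and stays in $[\bw_{0,1},1]$ (with $1$ an absorbing fixed point, where the claim is immediate), so we may assume $\bw_{0,1}\in(0,1)$.

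Next I would \emph{linearize at $1$}. Setting $h_t := 1-\bw_{t,1}$ and factoring $1-\bw_{t,1}^2 = (1+\bw_{t,1})h_t$, the inequality becomes $\frac{dh_t}{dt}\le -(\lambda_1-\lambda_2)\,\bw_{t,1}(1+\bw_{t,1})\,h_t$. Because $\bw_{t,1}\ge\bw_{0,1}$ for all $t$ by the monotonicity just established, the coefficient is at most $-(\lambda_1-\lambda_2)\bw_{0,1}$ — this is exactly the ``slope at $1$'' depicted in \autoref{fig:continuous 1 sided 1}, and it is what produces the extra $\bw_{0,1}^{-1}$ relative to the two-sided bound. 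Applying the integrating-factor solution of the linear comparison ODE $h'=-(\lambda_1-\lambda_2)\bw_{0,1}\,h$ and using $h_0 = 1-\bw_{0,1}\le 1$ gives $h_t\le e^{-(\lambda_1-\lambda_2)\bw_{0,1}\,t}$. Finally, since $1-\sqrt{1-\epsilon}\ge\epsilon/2$ for $\epsilon\in(0,1)$, the target $\bw_{t,1}^2>1-\epsilon$ follows from $h_t<\epsilon/2$, which holds once $t\ge\frac{\log(2/\epsilon)}{(\lambda_1-\lambda_2)\bw_{0,1}} = \Omega\!\left(\frac{\log(1/\epsilon)}{\bw_{0,1}(\lambda_1-\lambda_2)}\right)$.

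The one genuinely delicate point is the interplay between linearization and monotonicity: the linearized slope $(\lambda_1-\lambda_2)\bw_{0,1}$ is a legitimate global lower bound only because $\bw_{t,1}$ never drops below $\bw_{0,1}$, so the non-decreasing property (equivalently, non-negativity of $\frac{d\bw_{t,1}}{dt}$ along the whole trajectory) must be secured before invoking the comparison ODE. Everything else is a routine Grönwall/ODE-trick application, and the residual estimates ($1-\sqrt{1-\epsilon}\ge\epsilon/2$ and the threshold algebra) are elementary; consistent with the paper's convention of keeping the continuous analysis light on formal detail, I would state them in a line rather than grind through them.
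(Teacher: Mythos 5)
Your argument is correct and follows the paper's proof essentially line by line: establish monotonicity and $\|\bw_t\|_2\equiv1$, factor $1-\bw_{t,1}^2$ around the fixed point $1$, lower-bound the coefficient by $\bw_{0,1}(\lambda_1-\lambda_2)$ via monotonicity, and solve the resulting linear comparison ODE. The only cosmetic difference is that you work with $h_t=1-\bw_{t,1}$ instead of $\bw_{t,1}-1$ and you make the final $1-\sqrt{1-\epsilon}\ge\epsilon/2$ step explicit where the paper leaves it implicit.
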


The proofs for~\autoref{thm:continuous lin at 0} and~\autoref{thm:continuous lin at 1} are based on applying Taylor's expansion on~\autoref{eq:continuous Oja 2} with center either being $0$ or $1$. Then, we approximate the dynamics with linear differential equations and use tools from ODE to get an tight analysis.
See~\autoref{sec:continuous oja linearization} for the details on the linearizations of continuous Oja's rule.

When starting with a random vector, \textit{i.e.,} $\bw_{0,1}=\Omega(1/\sqrt{n})$ with high probability, the above convergence rates become $O(\frac{\log n}{\epsilon (\lambda_1 - \lambda_2)})$ and $O(\frac{\sqrt{n}\log(1/\epsilon)}{\lambda_1 - \lambda_2})$ respectively.
This indicates that linearizing only on one side (either at $0$ or at $1$) would not give tight analysis. Nevertheless, if we invoke~\autoref{thm:continuous lin at 0} with the error parameter being $0.5$, then for some $t_1=O(\frac{\log n}{\lambda_1 - \lambda_2})$, we have $\bw_{t_1,1}>0.5$. Next, we invoke~\autoref{thm:continuous lin at 1} starting from $\bw_{t_1}$ and with the error parameter being $\epsilon$, then for some $t_2=O(\frac{\log(1/\epsilon)}{\lambda_1 - \lambda_2})$, we have $\bw_{t_1+t_2,1}>1-\epsilon$. Putting these together, we have the following theorem combining the linearizations on both sides.

\begin{theorem}[Linearization at both $0$ and $1$]\label{thm:continuous 2 side}Suppose $\bw_{0,1}>0$. For any $\epsilon\in(0,1)$, when \[t \geq \Omega\left(\frac{\log\frac{1}{\bw_{0,1}^2}+\log\frac{1}{\epsilon}}{\lambda_1-\lambda_2}\right),\] we have $\bw_{t,1}^2>1-\epsilon$.
\end{theorem}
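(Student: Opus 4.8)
The plan is to run the two one-sided bounds in sequence: use \autoref{thm:continuous lin at 0} to push $\bw_{t,1}$ up to a constant, and then use \autoref{thm:continuous lin at 1} to finish the job, exploiting the fact that~\autoref{eq:continuous Oja 2} is an \emph{autonomous} (time-homogeneous) ODE so that we may restart the analysis from any intermediate time without changing the form of the dynamic.

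\textbf{Phase 1 (escape from $0$).} First I would invoke \autoref{thm:continuous lin at 0} with error parameter $\epsilon_0 = 1/2$. This produces a time
\[
t_1 = \Theta\!\left(\frac{\log(1/\bw_{0,1}^2)}{\lambda_1-\lambda_2}\right)
\]
(the constant $\epsilon_0 = 1/2$ is absorbed into the hidden constant) after which $\bw_{t_1,1}^2 > 1/2$, i.e.\ $\bw_{t_1,1} > 1/\sqrt{2}$. Before moving on I would record the monotonicity already observed below~\autoref{eq:intro continuous}: since $\bw_{0,1}>0$, the right-hand side of~\autoref{eq:continuous Oja 2} keeps $\bw_{t,1}$ positive and non-decreasing as long as $\bw_{t,1}\in(0,1)$, so in particular $\bw_{t,1} \geq 1/\sqrt{2}$ for all $t \geq t_1$; this is what makes the restart legitimate.

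\textbf{Phase 2 (approach to $1$).} Because~\autoref{eq:continuous Oja 2} has no explicit time dependence, the shifted process $(\bw_{t_1+s})_{s\geq 0}$ is again a solution of the continuous Oja dynamic, now with ``initial'' first coordinate $\bw_{t_1,1} > 1/\sqrt{2} = \Omega(1)$. Applying \autoref{thm:continuous lin at 1} to this shifted process with error parameter $\epsilon$ yields
\[
t_2 = \Theta\!\left(\frac{\log(1/\epsilon)}{\bw_{t_1,1}\,(\lambda_1-\lambda_2)}\right) = \Theta\!\left(\frac{\log(1/\epsilon)}{\lambda_1-\lambda_2}\right),
\]
and \autoref{thm:continuous lin at 1} guarantees $\bw_{t_1+s,1}^2 > 1-\epsilon$ for every $s \geq t_2$. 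Taking $t \geq t_1 + t_2 = \Omega\!\big((\log(1/\bw_{0,1}^2) + \log(1/\epsilon))/(\lambda_1-\lambda_2)\big)$ then gives the claim.

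The argument is essentially bookkeeping once the two one-sided theorems are in hand; the only point needing genuine (if light) care is the restart step — one must check that the intermediate point $\bw_{t_1}$ still satisfies the hypothesis $\bw_{\cdot,1}>0$ of \autoref{thm:continuous lin at 1} and that $\bw_{t,1}$ does not drift back toward $0$ during Phase 2, both of which follow from the non-decreasing nature of the deterministic dynamic in~\autoref{eq:continuous Oja 2}. I would also double-check that the hidden constants in the two $\Omega(\cdot)$ statements compose correctly so that the sum of the two phase lengths is still of the stated order, which is immediate since each theorem is invoked only once.
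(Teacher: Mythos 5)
Your two-phase argument (run \autoref{thm:continuous lin at 0} with error parameter $1/2$ to get $\bw_{t_1,1}^2>1/2$, then restart and run \autoref{thm:continuous lin at 1} with error parameter $\epsilon$) is exactly the paper's proof, which is given informally in the paragraph preceding the theorem statement. Your explicit checks that the restart is legitimate (autonomy of the ODE and monotonicity of $\bw_{t,1}$, both from the discussion around \autoref{eq:continuous Oja 2} and \autoref{lem:continuous properties}) are correct and merely make precise what the paper leaves implicit.
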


The above theorem for the convergence rate of the continuous Oja's rule gives three key insights. First, it suggests that one should linearize at 0 in the beginning of the process and switch to linearizing at 1 when $\bw_{t,1}$ becomes $\Omega(1)$. Second, after the linearization, using linear ODE to give exact characterization of the dynamic would give tight analysis. Finally, the continuous dynamic is deterministic and will stay around the optimal region for all time after certain point. This suggests that the \textit{for-all-time} guarantee could potentially happen in the original discrete setting.

\section{Main Result}\label{sec:main}
Now, let us state the formal version of the main theorem for the biological Oja's rule. In the following, all of the theorems and lemmas are stated with respect to the setting of~\autoref{prob:streaming PCA} and~\autoref{def:bio oja}. Thus, for simplicity, we would not repeat the setup in their statements.

In~\autoref{thm:main}, we show that both the local and the global convergence of Oja's rule are efficient. We remind readers that in the local convergence setting, the weight vector is correlated with the top eigenvector by a constant while in the global convergence setting, the weight vector is randomly initiated. In~\autoref{thm:main stay} we show that once $\bw_{t}$ becomes $\epsilon$-close to the top eigenvector $\bv_1$, it will stay in the neighborhood of $\bv_1$ for a long time without decreasing the learning rate too much. This demonstrates the capacity of Oja's rule as a continual learning mechanism in a living system. 

\begin{theorem}[Main Theorem]\label{thm:main}We have the following results on the local and global convergence of Oja's rule.\\
$\bullet$ (Local Convergence) Let $n\in\N$, $\delta\in(0,1), \epsilon\in (0, \frac{1}{8})$. Suppose $\frac{\langle\bw_0, \bv_1\rangle^2}{\|\bw_0\|_2^2}\geq2/3$. Let
\[
\eta = \Theta\left(\frac{\epsilon(\lambda_1 - \lambda_2)}{\lambda_1\log\frac{\log\log\frac{1}{\epsilon}}{\delta}}\right),\ T = \Theta\left(\frac{\log \frac{1}{\epsilon}}{\eta (\lambda_1 - \lambda_2)}\right).
\]
Then, we have 
\[
\Pr\left[\frac{\langle\bw_T, \bv_1\rangle^2}{\|\bw_T\|_2^2} < 1-\epsilon\right] < \delta \,.
\]
Namely, the convergence rate is of order
\[
\Theta\left(\frac{\lambda_1\log\frac{1}{\epsilon}\left(\log\log\log\frac{1}{\epsilon}+\log\frac{1}{\delta}\right)}{\epsilon(\lambda_1-\lambda_2)^2}\right)
\]
with probability at least $1-\delta$.\\
$\bullet$ (Global Convergence) Let $n\in\N$, $\delta\in(0,1), \epsilon\in (0, \frac{1}{4})$. Suppose $\bw_0$ is uniformly sampled from the unit sphere of $\Real^n$. Let
\[
\eta = \Theta\left(\frac{\lambda_1 - \lambda_2}{\lambda_1}\cdot\left(\frac{\epsilon}{\log\frac{\log\frac{n}{\epsilon}}{\delta}} \bigwedge \frac{\delta^2}{\log^2 \frac{\lambda_1 n}{\delta(\lambda_1 - \lambda_2)^2}}\right)\right),\,T = \Theta\left(\frac{\log \frac{1}{\epsilon} + \log\frac{n}{\delta}}{\eta (\lambda_1 - \lambda_2)}\right).
\]
Then, we have
\[
\Pr\left[\frac{\langle\bw_T, \bv_1\rangle^2}{\|\bw_T\|_2^2} < 1-\epsilon\right] < \delta \,.
\]
Namely, the convergence rate is of order
\[
\Theta\left(\frac{\lambda_1\left(\log \frac{1}{\epsilon} + \log\frac{n}{\delta}\right)}{(\lambda_1 - \lambda_2)^2}\cdot\max\left\lbrace \frac{\log\frac{\log\frac{n}{\epsilon}}{\delta}}{\epsilon},\, \frac{\log^2 \frac{\lambda_1 n}{\delta(\lambda_1 - \lambda_2)^2}}{\delta^2}  \right\rbrace\right)
\]
with probability at least $1-\delta$.
\end{theorem}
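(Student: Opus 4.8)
The plan is to reduce to the diagonal case $A = \diag(\lambda)$ via \autoref{sec:diagonal}, so the goal becomes $\bw_{T,1}^2 \ge 1-\epsilon$ with probability $\ge 1-\delta$, and to work throughout under the a priori norm control for biological Oja's rule (cf.\ \autoref{lem:oja w norm ub}), under which $\bw_{t,1}^2$ and $\bw_{t,1}^2/\|\bw_t\|_2^2$ agree up to $o(1)$. Mirroring the continuous analysis of \autoref{sec:continuous 1 sided} and \autoref{thm:continuous 2 side}, I would split each run into a \emph{warm-up phase} of length $T_1$ that drives $\bw_{t,1}^2$ from its initial value to a fixed constant (say $2/3$) — analyzed by linearizing at $0$ — and a \emph{refinement phase} of length $T_2$ that drives $u_t := 1-\bw_{t,1}^2$ from $\le 1/3$ down to $\epsilon$ — analyzed by linearizing at $1$. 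Conditioning on $\mathcal F_{t-1}$ and using $\Exp[\bx_{t,i}\bx_{t,j}\mid\mathcal F_{t-1}] = \lambda_i\mathbf 1_{i=j}$ yields $\Exp[\bw_{t,1}\mid\mathcal F_{t-1}] = \bw_{t-1,1}\bigl(1 + \eta\gap(1-\bw_{t-1,1}^2) + O(\eta^2)\bigr)$, the exact discrete analogue of the continuous drift, which makes both linearizations valid: the conditional multiplier is $\ge 1+\tfrac13\eta\gap$ while $\bw_{t-1,1}^2\le 2/3$, and one derives $u_t \le (1-\tfrac43\eta\gap)u_{t-1} + (\text{noise})$ while $\bw_{t-1,1}^2\ge 2/3$.

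For each phase I would apply the ODE trick (\autoref{lem:ODE trick}). Writing $\bw_{t,1} = H_t\bw_{t-1,1} + A_t$ with $H_t$ the worst-case conditional multiplier in that phase and $A_t$ the remainder (conditionally mean-zero up to an $O(\eta^2)$ bias), \autoref{lem:ODE trick} gives $\bw_{T_1,1} = \bigl(\prod_{i\le T_1} H_i\bigr)\bigl(\bw_{0,1} + M_{T_1}\bigr)$ with the noise martingale $M_{T_1} = \sum_{i\le T_1} A_i/\prod_{j\le i}H_j$, and symmetrically $u_T = H^{T-T_1}u_{T_1} + \sum_{t>T_1}(\text{noise}_t)\,H^{T-t}$ with $H = 1-\Theta(\eta\gap)<1$ in the refinement phase (cf.\ \autoref{cor:discrete phase 1 ODE} and \autoref{cor:discrete phase 2 ODE}). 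Since $\prod_i H_i \ge H^{T_1}$ is geometrically large, taking $T_1 = \Theta(\log(1/\bw_{0,1}^2)/(\eta\gap))$ and $T_2 = \Theta(\log(1/\epsilon)/(\eta\gap))$ makes the drift terms reach the required thresholds, so everything reduces to showing that, with probability $\ge 1-\delta$, the noise terms are a small fraction of the drift. For the global setting I would additionally use the standard anticoncentration of a uniformly random unit vector ($\bw_{0,1}^2 \ge \tilde\Omega(\delta^2/n)$ with probability $\ge 1-\delta$, together with an anticoncentration bound keeping $|\bw_{0,1}+M_t|$ bounded away from $0$), which is what introduces the $\log n$ and $\delta^{-2}$ in the warm-up parameters.

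The crux — and the reason a naive step-by-step analysis fails — is bounding those noise terms: $A_t$ is correlated with $\bw_{t-1}$ and has worst-case magnitude only $1+o(1)$, far too weak for Azuma or Freedman directly. I would handle this exactly as sketched in \autoref{sec:technical overview}: introduce a stopping time $\tau$ at the first time the current quantity leaves a constant-factor neighborhood of its predicted (drift) trajectory, so that on the stopped martingale $\{M_{t\wedge\tau}\}$ (a martingale by \autoref{lem: stop difference} and the optional-stopping principle) both the bounded difference and the conditional variance $\sigma_t^2 = O(\eta^2\lambda_1)$ (resp.\ $O(\eta^2\lambda_1 u_{t-1}) = O(\eta^2\lambda_1\epsilon)$ near the optimum, since there the noise in $u_t$ scales like $\eta\sqrt{\lambda_1 u_{t-1}}$) are controlled; apply the maximal Freedman inequality (\autoref{cor:freedman}) to $\{M_{t\wedge\tau}\}$ so no union bound over $t$ is paid; and then invoke the pull-out lemma, whose chain condition — ``if the noise stayed small up to time $t$ then the drift kept the quantity inside its window, hence $\tau\ge t+1$'' — is precisely what the ODE-trick representation supplies, to transfer the concentration back to the un-stopped process. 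A short variance calculation, using that the ODE-trick weights $\prod_{j\le i}H_j^{-1}$ and $H^{T-t}$ decay geometrically so the sum of their squares is $O(1/(\eta\gap))$, then shows the noise has standard deviation $O(\sqrt{\eta\lambda_1/\gap})$ in the warm-up and $O(\sqrt{\eta\lambda_1\epsilon/\gap})$ in the refinement; requiring this to be a $\delta$-confident small fraction of the corresponding drift forces $\eta = \tilde\Theta(\epsilon\gap/\lambda_1)$ in the local case and $\eta = \tilde\Theta((\epsilon\wedge\delta^2)\gap/\lambda_1)$ in the global case.

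Finally I would stitch the phases together: conditioning on the $O(1)$ good events (norm control, random-init anticoncentration, warm-up noise bound, refinement noise bound), each of probability $\ge 1-O(\delta)$, feed the state at time $T_1$ into the refinement phase as its initial condition, and set $T = T_1 + T_2 = \Theta\bigl((\log(1/\epsilon)+\log(n/\delta))/(\eta\gap)\bigr)$; substituting the chosen $\eta$ yields exactly the stated $\Theta(\lambda_1\gap^{-2}\cdot(\ldots))$ convergence rates. I expect the real difficulty to be this noise control — making the stopped-process/pull-out argument rigorous and squeezing the logarithmic factors down to the claimed degree — together with the global warm-up, where one must show $\bw_{t,1}^2$ reliably escapes a neighborhood of $0$ of radius $\tilde\Theta(\delta^2/n)$ with $\delta$-level confidence; the remainder is a lengthy but routine combination of the ODE trick with martingale concentration.
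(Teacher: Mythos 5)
Your proposal matches the paper's high-level architecture and captures the local-convergence argument well: diagonal reduction, linearization at $1$ for the refinement phase, the ODE trick, stopping time $\tau$ on $\tilde\bw_t = \bw_{t,1}^2-1$, maximal Freedman on the stopped martingale, and the pull-out lemma via the chain condition are all exactly what \autoref{sec: local convergence} does, and your variance estimate $O(\eta^2\lambda_1 u_{t-1})$ is the one used there.

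However, for the global (warm-up) phase there is a genuine gap: you have not addressed the \emph{cross term} $y_t = \bx_t^\top\bw_{t-1}$, which is the main additional difficulty and occupies all of \autoref{sec: D main skip} and \autoref{sec: D main}. Near initialization $\bw_{t-1,1}^2 \approx 1/n$ while $|y_t|$ can be as large as $1$, so the noise $A_t$ in the linearization at $0$ has worst-case bounded difference only $O(\eta\,|y_t|\,|\bw_{t-1,1}|) = O(\eta\,|\bw_{t-1,1}|)$, which is a factor $|\bw_{t-1,1}|^{-1} \approx \sqrt n$ too large relative to the drift $\bw_{t-1,1}^2$. Plugging this into Freedman, the deviation beats the drift unless $\eta \lesssim \gap/n$, which reintroduces a polynomial-in-$n$ slowdown — this is exactly why the step-by-step analyses of \cite{LWLZ18,DOR15,Shamir16} pay a factor of $n$, and your proposed variance bound $\sigma_t^2 = O(\eta^2\lambda_1)$ for the warm-up silently sweeps this under the rug (the bottleneck is the bounded-difference term $ca$ in Freedman, not $\sum\sigma_t^2$). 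The paper fixes this by tracking the auxiliary ratios $f_{t,j}(\bw) = \sum_{i\le j}\bx_{t,i}\bw_i/\bw_1$, running a second, vector-valued ODE-trick + stopped-martingale argument (with the shifted stopping time $\xi$ and stopping time $\psi$, \autoref{def:stopping time yt}, \autoref{thm: D main}) to show $|y_t| = O(\Lambda\,|\bw_{t-1,1}|)$ for $\Lambda = \tilde O(1/\delta)$, which finally yields the usable bounded difference $|A_t| = O(\eta\Lambda\bw_{t-1,1}^2)$ (\autoref{lem: global bound}). Your proposal's anticoncentration-of-$\bw_{0,1}$ step is necessary but far from sufficient; you need this entire auxiliary-process machinery. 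A second, more minor omission: to land at the fixed time $T$ (rather than at the random hitting time of the good region), the paper's \autoref{thm:discrete global main} chains the warm-up hitting bound, local convergence, and the finite continual-learning theorem (\autoref{thm: finite continual learning}); your ``feed the state at $T_1$ into the refinement phase'' glosses over the randomness of $T_1$.
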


\paragraph{Proof structure of~\autoref{thm:main}.}
To prove~\autoref{thm:main}, we first reduce the general setting where the covariance matrix $A$ is PSD to the special case where $A=\diag(\lambda)$ in~\autoref{sec: preprocessing}. For local convergence, we show that starting from constant correlation, Oja's rule can efficiently converge to the top eigenvector up to arbitrarily small error in~\autoref{thm:discrete phase 2} of~\autoref{sec: local convergence}. For global convergence, we show that starting from random initialization, Oja's rule can efficiently converge to the top eigenvector up to arbitrarily small error in~\autoref{thm:discrete global main} of~\autoref{sec: global convergence}. To get tight analysis for global convergence, we need to take an extra care on a \textit{cross term} in~\autoref{thm: D main} of~\autoref{sec: D main}. See~\autoref{fig:framework} for the proof structure of these theorems.

\begin{figure}[h]
    \centering
    \tikzfig{structure}
    \caption{The proof structure of key theorems. Here we present the structure of the three main theorems using the three-step framework described in a follow-up paper~\cite{Chou2020}.}
    \label{fig:framework}
\end{figure}

\begin{theorem}[Continual Learning]\label{thm:main stay}
We have the following results on the continual learning aspects of Oja's rule.\\
$\bullet$ (Finite continual learning) Let $n, l\in\N$, $\epsilon,\delta\in(0,1)$. Suppose $\frac{\langle\bw_0, \bv_1\rangle^2}{\|\bw_0\|_2^2}\geq 1-\frac{\epsilon}{2}$. Let
\[
\eta = \Theta\left(\frac{\epsilon(\lambda_1 - \lambda_2)}{\lambda_1\log\frac{l}{\delta}}\right).
\]
Then
\[
\Pr\left[\exists 1\leq t\leq \Theta\left(\frac{l}{\eta(\lambda_1 - \lambda_2)}\right),\ \frac{\langle\bw_T, \bv_1\rangle^2}{\|\bw_T\|_2^2} < 1-\epsilon\right] < \delta \, .
\]
$\bullet$ (For-all-time continual learning)
Let $n, t_0\in\N$, $\epsilon,\delta\in(0,1)$. Suppose $\frac{\langle\bw_0, \bv_1\rangle^2}{\|\bw_0\|_2^2}\geq 1-\frac{\epsilon}{2}$. Then there is 
\[
\eta_t \geq \Theta\left(\frac{\epsilon(\lambda_1 - \lambda_2)}{\lambda_1\log\frac{t}{\delta}}\right)
\] 
such that
\[
\Pr\left[\exists t\in\N,\ \frac{\langle\bw_t, \bv_1\rangle^2}{\|\bw_t\|_2^2} < 1-\epsilon\right] < \delta \, .
\]
\end{theorem}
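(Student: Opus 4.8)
The plan is to read \autoref{thm:main stay} as the ``phase-$2$ only'' part of the local-convergence analysis: since $\frac{\langle\bw_0,\bv_1\rangle^2}{\|\bw_0\|_2^2}\ge 1-\epsilon/2$, we never have to climb from constant correlation up to $1-\epsilon$ (that is the job of \autoref{thm:discrete phase 2}); we only have to show the iterate cannot \emph{leave} the $(1-\epsilon)$-ball, but now over an extremely long --- eventually infinite --- horizon. As in the rest of the paper I would first reduce to the diagonal case $A=\diag(\lambda)$ via \autoref{sec: preprocessing}, so that the task becomes showing that the normalized error $z_t:=1-\bw_{t,1}^2$ (the norm $\|\bw_t\|_2$ staying within $1\pm o(1)$ throughout by the bounded-norm estimate, cf.\ \autoref{lem:oja w norm ub}) stays below $\epsilon$. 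Throughout I would work with the stopping time $\tau$ for the bad event $\{z_t\ge\epsilon\}$ and the associated stopped process (\autoref{lem: stop difference}), and remove $\tau$ at the end via the pull-out lemma of \autoref{sec: local convergence}.

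\textbf{Finite continual learning.} Linearize the diagonal dynamic at the stable fixed point $1$, exactly as in \autoref{thm:discrete phase 2}: as long as $z_{t-1}<\epsilon$ one has $z_t=(1-\Theta(\eta(\lambda_1-\lambda_2)))\,z_{t-1}+N_t$, where the multiplier is the discrete analogue of the linearization-at-$1$ slope behind \autoref{thm:continuous lin at 1}, and $N_t$ is a martingale increment with $\Exp[N_t\mid\mathcal F_{t-1}]=0$, $\Var[N_t\mid\mathcal F_{t-1}]=O(\eta^2\lambda_1 z_{t-1})$ and $|N_t|=O(\eta\sqrt{z_{t-1}})$ (these are the discrete versions of the quadratic-variation computation of \autoref{sec:continuous Oja}). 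Apply the ODE trick (\autoref{lem:ODE trick}) with the contracting multipliers $H_t=1-\Theta(\eta(\lambda_1-\lambda_2))<1$ to get a closed form $z_t=\big(\prod_i H_i\big)\big(z_0+\sum_i N_i/\prod_{j\le i}H_j\big)$: the geometric discounting of old noise means the relevant martingale has accumulated variance $O(\eta\lambda_1\epsilon/(\lambda_1-\lambda_2))$ on the event $\{\tau>t\}$, while the stopping time makes the bounded-difference hypotheses of the maximal Freedman/Azuma inequalities (\autoref{lem:freedman}, \autoref{lem:maximal azuma}, \autoref{cor:freedman}) hold almost surely. Cutting the horizon $[0,T]$, $T=\Theta\!\big(l/(\eta(\lambda_1-\lambda_2))\big)$, into $\Theta(l)$ blocks of one ``natural time unit'' each --- so that the exponentially growing ODE-trick threshold varies by only an $O(1)$ factor inside a block --- and applying a maximal inequality per block gives a per-block failure probability $\exp(-\Omega(\epsilon(\lambda_1-\lambda_2)/(\eta\lambda_1)))$ independent of the block and of $T$; a union bound over the $\Theta(l)$ blocks then forces $\eta=\Theta\!\big(\frac{\epsilon(\lambda_1-\lambda_2)}{\lambda_1\log(l/\delta)}\big)$, exactly as stated, and the same estimate (now evaluated at the deterministic block endpoints) shows that $z$ re-contracts below $\epsilon/2$ by the end of each sufficiently long block with the same probability. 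Finally the pull-out lemma removes $\tau$, yielding the stated bound.

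\textbf{For-all-time continual learning.} Here I would exploit the slowly diminishing rate: partition $\N$ into consecutive blocks of geometrically growing length (after an initial warm-up block long enough to re-contract $z$ below $\epsilon/2$, which re-sets the induction), run the finite part on the $k$-th block with horizon parameter $l\asymp 2^k$, starting point $z\le\epsilon/2$ (guaranteed inductively by the re-contraction above), and target failure probability $\delta_k:=\delta\,2^{-k-1}$. The finite part then needs $\eta_t=\Theta\!\big(\frac{\epsilon(\lambda_1-\lambda_2)}{\lambda_1\log(2^k/\delta_k)}\big)=\Theta\!\big(\frac{\epsilon(\lambda_1-\lambda_2)}{\lambda_1(k+\log(1/\delta))}\big)$ on the $k$-th block, which for $t$ in that block is $\Theta\!\big(\frac{\epsilon(\lambda_1-\lambda_2)}{\lambda_1\log(t/\delta)}\big)$ since $k=\Theta(\log t)$; note this has $\sum_t\eta_t^2=\infty$, far outside the usual Robbins--Monro regime, matching the fact (from the deterministic continuous dynamic, \autoref{thm:continuous 2 side}) that the drift alone never leaves the optimal region. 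A union bound $\sum_{k\ge 0}\delta_k\le\delta$ then gives $\Pr[\exists t\in\N:\ z_t\ge\epsilon]<\delta$. The point of the $1/\log(t/\delta)$ schedule is precisely that it is the slowest one for which the per-block failure probabilities remain summable.

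\textbf{Main obstacle.} The delicate part is pushing the stopping-time/pull-out machinery through the \emph{time-varying} learning rate: when $\eta_t$ drops from block to block, both the ODE-trick multipliers $H_t$ and the noise scale change, so one must check that the conditional-variance and bounded-difference bounds used in the maximal inequalities remain uniform on the stopped process, and --- more subtly --- that the chain condition powering the pull-out lemma (``as long as the noise term has stayed small, the iterate has not left the $(1-\epsilon)$-ball, hence $\tau$ has not fired'') survives the block-by-block gluing. The secondary nuisance is the block-boundary bookkeeping: one needs to re-enter the stronger $(1-\epsilon/2)$ region at the start of each block rather than merely stay inside $1-\epsilon$, which is exactly why the blocks must grow (each must contain enough natural time units to re-contract) and why the rate must slowly diminish.
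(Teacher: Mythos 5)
Your proposal matches the paper's proof: Theorem~\ref{thm: finite continual learning} is established by cutting $[0,lt']$ into $l$ blocks of one natural time unit (so $1/4\geq H^{t'}\geq 1/8$), invoking Corollary~\ref{cor: local concentration} with $v_0=-\epsilon/2$ and the ODE trick to show both that $\tilde{\bw}$ stays above $-\epsilon$ throughout the block and re-contracts to $\geq-\epsilon/2$ at the block endpoint, then union-bounding over $l$ blocks to force $\eta=\Theta(\epsilon(\lambda_1-\lambda_2)/(\lambda_1\log(l/\delta)))$; Theorem~\ref{thm: continual learning} then iterates the finite result with failure budget $\delta_i=\delta/(2i^2)$ (your geometric scheme $\delta_k=\delta 2^{-k-1}$, $l\asymp 2^k$ yields the same $\eta_t\asymp 1/\log(t/\delta)$ schedule). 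The obstacle you flag about pushing the pull-out machinery through a time-varying rate does not actually arise, because --- exactly as you set it up --- the finite theorem is applied as a black box per block with $\eta$ constant inside each block, so the only thing that must be glued across blocks is the endpoint condition $\bw^2_{\text{block end},1}\geq 1-\epsilon/2$, which is precisely what the per-block re-contraction supplies.
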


\paragraph{Proof structure of~\autoref{thm:main stay}.}
We first reduce the general setting to the special case where $A=\diag(\lambda)$ in~\autoref{sec: preprocessing}. The proof of finite continual learning is then a direct application of techniques developed in local convergence. By repetitively applying finite continual learning, we can show for-all-time continual learning. The results will be proven in~\autoref{sec: finite continual learning}.

\section{Preprocessing}\label{sec: preprocessing}

Before the main analysis of the biological Oja's rule, we provide two useful observations on the dynamic in this section. Specifically, we show in~\autoref{sec:diagonal} that considering the covariance matrix being \textit{diagonal} is sufficient for the analysis and in~\autoref{sec:oja bounded} that $\|\bw_{t}\|_2^2=1\pm O(\eta)$ almost surely for all $t\in\N$.

\subsection{A reduction to the diagonal case}\label{sec:diagonal}
In this subsection, we show that it suffices to analyze the case where the covariance matrix $A$ is a diagonal matrix $D$.
Recall that $A$ is defined as the expectation of $\bx\bx^\top$ and thus it is positive semidefinite. Namely, there exists an orthonormal matrix $U$ and a diagonal matrix $D$ such that $A=UDU^\top$. Especially, the eigenvalues of $A$, \textit{i.e.,} $1\geq\lambda_1\geq\lambda_2\geq\cdots\geq\lambda_n\geq0$, are the entries of $D$ from top left to bottom right on the diagonal. Thus, by a change of basis, we can focus on the case where $A=D$ without loss of generality.

To see this, consider $\tilde{\bw_t}=U\bw_t$ and $\tilde{\bx_t}=U\bx_t$. As $U^\top U=UU^\top=I$, we have $\tilde{\bx}_t^\top\tilde{\bw}=\bx_t^\top\bw$ and $\Exp[\tilde{\bx}\tilde{\bx}^\top]=D$.
Let $\bv_1$ be the top eigenvector of $A$ (\textit{i.e.,} the first row of $U$), we also have
$$
\|\bw_t-\bv_1\|_2 = \|U\bw_t-U\bv_1\|_2 = \|\tilde{\bw}_t-\be_1\|_2
$$
where $\be_1$ is the indicator vector for the first coordinate. Namely, it suffices to analyze how fast does $\tilde{\bw}_t$ converge to $\be_1$.
Thus, we without loss of generality consider the diagonal case where the goal would be showing that $\bw_{t,1}^2\geq1-\epsilon$.

\subsection{Bounded conditions of Oja's rule}\label{sec:oja bounded}
In this section, we show that the $\ell_2$ norm of the weight vector is always close to $1$.
\begin{lemma}\label{lem:oja w norm ub}
For any $\eta\in(0,0.1)$, if for all $t\in\N$, $\eta_t\leq\eta$, then for all $t\in\Nz$, $1-10\eta\leq\|\bw_t\|_2^2\leq 1+10\eta$ almost surely.
\end{lemma}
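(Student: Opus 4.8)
The plan is to track the scalar $s_t := \|\bw_t\|_2^2$ and show that the interval $[1-10\eta,\,1+10\eta]$ is \emph{invariant} under one step of the dynamic, i.e.\ proceed by induction on $t$ with the trivial base case $s_0 = \|\bw_0\|_2^2 = 1$. First I would expand the update rule of \autoref{def:bio oja}: with $y_t = \bx_t^\top\bw_{t-1}$, using $\bw_{t-1}^\top(\bx_t - y_t\bw_{t-1}) = y_t(1-s_{t-1})$ and $\|\bx_t - y_t\bw_{t-1}\|_2^2 = 1 - y_t^2(2-s_{t-1})$ (here $\|\bx_t\|_2=1$ since $\mathcal{D}$ is supported on the unit sphere), one gets the deterministic recursion
\[
s_t = s_{t-1} + 2\eta_t y_t^2(1-s_{t-1}) + \eta_t^2 y_t^2\bigl(1 - y_t^2(2-s_{t-1})\bigr).
\]
Substituting $a_{t-1} := s_{t-1}-1$ (so $2-s_{t-1} = 1-a_{t-1}$) rewrites this as
\[
s_t - 1 = a_{t-1}(1 - 2\eta_t y_t^2) + \eta_t^2 y_t^2\bigl(1 - y_t^2(1-a_{t-1})\bigr),
\]
and Cauchy--Schwarz supplies the key side bound $y_t^2 \le \|\bx_t\|_2^2\|\bw_{t-1}\|_2^2 = s_{t-1}$.

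Assume now $|a_{t-1}| \le 10\eta$, so $s_{t-1} \in [1-10\eta,1+10\eta] \subseteq (0,2)$ and $2\eta_t y_t^2 \le 2\eta s_{t-1} < 1$. Since $|a_{t-1}| < 1$ we always have $y_t^2(1-a_{t-1}) \le (1+a_{t-1})(1-a_{t-1}) = 1-a_{t-1}^2 \le 1$, hence the error term $\eta_t^2 y_t^2(1 - y_t^2(1-a_{t-1}))$ is nonnegative. I then split into two cases. If $a_{t-1} \le 0$: the drift term $a_{t-1}(1-2\eta_t y_t^2)$ lies in $[a_{t-1},0]$ because $0 \le 1-2\eta_t y_t^2 \le 1$, so $s_t - 1 \ge a_{t-1} \ge -10\eta$; for the other side, the drift term is $\le 0$ and the error term is $\le \eta_t^2 y_t^2 \le \eta^2 s_{t-1} \le \eta^2 < 10\eta$, so $s_t - 1 \le 10\eta$. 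If $0 < a_{t-1} \le 10\eta$: both terms are nonnegative, so $s_t \ge 1 \ge 1-10\eta$; for the upper bound I would discard the manifestly nonpositive piece $-\eta_t^2 y_t^4(1-a_{t-1})$ (nonpositive since $a_{t-1}<1$) to get $s_t - 1 \le a_{t-1} + \eta_t y_t^2(\eta_t - 2a_{t-1})$, which is $\le a_{t-1} \le 10\eta$ when $a_{t-1} \ge \eta_t/2$, and otherwise is at most $\tfrac{\eta}{2} + \eta^2 y_t^2 \le \tfrac{\eta}{2} + 2\eta^2 < 10\eta$. In every case $|s_t-1| \le 10\eta$, closing the induction; the ``almost surely'' is automatic since $s_t$ is a deterministic function of $\bw_0$ and $\bx_1,\dots,\bx_t$ and the inputs always have unit norm.

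The step I expect to be the real obstacle is exactly the upper bound in the regime $a_{t-1} > 0$: the naive estimate $|s_t-1| \le |s_{t-1}-1| + O(\eta^2)$ is true but useless, since that $O(\eta^2)$ discretization error, summed over $t$ steps, would eventually leave the band $[1-10\eta,1+10\eta]$. Escaping this requires using the structure rather than absolute values: (i) the sign of the quartic term $-\eta_t^2 y_t^4(1-a_{t-1})$, which is negative because $\eta<0.1$ forces $a_{t-1}<1$, and (ii) the cancellation between the linear drift $-2\eta_t y_t^2 a_{t-1}$ pulling $s_{t-1}$ back toward $1$ and the residual $+\eta_t^2 y_t^2$, which kicks in precisely once $a_{t-1} \gtrsim \eta_t$ — i.e.\ once $s_{t-1}$ has climbed near the top of the allowed band. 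This is the discrete counterpart of the fact, visible in the continuous analysis of \autoref{sec:continuous Oja}, that $\|\bw_t\|_2 = 1$ is a stable equilibrium; once this is exploited the rest is routine bookkeeping of constants against $\eta < 0.1$.
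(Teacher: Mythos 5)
Your proof is correct and takes essentially the same approach as the paper: expand $\|\bw_t\|_2^2$ via the update rule, isolate the restoring drift term $-2\eta_t y_t^2(\|\bw_{t-1}\|_2^2-1)$, and close by induction with a case split on where $\|\bw_{t-1}\|_2^2$ sits in the band. You are more careful than the paper's terse sketch in two useful ways — you work with the exact identity $\|y_t\bx_t-y_t^2\bw_{t-1}\|_2^2 = y_t^2\bigl(1-y_t^2(2-s_{t-1})\bigr)$ instead of a crude upper bound (which lets you spot that the quadratic term is nonnegative, handling the lower bound cleanly), and you spell out the cancellation between $-2\eta_t y_t^2 a_{t-1}$ and $+\eta_t^2 y_t^2$ that the paper leaves implicit in its ``case (ii)''.
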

\begin{proof}[Proof of~\autoref{lem:oja w norm ub}]
Here we prove only the upper bound while the lower bound can be proved using the same argument.
The proof is based on induction. For the base case where $t=0$, we have $\|\bw_0\|_2^2=1$ from the problem setting. For the induction step, consider any $t\in\N$ such that $\bw_{t-1}$ satisfies the bounds, we have
\begin{align*}
\|\bw_{t}\|_2^2 &= \|\bw_{t-1}\|_2^2 + 2\eta_t\bw_{t-1}^\top\left[y_t\bx_t-y_t^2\bw_{t-1}\right] + \eta^2_t\cdot\|y_t\bx_t-y_t^2\bw_{t-1}\|^2\\
&=\|\bw_{t-1}\|_2^2 - 2\eta_t(y_t)^2\cdot(\|\bw_{t-1}\|_2^2-1) + 2\eta^2_ty_t^2\cdot\max\{\|\bx_t\|_2^2,y_t^2\|\bw_{t-1}\|_2^2\} \, .
\end{align*}
Consider two cases: (i) $\|\bw_{t-1}\|_2^2\leq1+8\eta$ and (ii) $1+8\eta<\|\bw_{t-1}\|_2^2\leq1+10\eta$. Note that $\|\bw_{t}\|_2^2\leq1+10\eta$ in both cases. This completes the induction and the proof.
\end{proof}

\section{Local Convergence: Starting With Correlated Weights}\label{sec: local convergence}
For the local convergence result, the synaptic weight $\bw_0$ is correlated with the top eigenvector by a constant. To be precise, we suppose that $\bw_{0, 1}^2\geq \tfrac{2}{3}$. The goal of this section is to show that $1-\bw_{ t,1}^2\leq\epsilon$ for some $t=O(\tfrac{\lambda_1\log(1/\epsilon)(\log\log\log(1/\epsilon)+\log(1/\delta))}{\epsilon(\lambda_1-\lambda_2)^2})$ for any small $\epsilon>0$. Let us first state the main theorem of this section as follows.

\begin{theorem}[Local convergence of the diagonal case]\label{thm:discrete phase 2}
Suppose $\bw_{0,1}^2\geq2/3$. For any $n\in\N$, $\delta\in(0,1), \epsilon\in (0, \frac{1}{8})$, let
\[
\eta = \Theta\left(\frac{\epsilon(\lambda_1 - \lambda_2)}{\lambda_1\log\frac{\log\log\frac{1}{\epsilon}}{\delta}}\right),\ T = \Theta\left(\frac{\log \frac{1}{\epsilon}}{\eta (\lambda_1 - \lambda_2)}\right).
\]
Then
\[
\Pr\left[\bw_{T, 1}^2 < 1-\epsilon\right] < \delta \, .
\]
Namely, the convergence rate is of order $\Theta\left(\frac{\lambda_1\log\frac{1}{\epsilon}\left(\log\log\log\frac{1}{\epsilon}+\log\frac{1}{\delta}\right)}{\epsilon(\lambda_1-\lambda_2)^2}\right)$ with probability at least $1-\delta$. 
\end{theorem}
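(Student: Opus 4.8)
The plan is to mimic the continuous analysis in the regime where one linearizes at the fixed point $1$ (\autoref{thm:continuous lin at 1}), since the hypothesis $\bw_{0,1}^2\ge 2/3$ already places us in the basin of attraction. After the reduction to the diagonal case (\autoref{sec:diagonal}) and the almost-sure norm bound $\|\bw_t\|_2^2=1\pm O(\eta)$ (\autoref{lem:oja w norm ub}), the right potential to track is the error $D_t:=1-\bw_{t,1}^2$, which starts at $D_0\le 1/3$ and must be driven below $\epsilon$; WLOG $\bw_{0,1}\ge\sqrt{2/3}>0$, and since a single step moves $\bw_{t,1}$ by only $O(\eta)$, $\bw_{t,1}$ stays positive as long as $\bw_{t,1}^2\ge 2/3$.

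\textbf{One-step recursion.} Writing $\Delta_t:=\bw_{t,1}-\bw_{t-1,1}$, one has $D_t=D_{t-1}-2\bw_{t-1,1}\Delta_t-\Delta_t^2$. A direct computation gives $\Exp[\Delta_t\mid\mathcal F_{t-1}]=\eta_t\bw_{t-1,1}\bigl(\lambda_1-\sum_i\lambda_i\bw_{t-1,i}^2\bigr)$, and using $\sum_{i\ge 2}\lambda_i\bw_{t-1,i}^2\le\lambda_2(D_{t-1}+O(\eta))$ together with $\|\bw_{t-1}\|_2^2=1+O(\eta)$, we get $\lambda_1-\sum_i\lambda_i\bw_{t-1,i}^2\ge(\lambda_1-\lambda_2)D_{t-1}-O(\eta\lambda_1)$. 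Hence, on the event $\bw_{t-1,1}^2\ge 2/3$,
\[
\Exp[D_t\mid\mathcal F_{t-1}]\;\le\;\bigl(1-2\eta_t\bw_{t-1,1}^2(\lambda_1-\lambda_2)\bigr)D_{t-1}\;+\;b_t,\qquad b_t\le O(\eta^2\lambda_1),
\]
where $b_t$ may be very negative (which only helps). The crucial facts about the noise $N_t:=D_t-\Exp[D_t\mid\mathcal F_{t-1}]$ are $|N_t|\le O(\eta)$ almost surely and the \emph{refined} conditional variance bound $\Var[N_t\mid\mathcal F_{t-1}]\le O(\eta^2\lambda_1(D_{t-1}+\eta))$ — proportional to the current error, because $\bx_{t,1}-y_t\bw_{t-1,1}=\bx_{t,1}D_{t-1}-\bw_{t-1,1}\sum_{i\ge2}\bx_{t,i}\bw_{t-1,i}$ is small when $\bw_{t-1}$ is near $\be_1$. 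This proportionality is exactly what separates an $\epsilon^{-1}$ from an $\epsilon^{-2}$ convergence rate.

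\textbf{Phased ODE-trick argument.} I would split $[0,T]$ into $K=\Theta(\log\log\tfrac1\epsilon)$ phases with doubly-geometrically shrinking targets, $\theta_0=\tfrac13$, $\theta_{k+1}\approx\theta_k^2$ (so $\theta_K\le\epsilon$; the elementary identity \autoref{lem:approx log} is convenient in calibrating the last phase and choosing $\eta$). In phase $k$ (of length $T_k=\Theta(\log(1/\theta_k)/(\eta(\lambda_1-\lambda_2)))$, run a constant factor longer than minimally needed, so that $\sum_k T_k=\Theta(\log(1/\epsilon)/(\eta(\lambda_1-\lambda_2)))=T$), conditioned on $D\le\theta_k$ at the start, I would show that with probability $1-\delta/(2K)$ the error reaches $\le\theta_{k+1}$ by the end and never exceeds $3\theta_k$ throughout. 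For this, apply the ODE trick (\autoref{lem:ODE trick}) to $D_t\le H_t^{-1}D_{t-1}+(b_t+N_t)$ with the $\mathcal F_{t-1}$-measurable multiplier $H_t^{-1}:=1-2\eta_t\bw_{t-1,1}^2(\lambda_1-\lambda_2)\le 1-\tfrac43\eta_t(\lambda_1-\lambda_2)$: this writes the phase-end error as a geometrically small multiple of the initial error, plus a telescoping $b$-term of size $O(\eta\lambda_1/(\lambda_1-\lambda_2))$, plus a noise term $\sum_s\bigl(\prod_{j=s+1}H_j^{-1}\bigr)N_s=\bigl(\prod_j H_j^{-1}\bigr)\widetilde M_{T_k}$ where $\widetilde M_t=\sum_{s\le t}N_s/\prod_{j\le s}H_j^{-1}$ is a martingale. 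To control $\widetilde M$, introduce the stopping time $\tau$ for the event $\{D_t>3\theta_k\}$, pass to the stopped process (\autoref{lem: stop difference}): on it the conditional variances are $O(\eta^2\lambda_1\theta_k)$ by the refined bound and the increments are $O(\eta)$ up to the (polynomially bounded) ODE-trick weights, so a maximal Freedman-type inequality (\autoref{lem:freedman}, \autoref{cor:freedman}; or \autoref{lem:maximal azuma}) controls $\sup_t|\widetilde M_{t\wedge\tau}|$. Finally invoke the pull-out lemma / chain argument of \autoref{sec:technical overview}: if the stopped noise stayed small then, by the ODE-trick identity, the drift dominated, so $D_t$ tracked its deterministic decaying envelope and in particular never reached $3\theta_k$, which shows $\tau$ has not fired; this erases $\tau$ without a union bound over time.

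\textbf{Wrap-up and the hard part.} Taking a union bound over the $K=\Theta(\log\log\tfrac1\epsilon)$ phases (and noting the norm bound \autoref{lem:oja w norm ub} holds almost surely), choosing $\eta=\Theta\bigl(\tfrac{\epsilon(\lambda_1-\lambda_2)}{\lambda_1}\cdot\tfrac1{\log(K/\delta)}\bigr)=\Theta\bigl(\tfrac{\epsilon(\lambda_1-\lambda_2)}{\lambda_1\log((\log\log\tfrac1\epsilon)/\delta)}\bigr)$ (the factor $\epsilon$ needed so the telescoped $b$-term is $\le\epsilon$; the factor $1/\log(K/\delta)$ to make each phase succeed with probability $1-\delta/(2K)$), and $T=\Theta(\log\tfrac1\epsilon/(\eta(\lambda_1-\lambda_2)))$, yields $D_T\le\theta_K\le\epsilon$, i.e.\ $\bw_{T,1}^2\ge 1-\epsilon$, with probability $\ge 1-\delta$. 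The main obstacle — and the reason the ODE trick, the stopping-time device, and the pull-out lemma are all needed — is reconciling three conflicting demands on the noise term: the martingale representative of the ODE trick carries exponentially growing weights $1/\prod_{j\le s}H_j^{-1}$, the bounded-difference and variance hypotheses of the martingale inequalities fail for the raw process (they depend on the uncontrolled $D_{t-1}$), and a naive variance bound (ignoring the $\propto D_{t-1}$ smallness) would blow up the accumulated variance by a polynomial-in-$1/\epsilon$ factor and destroy the rate. The resolution — a geometric phase decomposition so that within each phase the ODE-trick weights stay polynomially bounded, a stopping time so the refined variance bound is genuinely in force, running each phase slightly longer so the contraction beats the residual weight growth, and the chain argument to remove the stopping time cleanly — is where essentially all the technical work lies.
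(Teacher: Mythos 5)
Your overall architecture is the same as the paper's: linearize $1-\bw_{t,1}^2$ at the fixed point $1$, close the dynamic with the ODE trick (\autoref{cor:discrete phase 2 ODE}), introduce a stopping time so that the refined variance bound $\Var[A_t\mid\mathcal F_{t-1}]=O\bigl(\eta^2\lambda_1(|\tilde\bw_{t-1}|+\eta)\bigr)$ is genuinely in force, control the stopped noise with Freedman, and erase the stopping time via the chain/pull-out argument (\autoref{lem:discrete phase 2 stopping time}), union-bounded over $\Theta(\log\log\tfrac1\epsilon)$ phases. That is exactly the paper's proof of \autoref{thm:discrete phase 2}.

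However, the phase schedule $\theta_{k+1}\approx\theta_k^2$ is wrong, and the concentration step breaks down in the late phases. Quantitatively: inside phase $k$ the stopping time enforces $D\lesssim\theta_k$, so each $\Var[A_i\mid\mathcal F_{i-1}]=O(\eta^2\lambda_1\theta_k)$; weighting by the ODE-trick factors $H^{-2(i-t_0)}$ and summing the geometric series gives accumulated variance $\Theta\bigl(\eta\lambda_1\theta_kH^{-2T_k}/(\lambda_1-\lambda_2)\bigr)$, hence (with $\eta=\Theta(\tilde\epsilon(\lambda_1-\lambda_2)/(\lambda_1\log\tfrac1{\delta'}))$) a Freedman deviation of $\Theta\bigl(\sqrt{\tilde\epsilon\,\theta_k}\cdot H^{-T_k}\bigr)$. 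For the drift to dominate you need this to be $\lesssim\theta_k$, which forces $H^{-T_k}\lesssim\sqrt{\theta_k/\tilde\epsilon}$, so the best per-phase gain is $\theta_{k+1}\gtrsim\sqrt{\theta_k\tilde\epsilon}$. Your target $\theta_k^2$ respects this only while $\theta_k\ge\tilde\epsilon^{1/3}$; once $\theta_k<\tilde\epsilon^{1/3}$ the deviation $\sqrt{\tilde\epsilon/\theta_k}$ exceeds $\theta_k$, the Freedman bound does not close, and in particular the chain condition your pull-out step relies on is no longer available. Running a phase ``a constant factor longer'' only makes this worse by inflating $H^{-T_k}$ and hence the accumulated variance. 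The correct schedule is the paper's $\theta_k=\tilde\epsilon^{1-2^{-k}}$, i.e.\ $\theta_{k+1}=\Theta(\sqrt{\theta_k\tilde\epsilon})$, with geometrically \emph{decreasing} phase lengths $T_k\asymp 2^{-k}\log(1/\tilde\epsilon)/(\eta(\lambda_1-\lambda_2))$ — the reverse of your increasing $T_k\asymp 2^k/(\eta(\lambda_1-\lambda_2))$. That schedule saturates the constraint at every phase, and it is precisely there that \autoref{lem:approx log} earns its keep, landing the final target at $\Theta(\epsilon)$. With that replacement, the rest of your plan goes through as in the paper.
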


\paragraph{Proof overview and organization.}
First note that by applying the diagonal reduction argument in~\autoref{sec:diagonal},~\autoref{thm:discrete phase 2} implies the local convergence part of~\autoref{thm:main} as a corollary. The proof structure of~\autoref{thm:discrete phase 2} is as follows. First, in~\autoref{sec:phase 2 linearization} we derive a linearization of the dynamic using a center at 1 instead of 0 based on the intuition from the continuous dynamic in~\autoref{sec:continuous Oja}. Furthermore, we use the ODE trick to write down the dynamic in a closed form with respect to the linearization.

Next, in~\autoref{sec:phase 2 noise}, we want to show that the noise term is small. However, the difficulty here is that $\bw_{t,1}$ might \textit{go back} to the small region (\textit{e.g.,} $\bw_{t,1}^2<\tfrac{1}{3}$) and thus the bounded difference might become too large to effectively bound the noise with  Freedman's inequality. To deal with this issue, we consider a stopping time where $\bw_{t,1}^2 < 1-a$ to give good control on the bounded difference and subsequently bound the stopped version of the noise term in~\autoref{lem:discrete phase 2 stopped concentration}. After we show that the stopped noise term is small, we want to pull out the stopping time to show the concentration on the original noise term. In general, pulling out the stopping time is impossible without introducing extra failure probability; however, by exploiting the structure of the dynamic, we are able to pull out the stopping time without additional cost in~\autoref{lem:discrete phase 2 stopping time}. 

Finally in~\autoref{sec: local interval}, by combining the small noise and the ODE trick, we are able to prove~\autoref{thm:discrete phase 2} with an interval analysis. As a corollary of~\autoref{cor: local concentration} in the local convergence, we show that biological Oja's rule has the continual learning capacity in~\autoref{sec: finite continual learning}. In a biological system, it is important to function for a long period of time instead of at one time point. In this section, we prove two theorems on continual learning. ~\autoref{thm: finite continual learning} guarantees Oja's rule can maintain the convergence for any finite time length efficiently while~\autoref{thm: continual learning} guarantees Oja's rule can function for all time without sacrificing too much efficiency to adapt to a new environment. 
\subsection{Linearization and ODE trick centered at 1}\label{sec:phase 2 linearization}
In this section, we derive the linearization of Oja's rule with a center at $1$ in~\autoref{lem:discrete phase 2 linearization} and the closed form solution of Oja' rule in~\autoref{cor:discrete phase 2 ODE}. In addition, we show that the bounded differences and moments of the noise can be controlled in~\autoref{lem: local bound}.

In the analysis of the local convergence, we use the linearization with a center at $1$ instead of $0$. The idea is inspired from the analysis of the continuous dynamics as explained in~\autoref{sec:continuous Oja}. To ease the notation, we define $\tilde{\bw}_{t,1}=\bw_{t,1}-1$ and the goal becomes to show that $\tilde{\bw}_{t_0+t_2,1}>-\epsilon$ with probability at least $1-\delta$. The following lemma states the linearization for $\tilde{\bw}_{t,1}$.

\begin{lemma}[Linearization at $1$]\label{lem:discrete phase 2 linearization}
Let $\tilde{\bw}_{t} = \bw_{t, 1}^2 - 1$ and $\bz_t = \bx_ty_t - y_t^2\bw_{t-1}$. For any $t\in\N_{\geq 0}$ and $\eta\in(0,1)$, we have
\[
\tilde{\bw}_{t} \geq  H\cdot\tilde{\bw}_{t-1} + A_{t} + B_{t}
\]
almost surely, where 
\begin{align*}
H &=1-\frac{2}{3}(\lambda_1 - \lambda_2)\eta \, ,\ \\
A_t &= 2\eta\bz_{t,1}\bw_{t-1, 1} + \eta^2\bz_{t,1}^2 - \Exp\left[2\eta\bz_{t,1}\bw_{t-1, 1}\, |\, \bw_{t-1} \right] + 2\eta\lambda_2(1-\|\mathcal{F}_{t-1}\|^2)\bw_{t-1, 1}^2 \, ,\ \\
B_t &= -2\eta(\lambda_1 - \lambda_2)\tilde{\bw}_{t, 1}(\frac{2}{3} + \tilde{\bw}_{t, 1}) \, .
\end{align*}
\end{lemma}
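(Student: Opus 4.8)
The plan is to treat the lemma as a completely explicit expansion of the Oja update followed by two elementary estimates. With $\bz_t=\bx_ty_t-y_t^2\bw_{t-1}$ as in the statement, the biological Oja's rule~\eqref{eq:oja} reads $\bw_t=\bw_{t-1}+\eta\bz_t$, so the first coordinate satisfies $\bw_{t,1}=\bw_{t-1,1}+\eta\bz_{t,1}$ and squaring gives the exact identity
\[
\tilde{\bw}_t=\tilde{\bw}_{t-1}+2\eta\bw_{t-1,1}\bz_{t,1}+\eta^2\bz_{t,1}^2 .
\]
First I would add and subtract the conditional mean $\Exp[2\eta\bw_{t-1,1}\bz_{t,1}\mid\bw_{t-1}]$; since $\bx_t$ is independent of $\mathcal F_{t-1}$ and $\bz_{t,1}$ is a function of $\bx_t$ and the $\mathcal F_{t-1}$-measurable vector $\bw_{t-1}$, this conditional mean also equals $\Exp[2\eta\bw_{t-1,1}\bz_{t,1}\mid\mathcal F_{t-1}]$, so the fluctuation part $2\eta\bw_{t-1,1}\bz_{t,1}+\eta^2\bz_{t,1}^2-\Exp[2\eta\bw_{t-1,1}\bz_{t,1}\mid\bw_{t-1}]$ becomes exactly the first three summands of $A_t$.

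Next I would evaluate the conditional mean using the second-moment structure $\Exp[\bx_{t,i}\bx_{t,j}]=\lambda_i\mathbf 1_{i=j}$, which gives $\Exp[\bz_{t,1}\mid\bw_{t-1}]=\bw_{t-1,1}\bigl(\lambda_1-\bw_{t-1}^\top\diag(\lambda)\bw_{t-1}\bigr)$, hence $\Exp[2\eta\bw_{t-1,1}\bz_{t,1}\mid\bw_{t-1}]=2\eta\bw_{t-1,1}^2\bigl(\lambda_1-\bw_{t-1}^\top\diag(\lambda)\bw_{t-1}\bigr)$. I would then bound the eigenvalue-weighted sum from above using $\lambda_j\le\lambda_2$ for $j\ge2$, i.e.
\[
\bw_{t-1}^\top\diag(\lambda)\bw_{t-1}\le\lambda_1\bw_{t-1,1}^2+\lambda_2\bigl(\|\bw_{t-1}\|_2^2-\bw_{t-1,1}^2\bigr),
\]
equivalently $\lambda_1-\bw_{t-1}^\top\diag(\lambda)\bw_{t-1}\ge(\lambda_1-\lambda_2)(1-\bw_{t-1,1}^2)+\lambda_2(1-\|\bw_{t-1}\|_2^2)$. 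Multiplying through by the \emph{manifestly nonnegative} factor $2\eta\bw_{t-1,1}^2$ (note it is $\bw_{t-1,1}^2$, not $\bw_{t-1,1}$, so no positivity assumption on $\bw_{t-1,1}$ is needed to preserve the direction of the inequality) splits $\Exp[2\eta\bw_{t-1,1}\bz_{t,1}\mid\bw_{t-1}]$ into the last summand of $A_t$, namely $2\eta\lambda_2(1-\|\bw_{t-1}\|_2^2)\bw_{t-1,1}^2$, plus the drift $2\eta(\lambda_1-\lambda_2)\bw_{t-1,1}^2(1-\bw_{t-1,1}^2)$.

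Finally I would recenter the drift at $1$: substituting $\bw_{t-1,1}^2=1+\tilde{\bw}_{t-1}$ and $1-\bw_{t-1,1}^2=-\tilde{\bw}_{t-1}$ turns the drift into $-2\eta(\lambda_1-\lambda_2)\tilde{\bw}_{t-1}(1+\tilde{\bw}_{t-1})$, and I would peel off the linear-in-$\tilde{\bw}_{t-1}$ piece with coefficient $H-1=-\tfrac23(\lambda_1-\lambda_2)\eta$, leaving
\[
-2\eta(\lambda_1-\lambda_2)\tilde{\bw}_{t-1}(1+\tilde{\bw}_{t-1})=(H-1)\tilde{\bw}_{t-1}-2\eta(\lambda_1-\lambda_2)\tilde{\bw}_{t-1}\Bigl(\tfrac23+\tilde{\bw}_{t-1}\Bigr)=(H-1)\tilde{\bw}_{t-1}+B_t .
\]
Collecting the pieces gives $\tilde{\bw}_t\ge H\tilde{\bw}_{t-1}+A_t+B_t$, as claimed. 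There is no genuine obstacle here: the statement is an identity plus one monotonicity inequality on the eigenvalue-weighted sum. The only points that require care are keeping the conditioning consistent ($\bw_{t-1}$ versus $\mathcal F_{t-1}$), multiplying by $\bw_{t-1,1}^2\ge 0$ rather than by $\bw_{t-1,1}$, and the choice of the constant $\tfrac23$ in $H$ — this is exactly the value that makes the residual $B_t=-2\eta(\lambda_1-\lambda_2)\tilde{\bw}_{t-1}(\tfrac23+\tilde{\bw}_{t-1})$ nonnegative throughout the regime $\bw_{t-1,1}^2\in[\tfrac13,1]$ relevant to local convergence, so that $B_t$ can simply be dropped for a lower bound when the ODE trick of~\autoref{lem:ODE trick} is later applied. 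Matching $\tfrac23$ to the assumed correlation $\bw_{0,1}^2\ge\tfrac23$ is the discrete analogue of the two-sided linearization of the continuous dynamic in~\autoref{sec:continuous Oja}, where linearizing at $1$ produces a slope of order $\Theta(\lambda_1-\lambda_2)$; identifying this split is the one conceptual (rather than computational) step.
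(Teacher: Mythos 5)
Your proposal is correct and follows essentially the same route as the paper's own proof: expand $\bw_{t,1}^2$ exactly via the Oja update, add and subtract the conditional mean, bound $\sum_{i\geq 2}\lambda_i\bw_{t-1,i}^2$ by $\lambda_2\sum_{i\geq 2}\bw_{t-1,i}^2$, and then recenter at $1$ and split the drift into the $(H-1)\tilde{\bw}_{t-1}$ piece and $B_t$. The one point worth flagging is that you silently corrected two typos in the lemma statement: the $\|\mathcal{F}_{t-1}\|^2$ in $A_t$ should read $\|\bw_{t-1}\|^2$, and the $\tilde{\bw}_{t,1}$ appearing in $B_t$ should be $\tilde{\bw}_{t-1}$ (as your derivation and the paper's own proof both produce), and your explicit remark that the inequality direction is preserved by multiplying through by the nonnegative factor $\bw_{t-1,1}^2$ is a useful clarification the paper leaves implicit.
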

\begin{proof}[Proof of~\autoref{lem:discrete phase 2 linearization}]
By expanding $\bw^2_{t, 1}$ with the Oja's rule (\autoref{eq:oja}), we have
\begin{align}
\bw^2_{t,1} &= \bw^2_{t-1,1} + 2\eta\bz_{t, 1}\bw_{t-1,1} + \eta^2\bz_{t}^2\,. \nonumber
\intertext{Add and subtract $\Exp\left[2\eta\bz_{t,1}\bw_{t-1, 1}\, |\, \mathcal{F}_{t-1} \right] - 2\eta\lambda_2(1-\|\bw_{t-1}\|^2)\bw_{t-1}^2$. We have}
&= \bw_{t-1,1}^2 + 2\eta(\lambda_1\bw_{t-1, 1}^2 - \sum_{i=1}^n\lambda_i\bw_{t-1,i}^2\bw_{t-1,1}^2 - \lambda_2(1-\|\bw_{t-1}\|^2)\bw_{t-1,1}^2 ) + A_{t} \, .\nonumber
\intertext{Upper bound $\sum_{i=2}^n\lambda_i\bw_{t-1,i}^2\bw_{t-1,1}^2$ by $\lambda_2\sum_{i=2}^n\bw_{t-1,i}^2\bw_{t-1,1}^2$, we then have}
&\geq \bw_{t-1,1}^2 + 2\eta(\lambda_1(\bw_{t-1, 1}^2 - \bw_{t-1, 1}^4) - \lambda_2(\bw_{t-1, 1}^2 - \bw_{t-1, 1}^4)) + A_{t}\nonumber\\
&= \bw_{t-1,1}^2  + 2\eta(\lambda_1 - \lambda_2)\bw_{t-1, 1}^2(1 - \bw_{t-1, 1}^2) + A_{t} \, . \label{eq: linear 1}
\end{align}
Based on the intuition from the continuous dynamic in~\autoref{sec:continuous Oja}, since we want to converge from constant error to $\epsilon$ error, we want to linearize at $1$. Hence we rewrite~\autoref{eq: linear 1} in terms of $\tilde{\bw}_{t, 1} = \bw_{t, 1}^2 - 1$ and get
\begin{align*}
\tilde{\bw_{t}} &\geq \tilde{\bw}_{t-1}  - 2\eta(\lambda_1 - \lambda_2)\tilde{\bw}_{t-1}(1 + \tilde{\bw}_{t-1}) + A_{t}\\
&= H\cdot\tilde{\bw}_{t-1} + A_{t} + B_{t}
\end{align*}
as desired.
\end{proof}

We apply the ODE trick (see~\autoref{lem:ODE trick}) on~\autoref{lem:discrete phase 2 linearization} and get the following corollary. 

\begin{corollary}[ODE trick]\label{cor:discrete phase 2 ODE}
For any $t_0\in\Nz$, $t\in\N$, and $\eta\in(0,1)$, we have
\[
\tilde{\bw}_{t_0+t} \geq H^{t} \cdot\left(\tilde{\bw}_{t_0} + \sum_{i=t_0+1}^{t_0+t}\frac{A_i+ B_i}{H^{i-t_0}} \right) \, .
\]
\end{corollary}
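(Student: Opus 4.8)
The plan is to obtain \autoref{cor:discrete phase 2 ODE} as a direct application of the scalar ODE trick (\autoref{lem:ODE trick}) to the one-step recursion furnished by \autoref{lem:discrete phase 2 linearization}, namely $\tilde{\bw}_t \ge H\,\tilde{\bw}_{t-1} + A_t + B_t$ with the \emph{constant} multiplier $H = 1 - \tfrac{2}{3}(\lambda_1-\lambda_2)\eta$. The one point that needs attention is that \autoref{lem:discrete phase 2 linearization} yields an inequality rather than the equality assumed in \autoref{lem:ODE trick}; this is harmless provided $H > 0$, which I would record first. Since $\lambda_1 - \lambda_2 \le \lambda_1 \le \mathrm{tr}(A) = 1$ and $\eta \in (0,1)$, we have $\tfrac{2}{3}(\lambda_1-\lambda_2)\eta < \tfrac{2}{3}$, hence $H \in (\tfrac{1}{3}, 1)$, so in particular $H > 0$.

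With that in hand I would argue by telescoping. Fix $t_0 \in \Nz$. For each $i$ with $t_0 < i \le t_0 + t$, divide the inequality $\tilde{\bw}_i \ge H\,\tilde{\bw}_{i-1} + A_i + B_i$ by $H^{i - t_0} > 0$ — which preserves the direction — to get
\[
\frac{\tilde{\bw}_i}{H^{i-t_0}} \;\ge\; \frac{\tilde{\bw}_{i-1}}{H^{\,i-1-t_0}} + \frac{A_i + B_i}{H^{i-t_0}} \, ,
\]
then sum (telescope) from $i = t_0+1$ to $i = t_0 + t$ and multiply through by $H^t > 0$ to reach the stated bound; all of this holds almost surely since \autoref{lem:discrete phase 2 linearization} does. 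Equivalently, one may introduce the auxiliary sequence $X_{t_0} := \tilde{\bw}_{t_0}$ and $X_i := H X_{i-1} + A_i + B_i$, apply \autoref{lem:ODE trick} verbatim (every $H_j$ equals $H$, so $\prod_{j=t_0+1}^{i} H_j = H^{i-t_0}$), and then check $\tilde{\bw}_{t_0+t} \ge X_{t_0+t}$ by a one-line induction that again uses only $H > 0$.

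There is no genuine obstacle in this corollary: it is a positivity check plus a telescoping identity, and the potentially delicate objects — the noise increment $A_i$ (a martingale-type term once its conditional mean is subtracted) and the second-order correction $B_i$ from linearizing at $1$ — are carried through untouched. The hard part is downstream: bounding $\sum_{i} (A_i + B_i)/H^{i-t_0}$ via the stopped-process and pull-out arguments, which is what the later subsections take up.
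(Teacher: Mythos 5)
Your proof is correct and follows the same route the paper intends: apply the scalar ODE trick from \autoref{lem:ODE trick} to the one-step inequality of \autoref{lem:discrete phase 2 linearization} and telescope. The paper states the corollary without proof, implicitly relying on exactly this; your explicit check that $H = 1-\tfrac{2}{3}(\lambda_1-\lambda_2)\eta > 0$ (using $\lambda_1-\lambda_2 \le \lambda_1 \le \operatorname{tr}(A)=1$ and $\eta<1$) is the right subtlety to record when passing from the equality version of the trick to the inequality version.
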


To control the noise term, we need to have bounds on the bounded differences and the moments of $A_i, B_i$. 

\begin{lemma}\label{lem: local bound}
Let $A_t, B_t$ be defined as in~\autoref{lem:discrete phase 2 linearization}. For any $t\in\N$, we have $A_t,B_t$ satisfy the following properties:
\begin{itemize}
\item (Bounded difference) $|A_t|=O(\eta|\tilde{\bw}_{t-1}| + \eta|\tilde{\bw}_{t-1}|^{\frac{1}{2}} + \eta^{\frac{3}{2}})$ almost surely. If $\tilde{\bw}_{t-1,1}\geq -\frac{2}{3}$, then $B_t\geq - O(\eta^2)$ almost surely.
\item (Conditional expectation) $\Exp[A_t\ |\ \mathcal{F}_{t-1}]=O(\eta^2\lambda_1)$.
\item (Conditional variance) $\Var\left[A_t\ |\ \mathcal{F}_{t-1}\right]=O\left(\eta^2\lambda_1\left(\left|\tilde{\bw}_{t-1}\right|^2+ |\tilde{\bw}_{t-1}| + \eta\right)\right)$.
\end{itemize}
\end{lemma}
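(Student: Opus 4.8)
The plan is to exploit the algebraic structure of $\bz_{t,1}$ near the fixed point $\bw_{t-1}=\be_1$, together with the a priori norm bound $\|\bw_{t-1}\|_2^2=1\pm O(\eta)$ from~\autoref{lem:oja w norm ub}. Write $w:=\bw_{t-1,1}$ and decompose $\bw_{t-1}=w\be_1+\bu$ with $\bu$ supported on coordinates $2,\dots,n$, so that $\|\bu\|_2^2=\|\bw_{t-1}\|_2^2-w^2=-\tilde\bw_{t-1}+O(\eta)=O(|\tilde\bw_{t-1}|+\eta)$ and hence $\|\bu\|_2=O(|\tilde\bw_{t-1}|^{1/2}+\eta^{1/2})$; also $|\bx_{t,i}|\le 1$, $|y_t|\le\|\bw_{t-1}\|_2=O(1)$, and $|\bx_t^\top\bu|\le\|\bu\|_2$ almost surely. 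The key step is to substitute $y_t=w\bx_{t,1}+\bx_t^\top\bu$ into $\bz_{t,1}=\bx_{t,1}y_t-y_t^2 w=y_t(\bx_{t,1}-wy_t)$ and use $1-w^2=-\tilde\bw_{t-1}$ to obtain the cancellation identity
\[
\bz_{t,1}=-y_t\bigl(\bx_{t,1}\tilde\bw_{t-1}+w(\bx_t^\top\bu)\bigr),
\]
which upgrades the crude estimate $|\bz_{t,1}|=O(1)$ to $|\bz_{t,1}|=O(|\tilde\bw_{t-1}|+\|\bu\|_2)=O(|\tilde\bw_{t-1}|+|\tilde\bw_{t-1}|^{1/2}+\eta^{1/2})$ almost surely. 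Squaring this identity and using $\Exp[\bx_{t,1}^2\mid\mathcal F_{t-1}]=\lambda_1$ and $\Exp[(\bx_t^\top\bu)^2\mid\mathcal F_{t-1}]=\sum_{j\ge2}\lambda_j\bw_{t-1,j}^2\le\lambda_2\|\bu\|_2^2$ also yields the refined moment bound $\Exp[\bz_{t,1}^2\mid\mathcal F_{t-1}]=O\bigl(\lambda_1(\tilde\bw_{t-1}^2+|\tilde\bw_{t-1}|+\eta)\bigr)$.

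For the bounded-difference claims I would estimate the four summands of $A_t$ one at a time: the identity gives $|2\eta\bz_{t,1}w|=O(\eta|\tilde\bw_{t-1}|+\eta|\tilde\bw_{t-1}|^{1/2}+\eta^{3/2})$; the terms $\eta^2\bz_{t,1}^2$ and $2\eta\lambda_2(1-\|\bw_{t-1}\|_2^2)w^2$ are both $O(\eta^2)$ (the latter by~\autoref{lem:oja w norm ub}); and for the conditional-expectation summand one computes $\Exp[\bz_{t,1}w\mid\mathcal F_{t-1}]=w^2\bigl(\lambda_1-\sum_j\lambda_j\bw_{t-1,j}^2\bigr)$ and observes $\bigl|\lambda_1-\sum_j\lambda_j\bw_{t-1,j}^2\bigr|=\bigl|\lambda_1(1-w^2)-\sum_{j\ge2}\lambda_j\bw_{t-1,j}^2\bigr|=O(|\tilde\bw_{t-1}|+\eta)$, so that summand is $O(\eta|\tilde\bw_{t-1}|+\eta^2)$; adding up (and using $\eta^2\le\eta^{3/2}$) gives the claimed bound on $|A_t|$. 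For $B_t=-2\eta(\lambda_1-\lambda_2)\tilde\bw_{t-1}(\tfrac23+\tilde\bw_{t-1})$ under the hypothesis $\tilde\bw_{t-1}\ge-\tfrac23$, I would split on the sign of $\tilde\bw_{t-1}$: if $-\tfrac23\le\tilde\bw_{t-1}\le0$ then $\tilde\bw_{t-1}(\tfrac23+\tilde\bw_{t-1})\le0$, so $B_t\ge0$; if $\tilde\bw_{t-1}>0$, then~\autoref{lem:oja w norm ub} forces $0<\tilde\bw_{t-1}\le10\eta$, so $|B_t|=O(\eta^2)$; either way $B_t\ge-O(\eta^2)$.

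The conditional-expectation bound is then immediate: the pair $2\eta\bz_{t,1}w$ and $-\Exp[2\eta\bz_{t,1}w\mid\mathcal F_{t-1}]$ cancels in expectation, leaving $\Exp[A_t\mid\mathcal F_{t-1}]=\eta^2\Exp[\bz_{t,1}^2\mid\mathcal F_{t-1}]+2\eta\lambda_2(1-\|\bw_{t-1}\|_2^2)w^2=O(\eta^2\lambda_1)$, using $\Exp[\bz_{t,1}^2\mid\mathcal F_{t-1}]=O(\lambda_1)$ and $|1-\|\bw_{t-1}\|_2^2|=O(\eta)$. For the conditional variance, only $2\eta\bz_{t,1}w+\eta^2\bz_{t,1}^2$ fails to be $\mathcal F_{t-1}$-measurable, so $\Var[A_t\mid\mathcal F_{t-1}]\le O(\eta^2)\,\Exp[\bz_{t,1}^2\mid\mathcal F_{t-1}]+O(\eta^4)\,\Exp[\bz_{t,1}^4\mid\mathcal F_{t-1}]$; since $|\bz_{t,1}|=O(1)$ the second term is dominated by the first, and plugging in the refined bound $\Exp[\bz_{t,1}^2\mid\mathcal F_{t-1}]=O(\lambda_1(\tilde\bw_{t-1}^2+|\tilde\bw_{t-1}|+\eta))$ from the first paragraph gives the claim.

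I expect the one genuinely nonroutine point to be obtaining the fractional exponents $|\tilde\bw_{t-1}|^{1/2}$ and $\eta^{3/2}$ in the bound on $|A_t|$: a bare triangle inequality only delivers $|A_t|=O(\eta)$, which is useless in the regime $|\tilde\bw_{t-1}|,\eta\to0$ on which the interval analysis behind~\autoref{thm:discrete phase 2} relies. The cancellation identity for $\bz_{t,1}$, combined with $\|\bu\|_2^2=O(|\tilde\bw_{t-1}|+\eta)$ and $\sqrt{a+b}\le\sqrt a+\sqrt b$, is precisely what converts the crude $O(1)$ bound on $\bz_{t,1}$ into the required $O(|\tilde\bw_{t-1}|^{1/2}+\eta^{1/2})$, and the same identity is what produces the extra $\tilde\bw_{t-1}^2+|\tilde\bw_{t-1}|+\eta$ factor (rather than a constant) in the conditional variance.
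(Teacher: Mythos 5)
Your proof is correct and follows essentially the same route as the paper's: the cancellation identity $\bz_{t,1}=-y_t\bigl(\bx_{t,1}\tilde{\bw}_{t-1}+\bw_{t-1,1}(\bx_t^\top\bu)\bigr)$ together with Cauchy--Schwarz and $\|\bu\|_2^2=-\tilde{\bw}_{t-1}+O(\eta)$ from~\autoref{lem:oja w norm ub} is precisely the paper's decomposition of $\bz_{t,1}$ (their equation preceding~\autoref{eq: local bound 1}), merely rewritten with the orthogonal component collected into $\bu$, and the moment bounds then proceed identically. The only cosmetic divergence is in the conditional-variance step, where the paper factors out the $\mathcal{F}_{t-1}$-measurable envelope $(|\tilde{\bw}_{t-1}|+\sqrt{|\tilde{\bw}_{t-1}|}+\sqrt{10\eta})$ and uses $\Exp[y_t^2\mid\mathcal{F}_{t-1}]=O(\lambda_1)$, whereas you expand the square and use $\Exp[\bx_{t,1}^2\mid\mathcal{F}_{t-1}]=\lambda_1$ and $\Exp[(\bx_t^\top\bu)^2\mid\mathcal{F}_{t-1}]\le\lambda_2\|\bu\|_2^2$; both are equivalent here.
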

\begin{proof}
First by~\autoref{lem:oja w norm ub}, we have $|\bw_{t, 1}|, |y_t| < \sqrt{1+10\eta} < 1+10\eta < 2$. Now let's bound $|\bz_{t, 1}|$ first. By expanding $|\bz_{t, 1}|$, we have
\begin{align}
|\bz_{t,1}| &= |y_t(\bx_{t, 1} - y_t\bw_{t-1, 1})|\nonumber\\
&=  \left|y_t\left(\bx_{t, 1}(1 -\bw_{t-1, 1}^2) - \sum_{i=2}^n\bx_{t, i}\bw_{t-1, i}\bw_{t-1, 1}\right)\right|\nonumber\\
&\leq |y_t|\cdot\left(\left|\bx_{t, 1}\tilde{\bw}_{t-1}\right| + \left|\sum_{i=2}^n\bx_{t, i}\bw_{t-1, i}\bw_{t-1, 1}\right|\right).\nonumber
\intertext{By Cauchy-Schwarz and the fact that $\|\bx\|_2=1$, we have}
&\leq |y_t|\cdot\left(\left|\tilde{\bw}_{t-1}\right| + \left|\sqrt{\left(\sum_{i=2}^n\bx_{t, i}^2\right)\left(\sum_{i=2}^n\bw_{t-1, i}^2\right)}\bw_{t-1, 1}\right|\right).\nonumber
\intertext{By~\autoref{lem:oja w norm ub} and the definition of $\tilde{\bw}_{t-1}$, we have}
&\leq |y_t|\cdot\left(\left|\tilde{\bw}_{t-1}\right| + \left|\sqrt{-\tilde{\bw}_{t-1} + 10\eta}\right|\right)\nonumber\\
&\leq|y_t|\cdot\left(\left|\tilde{\bw}_{t-1}\right|+ \sqrt{|\tilde{\bw}_{t-1}|} + \sqrt{10\eta}\right).\label{eq: local bound 1}
\intertext{Since $|y_t|\leq 2$, we have}
&\leq 2\left(\left|\tilde{\bw}_{t-1}\right|+ \sqrt{|\tilde{\bw}_{t-1}|} + \sqrt{10\eta}\right).\nonumber
\end{align}
Combining above,~\autoref{lem:oja w norm ub} and the fact that $\bz_{t, 1} = O(1)$, we have
\begin{equation*}
|A_t| = O\left(\eta|\tilde{\bw}_{t-1}| + \eta|\tilde{\bw}_{t-1}|^{\frac{1}{2}} + \eta^{\frac{3}{2}}\right) 
\end{equation*}
and for $\tilde{\bw}_{t-1}\geq -\frac{2}{3}$, we have $B_t \geq -O(\eta^2)$ because $\frac{2}{3} +\tilde{\bw}_{t-1} > 0$ and $\tilde{\bw}_{t-1} \leq O(\eta)$. 

For conditional expectation, notice that $\Exp[y_t^2| \mathcal{F}_{t-1}] = \bw_{t-1}^\top \diag(\lambda)\bw_{t-1} = O(\lambda_1)$. This implies that $\Exp[\bz_{t, 1}^2| \mathcal{F}_{t-1}] = O(\lambda_1)$ and hence
 $\Exp\left[A_t|\mathcal{F}_{t-1}\right] = O(\eta^2\lambda_1)$.
Now the conditional variance is
\begin{align*}
\Var\left[A_t|\mathcal{F}_{t-1}\right] &= O\left(\eta^2\Exp[\bz_{t, 1}^2\, |\, \mathcal{F}_{t-1}]\bw_{t-1,1}^2 + \lambda_1\eta^4\right). 
\intertext{By~\autoref{eq: local bound 1}, we have}
&= O\left(\eta^2\Exp[y_t^2| \mathcal{F}_{t-1}]\left(\left|\tilde{\bw}_{t-1}\right|+ \sqrt{|\tilde{\bw}_{t-1}|} + \sqrt{10\eta}\right)^2 + \lambda_1\eta^4\right)\\
&= O\left(\eta^2\lambda_1\left(\left|\tilde{\bw}_{t-1}\right|^2+ |\tilde{\bw}_{t-1}| + \eta\right)\right)
\end{align*} 
as desired.
\end{proof}

\subsection{Concentration of noise and pulling out the stopping time}\label{sec:phase 2 noise}

In this subsection, we want to show that the noise term in~\autoref{cor:discrete phase 2 ODE} is small. Specifically, we prove the following lemma.

\begin{restatable}[Concentration of the noise term in local convergence]{lemma}{localconcentration}
\label{cor: local concentration}
Let $\epsilon, \delta\in (0, 1), T\in \N_{\geq 0}$. Suppose given $t_0\in \mathbb{N}$, $v_0\in (-\frac{1}{3}, 0)$ and $a\in [0, 1]$, we have $\tilde{\bw}_{t_0} \geq v_0$ and $v_0 = -\Theta(\epsilon^{1-a})$. Let $\eta = \Theta\left(\frac{\epsilon(\lambda_1 - \lambda_2)}{\lambda_1\log\frac{1}{\delta}}\right)$. If $H^{-T} = \Theta(\epsilon^{-\frac{a}{2}})$, then
\[
\Pr\left[\min_{1\leq t \leq T} \sum_{i=t_0 + 1}^{t_0 + t} \frac{A_i+ B_i}{H^{i-t_0}} \leq v_0\right] < \delta \, .
\]
\end{restatable}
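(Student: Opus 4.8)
The plan is to bound the martingale-type sum $M_t := \sum_{i=t_0+1}^{t_0+t}\frac{A_i+B_i}{H^{i-t_0}}$ from below by separating the drift from the fluctuation and applying a stopped version of Freedman's inequality (\autoref{cor:freedman}), then pulling out the stopping time. First I would decompose $A_i = (A_i - \Exp[A_i\mid\mathcal{F}_{i-1}]) + \Exp[A_i\mid\mathcal{F}_{i-1}]$, so that the centered part $\tilde A_i$ has zero conditional expectation and, by \autoref{lem: local bound}, conditional variance $O(\eta^2\lambda_1(|\tilde\bw_{i-1}|^2 + |\tilde\bw_{i-1}| + \eta))$, while the conditional-expectation part contributes at most $O(\eta^2\lambda_1)$ per step and $B_i \geq -O(\eta^2)$ as long as $\tilde\bw_{i-1}\geq -\frac{2}{3}$. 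Summed over $T = \Theta(\log(1/\epsilon)/(\eta(\lambda_1-\lambda_2)))$ steps (which follows from $H^{-T}=\Theta(\epsilon^{-a/2})$ together with $H = 1-\tfrac23(\lambda_1-\lambda_2)\eta$), this deterministic drift is $O(\eta\lambda_1\log(1/\epsilon)/(\lambda_1-\lambda_2)) = O(\epsilon^2/\log(1/\delta))$ after substituting $\eta = \Theta(\epsilon(\lambda_1-\lambda_2)/(\lambda_1\log(1/\delta)))$, which is $o(|v_0|)$ since $|v_0| = \Theta(\epsilon^{1-a})$. So the drift is negligible and it remains to control the fluctuation part.

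The obstacle is that the per-step bounded difference and variance of $\tilde A_i$ depend on $|\tilde\bw_{i-1}|$, which we do not a priori control from below — $\bw_{t,1}^2$ could wander back toward the unstable region. To handle this I would introduce the stopping time $\tau$ for the event $\{\tilde\bw_{t_0+t} < v_0\}$ (equivalently, the first time $M$ could have driven $\bw$ below $v_0$), and work with the stopped process. On the event $\{\tau > i\}$ we have $\tilde\bw_{i-1}\geq v_0 = -\Theta(\epsilon^{1-a})$, so the conditional variance of each stopped increment is $O(\eta^2\lambda_1(\epsilon^{2(1-a)} + \epsilon^{1-a} + \eta)) = O(\eta^2\lambda_1\epsilon^{1-a})$ (absorbing $\eta\ll\epsilon^{1-a}$), and the bounded difference is $O(\eta\epsilon^{(1-a)/2} + \eta^{3/2}) = O(\eta\epsilon^{(1-a)/2})$; also the $H^{-(i-t_0)}$ weights are at most $H^{-T}=\Theta(\epsilon^{-a/2})$, so weighting inflates variance by $\epsilon^{-a}$ and bounded difference by $\epsilon^{-a/2}$. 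Feeding $\sum_i \sigma_i^2 = O(T\eta^2\lambda_1\epsilon^{1-2a}) = O(\eta\lambda_1\log(1/\epsilon)\epsilon^{1-2a}/(\lambda_1-\lambda_2))$ and $c = O(\eta\epsilon^{(1-a)/2-a/2})$ into \autoref{cor:freedman} with deviation threshold $a' = |v_0|/2 = \Theta(\epsilon^{1-a})$, a direct computation shows $\frac{(a')^2}{\sum\sigma_i^2 + c a'} = \Omega(\log(1/\delta))$ precisely because of the choice $\eta = \Theta(\epsilon(\lambda_1-\lambda_2)/(\lambda_1\log(1/\delta)))$; hence the stopped sum stays above $v_0$ except with probability $<\delta$, using the maximal (supremum-over-$t$) form so no union bound is lost.

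Finally I would pull out the stopping time exactly as sketched in \autoref{sec:technical overview}: the event $\{\min_{1\le t\le T} M_{(t\wedge\tau)} > v_0 + \text{(drift)}\}$ forces, via the ODE trick in \autoref{cor:discrete phase 2 ODE}, that $\tilde\bw_{t_0+t}\geq H^t(v_0 + M_{t\wedge\tau}) \geq v_0$ for all $t\le T$ (here using $H^t \le 1$ and $v_0 < 0$, together with $v_0 + M_{t\wedge\tau}$ being negative but larger than $v_0$ — one should check signs carefully, splitting on whether $v_0 + M_t$ is negative, where $H^t$ only helps, versus the easy positive case), which means $\tau > T$, i.e. the stopped process and the original process agree on $[t_0, t_0+T]$. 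This is the chain condition \eqref{eq:intro pull out 2}, and combined with the maximal concentration bound it yields $\Pr[\min_t M_t \le v_0] \le \Pr[\min_t M_{t\wedge\tau}\le v_0] < \delta$ after folding the $O(\epsilon^2/\log(1/\delta))$ drift into the slack (which is dominated by $|v_0|/2$). I expect the sign bookkeeping in the pull-out step, and verifying that the drift terms are genuinely lower-order than $|v_0|$ uniformly in $a\in[0,1]$, to be the fussiest parts; the concentration estimate itself is routine once the stopped variances are in hand.
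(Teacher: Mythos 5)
Your overall strategy is the right one and matches the paper's: bound the stopped process with Freedman's inequality, then pull out the stopping time via a chain condition backed by the ODE trick (the paper's \autoref{lem:discrete phase 2 stopped concentration} and \autoref{lem:discrete phase 2 stopping time}). However, your choice of stopping time does not support the pull-out step, and this is a genuine gap.

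You stop at $\{\tilde\bw_{t_0+t}<v_0\}$, so you need the chain implication ``noise sum $> v_0$ through time $t$ $\Rightarrow$ $\tilde\bw_{t_0+t}\geq v_0$.'' But the ODE trick only yields
\[
\tilde\bw_{t_0+t}\ \geq\ H^{t}\Bigl(\tilde\bw_{t_0}+\textstyle\sum_{i}\tfrac{A_i+B_i}{H^{i-t_0}}\Bigr)\ \geq\ H^{t}\,(v_0+v_0)\ =\ 2H^{t}v_0\ \geq\ 2v_0\, ,
\]
and $2v_0<v_0$. Your parenthetical ``$H^t$ only helps'' is true in the sense that multiplying a negative quantity by $H^t\in(0,1)$ makes it less negative, but it makes $2v_0$ no larger than $v_0$ in general: for $t$ small, $H^t$ is close to $1$ and $2H^t v_0$ is close to $2v_0$, which is strictly below your stopping threshold. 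So on the high-probability event the process can still cross $v_0$ and the stopped/unstopped processes need not agree. The paper's resolution is to stop at $\{\tilde\bw_t<2v_0\}$, which is precisely what the ODE bound $\geq 2v_0$ certifies; this makes the chain condition of \autoref{lem:discrete phase 2 stopping time} hold with no slack to spare, and the bounded-difference bound only changes by a constant ($|\tilde\bw_{i-1}|\leq 2|v_0|$ rather than $|v_0|$). Your argument goes through after this one-line change.

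A secondary, fixable point: you sum the weighted variances as $T\cdot(\text{max weight})^2\cdot(\text{per-step var})$, i.e.\ you bound every weight by $H^{-T}$. This overcounts by a factor $\log(1/\epsilon)$ relative to the geometric-series bound $\sum_{i=1}^T H^{-2i}=O\!\left(\tfrac{H^{-2T}}{\eta(\lambda_1-\lambda_2)}\right)$ the paper uses, and with your estimate the Freedman exponent comes out as $\Omega(\log(1/\delta)/\log(1/\epsilon))$ rather than $\Omega(\log(1/\delta))$. Replacing the crude bound by the geometric-series bound recovers the stated $\eta$ and convergence rate.
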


The most natural way to prove such a statement is using a martingale concentration inequality. However, the difficulty here is that $\tilde{\bw}_{t}$ might \textit{go back} to the small region (\textit{e.g.,} $\tilde{\bw}_{t}<-2/3$) and thus the bounded difference might become too large to bound the noise effectively with Freedman's inequality. Nevertheless, the continuous dynamic (see~\autoref{sec:continuous Oja}) suggests that this situation should happen with only a small probability because the $\bw_1$ term in the continuous dynamic increases monotonically to $1$. To enforce the analysis, we consider a \textit{stopped process} where the dynamic stops once $\tilde{\bw}_{t}$ is too small. This stopped process satisfies good bounded difference conditions by its construction and thus we can apply Freedman's inequality on it. See~\autoref{lem:discrete phase 2 stopped concentration} for a formal statement of the above intuition.

After obtaining good control of the noise term in the stopped process, we want to remove the stopping time and show the concentration of the original non-stopped process in order to prove~\autoref{cor: local concentration}. This can be done by~\autoref{lem:discrete phase 2 stopping time} which \textit{pulls out} the stopping time from the concentration inequality for the stopped process. In general, pulling out the stopping time is impossible without introducing additional failure probability; however, the following structure of the stochastic process we are looking at allows us to pull out the stopping time. Intuitively, given a stopping time $\tau$ with $\tau\geq t$ for some $t$, with high probability all the noise terms before time $t$ are small (using a maximal martingale inequality). Next, the noise being small at time $t$ would further imply that $\tau\geq t+1$ (using the ODE trick). The above argument forms a chain of implications as pictured in~\autoref{fig:intuition pull out}.

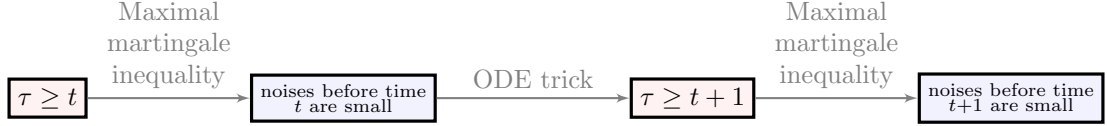
\begin{figure}[ht!]
	\centering
	\begin{tikzpicture}[
	squarednode/.style={rectangle, draw=black!100, fill=red!5, very thick, minimum size=5mm},
	noisenode/.style={rectangle, draw=black!100, fill=blue!5, very thick, minimum size=5mm},
	]
	\tikzstyle{line} = [draw, thick, color=black!50, -latex']
	\path node (time1) [squarednode] {$\tau\geq t$};
	\path (time1.east) + (3.4,0) node (noise1) [noisenode] {$\substack{\text{noises before time}\\t\text{ are small}}$};
	\path (noise1.east) + (3.4,0) node (time2) [squarednode] {$\tau\geq t+1$};
	\path (time2.east) + (3.4,0) node (noise2) [noisenode] {$\substack{\text{noises before time}\\t+1\text{ are small}}$};

	\path [line] (time1) -- node [text width=2cm,midway,above,align=center] {Maximal martingale inequality} (noise1);
    \path [line] (noise1) -- node [text width=2cm,midway,above,align=center] {ODE trick} (time2);
	\path [line] (time2) -- node [text width=2cm,midway,above,align=center] {Maximal martingale inequality} (noise2);
	
	\end{tikzpicture}
	\caption{Intuition on why it is possible to pull out stopping time in Phase 2.}
	\label{fig:intuition pull out}
\end{figure}

With the above \textit{chain} structure in the noise terms, we are then able to pull out the stopping time in~\autoref{lem:discrete phase 2 stopped concentration} by introducing another stopping time to help us properly partition the probability space.
The rest of this subsection is devoted to formalizing the above intuition and completing the proof for~\autoref{cor: local concentration}.

First, let us show the concentration of the stopped process.
\begin{lemma}[Concentration of stopped noise in an interval]\label{lem:discrete phase 2 stopped concentration}
Let $\epsilon, \delta\in (0, 1), T\in \N_{\geq 0}$. Suppose given $t_0\in \mathbb{N}$, $v_0\in (-\frac{1}{3}, 0)$ and $a\in [0, 1]$, we have $\tilde{\bw}_{t_0} \geq v_0$ and $v_0 = -\Theta(\epsilon^{1-a})$. Let $\tau_{v_0}$ to be the stopping time $\lbrace \tilde{\bw}_t < 2v_0 \rbrace$ such that $t > t_0$. Let $\eta = \Theta\left(\frac{\epsilon(\lambda_1 - \lambda_2)}{\lambda_1\log\frac{1}{\delta}}\right)$. If $H^{-T} = \Theta(\epsilon^{-\frac{a}{2}})$, then
\[
\Pr\left[\min_{1\leq t \leq T} \sum_{i=t_0 + 1}^{(t_0 + t)\wedge \tau_{v_0}} \frac{A_i+ B_i}{H^{i-t_0}} \leq v_0\right] < \delta \, .
\]
\end{lemma}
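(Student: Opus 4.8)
The plan is to apply Freedman's inequality, in the packaged form of \autoref{cor:freedman}, to the stopped noise process, using the stopping time $\tau_{v_0}$ to keep the per-step bounded difference and conditional variance small, and using the geometric decay of the weights $H^{-(i-t_0)}$ (recall $H=1-\tfrac23(\lambda_1-\lambda_2)\eta<1$) so that the variance sum is dominated by its last term rather than by $T$ times that term.

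First I would set $S_t:=\sum_{i=t_0+1}^{(t_0+t)\wedge\tau_{v_0}}\frac{A_i+B_i}{H^{i-t_0}}$ for $0\le t\le T$ (so $S_0=0$); this is adapted to the shifted filtration $\mathcal G_t:=\mathcal F_{t_0+t}$. I would split $S_t=S_t^{(A)}+S_t^{(B)}$ along the $A_i$ and $B_i$ contributions. On $\{\tau_{v_0}\ge i\}$ we have $\tilde{\bw}_{i-1}\ge 2v_0>-\tfrac23$ and $\tilde{\bw}_{i-1}\le 10\eta$ (by \autoref{lem:oja w norm ub}), so $|\tilde{\bw}_{i-1}|=O(\epsilon^{1-a})$ before the stopping time, and \autoref{lem: local bound} gives $B_i\ge-O(\eta^2)$ there; since $\sum_{j=1}^T H^{-j}=\Theta\!\big(\tfrac{H^{-T}}{(\lambda_1-\lambda_2)\eta}\big)$ and $H^{-T}=\Theta(\epsilon^{-a/2})$, this yields $S_t^{(B)}\ge-O\!\big(\tfrac{\eta\,\epsilon^{-a/2}}{\lambda_1-\lambda_2}\big)=-o(\epsilon^{1-a})=-o(|v_0|)$ by the choice of $\eta$. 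Hence $\{\min_{1\le t\le T}S_t\le v_0\}\subseteq\{\sup_{0\le t\le T}|S_t^{(A)}|\ge\tfrac12|v_0|\}$, and it remains to bound the right-hand event.

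Next I would estimate the increments of $S_t^{(A)}$. By \autoref{lem: stop difference} its $t$-th increment is $\mathbf 1_{\tau_{v_0}\ge t_0+t}\cdot\frac{A_{t_0+t}}{H^{t}}$, and on $\{\tau_{v_0}\ge t_0+t\}$ we may use $|\tilde{\bw}_{t_0+t-1}|=O(\epsilon^{1-a})$. Feeding this into \autoref{lem: local bound} and using $H^{-t}\le H^{-T}=\Theta(\epsilon^{-a/2})$ gives: a bounded difference $c=O(\eta\,\epsilon^{(1-a)/2}\cdot\epsilon^{-a/2})=O(\eta\,\epsilon^{1/2-a})$; a conditional-expectation bound $\mu_t=O(\eta^2\lambda_1)H^{-t}$, whence $\sum_{t\le T}\mu_t=O\!\big(\tfrac{\eta\lambda_1\,\epsilon^{-a/2}}{\lambda_1-\lambda_2}\big)=o(|v_0|)$; and a conditional-variance bound $\sigma_t^2=O(\eta^2\lambda_1\epsilon^{1-a})H^{-2t}$, whence, crucially via the geometric sum $\sum_{t\le T}H^{-2t}=\Theta\!\big(\tfrac{H^{-2T}}{(\lambda_1-\lambda_2)\eta}\big)$, we get $\sum_{t\le T}\sigma_t^2=O\!\big(\tfrac{\eta\lambda_1\,\epsilon^{1-2a}}{\lambda_1-\lambda_2}\big)$. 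I would then invoke \autoref{cor:freedman} on $\{S_t^{(A)}\}$ with threshold $\tfrac12|v_0|=\Theta(\epsilon^{1-a})$ (absorbing $\sum\mu_t$, which is $o(|v_0|)$): substituting $\eta=\Theta\!\big(\tfrac{\epsilon(\lambda_1-\lambda_2)}{\lambda_1\log(1/\delta)}\big)$ turns $\sum_{t\le T}\sigma_t^2$ into $O\!\big(\tfrac{\epsilon^{2-2a}}{\log(1/\delta)}\big)$ and $c\cdot|v_0|$ into $O(\eta\,\epsilon^{3/2-2a})$, which is even smaller after multiplying by $\log(1/\delta)$ since the two $\log(1/\delta)$ factors cancel. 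Thus $\frac{(|v_0|/2)^2}{\sum_t\sigma_t^2+c\,|v_0|}=\Omega(\log(1/\delta))$ with a constant we can push above $1$ by shrinking the hidden constant in $\eta$, so \autoref{cor:freedman} gives $\Pr[\sup_{0\le t\le T}|S_t^{(A)}|\ge\tfrac12|v_0|]<\delta$, completing the proof.

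The main obstacle is the parameter balancing in the last step, which requires three observations working together: (a) capping $|\tilde{\bw}_{i-1}|$ by $O(\epsilon^{1-a})$ via the stopping time is exactly what makes the conditional variance (and bounded difference) small enough; (b) the variance sum must be read as a geometric series dominated by $H^{-2T}=\Theta(\epsilon^{-a})$, not crudely bounded by $T\cdot H^{-2T}$ — otherwise one loses a spurious $\log\tfrac1\epsilon$ and the stated $\eta$ (which carries only $\log\tfrac1\delta$, not an extra $\log\tfrac1\epsilon$) would be insufficient; and (c) the $\log\tfrac1\delta$ in the denominator of $\eta$ is calibrated precisely so that $(\sum_t\sigma_t^2)\cdot\log\tfrac1\delta$ is still dominated by the squared threshold $|v_0|^2$. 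By contrast the $B_i$ term is easy once one notices it can only help the lower bound, being $\ge-O(\eta^2)$ before the stopping time.
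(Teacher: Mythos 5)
Your proof takes essentially the same route as the paper: apply \autoref{cor:freedman} to the stopped $A$-part of the noise, using the stopping time $\tau_{v_0}$ together with \autoref{lem: local bound} to control the per-step bounded difference, conditional mean, and conditional variance, read the discounted variance sum as a geometric series dominated by $H^{-2T}=\Theta(\epsilon^{-a})$, and handle the $B$-part with a separate deterministic lower bound $B_i\geq -O(\eta^2)$ before the stopping time. The parameter bookkeeping matches; your bounded-difference estimate $c=O(\eta\,\epsilon^{1/2-a})$ is in fact the correct combination $O(\eta\,\epsilon^{(1-a)/2})\cdot H^{-T}$, whereas the paper's displayed $O(\eta\,\epsilon^{(1-a)/2})$ appears to omit the $H^{-T}=\Theta(\epsilon^{-a/2})$ factor — a harmless slip since, as you note, $c\cdot|v_0|$ is dominated by $\sum_t\sigma_t^2$ either way.
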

\begin{proof}[Proof of~\autoref{lem:discrete phase 2 stopped concentration}]
We are going to apply Freedman's inequality~\autoref{cor:freedman} on the stopped process $\sum_{i=t_0 + 1}^{(t_0 + t)\wedge \tau_{v_0}} \frac{A_i}{H^{i-t_0}}$. First notice that given a stopping time $\tau$ and an adapted stochastic process $M_t$, the difference of the stopped process can be described as
\[
M_{t\wedge\tau} - M_{(t-1)\wedge\tau} = \mathbf{1}_{\tau \geq t}(M_t - M_{t-1}) \, .
\] 
For notational convenience, we denote $\mathbf{1}_{\tau_{v_0}\geq (t_0 + t)}A_t$ as $\bar{A}_t$. Now by~\autoref{lem: local bound} and geometric series, \textit{i.e.,} $\sum_{i=1}^T H^{-i}\leq O(\frac{H^{-T}}{\eta(\lambda_1-\lambda_2)})$, we have
\[
\forall 1\leq t\leq T,\ \left|\frac{\bar{A}_{t_0 + t}}{H^\top }\right|\leq O\left(\eta \epsilon^{\frac{1-a}{2}}\right) \, ,
\]
\[
\left|\sum_{i=t_0 + 1}^{t_0 + T} \Exp\left[\frac{\bar{A}_i}{H^{i-t_0}}\ \middle|\ \mathcal{F}_{i-1} \right]\right| \leq O\left(\eta^2\frac{H^{-T}}{\eta(\lambda_1 - \lambda_2)}\right) = O\left(\frac{\eta\lambda_1 \epsilon^{-\frac{a}{2}}}{\lambda_1 - \lambda_2}\right) \text{, and}
\]
\[
\left|\sum_{i=t_0 + 1}^{t_0 + T} \Var\left[\frac{\bar{A}_i}{H^{i-t_0}} \ \middle|\ \mathcal{F}_{i-1}\right]\right| \leq O\left(\eta^2\lambda_1\epsilon^{1-a}\frac{H^{-2T}}{\eta(\lambda_1 - \lambda_2)}\right) = O\left(\frac{\eta \lambda_1\epsilon^{1-2a}}{\lambda_1 - \lambda_2}\right) \, .
\]
By applying the above bounds to~\autoref{lem:freedman}, we have
\[
\Pr\left[ \max_{0\leq t\leq T}\left|\sum_{i=t_0 + 1}^{(t_0 + t)\wedge \tau_{v_0}} \frac{A_i}{H^{i-t_0}}\right| \geq  \frac{|v_0|}{2} \right] < \delta
\]
because the deviation term is $O(\sqrt{\frac{\log\frac{1}{\delta}\eta\lambda_1\epsilon^{1-2a}}{\lambda_1 - \lambda_2}}) = O\left( \epsilon^{1-a}\right) \leq \frac{|v_0|}{4}$ and the summation of the conditional expectation terms is $O(\frac{\eta\lambda_1\epsilon^{-\frac{a}{2}}}{\lambda_1 - \lambda_2}) = O\left( \epsilon^{1-a}\right)\leq \frac{|v_0|}{4}$.
By stopping time and~\autoref{lem: local bound}, we have
\[
\sum_{i=t_0 + 1}^{(t_0 + T)\wedge \tau_{t_0}} \frac{B_i}{H^{i-t_0}} \geq - O\left(\eta^2\frac{\epsilon^{-\frac{a}{2}}}{\eta(\lambda_1 - \lambda_2)}\right)\geq - O(\epsilon^{1-\frac{a}{2}}) \geq -\frac{v_0}{2} \, .
\]
By combining both inequalities, we get
\[
\Pr\left[\min_{1\leq t \leq T} \sum_{i=t_0 + 1}^{(t_0 + t)\wedge \tau_{v_0}} \frac{A_i+ B_i}{H^{i-t_0}} \leq v_0\right] < \delta \, .
\]
\end{proof}

We are going to pull out the stopping time $\tau_{t_0}$ in~\autoref{lem:discrete phase 2 stopped concentration}.
The following lemma shows that under a certain \textit{chain} condition, it is possible to pull out the stopping time without introducing additional failure probability.
\begin{lemma}[Pulling out the stopping time using a chain condition]\label{lem:discrete phase 2 stopping time}
Let $\{M_t\}_{t\in\Nz}$ be an adapted stochastic process and $\tau$ be a stopping time. Let $\{M^*_t\}_{t\in\Nz}$ be the maximal process of $\{M_t\}_{t\in\Nz}$ where $M^*_t = \max_{1\leq t'\leq t} M_t$. For any $t\in\N$, $a\in\Real$, and $\delta\in(0,1)$, suppose
\begin{enumerate}
\item $\Pr[M_{t\wedge \tau}^* \geq a]< \delta$ and
\item For any $1\leq t'<t$, $\Pr[\tau \geq t'+1\ |\ M_{t'}^* < a] = 1$.
\end{enumerate}
Then, we have
\[
\Pr[M_{t}^* \geq a]< \delta \, .
\]
\end{lemma}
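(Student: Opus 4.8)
The plan is to show that, granting hypothesis~2, the bad event $\{M^*_t\ge a\}$ for the original process coincides up to a null set with the bad event $\{M^*_{t\wedge\tau}\ge a\}$ for the stopped process, so that the concentration bound in hypothesis~1 transfers verbatim. To this end I would introduce the first hitting time of level $a$,
\[
\sigma := \min\{\, s\ge 1 : M_s\ge a\,\}
\]
(with $\min\emptyset=\infty$), which is a stopping time since $\{M_s\ge a\}\in\mathcal{F}_s$ for each $s$. By definition of the running maximum, $\{M^*_t\ge a\}=\{\sigma\le t\}$. I would then split this event into the two disjoint pieces
\[
P_1:=\{\sigma\le t,\ \tau\ge\sigma\}\qquad\text{and}\qquad P_2:=\{\sigma\le t,\ \tau<\sigma\}.
\]

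On $P_1$ one has $t\wedge\tau\ge\sigma$, hence $M^*_{t\wedge\tau}\ge M_\sigma\ge a$; thus $P_1\subseteq\{M^*_{t\wedge\tau}\ge a\}$ and hypothesis~1 gives $\Pr[P_1]\le\Pr[M^*_{t\wedge\tau}\ge a]<\delta$. For $P_2$ the goal is $\Pr[P_2]=0$, which I would establish by a diagonal argument over the value of $\sigma$: write $P_2=\bigcup_{s=1}^{t}\big(\{\sigma=s\}\cap\{\tau\le s-1\}\big)$. For $s\ge 2$, the event $\{\sigma=s\}$ is contained in $\{M^*_{s-1}<a\}$, and since $1\le s-1<t$, hypothesis~2 applied with $t'=s-1$ gives $\tau\ge s$ almost surely on $\{M^*_{s-1}<a\}$, contradicting $\tau\le s-1$; so that slice is null. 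The $s=1$ slice is empty because $\tau\ge 1$ (as is the case for the stopping times we use). Summing over $s$ yields $\Pr[P_2]=0$, whence
\[
\Pr[M^*_t\ge a]=\Pr[P_1]+\Pr[P_2]\le\Pr[M^*_{t\wedge\tau}\ge a]<\delta.
\]

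The routine parts — that $\sigma$ is a stopping time, that $P_1,P_2$ partition $\{M^*_t\ge a\}$, and the $P_1$ estimate, which is just the definition of $\sigma$ together with hypothesis~1 — are immediate. The one step that needs care is the vanishing of $\Pr[P_2]$: the point is to recognize that the conditioning event $\{M^*_{t'}<a\}$ appearing in hypothesis~2 is exactly the event $\{\sigma>t'\}$ (``level $a$ has not yet been reached by time $t'$''), so the chain condition reads ``if level $a$ has not been reached by time $t'$ then $\tau$ has not stopped by time $t'$,'' and one then checks this on each slice $\{\sigma=s\}$. Aligning the index $t'=s-1$ with the correct instance of hypothesis~2 and disposing of the degenerate $s=1$ term is where I expect the only friction.
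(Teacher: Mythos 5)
Your proof is correct and follows the same strategy as the paper's: introduce an auxiliary hitting time, split the bad event into the part where $\tau$ survives until level~$a$ is first reached (handled directly by hypothesis~1) and the part where $\tau$ stops strictly earlier (killed slice-by-slice by the chain condition). The only cosmetic difference is your choice of auxiliary time: you use $\sigma$, the hitting time of level $a$ by the \emph{unstopped} maximal process, giving a clean two-piece partition $P_1\cup P_2$, whereas the paper uses $\tau'$, the hitting time of the \emph{stopped} maximal process $M^{*}_{t\wedge\tau}$, and first splits on $\{\tau\geq t\}$ versus $\{\tau<t\}$, producing a three-piece partition whose first two pieces together coincide with your $P_1$ and whose third coincides with your $P_2$; both arguments also tacitly assume $\tau\geq1$ (needed to handle the $s=1$ slice, respectively the case $\tau=0$), which holds for all the stopping times to which the lemma is applied.
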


\begin{proof}[Proof of~\autoref{lem:discrete phase 2 stopping time}]
The key idea is to introduce another stopping time which helps us partition the probability space.
Let $\tau'$ be the stopping time for the event $\{M_{t\wedge \tau}^* \geq a\}$. The following claim shows that if $\tau$ stopped before time $t$, then $\tau'$ should stop earlier than $\tau$.
\begin{claim}\label{claim:discrete phase 2 disjoint}
Let $\tau$ and $\tau'$ be stopping times as defined above. Suppose the conditions in~\autoref{lem:discrete phase 2 stopping time} hold. Then we have
\[
\Pr[\tau < t, \tau' > \tau] = 0 \, .
\]
\end{claim}
\begin{proof}[Proof of~\autoref{claim:discrete phase 2 disjoint}]
The claim can be proved by contradiction as follows. Suppose both $\tau<t$ and $\tau'>\tau$. By the definition of $\tau'$, we know that $M^*_\tau<a$ since $\tau<\tau'$. However, by the second condition of the lemma, we then have
\[
\Pr[\tau \geq \tau+1\ |\ M_\tau^* < a] = 1 \, ,
\]
which is a contradiction. 
\end{proof}

Next, we will show that $\Pr[M^*_t\geq a]\leq\Pr[M^*_{t\wedge\tau}\geq a]$. The idea is partitioning the probability space as follows. We have
\begin{align*}
\Pr[M^*_t \geq a]&= \Pr[M_t^* \geq a,\tau \geq t] + \Pr[M^*_t \geq a,\tau < t,\tau' \leq \tau] + \Pr[M^*_t \geq a,\tau < t,\tau' > \tau] \, . 
\intertext{By~\autoref{claim:discrete phase 2 disjoint}, we have $\Pr[M^*_t \geq a,\tau < t,\tau' > \tau] = 0$. We have}
&= \Pr[M_t^* \geq a,\tau \geq t] + \Pr[M^*_t \geq a,\tau < t,\tau' \leq \tau] \, .
\intertext{For the first term, when $\tau\geq t$, we have $t=t\wedge\tau$ and thus $M^*_t=M^*_{t\wedge\tau}$. As for the second term, when $\tau'\leq\tau<t$, we have both $M^*_t,M^*_{t\wedge\tau} \geq a$. Thus, the equation becomes}
&= \Pr[M^*_{t\wedge \tau} \geq a,\ \tau \geq t] + \Pr[M^*_{t\wedge \tau} \geq a,\ \tau < t,\ \tau' \leq \tau]\\
&\leq \Pr[M^*_{t\wedge \tau} \geq a]\, .
\end{align*}
Thus, we conclude that $\Pr[M^*_t\geq a] \leq \Pr[M^*_{t\wedge \tau} \geq a] <\delta$ as desired.
\end{proof}

By applying the above~\autoref{lem:discrete phase 2 stopping time} on~\autoref{lem:discrete phase 2 stopped concentration}, we can pull out the stopping time and show concentration on the original process in~\autoref{cor: local concentration}.
\localconcentration*
\begin{proof}[Proof of~\autoref{cor: local concentration}]
Let $\tau_{v_0}$ be the stopping time $\lbrace \tilde{\bw}_t < 2v_0 \rbrace$ such that $t > t_0$. We want to apply~\autoref{lem:discrete phase 2 stopping time} with $M_t = -\sum_{i=t_0 + 1}^{t_0 + t} \frac{A_i+ B_i}{H^{i-t_0}}$, $a=-v_0$ and $\tau = \tau_{v_0} - t_0$. First condition is satisfied by~\autoref{lem:discrete phase 2 stopped concentration}. So it is suffice to check that
\[
\Pr\left[\tau_{v_0} \geq t' + t_0 + 1 \middle| \min_{1\leq t \leq t'} \sum_{i=t_0 + 1}^{t_0 + t} \frac{A_i+ B_i}{H^{i-t_0}} \leq v_0\right] = 1 \, .
\]
And indeed we have by~\autoref{cor:discrete phase 2 ODE}
\[
\tilde{\bw}_{t_0 + t'} \geq H^{t'} \cdot\left(\tilde{\bw}_{t_0} + \sum_{i=t_0+1}^{t_0 + t'}\frac{A_i+ B_i}{H^{i-t_0}} \right) > H^{t'} \cdot\left(v_0 +v_0\right)\geq 2v_0 \, .
\]
This implies that $\tau_{v_0} \geq t' + t_0 + 1 $ as desired. 
\end{proof}
\subsection{Interval Analysis}\label{sec: local interval}
Given $\epsilon\in (0, 1)$, let $\tilde{\epsilon} = \frac{\epsilon}{8}$. The goal of this section is to prove the local convergence of Oja's rule (\autoref{thm:discrete phase 2}) with the following interval scheme that shows the improvement of $\tilde{\bw}_t$
\[
-\frac{1}{3}\rightarrow -\tilde{\epsilon}^{1 - \frac{1}{2}}\rightarrow -\tilde{\epsilon}^{1-\frac{1}{4}}\rightarrow \dotsb \rightarrow-\tilde{\epsilon}^{1-\frac{1}{\log\frac{1}{\tilde{\epsilon}}}} \, .
\]
\begin{proof}[Proof of~\autoref{thm:discrete phase 2}]
Let $\tilde{\epsilon} = \frac{\epsilon}{8}$ and $v_0 = -\frac{1}{3}, l = \log\log\frac{1}{\tilde{\epsilon}}$. For $1\leq i\leq l$, choose $T_i\in \N$ such that $\frac{1}{2}\tilde{\epsilon}^{\frac{1}{2^i}}\geq H^{T_i}\geq \frac{1}{4}\tilde{\epsilon}^{\frac{1}{2^i}}$ and $v_i = -\tilde{\epsilon}^{1 - \frac{1}{2^i}}$. Let $S_j = \sum_{i=1}^{j} T_i$ and let $T = S_l$. Notice that by~\autoref{lem:approx log}, we have $v_l = -\frac{\epsilon}{4}$.

We are going to show that for all $1\leq j\leq l$, we have
\begin{equation}\label{eq: local main}
\Pr\left[\tilde{\bw}_{S_j} \leq v_j \middle| \tilde{\bw}_{S_{j-1}} \geq v_{j-1}  \right] < \frac{\delta}{l} \, .
\end{equation}
Then by union bounding over $j$, we have $\Pr\left[\tilde{\bw}_{T} \leq -\frac{\epsilon}{4} \right] < \delta$ and
\[
\frac{1}{4}^{l}\frac{\epsilon}{4}\leq H^{T} \leq \frac{1}{2}^{l}\frac{\epsilon}{4}\Rightarrow T = \Theta\left(\frac{\log\log\frac{1}{\epsilon} + \log\frac{1}{\epsilon}}{\eta(\lambda_1 - \lambda_2)}\right) = \Theta\left(\frac{\log\frac{1}{\epsilon}}{\eta(\lambda_1 - \lambda_2)}\right)
\]
as desired. What remains to be shown is~\autoref{eq: local main}. Now by~\autoref{cor: local concentration}, for $\eta = \Theta\left(\frac{\epsilon(\lambda_1 - \lambda_2)}{\lambda_1\log\frac{\log\log\frac{1}{\epsilon}}{\delta}}\right)$, we have for all $1\leq j\leq l$
\[
\Pr\left[\min_{1\leq t \leq S_j} \sum_{i=S_{j-1} + 1}^{S_{j-1} + t} \frac{A_i+ B_i}{H^{i-S_{j-1}}} \leq v_j \middle| \tilde{\bw}_{S_{j-1}} \geq v_{j-1}  \right] < \frac{\delta}{l} \, .
\]
Now by~\autoref{cor:discrete phase 2 ODE}, the following is true with probability $1-\delta$
\[
\tilde{\bw}_{S_{j}} \geq H^{T_{j}} \cdot\left(\tilde{\bw}_{S_{j-1}} + \sum_{i=S_{j-1}+1}^{S_{j-1}+t}\frac{A_i+ B_i}{H^{i-S_{j-1}}} \right) \geq \frac{1}{2}\tilde{\epsilon}^{\frac{1}{2^j}}\cdot 2v_{j-1}\geq v_j \, .
\]
This shows that
\[
\Pr\left[\bw_{T, 1}^2 \leq 1 - \epsilon \right] \leq\Pr\left[\tilde{\bw}_{T} \leq -\frac{\epsilon}{4} \right] < \delta
\]
as desired.
\end{proof}

\subsection{Continual Learning}\label{sec: finite continual learning}
One of the most remarkable aspects of the biological learning system is its ability to function indefinitely and continuously adapt. In previous sections, we have only been looking at the convergence of Oja's rule at a time point. However, the sensory system needs to function for a long period of time or even for all time. In this section, we explore the capacity of Oja's rule for continual learning as an application of the previous techniques. In~\autoref{thm: finite continual learning}, we show that Oja's rule can maintain its convergence for any finite time while in~\autoref{thm: continual learning}, we show that Oja's rule can maintain its convergence for all time with a slowly diminishing learning rate that scales like $\Omega(\frac{1}{\log t})$. This shows that even if the animal switches to a new environment after a period of time, the learning rate is still large enough to allow efficient continual learning. Notice that the Kushner-Clark theorem requires $\sum_t\eta_t^2 < \infty$ where the learning rate is commonly set as $\eta_t = O(\frac{1}{t})$. In comparison, our slowly diminishing learning rate can achieve $\sum_t\eta_t^2 = \infty$ and thus enables efficient continual learning.

First, we have the following finite continual learning theorem. By applying the diagonal reduction argument in~\autoref{sec:diagonal}, we prove the finite continual learning part of~\autoref{thm:main stay} as a corollary. 
\begin{theorem}[Finite continual learning]\label{thm: finite continual learning}
Let $n, l\in\N$, $\epsilon,\delta\in(0,1)$. Suppose $\bw_{0,1}^2\geq 1-\frac{\epsilon}{2}$. Let
\[
\eta = \Theta\left(\frac{\epsilon(\lambda_1 - \lambda_2)}{\lambda_1\log\frac{l}{\delta}}\right).
\]  Choose $t'$ such that $\frac{1}{4}\geq H^{t'}\geq \frac{1}{8}$. Then
\[
\Pr\left[\exists 1\leq t\leq lt',\ \bw_{t, 1}^2 < 1-\epsilon\right] < \delta \, .
\]
\end{theorem}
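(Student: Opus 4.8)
The plan is to recycle the local‑convergence machinery of \autoref{sec: local convergence}, converting the ``converge and then stay'' behaviour into a ``never leave'' statement by iterating a one‑block estimate. First I would apply the diagonal reduction of \autoref{sec:diagonal}, so that the task becomes: starting from $\tilde{\bw}_0 = \bw_{0,1}^2-1 \geq -\epsilon/2$, show $\tilde{\bw}_t \geq -\epsilon$ for every $1\leq t\leq lt'$ with probability at least $1-\delta$. Fix $v_0 := -\epsilon/2$ (as usual we may assume $\epsilon$ small enough, e.g. $\epsilon<\tfrac18$, that $v_0\in(-\tfrac13,0)$; for larger $\epsilon$ the conclusion is weaker), write $H = 1-\tfrac23(\lambda_1-\lambda_2)\eta$, and partition $[1,lt']$ into the $l$ consecutive blocks $(S_{j-1},S_j]$ of length $t'$, where $S_j = jt'$. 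Note such a $t'$ with $H^{t'}\in[\tfrac18,\tfrac14]$ exists since $\eta$ is small, and it satisfies $t' = \Theta\!\big(\tfrac{1}{\eta(\lambda_1-\lambda_2)}\big)$.

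The heart of the argument is an induction maintaining the invariant $\tilde{\bw}_{S_j}\geq v_0$ (trivially true at $j=0$). Suppose $\tilde{\bw}_{S_{j-1}}\geq v_0$. I would apply \autoref{cor: local concentration} with $t_0 = S_{j-1}$, $T = t'$, exponent $a=0$ (so $v_0 = -\Theta(\epsilon^{1-a}) = -\Theta(\epsilon)$ and $H^{-T} = H^{-t'}\in[4,8] = \Theta(1) = \Theta(\epsilon^{-a/2})$, matching the hypotheses), and failure probability $\delta/l$ in place of $\delta$ — which is exactly why the learning rate is taken to be $\eta = \Theta\!\big(\tfrac{\epsilon(\lambda_1-\lambda_2)}{\lambda_1\log(l/\delta)}\big)$. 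This gives that, except with probability $\delta/l$, all the partial noise sums $\sum_{i=S_{j-1}+1}^{S_{j-1}+t}\frac{A_i+B_i}{H^{i-S_{j-1}}}$ for $1\leq t\leq t'$ exceed $v_0$. On that event, the ODE‑trick identity of \autoref{cor:discrete phase 2 ODE} yields, for every $1\leq t\leq t'$,
\[
\tilde{\bw}_{S_{j-1}+t}\ \geq\ H^{t}\Big(\tilde{\bw}_{S_{j-1}}+\sum_{i=S_{j-1}+1}^{S_{j-1}+t}\frac{A_i+B_i}{H^{i-S_{j-1}}}\Big)\ >\ 2H^{t}v_0 .
\]
Since $0<H^{t}<1$ and $2v_0<0$, the right‑hand side is $>2v_0 = -\epsilon$, so the weight never dips below $-\epsilon$ inside the block; and since $H^{t'}\leq\tfrac14$, the endpoint obeys $\tilde{\bw}_{S_j}>2H^{t'}v_0\geq \tfrac{v_0}{2}>v_0$, re‑establishing the invariant. (The choice $v_0>-\tfrac23$ also keeps us in the regime where the $B_t$ bound of \autoref{lem: local bound}, used inside \autoref{cor: local concentration}, is valid.)

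Finally I would stitch the blocks together by a chain‑rule union bound: conditioned on the good events of blocks $1,\dots,j-1$ the invariant $\tilde{\bw}_{S_{j-1}}\geq v_0$ holds, so \autoref{cor: local concentration} applies to block $j$ with the natural filtration $\mathcal{F}_{S_{j-1}}$ playing the role of ``time zero'', contributing failure probability at most $\delta/l$; summing over $j=1,\dots,l$ gives total failure probability $<\delta$, and on the complement $\bw_{t,1}^2\geq 1-\epsilon$ for all $1\leq t\leq lt' = \Theta\!\big(\tfrac{l}{\eta(\lambda_1-\lambda_2)}\big)$, as claimed. I expect the one genuine subtlety — hence the main obstacle — to be justifying this conditional re‑application: \autoref{cor: local concentration} is stated for a deterministic start, whereas here we restart at the (measurable) time $S_{j-1}$ on the event $\{\tilde{\bw}_{S_{j-1}}\geq v_0\}$. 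This should go through because the proof of that corollary (the stopped‑concentration \autoref{lem:discrete phase 2 stopped concentration} together with the pull‑out \autoref{lem:discrete phase 2 stopping time}) uses only adaptedness of the increments $A_i,B_i$ and the starting bound $\tilde{\bw}_{t_0}\geq v_0$, both of which transfer verbatim to the conditional law, but it deserves to be spelled out carefully rather than waved through.
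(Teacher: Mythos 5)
Your proposal is correct and follows essentially the same route as the paper's proof: partition $[1,lt']$ into $l$ blocks of length $t'$, apply \autoref{cor: local concentration} (with $a=0$, $v_0=-\epsilon/2$) conditionally on each block to get failure probability $\delta/l$, use the ODE-trick identity \autoref{cor:discrete phase 2 ODE} to show the weight never dips below $-\epsilon$ within a block and that the invariant $\tilde{\bw}_{S_j}\geq -\epsilon/2$ re-establishes at the block boundary, then union bound over the $l$ blocks. Your remark about the conditional re-application being subtle but valid because $S_{j-1}$ is deterministic and the event $\{\tilde{\bw}_{S_{j-1}}\geq v_0\}$ is $\mathcal{F}_{S_{j-1}}$-measurable is correct and is indeed something the paper glosses over.
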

\begin{proof}[Proof of~\autoref{thm: finite continual learning}]
Given any $1\leq j\leq l$, by~\autoref{cor: local concentration}, we have
\[
\Pr\left[\min_{1\leq t \leq t'} \sum\nolimits_{i=(j-1)t' + 1}^{jt' + t} \frac{A_i+ B_i}{H^{i-(j-1)t'}} \leq -\frac{\epsilon}{2} \, \middle|\, \tilde{\bw}_{(j-1)t'} \geq -\frac{\epsilon}{2}  \right] < \frac{\delta}{l} \, .
\]
Notice conditioned on $\tilde{\bw}_{(j-1)t'} \geq -\frac{\epsilon}{2}$ and $\min_{1\leq t \leq t'} \sum_{i=(j-1)t' + 1}^{jt' + t} \frac{A_i+ B_i}{H^{i-(j-1)t'}} > -\frac{\epsilon}{2}$, we have for $1\leq t\leq t'$ by~\autoref{cor:discrete phase 2 ODE}
\[
\tilde{\bw}_{(j-1)t' + t} \geq H^{t} \cdot\left(\tilde{\bw}_{(j-1)t'} + \sum\nolimits_{i=(j-1)t'+1}^{(j-1)t'+t}\frac{A_i+ B_i}{H^{i-(j-1)t'}} \right) \geq  H^{t}(-\frac{\epsilon}{2}-\frac{\epsilon}{2} )\geq  -H^{t}\epsilon \, .
\]
In particular, $\tilde{\bw}_{jt'}\geq -\frac{\epsilon}{2}$. This implies that
\[
\Pr\left[\left(\exists 0\leq t\leq t',\ \tilde{\bw}_{(j-1)t' + t} < -\epsilon\right)\cup \left(\tilde{\bw}_{jt'} < -\frac{\epsilon}{2}\right) \ \middle|\ \tilde{\bw}_{(j-1)t'} \geq -\frac{\epsilon}{2} \right] < \frac{\delta}{l} \, .
\]
Union bound over $1\leq j\leq l$, we get 
\[
\Pr\left[\exists 1\leq t\leq lt',\ \bw_{t, 1}^2 < 1-\epsilon\right] < \delta \, 
\]
as desired.
\end{proof}

As a corollary of the above finite continual learning theorem, we can obtain the following for-all-time continual learning theorem. By applying the diagonal reduction argument in~\autoref{sec:diagonal}, we prove the for-all-time continual learning part of~\autoref{thm:main stay} as a corollary. 
\begin{theorem}[For-all-time continual learning]\label{thm: continual learning}
Let $n, t_0\in\N$, $\epsilon,\delta\in(0,1)$. Suppose $\bw_{0,1}^2\geq 1-\frac{\epsilon}{2}$. There is 
\[
\eta_t \geq \Theta\left(\frac{\epsilon(\lambda_1 - \lambda_2)}{\lambda_1\log\frac{t}{\delta}}\right)
\] 
such that
\[
\Pr\left[\exists t\in\N,\ \bw_{t, 1}^2 < 1-\epsilon\right] < \delta \, .
\]
\end{theorem}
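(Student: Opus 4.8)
The plan is to bootstrap the finite-horizon guarantee of \autoref{thm: finite continual learning} into an infinite-horizon one, by partitioning $\N$ into geometrically growing epochs and invoking \autoref{thm: finite continual learning} once per epoch, each time with a freshly shrunk learning rate and a geometrically shrinking share of the failure budget. Concretely, let epoch $I_k$ be the $k$-th block of steps, and handle $I_k$ by an invocation of \autoref{thm: finite continual learning} with $l_k = 2^k$ sub-blocks, failure probability $\delta_k = \delta\,2^{-(k+1)}$, and the associated learning rate $\eta_k = \Theta\!\left(\tfrac{\epsilon(\lambda_1-\lambda_2)}{\lambda_1\log(l_k/\delta_k)}\right) = \Theta\!\left(\tfrac{\epsilon(\lambda_1-\lambda_2)}{\lambda_1(k+\log(1/\delta))}\right)$. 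Its length is $m_k := l_k\, t'_k$, where $t'_k$ is the block length from \autoref{thm: finite continual learning}, i.e. $H_k^{t'_k}\in[\tfrac18,\tfrac14]$ for $H_k = 1-\tfrac23(\lambda_1-\lambda_2)\eta_k$, so that $t'_k = \Theta\!\left(\tfrac{1}{(\lambda_1-\lambda_2)\eta_k}\right)$. I then set $\eta_t := \eta_k$ for every $t\in I_k$. Since $k\mapsto\eta_k$ is decreasing, every $\eta_t$ is at most $\eta_1$, which is below $0.1$ for a suitable absolute constant in the $\Theta$, so \autoref{lem:oja w norm ub} stays in force for all time and all the estimates of \autoref{sec: local convergence} remain available.

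Next I would verify that the epochs chain. The only subtlety is that \autoref{thm: finite continual learning} requires $\bw^2_{0,1}\ge 1-\epsilon/2$ on entry but only promises $\bw^2_{t,1}\ge1-\epsilon$ on exit; inspecting its proof, however, on the good event it in fact delivers the stronger $\tilde{\bw}_{jt'}\ge-\epsilon/2$ at every sub-block boundary $jt'$, in particular at the last one, i.e. $\bw^2_{m_k,1}\ge1-\epsilon/2$ at the end of epoch $I_k$. Because the input stream is i.i.d., conditioning on $\mathcal{F}$ at the start of $I_{k+1}$ reduces epoch $I_{k+1}$ to a fresh instance of \autoref{thm: finite continual learning} with the current weight vector as initialization (the framework of \autoref{cor: local concentration} is already phrased so as to apply at an arbitrary adapted time $t_0$, not only $t_0=0$). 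Thus, on the intersection of the good events of $I_1,\dots,I_k$, the hypothesis of \autoref{thm: finite continual learning} holds at the start of $I_{k+1}$, and a union bound over the conditional failure events of the epochs gives
\[
\Pr\left[\exists\, t\in\N,\ \bw_{t,1}^2<1-\epsilon\right]\ \le\ \sum_{k\ge1}\delta_k\ =\ \frac{\delta}{2}\ <\ \delta \, .
\]

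It remains to check the learning-rate lower bound $\eta_t\ge\Theta\!\left(\tfrac{\epsilon(\lambda_1-\lambda_2)}{\lambda_1\log(t/\delta)}\right)$. Substituting $\eta_k$ into $t'_k$ gives $m_k=\Theta\!\left(\tfrac{\lambda_1\,2^k(k+\log(1/\delta))}{\epsilon(\lambda_1-\lambda_2)^2}\right)$, and since $(\lambda_1-\lambda_2)^2\le\lambda_1$ the prefactor is bounded below by an absolute constant, so the cumulative time $T_{k-1}:=\sum_{j<k}m_j\ge m_{k-1}=\Omega(2^{k-1})$. Hence, for $t\in I_k$ with $k$ beyond an absolute constant, $\log(t/\delta)\ge\log T_{k-1}+\log(1/\delta)=\Omega(k+\log(1/\delta))$, while for the finitely many smaller $k$ one has trivially $\log(t/\delta)\ge\log(1/\delta)=\Omega(k+\log(1/\delta))$ (assuming, as elsewhere, $\delta$ bounded away from $1$); either way $\log(t/\delta)=\Omega(k+\log(1/\delta))$, which is exactly the denominator of $\eta_k$. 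Therefore $\eta_t=\eta_k=\Omega\!\left(\tfrac{\epsilon(\lambda_1-\lambda_2)}{\lambda_1\log(t/\delta)}\right)$, which proves \autoref{thm: continual learning}; the diagonal reduction of \autoref{sec:diagonal} then lifts it to the for-all-time part of \autoref{thm:main stay}.

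The step needing the most care is this last one: the per-epoch parameters $l_k,\delta_k$ must be balanced so that the ``local'' logarithm $\log(l_k/\delta_k)$ controlling $\eta_k$ stays within a constant factor of the ``global'' logarithm $\log(T_k/\delta)$ of the elapsed time — with the geometric choice above both are $\Theta(k+\log(1/\delta))$, which is precisely what produces the slowly diminishing rate $\eta_t=\Theta(1/\log t)$. A naive schedule that keeps $\log(1/\delta_k)=\Theta(k)$ (e.g. $\delta_k=\delta 2^{-k}$) while letting the elapsed time grow only polynomially in $k$ would force $\eta_k$ to decay like $1/k\ll 1/\log t$, whereas an overly aggressive schedule would blow up the constants. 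Everything else is a routine re-invocation of the machinery already developed for local convergence, plus the observation — not visible in the statement but present in the proof of \autoref{thm: finite continual learning} — that the $1-\epsilon/2$ invariant is restored at every block endpoint.
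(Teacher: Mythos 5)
Your proposal is correct and follows essentially the same route as the paper's proof: partition $\N$ into intervals, apply \autoref{thm: finite continual learning} once per interval with a shrinking share of the failure budget, chain via the $1-\epsilon/2$ invariant that \autoref{thm: finite continual learning}'s proof restores at each block endpoint, and verify that the resulting piecewise-constant $\eta_t$ decays like $1/\log t$. The paper uses the polynomial schedule $\delta_i = \delta/(2i^2)$ (and is rather terse about the chaining invariant and the learning-rate verification), while you use a geometric schedule $l_k=2^k$, $\delta_k=\delta 2^{-(k+1)}$; both are balanced so that $\log(l_k/\delta_k)=\Theta(\log(T_k/\delta))$, which is the crux, and your write-up actually spells out the two steps the paper leaves implicit.
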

\begin{proof}[Proof of~\autoref{thm: continual learning}]
The proof proceeds by recursively choosing $\eta_t$ in intervals and apply~\autoref{thm: finite continual learning} repetitively. Let $\delta_i = \frac{\delta}{2i^2}$. Then notice that $\sum_{i=1}^\infty \delta_i < \delta$. Now apply~\autoref{thm: finite continual learning} with $t_0 = 1$ with failure probability $\delta_1$ to get the corresponding $\eta, t'$ and denote them as $\eta_{(1)}, t'_{(1)}$. Now for $1\leq j\leq  t'_{(1)}$, define $\eta_j = \eta_{(1)}$. By~\autoref{thm: finite continual learning}, this shows that
\[
\Pr\left[\exists 1\leq t\leq t'_{(1)},\ \bw_{t, 1}^2 < 1-\epsilon\right] < \delta \, .
\]
For the $i$th interval, we apply~\autoref{thm: finite continual learning} with $t_0 = 1$ with failure probability $\delta_i$ to get the corresponding $\eta, t'$ and denote them as $\eta_{(i)}, t'_{(i)}$. Now for $t'_{(i-1)}\leq j\leq  t'_{(i)}$, define $\eta_j = \eta_{(i)}$. Notice that the above recursive scheme ensures that $\eta_t \geq \Theta(\frac{\epsilon(\lambda_1 - \lambda_2)}{\lambda_1\log\frac{t}{\delta}})$. And by union bound, we get 
\[
\Pr\left[\exists t\in\N,\ \bw_{t, 1}^2 < 1-\epsilon\right] < \delta \, .
\]
\end{proof}

\section{Global Convergence: Starting From Random Initialization}\label{sec: global convergence}
For the global convergence result, the synaptic weight $\bw_0$ starts from a random initialization. Specifically, we suppose that $\bw_0$ is uniformly sampled from the unit sphere of $\Real^n$. The main theorem in this section states the convergence of Oja's rule starting from a random initialization for the diagonal case. By applying the diagonal reduction argument in~\autoref{sec:diagonal}, we prove the global convergence part of~\autoref{thm:main} as a corollary. 
The following theorem is the main theorem of this section.
\begin{theorem}[Global convergence of the diagonal case]\label{thm:discrete global main}
Suppose $\bw_0$ is uniformly sampled from the unit sphere of $\Real^n$. For any $n\in\N$, $\delta\in(0,1), \epsilon\in (0, \frac{1}{4})$, let
\[
\eta = \Theta\left(\frac{\lambda_1 - \lambda_2}{\lambda_1}\cdot\left(\frac{\epsilon}{\log\frac{\log\frac{n}{\epsilon}}{\delta}} \bigwedge \frac{\delta^2}{\log^2 \frac{\lambda_1 n}{\delta(\lambda_1 - \lambda_2)^2}}\right)\right),\ T = \Theta\left(\frac{\log \frac{1}{\epsilon} + \log\frac{n}{\delta}}{\eta (\lambda_1 - \lambda_2)}\right).
\]
Then
\[
\Pr\left[\bw_{T, 1}^2 < 1-\epsilon\right] < \delta \, .
\]
\end{theorem}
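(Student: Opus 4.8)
The plan is to establish \autoref{thm:discrete global main} in three steps, following the continuous picture of \autoref{sec:continuous Oja}: linearize at $0$ until the iterate has constant correlation with $\be_1$, then switch to linearizing at $1$. Step (i), initialization: a standard estimate for the first coordinate of a uniform point on the unit sphere gives $\Pr\!\big[\bw_{0,1}^2 < c\,\delta^2/n\big] < \delta/3$ for an absolute constant $c$, so throughout we may condition on $\bw_{0,1}^2 = \Omega(\delta^2/n)$. Step (ii), ``Phase 1'': show that from $\bw_{0,1}^2 = \Omega(\delta^2/n)$, after $T_1 = \Theta\!\big(\tfrac{\log(n/\delta)}{\eta(\lambda_1-\lambda_2)}\big)$ steps one has $\bw_{T_1,1}^2 \ge \tfrac{2}{3}$ with probability $\ge 1-\delta/3$. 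Step (iii), ``Phase 2'': feed the state $\bw_{T_1,1}^2 \ge \tfrac{2}{3}$ into the already-proved local-convergence theorem \autoref{thm:discrete phase 2}, which within $T_2 = \Theta\!\big(\tfrac{\log(1/\epsilon)}{\eta(\lambda_1-\lambda_2)}\big)$ further steps pushes $\bw_{t,1}^2$ past $1-\epsilon$ with probability $\ge 1-\delta/3$. A union bound over the three failure events gives the conclusion with $T = T_1 + T_2 = \Theta\!\big(\tfrac{\log(1/\epsilon)+\log(n/\delta)}{\eta(\lambda_1-\lambda_2)}\big)$, exactly the stated rate; the learning rate $\eta$ is then set to the minimum of the constraint inherited from the local analysis of \autoref{thm:discrete phase 2} (of the form $\eta \lesssim \tfrac{\lambda_1-\lambda_2}{\lambda_1}\cdot\tfrac{\epsilon}{\mathrm{polylog}(n,1/\epsilon,1/\delta)}$) and the constraint imposed by Phase 1.

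For Phase 1 the skeleton is identical to that of the local analysis: derive a linearization of the $\bw_{t,1}^2$--dynamic centered at $0$ (in the regime $\bw_{t,1}^2 \le \tfrac{2}{3}$ one obtains $\bw_{t,1}^2 \ge H\,\bw_{t-1,1}^2 + (\text{noise})$ with multiplier $H = 1 + \Theta(\eta(\lambda_1-\lambda_2))$, mirroring \autoref{lem:discrete phase 2 linearization}); apply the ODE trick (\autoref{lem:ODE trick}) to get the one-shot form $\bw_{t,1}^2 \ge H^{t}\big(\bw_{0,1}^2 + \sum_{i\le t} H^{-i}N_i\big)$; control the noise of a suitably stopped process via a maximal Freedman bound (\autoref{cor:freedman}); and pull the stopping time out with the chain-condition lemma (\autoref{lem:discrete phase 2 stopping time}), just as in \autoref{cor: local concentration}. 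The genuinely new difficulty --- the one absent in Phase 2 --- is a \emph{cross term}: the increment of $\bw_{t,1}^2$ contains $2\eta\,\bw_{t-1,1}\,\bx_{t,1}\!\sum_{j\ge2}\bx_{t,j}\bw_{t-1,j}$, whose conditional variance is $\Theta\!\big(\eta^2\lambda_1\lambda_2\,\bw_{t-1,1}^2\big)$, i.e.\ only \emph{linearly} small in the tracked quantity rather than quadratically. Any naive scalar bookkeeping --- directly on $\bw_{t,1}^2$, on $1/\bw_{t,1}^2$, or on $\bw_{t,1}^2/\|\bw_t\|_2^2$ --- then forces $\eta = O(1/n)$, which is far too small. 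The remedy is to track $\bw_{t,1}^2$ jointly with a carefully weighted bulk potential $D_t$ over the coordinates $\{2,\dots,n\}$ --- the \emph{same} cross term appears in $D_t$'s evolution --- and to exploit this cancellation together with a supermartingale-type argument (this is the content of the separately stated cross-term lemma). The surviving noise then costs only a $\mathrm{polylog}(n)$ factor, and this is precisely where the extra $\delta^2/\log^2\!\big(\lambda_1 n/(\delta(\lambda_1-\lambda_2)^2)\big)$ factor in $\eta$ is spent.

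I expect Phase 1, and inside it the cross-term cancellation, to be the main obstacle: while $\bw_{t,1}$ is still only inverse-polynomially large in $n$, the coupling between coordinate $1$ and the $(n-1)$-dimensional bulk injects noise whose scale dwarfs the drift $H^{t}\bw_{0,1}^2$, so the one-shot argument cannot be run by simply demanding that the noise be dominated pointwise; one must let the signal amplify while tracking the ratio-type potential robustly. Everything else is either standard (the initialization estimate, the three-way union bound, reconciling the two $\eta$-constraints) or a black-box invocation of machinery already in place --- the ODE trick, maximal Freedman/Azuma (\autoref{cor:freedman}, \autoref{lem:maximal azuma}), the pull-out lemma (\autoref{lem:discrete phase 2 stopping time}), and the local-convergence theorem (\autoref{thm:discrete phase 2}) for Phase 2.
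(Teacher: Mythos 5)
Your three-phase skeleton (initialize, linearize at $0$ from $\Omega(\delta^2/n)$ up to $2/3$, then invoke the local theorem) matches the paper's outline, and you have correctly identified the cross term $2\eta\bw_{t-1,1}\bx_{t,1}\sum_{j\geq2}\bx_{t,j}\bw_{t-1,j}$ as the genuine obstacle. But the mechanism you propose to tame it --- tracking a ``carefully weighted bulk potential $D_t$'' over the non-principal coordinates and exploiting a cancellation between its increments and those of $\bw_{t,1}^2$ via a supermartingale argument --- is not what the paper does, and as stated it would not get off the ground. The increment of $\bw_{t,1}^2$ and the increment of any quadratic bulk functional share a cross term only up to signs and weights that do not cancel pathwise, so the cancellation you invoke is not available; more fundamentally, the paper never needs it. Instead, the paper defines the family of \emph{input-projection ratios} $f_{t,j}(\bw)=\frac{\sum_{i=2}^{j}\bx_{t,i}\bw_i}{\bw_1}$ (\autoref{def: auxiliary stopping}), observes that the troublesome factor in $y_t$ is exactly $f_{t,n}(\bw_{t-1})\cdot\bw_{t-1,1}$, and introduces a stopping time $\xi$ at which $|f_{t,n}|$ first exceeds $2\Lambda$; once conditioned on $\xi>t$, the noise in \autoref{lem: global bound} has conditional variance $O(\lambda_1\eta^2\Lambda^2\bw_{t-1,1}^4)$ --- \emph{quadratically} small, which is exactly what the one-shot ODE trick needs. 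The entire content of \autoref{sec: D main} is then devoted to showing $\xi$ is large, via a vector linearization and ODE trick on $\bff_t$, Freedman concentration, and a separate pull-out argument. Your initialization step also undershoots: bounding only $\bw_{0,1}^2$ from below (the $\Omega(\delta^2/n)$ estimate over the random $\bw_0$) is not enough to seed this machinery; the initialization lemma you actually need is \autoref{lem:D init}, which couples the randomness of $\bw_0$ with that of the \emph{entire future input stream} $\bx_1,\dots,\bx_T$ to guarantee $|f_{t,j}(\bw_0)|\leq\Lambda_{p,\delta}$ simultaneously for all $t\in[T]$, $j\in[n]$. Without that joint control there is no way to define, let alone reason about, the stopping time $\xi$.

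There is also a smaller but real gap in your endgame. Phase 1 delivers a \emph{hitting time} $\tau$ at which $\bw_{\tau,1}^2\geq 2/3$ (and then \autoref{thm:discrete phase 2} delivers another hitting time at which $\bw^2$ exceeds $1-\epsilon/2$), but the theorem asserts $\bw_{T,1}^2\geq 1-\epsilon$ at the \emph{fixed} time $T$, which may be considerably later than the random hitting time. The paper bridges this by invoking the finite continual-learning theorem (\autoref{thm: finite continual learning}) to show the iterate stays above $1-\epsilon$ for the remaining $\leq T$ steps; a bare union bound of your three failure events does not close this hole. So: right shape, right diagnosis of the obstacle, but the cross-term resolution needs the $f_{t,j}$/stopping-time/initialization machinery of \autoref{sec: D main skip}--\autoref{sec: D main}, and the combination step needs \autoref{thm: finite continual learning}.
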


\paragraph{Proof overview and organization.}
The main difficulty in the global convergence is that at the beginning the bounded differences of the noise in the linearization (see~\autoref{lem: global bound}) cannot be controlled directly. To be precise, the \textit{cross term} $|y_t|$ at the worst case needs to be bounded by $O(\sqrt{n}|\bw_{t, 1}|)$. This will introduce a polynomial dependency on $n$, which makes the convergence inefficient. To deal with this issue, in~\autoref{sec: init stopping time}, we provide an initialization lemma and the definition of the stopping time $\xi_{p, \delta}$ that controls the bounded difference of $|y_t|$. Next, in~\autoref{sec: D main skip}, we show that we can control the bounded difference of $|y_t|$ by showing that the stopping time $\xi_{p, \delta}$ is large with high probability in~\autoref{thm: D main}. The details of the proof is delayed to~\autoref{sec: D main}.

Next, in~\autoref{sec:phase 1 linearization} we derive a linearization for $\bw_{t,1}^2$ using a center at 0 instead of 1 based on the intuition from the continuous dynamic in~\autoref{sec:continuous Oja}. Furthermore, we use the ODE trick to write down the dynamic in a closed form with respect to the linearization.

Similar to the local convergence, in~\autoref{sec:phase 1 noise}, we show that the noise from the ODE trick can be controlled with the stopping time and we can pull out the stopping time carefully to bound the original noise. In~\autoref{sec:phase 1 wrap up}, we prove that $\bw_{t, 1}^2$ is greater than $2/3$ efficiently with high probability in an interval analysis in~\autoref{thm: global efficient stopping}. Finally, in~\autoref{sec: combine}, by combining~\autoref{thm: global efficient stopping}, the local convergence~\autoref{thm:discrete phase 2} and the finite continual learning~\autoref{thm: finite continual learning}, we prove the efficient global convergence in~\autoref{thm:discrete global main}.

\subsection{Initialization and the main stopping time}\label{sec: init stopping time}
In this section, we begin with~\autoref{def: auxiliary stopping}, which introduces the auxiliary stochastic processes that we study in~\autoref{sec: D main skip} for controlling $|y_t|$. Then we give an initialization lemma for the auxiliary processes in~\autoref{lem:D init}, which guarantees that the processes perform well with good probability at the first time step. 

\begin{definition}[Auxiliary processes]\label{def: auxiliary stopping}
For each $2\leq j\leq n$, $t\in[T]$, and $\bw\in\Real^n$, define
\[
f_{t,j}(\bw) = \frac{\sum_{i=2}^j\bx_{t,i}\bw_i}{\bw_{1}} \, .
\]
\end{definition}
We cite the initialization lemma in~{\cite[Lemma~5.1]{AL17}} with some straightforward modification below.

\begin{lemma}[Initialization lemma in~{\cite[Lemma~5.1]{AL17}}]\label{lem:D init}
For any $n,T\in\N$, and $\mathcal{D}$ a distribution over unit vectors in $\Real^n$. Let $\bw_{0}\in\Real^n$ be a random unit vector, then for any $j\in[n]$ and $p,\delta\in(0,1)$, there exists
\[
\Lambda_{p, \delta} = \Theta\left(\frac{1}{p}\sqrt{\log\frac{nT}{\delta}}\right), \Lambda'_{p} =  \Theta\left(\frac{n}{p^2}\log\frac{n}{p}\right)
\]
such that let $\mathcal{C}^{p, \delta}_{0}(\bw_0)$ denote the event $\{\exists j\in[n],\ t\in[T],\ \left|f_{t,j}(\bw_0)\right|> \Lambda_{p, \delta}\}$ and let $\mathcal{C}^{p, \delta}_{init}$ denote the event $\lbrace\bw_{0, 1}^2\geq \frac{1}{\Lambda_p'}\rbrace\cap\lbrace \Pr_{\bx_1,\dots,\bx_T\sim\mathcal{D}}[\mathcal{C}^{p, \delta}_{0}(\bw_0) ] < \delta \rbrace$, we have 
\[
\Pr_{\bw_0}\left[ \mathcal{C}^{p, \delta}_{init}\right]\geq1-p-\delta \, .
\]
In particular, by the definition of $\mathcal{C}^{p, \delta}_{init}$ we have $\Pr[\mathcal{C}^{p, \delta}_{0}\ |\ \mathcal{C}^{p, \delta}_{init}] <\delta$. For convenience, we denote $\mathcal{C}^{p, \delta}_{0}(\bw_0)$ as $\mathcal{C}^{p, \delta}_{0}$ when there is no confusion.
\end{lemma}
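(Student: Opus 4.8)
The plan is to adapt~\cite[Lemma~5.1]{AL17}, organizing everything around the standard Gaussian representation of a random unit vector: write $\bw_0 = g/\|g\|_2$ with $g\sim N(0,I_n)$ and exploit the fact that the normalization cancels in the ratio defining $f_{t,j}$, i.e.
\[
f_{t,j}(\bw_0) \;=\; \frac{\sum_{i=2}^j \bx_{t,i}\bw_{0,i}}{\bw_{0,1}} \;=\; \frac{\sum_{i=2}^j \bx_{t,i}g_i}{g_1} \;=:\; \frac{s_{t,j}}{g_1}\,,
\]
where, crucially, the numerator $s_{t,j}$ depends only on $g_2,\dots,g_n$, which are independent of $g_1$ and of the entire input stream $\bx_1,\dots,\bx_T$. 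Since $\mathcal{C}^{p,\delta}_{init}$ is the intersection of the two events $\{\bw_{0,1}^2\ge 1/\Lambda'_p\}$ and $\{\Pr_{\bx}[\mathcal{C}^{p,\delta}_0(\bw_0)]<\delta\}$, I would bound the failure probability of each and then union bound.

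For the first event, I would use the explicit law of $g_1^2/\|g\|_2^2$ (equivalently, the $x^{-1/2}$ singularity of the $\mathrm{Beta}(\tfrac12,\tfrac{n-1}{2})$ density at $0$) together with the standard tail bound $\|g\|_2^2=\Theta(n)$ with probability $1-e^{-\Omega(n)}$, which gives $\Pr_{\bw_0}[\bw_{0,1}^2<1/\Lambda'_p]\le p/2$ once $\Lambda'_p$ has the stated order (the logarithmic factor is only slack). I would also record the by-product that on the complementary high-probability event $G:=\{\bw_{0,1}^2\ge 1/\Lambda'_p\}\cap\{\|g\|_2^2\ge n/2\}$ the denominator satisfies $|g_1|\ge\Omega\!\big(p/\sqrt{\log(n/p)}\big)$, which is the control we need to bound the ratio $s_{t,j}/g_1$.

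For the second event I would first bound the \emph{joint} quantity $q:=\Pr_{\bw_0,\bx}\!\big[\exists\,t\in[T],j\in[n]:\ |f_{t,j}(\bw_0)|>\Lambda_{p,\delta},\ G\big]$. On $G$ the inequality $|f_{t,j}|>\Lambda_{p,\delta}$ forces $|s_{t,j}|>|g_1|\Lambda_{p,\delta}\ge u$ with $u=\Omega\!\big(\tfrac{p}{\sqrt{\log(n/p)}}\Lambda_{p,\delta}\big)$, so, after dropping the conditioning on $G$ (so that $g_2,\dots,g_n$ are genuinely $N(0,I_{n-1})$ and independent of the stream), I get $q\le e^{-\Omega(n)}+\Pr_{\bw_0,\bx}[\exists\,t,j:|s_{t,j}|>u]$; conditionally on the stream, $s_{t,j}\sim N(0,\sigma_{t,j}^2)$ with $\sigma_{t,j}^2=\sum_{i=2}^j \bx_{t,i}^2\le\|\bx_t\|_2^2=1$, so a Gaussian tail bound and a union bound over the at most $nT$ pairs $(t,j)$ yield $q\le e^{-\Omega(n)}+2nT\,e^{-u^2/2}$. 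Choosing $\Lambda_{p,\delta}=\Theta\!\big(\tfrac1p\sqrt{\log(nT/\delta)}\big)$ with a large enough constant makes $q\le\delta^2/2$. Finally, to pass from this joint statement to a statement about the conditional probability over the stream for a fixed good $\bw_0$, I would apply Markov's inequality to $Z(\bw_0):=\Pr_{\bx_1,\dots,\bx_T}[\mathcal{C}^{p,\delta}_0(\bw_0)]$: since $\Exp_{\bw_0}[Z(\bw_0)\mathbf{1}_G]=\Pr_{\bw_0,\bx}[\mathcal{C}^{p,\delta}_0,G]\le q\le\delta^2/2$, Markov gives $\Pr_{\bw_0}[Z(\bw_0)\ge\delta,\ G]\le\delta/2$, and combining with the first event's bound (and absorbing the negligible $e^{-\Omega(n)}$) I obtain $\Pr_{\bw_0}[\neg\mathcal{C}^{p,\delta}_{init}]\le p/2+\delta/2\le p+\delta$, which is exactly the claim.

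The only genuinely delicate point is the parameter bookkeeping: the denominator $g_1$ is pushed below only by the comparatively large amount $\Theta(p)$ rather than $\Theta(\delta)$, and this is precisely what forces $\Lambda_{p,\delta}$ to scale like $1/p$ with $\delta$ entering only inside the logarithm; relatedly, the joint bound must be driven down to $\delta^2$ (not merely $\delta$) because the Markov step loses a factor of $\delta$. One must also be careful to invoke the Gaussianity and independence of $s_{t,j}$ only after discarding the conditioning on $G$ while still retaining the lower bound on $|g_1|$ that $G$ provides — that is the one place where a naive argument would go wrong.
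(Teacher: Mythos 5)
The paper does not actually prove this lemma; it imports it from~\cite{AL17}, so I can only assess your argument on its own terms. The overall strategy — write $\bw_0=g/\|g\|_2$, observe the normalization cancels in $f_{t,j}=s_{t,j}/g_1$, handle the first coordinate via the Beta tail and the second event via a Markov argument to pass from the joint probability $q$ to the conditional probability $Z(\bw_0)=\Pr_{\bx}[\mathcal{C}_0^{p,\delta}(\bw_0)]$ — is sound and is presumably how~\cite{AL17} proceeds. However, there is a genuine parameter gap in the Gaussian-tail step.

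You define $G=\{\bw_{0,1}^2\ge 1/\Lambda'_p\}\cap\{\|g\|_2^2\ge n/2\}$ and, correctly, deduce only $|g_1|\ge\Omega\bigl(p/\sqrt{\log(n/p)}\bigr)$ on $G$, because $\Lambda'_p=\Theta\bigl(\tfrac{n}{p^2}\log\tfrac{n}{p}\bigr)$ makes the threshold $1/\Lambda'_p$ a full $\log(n/p)$ factor below $p^2/n$. Consequently $u=\Omega\bigl(\tfrac{p}{\sqrt{\log(n/p)}}\Lambda_{p,\delta}\bigr)=\Omega\bigl(\sqrt{\log(nT/\delta)/\log(n/p)}\bigr)$, and the union-bounded tail $2nT\,e^{-u^2/2}$ is at most $\delta^2/2$ only when $\log(n/p)=O(1)$; a constant in $\Lambda_{p,\delta}$ cannot absorb the unbounded factor $\sqrt{\log(n/p)}$. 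So the sentence ``choosing $\Lambda_{p,\delta}=\Theta(\tfrac1p\sqrt{\log(nT/\delta)})$ with a large enough constant makes $q\le\delta^2/2$'' does not follow from your own lower bound on $|g_1|$. The mistake is entangling the auxiliary good event used in the analysis with the event appearing in the definition of $\mathcal{C}^{p,\delta}_{init}$, whose threshold $1/\Lambda'_p$ is deliberately weaker than necessary.

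The fix is routine once noticed: do not decompose on $G$. Instead bound $\Pr[\neg\mathcal{C}^{p,\delta}_{init}]\le\Pr[\bw_{0,1}^2<1/\Lambda'_p]+\Pr[Z(\bw_0)\ge\delta]$ by a union bound (the first term is $\le p/2$ by the Beta tail, with the $\log(n/p)$ factor in $\Lambda'_p$ being pure slack), and for the second term introduce a \emph{tighter} auxiliary event such as $G':=\{|g_1|\ge cp\}\cap\{\|g\|_2^2\le 2n\}$ for a small constant $c$. Since $\Pr[|g_1|<cp]\le cp\sqrt{2/\pi}$ and $\Pr[\|g\|_2^2>2n]\le e^{-\Omega(n)}$, one gets $\Pr[\neg G']\le p/4+e^{-\Omega(n)}$, and on $G'$ the denominator satisfies $|g_1|\ge\Omega(p)$, so $u=\Omega(p\Lambda_{p,\delta})=\Omega(\sqrt{\log(nT/\delta)})$ and $2nT\,e^{-u^2/2}\le\delta^2/2$ as needed; then $\Pr[Z(\bw_0)\ge\delta]\le\Pr[\neg G']+q'/\delta\le p/2+\delta/2$. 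Everything else in your argument — the cancellation $f_{t,j}=s_{t,j}/g_1$, the independence of $s_{t,j}$ from the stream after dropping the conditioning, the conditional Gaussianity with variance $\sum_{i=2}^j\bx_{t,i}^2\le 1$, the union bound over $nT$ pairs, and the Markov step — is correct.
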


\subsection{$|y_t|$ is small with high probability}\label{sec: D main skip}
As we said at the beginning of the section, in order to keep the bounded differences of the noise in the global convergence small, we need to make $|y_t|$ small with high probability. To achieve this, we introduce the following stopping time.

\begin{definition}[Stopping times for controlling $|y_t|$]\label{def:stopping time yt}
Given $p,\delta\in(0,1)$ in~\autoref{lem:D init}, we define the stopping time $\psi_{p, \delta}$ to be the first time $t$ such that $\bw_{t, 1}^2 < 1/(2\Lambda_{p, \delta}')$ and the stopping time $\xi_{p, \delta}$ to be the first time $t$ such that $|f_{t,n}(\bw_{(t-1)\wedge \psi_{p,\delta}})|>2\Lambda_{p, \delta}$.
\end{definition}
When there is no confusion, we will abbreviate $\psi_{p,\delta},\xi_{p, \delta}, \Lambda_{p, \delta}, \Lambda_{p}'$ as $\psi, \xi, \Lambda, \Lambda'$ respectively. Intuitively, bounded difference and moments in the noise of the auxiliary process will be small if $\psi$ and $\xi$ are large with high probability.

To control $|y_t|$, we would like to show that show that $\xi_{p, \delta}$ is large with high probability. Specifically, we want to show the following~\autoref{thm: D main}.

\begin{restatable}[$\xi$ is large with high probability]{theorem}{stoppingmain}\label{thm: D main}
Let $T\in \N$ and $p,\delta\in (0, 1)$. Let $\eta = \Theta\left(\frac{(\lambda_1 - \lambda_2)}{\lambda_1\Lambda_{p,\delta/4n^2T}^2\log\frac{nT}{\delta}}\right)$. If we have $T = \Omega\left(\frac{1}{\eta\lambda_1}\right)$ and $p \leq \delta$, then we have
\[
\forall t\in [T],\ \Pr\left[\xi = t\ \middle|\ \mathcal{C}_{init}^{p,\frac{\delta}{4n^2T}}\right] < \frac{\delta}{2n^2T} \, .
\]
In particular we have
\[
\forall t\in [T],\ \Pr\left[\xi = t \ \middle|\ \xi \geq t, \mathcal{C}_{init}^{p,\frac{\delta}{4nT}}\right]  \leq \frac{\delta}{n^2T}
\]
and
\[
\Pr\left[\xi \leq T \ \middle|\ \mathcal{C}_{init}^{p,\frac{\delta}{4nT}}\right] < \frac{\delta}{2n^2} \, .
\]
\end{restatable}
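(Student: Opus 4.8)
The plan is to establish the first inequality, $\Pr[\xi = t\mid\mathcal C^{p,\delta/4n^2T}_{init}] < \delta/(2n^2T)$ for every $t\in[T]$, and to obtain the other two as formal consequences: by a union bound over $t\in[T]$ one gets $\Pr[\xi\le T\mid\mathcal C_{init}] \le \sum_{t\le T}\Pr[\xi=t\mid\mathcal C_{init}]$, and from $\Pr[\xi < t\mid\mathcal C_{init}]\le\sum_{s<t}\Pr[\xi=s\mid\mathcal C_{init}]\le\tfrac12$ one gets $\Pr[\xi = t\mid\xi\ge t,\mathcal C_{init}] = \Pr[\xi=t\mid\mathcal C_{init}]/\Pr[\xi\ge t\mid\mathcal C_{init}]\le 2\Pr[\xi=t\mid\mathcal C_{init}]$; the three appearances of $\mathcal C_{init}$ with slightly different failure parameters are reconciled by re-invoking the first inequality with the failure probability rescaled by a factor of $n$, which only relaxes the hypothesis $p\le\delta$.

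For the first inequality, observe that $\{\xi = t\}\subseteq\{\xi\ge t\}\cap\{|f_{t,n}(\bw_{(t-1)\wedge\psi})| > 2\Lambda\}$, that $\{\xi\ge t\}$ and $\mathcal C_{init}$ are $\mathcal F_{t-1}$-measurable, and that $f_{t,n}(\cdot)$ is a function of the fresh sample $\bx_t$; so it suffices to bound, pointwise on $\{\xi\ge t\}\cap\{\psi\ge t\}\cap\mathcal C_{init}$, the conditional probability $\Pr[|f_{t,n}(\bw_{t-1})| > 2\Lambda\mid\mathcal F_{t-1}]$, and separately to control the auxiliary stopping time $\psi$ (see below, since on $\{\psi<t\}$ the evaluation point freezes at $\bw_\psi$ with $\bw_{\psi,1}^2<1/(2\Lambda'_p)$, which would itself trigger $\xi$). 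The first tool I would use is a closed form for the evaluation point coming from the ODE trick: the coordinate-wise update $\bw_{t,i} = (1-\eta y_t^2)\bw_{t-1,i} + \eta y_t\bx_{t,i}$ is a scalar linear recursion whose multiplier does not depend on $i$, so applying \autoref{lem:ODE trick} coordinate-by-coordinate and forming the ratio $\bw_{t-1,i}/\bw_{t-1,1}$ cancels the products $\prod_r(1-\eta y_r^2)$, giving
\[
f_{t,n}(\bw_{t-1}) = \frac{1}{D_{t-1}}\Bigl(\bw_{0,1}\,f_{t,n}(\bw_0) + \sum_{s=1}^{t-1}\frac{\eta y_s}{Q_s}\sum_{i=2}^{n}\bx_{t,i}\bx_{s,i}\Bigr), \qquad Q_s = \prod_{r\le s}(1-\eta y_r^2),
\]
with $D_{t-1} = \bw_{t-1,1}/Q_{t-1}$, so $|D_{t-1}|\ge|\bw_{t-1,1}|$. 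The second tool is the structural identity $y_s = \bw_{s-1,1}\bigl(\bx_{s,1}+f_{s,n}(\bw_{s-1})\bigr)$, which, together with $\|\bw_{s-1}\|_2^2 = 1+O(\eta)$ from \autoref{lem:oja w norm ub}, shows that on $\{\xi\ge s+1\}\cap\{\psi\ge s\}$ one has $|y_s| = O(\Lambda)$, hence $\eta y_s^2 = O(\eta\Lambda^2) = O\bigl(\tfrac{\lambda_1-\lambda_2}{\lambda_1\log(nT/\delta)}\bigr) = O(1)$ --- this is exactly the control of the cross term $|y_t|$ that the stopping times $\psi$ and $\xi$ were designed to furnish.

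Granting this, the term $\bw_{0,1}f_{t,n}(\bw_0)$ is at most $\Lambda$ in absolute value on the event that $\mathcal C^{p,\delta'}_0$ does not occur, which has conditional probability at least $1-\delta'$ given $\mathcal C_{init}$ by \autoref{lem:D init}; and conditionally on $\mathcal F_{t-1}$ the remaining sum is an inner product of the fresh $\bx_t$ with an $\mathcal F_{t-1}$-measurable vector. Because $\Exp[\bx_{t,i}\bx_{t,j}\mid\mathcal F_{t-1}] = \lambda_i\mathbf 1_{i=j}$ and $\lambda_i\le\lambda_2$ for $i\ge 2$, its conditional variance is governed by the second eigenvalue rather than by the (a priori large) length of that vector, so a Freedman/Bernstein estimate (\autoref{lem:freedman}, \autoref{cor:freedman}) at confidence $\delta' = \delta/(2n^2T)$ --- whose deviation term comes out $O(\Lambda)$ precisely because $\Lambda_{p,\delta}$ already carries a $\sqrt{\log(nT/\delta)}$ factor --- yields $|f_{t,n}(\bw_{t-1})|\le 2\Lambda$ with the desired probability.

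The hard part, and the reason \autoref{sec: D main} is a separate section, is making all of this uniform in $t$ over the full horizon $T = \Omega(1/(\eta\lambda_1))$. At initialization $\bw_{0,1}^2$ can be as small as $1/\Lambda'_p$, so the ratio vector $\bw_t^{(\ge 2)}/\bw_{t,1}$ starts large and one must keep it from blowing up: this forces one to use the cancellation in the weighted cross-term sums above (not merely $\ell_1$ bounds) and the decay of the weights $\prod(1-\eta y_r^2)$, and it requires showing that $\bw_{t,1}^2$ never drops below $1/(2\Lambda'_p)$ before time $T$, i.e. that $\psi$ is also large --- which I would get from the one-step growth-in-expectation of $\bw_{t,1}^2$ (the same linearization-at-$0$ computation as in \autoref{lem:discrete phase 2 linearization} and \autoref{sec:phase 1 linearization}, which is valid for every value of $\bw_{t-1,1}$) together with the ODE trick. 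Finally I would assemble the pieces by a bootstrapping \emph{chain argument} exactly in the spirit of \autoref{lem:discrete phase 2 stopping time} (possibly combining $\psi$ and $\xi$ into one stopping time): on $\{\xi\ge t\}\cap\{\psi\ge t\}$ all bounded differences and conditional variances in the closed form are controlled, so a maximal martingale inequality (\autoref{lem:maximal azuma}, \autoref{cor:freedman}) bounds all the noise up to time $t$ at once; feeding this back through the ODE trick upgrades $\{\xi\ge t,\psi\ge t\}$ to $\{\xi\ge t+1,\psi\ge t+1\}$; iterating from $t=1$ --- where $\mathcal C_{init}$ and the complement of $\mathcal C^{p,\delta'}_0$ supply the base case --- up to $T$, and charging the per-step failure probability $\delta/(2n^2T)$ through the pull-out lemma, gives the bound without any extra factor of $T$.
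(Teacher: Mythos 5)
Your closed form for $f_{t,n}(\bw_{t-1})$ --- obtained by running the scalar ODE trick in each coordinate and noticing that the common multiplier $\prod_r(1-\eta y_r^2)$ cancels in the ratio --- is a correct algebraic identity, and it is genuinely different from the one the paper uses. The paper instead Taylor-expands $f_{t,j}$ directly (\autoref{lem:D linearization}) to obtain a \emph{vector} linear recursion $\bff_t(\bw_s) = H\bff_t(\bw_{s-1}) + \bA_s^{(t)}$ with a \emph{deterministic} lower-triangular multiplier $H$, and then applies the ODE trick once to that. The two identities would be interchangeable if the subsequent concentration step worked under both, but it does not work as you have written it.

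The gap is the Freedman step. Conditioned on $\mathcal{F}_{t-1}$, the remaining sum is indeed $\bx_t^\top v$ for an $\mathcal{F}_{t-1}$-measurable vector $v$ --- but that is a \emph{single} random variable, not a martingale, so \autoref{lem:freedman} and \autoref{cor:freedman} give you nothing. Knowing only $\Var(\bx_t^\top v\mid\mathcal{F}_{t-1})\le\lambda_2\|v\|_2^2$ buys Chebyshev (a polynomial tail), far short of the $\exp(-\Omega(\log(nT/\delta)))$ tail you need, and $\mathcal{D}$ is an arbitrary distribution on the sphere so there is no sub-Gaussian hypothesis to fall back on. Note also that the conditional variance \emph{is} proportional to $\|v\|_2^2$ --- $\lambda_2$ is only a prefactor --- and $\|v\|_2$ is precisely the a priori large quantity that needs taming (it grows like $|D_{t-1}|$; the two must cancel, but that cancellation has to be proved, not assumed). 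To obtain a genuine martingale one must freeze $\bx_t$ and let $s$ run, i.e.\ work in the filtration $\{\mathcal{F}_s^{(t)}\}$ as the paper does. But under that filtration your summand $\frac{\eta y_s}{Q_s}\sum_{i\geq 2}\bx_{t,i}\bx_{s,i}$ carries a random, geometrically growing weight $1/Q_s$ and has a nonzero drift (its conditional mean is roughly $\frac{\eta}{Q_{s-1}}\sum_{i\geq 2}\lambda_i\bw_{s-1,i}\bx_{t,i}$, which is the source of the $\lambda_i-\lambda_{i+1}$ coupling in the paper's matrix $H$), and controlling both of these again requires the very bound $|f_{s,j}|\leq 2\Lambda$ you are in the middle of establishing. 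If you push this through --- for instance by expanding your ratio $N_s/D_s$ to first order in $\eta$ --- you recover the paper's recursion for $\bff_t$ and find that you must track the whole vector, not just $f_{t,n}$.

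The surrounding scaffolding is sound. Deducing the second and third displays from the first by a union bound together with $\Pr[\xi\geq t\mid\mathcal{C}_{init}]\geq\tfrac12$ is exactly what the paper does. Your bootstrap plan (maximal inequality for the stopped noise; feed through the ODE trick to advance the stopping time one step; iterate from the initialization event; charge $\delta/(2n^2T)$ per step) matches \autoref{lem: D main} together with the pull-out technique of \autoref{lem:discrete phase 2 stopping time}. One correction to that plan: you should not try to show $\psi$ is large separately. The paper folds $\psi$ into the stopped process (\autoref{def:stopping time yt}, \autoref{def:stopping time ft}), so that when $\psi$ fires the evaluation point freezes and it is $f_{t,n}(\bw_{(t-1)\wedge\psi})$ that must stay bounded; trying to prove $\psi$ is large via the phase-1 linearization inside this proof would be circular, since the bounded-difference bounds of that linearization (\autoref{lem: global bound}) themselves presuppose \autoref{thm: D main}.
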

The proof of~\autoref{thm: D main} requires a careful analysis using stopping time and pull-out technique on $(f_{t,2},\dots,f_{t,n})$. We describe the rough strategy as follows and defer the details to~\autoref{sec: D main}. For all $2\leq i\leq n$, we consider the processes $\lbrace f_{t,i}(\bw_s)\rbrace_{s\in [t-1]}$ and, specifically, the shifted stopped process $f_{t,i}(\bw_{s\wedge\psi\star\xi})$ to make sure the bounded differece of the noise is small. Now~\autoref{lem:D init} makes sure that $|f_{t,i}(\bw_{0})|$ is under controlled with high probability and by our framework to analyze stochastic processes we can control $|f_{t,i}(\bw_{t-1\wedge\psi\star\xi})|$ with high probability and therefore show that $\xi$ is large.

Notice that here we use the shifted stopped process in~\autoref{def:star stopped} with stopping time $\xi$. Recall the definition of the shifted stopped process.
\starstopping*
Intuitively, we can consider the shifted stopped process as an original process which moves one step back if it stops. When we consider to apply the stopping time $\xi$, we should use the shifted stopped process instead of the normal stopped process. To given an intuition, we have by~\autoref{lem: stop difference} 
\[
\bw_{t\wedge\xi} - \bw_{(t-1)\wedge\xi} =  \mathbf{1}_{\xi \geq t}\eta y_{t}(\bx_t - y_{t}\bw_{t-1}) \, .
\]
However, $\mathbf{1}_{\xi \geq t}$ only ensures $f_{t-1, n}(\bw_{t-2})$ is bounded and hence $y_{t-1}$ is bounded, but we need $y_t$ to be bounded instead. Instead if we use the shifted stopped process, by~\autoref{lem: star difference}, 
\[
\bw_{t\star\xi} - \bw_{(t-1)\star\xi} =  \mathbf{1}_{\xi > t}\eta y_{t}(\bx_t - y_{t}\bw_{t-1}) \, .
\]
This makes sure that we have a good control on $|y_t|$ because $\xi > t$.

However, the shifted star process creates one caveat when we calculate the moments for the differences term. To be precise, the difference of a shifted star process contains $\mathbf{1}_{\xi > s}$ and $\xi > s$ is not $\mathcal{F}_{s-1}$ measurable anymore. So we need to do an addition conditioning. Observe that when we condition on the event $\xi>s$, the conditional expectation might slightly change. The following lemma shows that this is not significant.

\begin{restatable}[Conditional expectation on $\xi > s$]{lemma}{conditiononxi}\label{lem: conditional stopped help}
Let $T\in \N$, $\xi$ be the stopping time specified before, $\delta'\in(0,0.5)$, and $t\in [T]$. For $s < t$, suppose
\[
\Pr\left[\xi = s \ \middle|\ \xi \geq s, \mathcal{F}_{s-1}^{(t)}\right] < \delta' \, ,
\]
we have
\[
\left|\Exp\left[\bx_{s, i}\bx_{s, j} \ \middle|\  \mathcal{F}_{s-1}^{(t)}, \xi > s\right] - \Exp\left[\bx_{s, i}\bx_{s, j} \ \middle|\  \mathcal{F}_{s-1}^{(t)}\right]\right| < \frac{2\delta'}{1-\delta'}
\]
and furthermore
\[
\left\|\Exp\left[\bx_s\bx_s^\top \ \middle|\  \mathcal{F}_{s-1}^{(t)}, \xi > s\right] - \Exp\left[\bx_s\bx_s^\top \ \middle|\  \mathcal{F}_{s-1}^{(t)}\right]\right\|_2 < \frac{2\delta'}{1-\delta'} \, .
\]
\end{restatable}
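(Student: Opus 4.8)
The plan is to work conditionally on $\mathcal{F}_{s-1}^{(t)}$ and to exploit that $\{\xi\geq s\}$ is $\mathcal{F}_{s-1}^{(t)}$-measurable, so that conditioning on $\{\xi>s\}$ merely discards the small sub-event $\{\xi=s\}$. Concretely, since $\{\xi\geq s\}=\{\xi\leq s-1\}^{c}\in\mathcal{F}_{s-1}^{(t)}$, we have $\Pr[\xi\geq s\mid\mathcal{F}_{s-1}^{(t)}]=\mathbf{1}_{\xi\geq s}$, and hence on the event $\{\xi\geq s\}$ the hypothesis reads $\Pr[\xi=s\mid\mathcal{F}_{s-1}^{(t)}]=\Pr[\xi=s\mid\xi\geq s,\mathcal{F}_{s-1}^{(t)}]<\delta'$. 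Writing $q:=\Pr[\xi=s\mid\mathcal{F}_{s-1}^{(t)}]$, this gives $q<\delta'<\tfrac12$ on the relevant event, so $\Pr[\xi>s\mid\mathcal{F}_{s-1}^{(t)}]=1-q>\tfrac12>0$ and the conditional expectation $\Exp[\,\cdot\mid\mathcal{F}_{s-1}^{(t)},\xi>s]$ is well defined.

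For the scalar statement I would split $\Exp[\bx_{s,i}\bx_{s,j}\mid\mathcal{F}_{s-1}^{(t)}]$ along the $\mathcal{F}_{s-1}^{(t)}$-measurable partition $\{\xi>s\}\cup\{\xi=s\}\cup\{\xi<s\}$; on $\{\xi\geq s\}$ the $\{\xi<s\}$ piece vanishes, leaving
\[
\Exp\!\left[\bx_{s,i}\bx_{s,j}\mid\mathcal{F}_{s-1}^{(t)}\right]=(1-q)\,\Exp\!\left[\bx_{s,i}\bx_{s,j}\mid\mathcal{F}_{s-1}^{(t)},\xi>s\right]+\Exp\!\left[\bx_{s,i}\bx_{s,j}\mathbf{1}_{\xi=s}\mid\mathcal{F}_{s-1}^{(t)}\right].
\]
Denoting the three expectations by $E$, $E'$, $R$ in order, this is $E=(1-q)E'+R$, so $E'-E=\tfrac{qE-R}{1-q}$. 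Then I would bound $|E|\leq 1$ using $|\bx_{s,i}\bx_{s,j}|\leq\|\bx_s\|_2^{2}=1$, and $|R|\leq\Exp[\mathbf{1}_{\xi=s}\mid\mathcal{F}_{s-1}^{(t)}]=q$, which yields $|E'-E|\leq\tfrac{q|E|+|R|}{1-q}\leq\tfrac{2q}{1-q}<\tfrac{2\delta'}{1-\delta'}$, the last inequality since $x\mapsto 2x/(1-x)$ is increasing on $(0,1)$ and $q<\delta'$.

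The operator-norm statement follows by running the identical decomposition with the rank-one matrices $\bx_s\bx_s^{\top}$ in place of the scalars, obtaining $E'-E=\tfrac{qE-R}{1-q}$ with $E=\Exp[\bx_s\bx_s^{\top}\mid\mathcal{F}_{s-1}^{(t)}]$ and $R=\Exp[\bx_s\bx_s^{\top}\mathbf{1}_{\xi=s}\mid\mathcal{F}_{s-1}^{(t)}]$. Since $\|\cdot\|_2$ is a norm, hence convex, Jensen's inequality (for conditional expectations) gives $\|E\|_2\leq\Exp[\|\bx_s\bx_s^{\top}\|_2\mid\mathcal{F}_{s-1}^{(t)}]=\Exp[\|\bx_s\|_2^{2}\mid\mathcal{F}_{s-1}^{(t)}]=1$ and similarly $\|R\|_2\leq q$, so the triangle inequality again gives $\|E'-E\|_2\leq\tfrac{2q}{1-q}<\tfrac{2\delta'}{1-\delta'}$.

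The argument is essentially bookkeeping, and the only point that needs a moment of care is the measure-theoretic one: checking that conditioning on $\{\xi>s\}$ is legitimate, and that the hypothesis phrased with the extra conditioning on $\{\xi\geq s\}$ coincides with $\Pr[\xi=s\mid\mathcal{F}_{s-1}^{(t)}]<\delta'$ on the event where the conclusion is asserted. Both are immediate from $\{\xi\geq s\}\in\mathcal{F}_{s-1}^{(t)}$ and $1-q>\tfrac12$, so I do not expect any real obstacle; everything else reduces to the one-line estimates above.
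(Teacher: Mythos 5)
Your proposal is correct and follows essentially the same route as the paper: decompose $\Exp[\bx_{s,i}\bx_{s,j}\mid\mathcal{F}_{s-1}^{(t)}]$ along $\{\xi>s\}\cup\{\xi=s\}$ using that $\{\xi\geq s\}\in\mathcal{F}_{s-1}^{(t)}$, rearrange, and bound with $|\bx_{s,i}\bx_{s,j}|\leq 1$. The paper writes the same decomposition with the $\{\xi=s\}$ contribution as $q\,\Exp[\cdot\mid\mathcal{F}_{s-1}^{(t)},\xi=s]$ rather than your $R$, but that is the identical quantity; your extra care about well-posedness of the conditioning and the Jensen step for the operator norm just makes explicit what the paper leaves implicit.
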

\begin{proof}[Proof of~\autoref{lem: conditional stopped help}]
By the laws of total expectation and rearranging the terms we have
\begin{align*}
\Exp[\bx_{s, i}\bx_{s, j}\, |\, \mathcal{F}_{s-1}^{(t)}]&=\Exp[\bx_{s, i}\bx_{s, j}\, |\, \mathcal{F}_{s-1}^{(t)},\ \xi\geq s]\\
&=\Pr[\xi>s\, |\, \mathcal{F}_{s-1}^{(t)}]\cdot\Exp[\bx_{s, i}\bx_{s, j}\, |\, \mathcal{F}_{s-1}^{(t)},\ \xi>s]\\
&+\Pr[\xi=s\, |\, \mathcal{F}_{s-1}^{(t)},\ \xi\geq s]\cdot\Exp[\bx_{s, i}\bx_{s, j}\, |\, \mathcal{F}_{s-1}^{(t)},\ \xi=s]   \, .    
\end{align*}
The first equality is due to the fact that $\{\xi\geq s\}\in\mathcal{F}_{s-1}^{(t)}$. Now, the first inequality are then immediate consequences of the condition of the lemma and the fact that $|\bx_{s,i}\bx_{s,j}|\leq1$ almost surety. The second inequality can be obtained similarly.
\end{proof}

\subsection{Linearization and ODE trick centered at $0$}\label{sec:phase 1 linearization}
In the analysis of the global convergence, we use a linearization with a center at $0$ instead of $1$. The idea is inspired from the analysis of the continuous dynamic as explained in~\autoref{sec:continuous Oja}. However, unlike in the local convergence case, the bounded differences here can only be controlled after applying the stopping time $\xi_{p, \delta}$ from the last section. For the rest of the section, we set a stopping time and an initialization event from~\autoref{sec: D main skip} to be $\xi_T = \xi_{ \delta/4, \delta/8n^2T}$ and $\mathcal{C}_{init}^T = \mathcal{C}_{init}^{\delta/4, \delta/8n^2T}$. In particular by~\autoref{lem:D init} and~\autoref{thm: D main}, we have $\forall t\in [T]$,
\begin{equation}\label{eq: global D stopping init}
\Pr[\mathcal{C}_{init}^T ] \geq 1 - \frac{\delta}{2},\ \Pr[\xi_T < T\ |\ \mathcal{C}^{T}_{init}]<\frac{\delta}{4n^2},\ \Pr[\xi_T = t\, |\, \xi_T\geq t, \mathcal{C}^{T}_{init}]\leq \frac{\delta}{2n^2T} \, .
\end{equation}
We abbreviate the corresponding $\Lambda_{ \delta/4, \delta/8n^2T}, \Lambda_{\delta/4}'$ as $\Lambda, \Lambda'$ for the rest of the section. As in the local convergence, we will use stopping times, $\psi, \xi_T$, to make sure the bounded difference of the noise in the global convergence is small. As explained in the~\autoref{sec: D main skip}, we need to use the shifted stopped process with $\xi_T$. Specifically, we are going to look at $\bw_{t\wedge\psi\star\xi_T, 1}^2$.

We first derive the linearization with a center at $0$.
\begin{lemma}[Linearization at $0$]\label{lem:discrete phase 1 linearization}
Let $\bz_t = \bx_ty_t - y_t^2\bw_{t-1}$. For any $t\in\N$ and $\eta\in(0, 1)$, we have
\[
\bw_{t, 1}^2 \geq  H\cdot\bw_{t-1, 1}^2 + A_{t} + B_{t}
\]
almost surely, where 
\begin{align*}
H &=1+\frac{2}{3}(\lambda_1 - \lambda_2)\eta \, ,\\
A_t &= 2\eta\bz_{t,1}\bw_{t-1, 1} + \eta^2\bz_{t,1}^2 - \Exp\left[2\eta\bz_{t,1}\bw_{t-1, 1}\, |\, \mathcal{F}_{t-1} \right] + 2\eta\lambda_2(1-\|\bw_{t-1}\|^2)\bw_{t-1, 1}^2 \, , \text{ and }\\
B_t &= 2\eta(\lambda_1 - \lambda_2)\bw_{t-1, 1}^2(1 - \bw_{t-1, 1}^2 - \frac{1}{3}) \, .
\end{align*}
\end{lemma}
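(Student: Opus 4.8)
The plan is to run the same computation as in the proof of \autoref{lem:discrete phase 2 linearization}, except that we keep the quadratic drift term $\bw_{t-1,1}^2(1-\bw_{t-1,1}^2)$ linearized around the fixed point $0$ rather than $1$. First I would substitute Oja's rule (\autoref{eq:oja}) into $\bw_{t,1}^2$. Writing $\bw_{t,1}=\bw_{t-1,1}+\eta\bz_{t,1}$ with $\bz_t=\bx_ty_t-y_t^2\bw_{t-1}$ gives the exact identity
\[
\bw_{t,1}^2 = \bw_{t-1,1}^2 + 2\eta\bz_{t,1}\bw_{t-1,1}+\eta^2\bz_{t,1}^2 \, ,
\]
with no inequality incurred yet.

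Next I would add and subtract the two correction terms that together with the first two noise terms make up $A_t$, namely $\Exp[2\eta\bz_{t,1}\bw_{t-1,1}\mid\mathcal{F}_{t-1}]$ and $2\eta\lambda_2(1-\|\bw_{t-1}\|^2)\bw_{t-1,1}^2$. Using $\Exp[\bx_{t,i}\bx_{t,j}\mid\mathcal{F}_{t-1}]=\lambda_i\mathbf{1}_{i=j}$ one computes $\Exp[\bz_{t,1}\mid\mathcal{F}_{t-1}]=\lambda_1\bw_{t-1,1}-(\bw_{t-1}^\top\diag(\lambda)\bw_{t-1})\bw_{t-1,1}$, hence $\Exp[2\eta\bz_{t,1}\bw_{t-1,1}\mid\mathcal{F}_{t-1}]=2\eta\lambda_1\bw_{t-1,1}^2-2\eta(\sum_i\lambda_i\bw_{t-1,i}^2)\bw_{t-1,1}^2$, and the identity becomes
\[
\bw_{t,1}^2 = \bw_{t-1,1}^2 + 2\eta\Bigl(\lambda_1\bw_{t-1,1}^2 - \bigl(\textstyle\sum_{i=1}^n\lambda_i\bw_{t-1,i}^2\bigr)\bw_{t-1,1}^2 - \lambda_2(1-\|\bw_{t-1}\|^2)\bw_{t-1,1}^2\Bigr) + A_t \, .
\]

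The only place an inequality is used is in bounding the bracketed quantity. I would rewrite
\[
\textstyle\sum_{i=1}^n\lambda_i\bw_{t-1,i}^2 + \lambda_2(1-\|\bw_{t-1}\|^2) = \lambda_1\bw_{t-1,1}^2 + \lambda_2(1-\bw_{t-1,1}^2) + \sum_{i=2}^n(\lambda_i-\lambda_2)\bw_{t-1,i}^2 \, ,
\]
and since $\lambda_i\le\lambda_2$ for every $i\ge2$ the last sum is $\le0$, so this quantity is at most $\lambda_1\bw_{t-1,1}^2+\lambda_2(1-\bw_{t-1,1}^2)$. Because it is multiplied by the negative factor $-2\eta\bw_{t-1,1}^2$, substituting this upper bound yields a lower bound on $\bw_{t,1}^2$, and the resulting expression simplifies to $\bw_{t-1,1}^2 + 2\eta(\lambda_1-\lambda_2)\bw_{t-1,1}^2(1-\bw_{t-1,1}^2) + A_t$. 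Finally I would extract the linearization at $0$: since $H\bw_{t-1,1}^2=\bw_{t-1,1}^2+\tfrac23(\lambda_1-\lambda_2)\eta\bw_{t-1,1}^2$, the leftover drift is exactly $2\eta(\lambda_1-\lambda_2)\bw_{t-1,1}^2(1-\bw_{t-1,1}^2)-\tfrac23(\lambda_1-\lambda_2)\eta\bw_{t-1,1}^2 = 2\eta(\lambda_1-\lambda_2)\bw_{t-1,1}^2(1-\bw_{t-1,1}^2-\tfrac13)$, which is $B_t$, completing the proof. I do not anticipate any genuine obstacle here; the statement is a mirror image of \autoref{lem:discrete phase 2 linearization}, and the only subtlety is tracking the direction of the single inequality (bounding $\sum_i\lambda_i\bw_{t-1,i}^2+\lambda_2(1-\|\bw_{t-1}\|^2)$ from above) and remembering that we now center at $0$ rather than at $1$, which flips the sign inside $H$ from $1-\tfrac23(\lambda_1-\lambda_2)\eta$ to $1+\tfrac23(\lambda_1-\lambda_2)\eta$.
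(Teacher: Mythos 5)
Your proof is correct and follows essentially the same route as the paper: you derive the same intermediate inequality (the paper's displayed inequality preceding the split into $H\cdot\bw_{t-1,1}^2 + A_t + B_t$, established earlier in the proof of the linearization-at-$1$ lemma and reused here), using the same bound $\lambda_i\le\lambda_2$ for $i\ge2$, and then read off $H$ and $B_t$ by splitting the quadratic drift around the center $0$. The only difference is cosmetic — you carry out the bounding step in-line rather than citing the shared intermediate identity — so there is nothing to flag.
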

\begin{proof}[Proof of~\autoref{lem:discrete phase 1 linearization}]
By~\autoref{eq: linear 1}, we have
\begin{align*}
\bw^2_{t,1} &\geq \bw_{t-1,1}^2  + 2\eta(\lambda_1 - \lambda_2)\bw_{t-1, 1}^2(1 - \bw_{t-1, 1}^2) + A_{t} \\
&= \bw_{t-1,1}^2  + H\cdot\bw_{t-1, 1}^2 + A_{t} + B_t
\end{align*}
as desired.
\end{proof}

We apply the ODE trick (see~\autoref{lem:ODE trick}) on~\autoref{lem:discrete phase 1 linearization} and get the following corollary. 
\begin{corollary}[ODE trick]\label{cor:discrete phase 1 ODE}
For any $t_0\in\Nz$, $t\in\N$, and $\eta\in(0,1)$, we have
\[
\bw_{t_0+t, 1}^2 \geq H^{t} \cdot\left(\bw_{t_0, 1}^2 + \sum_{i=t_0+1}^{t_0+t}\frac{A_i+ B_i}{H^{i-t_0}} \right) \, .
\]
\end{corollary}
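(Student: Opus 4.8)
\textbf{Proof proposal for Corollary~\ref{cor:discrete phase 1 ODE}.}

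The plan is to reduce the stated \emph{inequality} to the \emph{equality} form handled by the scalar ODE trick (Lemma~\ref{lem:ODE trick}). Fix $t_0\in\Nz$ and work on a sample path on which the conclusion of Lemma~\ref{lem:discrete phase 1 linearization} holds (this is an almost sure event). First I would introduce the auxiliary deterministic recursion $X_{t_0}:=\bw_{t_0,1}^2$ and, for $i>t_0$, $X_i := H\,X_{i-1} + (A_i+B_i)$, where $H = 1+\tfrac{2}{3}(\lambda_1-\lambda_2)\eta$ is exactly the multiplier appearing in Lemma~\ref{lem:discrete phase 1 linearization}. Applying Lemma~\ref{lem:ODE trick} with the constant multiplier sequence $H_i\equiv H$ and with ``noise'' terms $A_i+B_i$ (so that $\prod_{j=t_0+1}^{i}H_j = H^{\,i-t_0}$) yields the closed form
\[
X_{t_0+t} \;=\; H^{\,t}\!\left(\bw_{t_0,1}^2 + \sum_{i=t_0+1}^{t_0+t}\frac{A_i+B_i}{H^{\,i-t_0}}\right).
\]

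Second, I would show $\bw_{t_0+t,1}^2 \geq X_{t_0+t}$ by induction on $t$. The base case $t=0$ is immediate since $X_{t_0}=\bw_{t_0,1}^2$. For the inductive step, observe that $H>1>0$ because $\lambda_1>\lambda_2$ and $\eta>0$; hence multiplication by $H$ preserves the inequality $\bw_{t_0+t-1,1}^2\geq X_{t_0+t-1}$. Combining this with Lemma~\ref{lem:discrete phase 1 linearization} gives
\[
\bw_{t_0+t,1}^2 \;\geq\; H\,\bw_{t_0+t-1,1}^2 + A_{t_0+t} + B_{t_0+t} \;\geq\; H\,X_{t_0+t-1} + A_{t_0+t} + B_{t_0+t} \;=\; X_{t_0+t}.
\]
Chaining the last two displays yields the claimed bound, and since the linearization holds almost surely the corollary follows almost surely.

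The argument is essentially routine bookkeeping, so there is no real obstacle; the only points requiring (minor) care are the positivity of $H$, which is what allows the inequality in Lemma~\ref{lem:discrete phase 1 linearization} to propagate through the recursion, and matching the index conventions of Lemma~\ref{lem:ODE trick} (treating $A_i+B_i$ as a single noise term and noting $\prod_{j=t_0+1}^{i}H_j=H^{\,i-t_0}$ for the constant sequence).
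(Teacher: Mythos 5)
Your proof is correct and matches the paper's approach: the paper simply states "We apply the ODE trick (Lemma~\ref{lem:ODE trick}) on Lemma~\ref{lem:discrete phase 1 linearization}" without spelling out the argument, and you have supplied the missing bookkeeping. Your one substantive addition — introducing the auxiliary recursion $X_i = HX_{i-1} + (A_i + B_i)$ and using $H = 1 + \tfrac{2}{3}(\lambda_1-\lambda_2)\eta > 0$ to propagate the inequality from Lemma~\ref{lem:discrete phase 1 linearization} through the recursion — is exactly the right way to bridge the gap between the equality form of Lemma~\ref{lem:ODE trick} and the inequality in Lemma~\ref{lem:discrete phase 1 linearization}, and it is what the paper implicitly relies on.
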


To control the noise term, we need to have bounds on the bounded differences and the moments of $A_i, B_i$. 

\begin{lemma}\label{lem: global bound}
Let $A_t, B_t$ be defined as in~\autoref{lem:discrete phase 2 linearization}. Let $\eta = O\left(\frac{\lambda_1 - \lambda_2}{\lambda_1\Lambda^2\log\frac{nT}{\delta}}\right)$. If $T = \Omega(\frac{1}{\eta\lambda_1})$, then for any $t\in[T]$ we have $A_t,B_t$ satisfy the following properties:
\begin{itemize}
\item (Bounded difference) $|\mathbf{1}_{\xi_T > t, \psi\geq t}A_{t}|=O\left(\eta\Lambda\bw_{t-1, 1}^2\right)$ almost surely. If $\bw^2_{t-1,1}\leq \frac{2}{3}$, then $B_{t}\geq 0$ almost surely.
\item (Conditional expectation) $\Exp[\mathbf{1}_{\xi_T > t, \psi\geq t}A_{t}\ |\ \mathcal{F}_{t-1}, \mathcal{C}_{init}^T ]=O\left(\lambda_1\eta^2\Lambda^2\bw_{t-1, 1}^2\right)$.
\item (Conditional variance) $\Var\left[\mathbf{1}_{\xi_T > t, \psi\geq t}A_{t}\ |\ \mathcal{F}_{t-1}, \mathcal{C}_{init}^T \right]=O\left(\lambda_1\eta^2\Lambda^2\bw_{t-1, 1}^4\right)$.
\end{itemize}
\end{lemma}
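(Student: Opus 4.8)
The plan is to first isolate the one structural fact that drives all four assertions (the a.s.\ bound, the conditional-expectation bound, and the conditional-variance bound for $A_t$, plus the sign of $B_t$): on the event $\{\xi_T > t,\ \psi\geq t\}$ the scalar $y_t$ is small. Write $y_t = \bx_t^\top\bw_{t-1} = \bw_{t-1,1}\bigl(\bx_{t,1} + f_{t,n}(\bw_{t-1})\bigr)$ in the notation of \autoref{def: auxiliary stopping}; on $\{\psi\geq t\}$ we have $(t-1)\wedge\psi = t-1$, so $\{\xi_T > t\}$ forces $|f_{t,n}(\bw_{t-1})|\leq 2\Lambda$, and with $|\bx_{t,1}|\leq 1$ this gives $|y_t| = O(\Lambda|\bw_{t-1,1}|)$ almost surely. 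Also $|y_t|\leq \|\bx_t\|_2\|\bw_{t-1}\|_2 = O(1)$ by Cauchy--Schwarz and \autoref{lem:oja w norm ub}. Feeding these into $\bz_{t,1} = y_t(\bx_{t,1} - y_t\bw_{t-1,1})$ and using $|\bw_{t-1,1}|\leq\sqrt{1+10\eta}<2$ yields the key a.s.\ bound $\mathbf{1}_{\xi_T > t,\psi\geq t}|\bz_{t,1}| = O(\Lambda|\bw_{t-1,1}|)$, which I would state as the first step.

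Given this, the bounded-difference bound on $A_t$ is a term-by-term estimate. The summands $2\eta\bz_{t,1}\bw_{t-1,1}$ and $\eta^2\bz_{t,1}^2$ are $O(\eta\Lambda\bw_{t-1,1}^2)$ and $O(\eta^2\Lambda^2\bw_{t-1,1}^2)$ on the event; the latter is absorbed since $\eta\Lambda^2 = O\!\left(\frac{\lambda_1-\lambda_2}{\lambda_1\log(nT/\delta)}\right) = O(1)$ and $\Lambda\geq 1$. For the subtracted term, $\Exp[\bz_{t,1}\mid\mathcal{F}_{t-1}] = \lambda_1\bw_{t-1,1} - \bigl(\sum_i\lambda_i\bw_{t-1,i}^2\bigr)\bw_{t-1,1}$ has magnitude $O(\lambda_1|\bw_{t-1,1}|)$, so that term is $O(\eta\lambda_1\bw_{t-1,1}^2)\leq O(\eta\Lambda\bw_{t-1,1}^2)$ since $\lambda_1\leq 1\leq\Lambda$ (here $\lambda_1\leq 1$ because $\mathcal{D}$ is supported on unit vectors). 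The last summand is $O(\eta^2\lambda_2\bw_{t-1,1}^2)$ by the norm bound $|1-\|\bw_{t-1}\|^2|\leq 10\eta$ of \autoref{lem:oja w norm ub}. Summing gives $|\mathbf{1}_{\xi_T > t,\psi\geq t}A_t| = O(\eta\Lambda\bw_{t-1,1}^2)$. The sign of $B_t$ is immediate after factoring $B_t = 2\eta(\lambda_1-\lambda_2)\bw_{t-1,1}^2\bigl(\tfrac23 - \bw_{t-1,1}^2\bigr)$, so $\bw_{t-1,1}^2\leq\tfrac23$ forces $B_t\geq 0$.

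For the two moment bounds the crude a.s.\ bound is insufficient because it loses the factor $\lambda_1$; the estimates must keep one factor inside the expectation whose conditional mean is $\Exp[\bx_{t,1}^2\mid\mathcal{F}_{t-1}] = \lambda_1$ or $\Exp[y_t^2\mid\mathcal{F}_{t-1}] = \bw_{t-1}^\top\diag(\lambda)\bw_{t-1} = O(\lambda_1)$, while pulling everything else out via $\mathbf{1}_{\xi_T > t,\psi\geq t}y_t^2 = O(\Lambda^2\bw_{t-1,1}^2)$. Expanding $\bz_{t,1}^2 = y_t^2\bx_{t,1}^2 - 2y_t^3\bx_{t,1}\bw_{t-1,1} + y_t^4\bw_{t-1,1}^2$ and applying this (with Cauchy--Schwarz on the cross term) gives $\Exp[\mathbf{1}_{\xi_T > t,\psi\geq t}\bz_{t,1}^2\mid\mathcal{F}_{t-1}] = O(\lambda_1\Lambda^2\bw_{t-1,1}^2)$. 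The conditional variance then follows from $\Var[\mathbf{1}_{\xi_T>t,\psi\geq t}A_t\mid\mathcal{F}_{t-1},\mathcal{C}_{init}^T]\leq\Exp[\mathbf{1}_{\xi_T>t,\psi\geq t}A_t^2\mid\mathcal{F}_{t-1},\mathcal{C}_{init}^T]$, whose dominant contribution $4\eta^2\bw_{t-1,1}^2\Exp[\mathbf{1}\bz_{t,1}^2\mid\cdot]$ is $O(\lambda_1\eta^2\Lambda^2\bw_{t-1,1}^4)$, the other squares being lower order thanks again to $\eta\Lambda^2 = O(1)$ and $\lambda_i\leq 1$. For the conditional expectation, the $\eta^2\bz_{t,1}^2$ contribution is $O(\lambda_1\eta^2\Lambda^2\bw_{t-1,1}^2)$ by the same estimate, and the $\lambda_2$ summand is $O(\eta^2\lambda_1\bw_{t-1,1}^2)$; the remaining piece $2\eta\bz_{t,1}\bw_{t-1,1} - \Exp[2\eta\bz_{t,1}\bw_{t-1,1}\mid\mathcal{F}_{t-1}]$ is mean-zero under $\mathcal{F}_{t-1}$ but \emph{not} under $\mathcal{F}_{t-1}$ together with the event $\{\xi_T > t\}$ that the indicator forces. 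Here I would invoke \autoref{lem: conditional stopped help} with $\delta' = \Pr[\xi_T = t\mid\xi_T\geq t,\mathcal{F}_{t-1}^{(t)}] = O(\delta/(n^2T))$ from \autoref{thm: D main}; expanding $\bz_{t,1}$ as a quadratic form in $\bx_t$, the induced discrepancy is $O\!\bigl(\eta\,\delta'\,(\|\bw_{t-1}\|_1 + |\bw_{t-1,1}|\,\|\bw_{t-1}\|_1^2)\bigr)$, and on $\{\psi\geq t\}$ one uses $\bw_{t-1,1}^2\geq 1/(2\Lambda')$ to rewrite $\|\bw_{t-1}\|_1 = O(\sqrt n)$ as $O(\sqrt{n\Lambda'})\,\bw_{t-1,1}^2$, after which the polynomial smallness of $\delta'$ together with $T = \Omega(1/(\eta\lambda_1))$ makes this discrepancy negligible against $O(\lambda_1\eta^2\Lambda^2\bw_{t-1,1}^2)$.

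The main obstacle is exactly this last point: because $\mathbf{1}_{\xi_T>t}$ depends on $\bx_t$, conditioning on the stopping event perturbs the second moments of $\bx_t$ and hence destroys the exact martingale structure of the noise; quarantining that perturbation is what \autoref{lem: conditional stopped help} and \autoref{thm: D main} are for, but using them requires the parameter choices ($\eta = O(\tfrac{\lambda_1-\lambda_2}{\lambda_1\Lambda^2\log(nT/\delta)})$, $\delta'$ driven down to $O(\delta/(n^2T))$, $T = \Omega(1/(\eta\lambda_1))$, and $\bw_{t-1,1}^2\geq 1/(2\Lambda')$ on $\{\psi\geq t\}$) to line up so the shift is dominated. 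A secondary but persistent subtlety, flagged above, is the bookkeeping in the moment bounds: one must resist applying the a.s.\ bound on $\bz_{t,1}^2$ everywhere and instead leave exactly one $\bx_{t,1}^2$- or $y_t^2$-type factor under the expectation in order to harvest the $\lambda_1$ that appears in the target bounds.
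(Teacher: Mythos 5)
Your proposal matches the paper's proof essentially step for step: the a.s.\ bound $|y_t|=O(\Lambda|\bw_{t-1,1}|)$ (hence $|\bz_{t,1}|=O(\Lambda|\bw_{t-1,1}|)$) extracted from the definitions of $\psi$ and $\xi_T$, the term-by-term bounded-difference estimate, the factoring of $B_t$, keeping one $\Exp[\bx_{t,1}^2\mid\mathcal{F}_{t-1}]=\lambda_1$ or $\Exp[y_t^2\mid\mathcal{F}_{t-1}]=O(\lambda_1)$ factor inside the second-moment bound, and invoking \autoref{lem: conditional stopped help} with \autoref{thm: D main} and $\psi\geq t$ to tame the shift caused by $\mathbf{1}_{\xi_T>t}$ depending on $\bx_t$. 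One small slip: the quantity you would feed into \autoref{lem: conditional stopped help} should be the $\mathcal{C}_{init}$-conditioned probability $\Pr[\xi_T=t\mid\xi_T\geq t,\mathcal{C}_{init}^T]\leq\delta/(2n^2T)$ from \autoref{eq: global D stopping init}, not $\Pr[\xi_T=t\mid\xi_T\geq t,\mathcal{F}_{t-1}^{(t)}]$ (which is degenerate, since $\{\xi_T=t\}\in\mathcal{F}_{t-1}^{(t)}=\mathcal{F}_t$), and the paper bounds the discrepancy with $\ell_2$/spectral norms rather than your $\ell_1$ estimate, though both absorb into the target.
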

\begin{proof}[Proof of~\autoref{lem: global bound}]
First by the definition of $\xi_T$, we have $|\mathbf{1}_{\xi_T > t,  \psi\geq t}y_{t}| = O(\Lambda|\bw_{t-1, 1}|)$ and $|\mathbf{1}_{\xi_T > t,  \psi\geq t}\bz_{t\star, 1}| = O(\Lambda|\bw_{t-1, 1}|)$. Combining above and~\autoref{lem:oja w norm ub}, we have
\begin{equation*}
|\mathbf{1}_{\xi_T > t,  \psi\geq t}A_{t}| = O\left(\left(\eta\Lambda + \eta^2\Lambda^2 + \eta^2\right)\bw_{t-1, 1}^2\right)  = O(\eta\Lambda\bw_{t-1, 1}^2) \, .
\end{equation*}
And for $\bw_{t-1, 1}^2\leq \frac{2}{3}$, we have $B_{t} \geq 0$ because $1 - \bw_{t-1, 1}^2 - \frac{1}{3} > 0$. For the conditional expectation, we have
\begin{align*}
\Exp\left[\mathbf{1}_{\xi_T > t,  \psi\geq t}\bz_{t, 1}^2\ \middle|\ \mathcal{F}_{t-1}, \mathcal{C}_{init}^T \right] &= \Exp\left[\mathbf{1}_{\xi_T > t,  \psi\geq t}y_t^2(x_{t, 1} - y_t\bw_{t-1, 1})^2\ \middle|\ \mathcal{F}_{t-1}, \mathcal{C}_{init}^T \right] \, .\\
\intertext{By~\autoref{lem: conditional stopped help},~\autoref{thm: D main} and definition of $\xi_T$, we have}
 &\leq O(\lambda_1\Lambda^2\bw_{t-1, 1}^2) \, . 
\end{align*}
Given a random variable $v$, we denote $\Exp[\mathbf{1}_{\xi_T > t,  \psi\geq t}v| \mathcal{F}_{t-1}, \mathcal{C}_{init}^T ] - \Exp[v\, |\, \mathcal{F}_{t-1}]$ as $\bar{v}$. Now we also have
\begin{align*}
\bar{\bz}_{t, 1}&= \bw_{t-1}^\top \overline{\bx_{t}\bx_{t, 1}} -  \bw_{t-1}^\top \overline{\bx_t\bx_t^\top } \bw_{t-1}\bw_{t-1, 1} \, .
\intertext{By applying~\autoref{lem: conditional stopped help} with~\autoref{eq: global D stopping init} and Cauchy-Schwarz, we have}
&= O\left(\|\bw_{t-1}\|_2\frac{\sqrt{n}}{n^2T} + \|\bw_{t-1}\|_2^3\frac{1}{n^2T}\right) \, .
\intertext{Conditioning on $\psi\geq t$ we have $\frac{1}{n} = O(\Lambda^2\bw_{t-1, 1}^2)$ by the definition of $\Lambda, \Lambda'$. We have}
&= O(\eta\lambda_1\Lambda^2\bw_{t-1, 1}^2) \, .
\end{align*}
So combining above we have
\begin{equation*}
\Exp[\mathbf{1}_{\xi_T > t,  \psi\geq t}A_{t}|\mathcal{F}_{t-1}] = O(\eta^2\lambda_1\Lambda^2\bw_{t-1, 1}^2) \, .
\end{equation*}
And similarly applying~\autoref{lem: conditional stopped help}, we obtain that the conditional variance is
\begin{align*}
\Var\left[\mathbf{1}_{\xi_T > t,  \psi\geq t}A_t|\mathcal{F}_{t-1},\mathcal{C}_{init}^T \right] &= O\left(\eta^2\Exp\left[\mathbf{1}_{\xi_T > t,  \psi\geq t}\bz_{t, 1}^2|\mathcal{F}_{t-1},\mathcal{C}_{init}^T \right] \bw_{t-1, 1}^2 + \eta^4\Exp\left[\mathbf{1}_{\xi_T > t,  \psi\geq t}\bz_{t, 1}^4|\mathcal{F}_{t-1},\mathcal{C}_{init}^T \right] \right) \\
&= O(\eta^2\lambda_1\Lambda^2\bw_{t-1, 1}^4)
\end{align*}
as desired.
\end{proof}

\subsection{Concentration of noise in an interval}\label{sec:phase 1 noise}

In this subsection, we want to show that the noise term in~\autoref{cor:discrete phase 1 ODE} is small. As in the local analysis, we are going to use a stopping time to control good bounded differences. Specifically, we have the following stopped concentration follow from~\autoref{lem: global bound}.

\begin{lemma}[Concentration of the stopped noise term in an interval]\label{lem:discrete phase 1 stopped concentration}
Let $t_0, T, t'\in \N, \delta, \delta'\in (0, 1)$ and $a\in (0, \frac{2}{3})$. Suppose $\bw_{t_0\wedge\psi\star\xi_T, 1}^2 \geq \frac{a}{2}$. Let $\tau_{a}$ be the stopping time $\lbrace \bw_{t\wedge\psi\star\xi_T,1}^2 \geq a\rbrace$. Let $\eta = \Theta\left(\frac{(\lambda_1 - \lambda_2)}{\lambda_1\Lambda^2\log\frac{1}{\delta'}}\right)$. If  $\delta' = O(\frac{\delta}{nT})$, $8\geq H^{t'} \geq 4$ and $T = \Omega(\frac{1}{\eta\lambda_1})$, then
\[
\Pr\left[\min_{1\leq t \leq t'} \sum\nolimits_{i=t_0 + 1}^{(t_0 + t)\wedge\psi\star\xi_T\wedge \tau_{a}} \frac{A_i + B_i}{H^{i-t_0}} \leq - \frac{a}{2}\ \middle|\ \mathcal{C}_{init}^T \right] < \delta'.
\]
\end{lemma}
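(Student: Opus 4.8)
The plan is to follow the proof of the local-convergence analogue \autoref{lem:discrete phase 2 stopped concentration} almost verbatim, replacing the moment bounds of \autoref{lem: local bound} by those of \autoref{lem: global bound}. First I would split the stopped sum as $\sum_i \tfrac{A_i}{H^{i-t_0}} + \sum_i \tfrac{B_i}{H^{i-t_0}}$, where (by \autoref{lem: stop difference} and \autoref{lem: star difference}) each surviving summand carries the indicator $\mathbf 1_{\xi_T>i,\,\psi\ge i,\,\tau_a\ge i}$. Since $a<2/3$ and the process is stopped at $\tau_a$, on the event that a summand survives we have $\psi\ge i$ and $\xi_T>i$, so $(i-1)\wedge\psi\star\xi_T = i-1$ and $\tau_a\ge i$ forces $\bw_{i-1,1}^2<a<2/3$; hence by \autoref{lem:discrete phase 1 linearization} each surviving $B_i\ge 0$, while each stopped summand is $0$, so $\sum_i \tfrac{B_i}{H^{i-t_0}}\ge 0$ deterministically. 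It therefore suffices to show $\Pr[\sup_{1\le t\le t'}|\sum_{i=t_0+1}^{(t_0+t)\wedge\psi\star\xi_T\wedge\tau_a}\tfrac{A_i}{H^{i-t_0}}|\ge \tfrac a2 \mid \mathcal C_{init}^T] < \delta'$.

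For this I would apply the Freedman corollary \autoref{cor:freedman} to the adapted process $t\mapsto \sum_{i=t_0+1}^{(t_0+t)\wedge\psi\star\xi_T\wedge\tau_a}\tfrac{A_i}{H^{i-t_0}}$, working inside the conditional space given $\mathcal C_{init}^T$ (an event in $\mathcal F_0$, so all the adapted/martingale structure is preserved). By \autoref{lem: global bound}, on the stopped process $\bw_{i-1,1}^2\le a$, so the bounded difference is $c=O(\eta\Lambda a)$; using the geometric estimate $\sum_{i\ge 1}H^{-i}=O\big(\tfrac1{\eta(\lambda_1-\lambda_2)}\big)$ (\autoref{lem:approx 2}), the summed conditional expectation is $O\big(\lambda_1\eta^2\Lambda^2 a\big)\cdot O\big(\tfrac1{\eta(\lambda_1-\lambda_2)}\big)=O\big(\tfrac{\eta\lambda_1\Lambda^2 a}{\lambda_1-\lambda_2}\big)$ and the summed conditional variance is $O\big(\tfrac{\eta\lambda_1\Lambda^2 a^2}{\lambda_1-\lambda_2}\big)$. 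Plugging in $\eta=\Theta\big(\tfrac{\lambda_1-\lambda_2}{\lambda_1\Lambda^2\log(1/\delta')}\big)$ turns the summed conditional expectation into $O(a/\log(1/\delta'))\le a/4$, the summed conditional variance into $O(a^2/\log(1/\delta'))$, and $ca$ into $O(\eta\Lambda a^2)=O(a^2/(\Lambda\log(1/\delta')))=O(a^2/\log(1/\delta'))$ (using $\Lambda\ge 1$). Then \autoref{cor:freedman} with deviation parameter $a/4$ gives, for a suitable $\Theta$-constant in $\eta$, failure probability $\exp\big(-\Omega\big(\tfrac{(a/4)^2}{O(a^2/\log(1/\delta'))}\big)\big)<\delta'$. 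Combining with $\sum_i \tfrac{B_i}{H^{i-t_0}}\ge 0$, the full stopped sum stays above $-a/2$ for every $1\le t\le t'$ with probability $>1-\delta'$ conditioned on $\mathcal C_{init}^T$, as claimed; the hypotheses $H^{t'}\in[4,8]$ and $T=\Omega(1/(\eta\lambda_1))$ only ensure that this $\eta$ is admissible and that the step count is $O(1/(\eta\lambda_1))$ so the geometric-series estimates stay valid, while $\delta'=O(\delta/(nT))$ is exactly what \autoref{lem: global bound} requires (through \autoref{lem: conditional stopped help} and \autoref{thm: D main}).

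The main obstacle is not the Freedman computation, which is routine and parallel to the local case, but justifying that the bounds of \autoref{lem: global bound} may be used term by term for this triple-stopped, shifted sum: one must verify that the differences of $\sum_{i=t_0+1}^{(t_0+t)\wedge\psi\star\xi_T\wedge\tau_a}\tfrac{A_i}{H^{i-t_0}}$ really are of the form $\mathbf 1_{\xi_T>i,\,\psi\ge i,\,\tau_a\ge i}\cdot\tfrac{A_i}{H^{i-t_0}}$ (so the single-step bounds apply), and that the conditioning on $\{\xi_T>i\}$ — which is not $\mathcal F_{i-1}$-measurable — perturbs the relevant conditional second moments of $\bx_i\bx_i^\top$ by only $O(\delta/(n^2T))$. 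The latter is precisely the content of \autoref{lem: conditional stopped help} together with the bound $\Pr[\xi_T=i\mid \xi_T\ge i,\mathcal C_{init}^T]\le \delta/(2n^2T)$ from \autoref{thm: D main}, so no genuinely new argument is needed here and the proof reduces to careful bookkeeping.
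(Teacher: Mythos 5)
Your proposal is correct and follows essentially the same approach as the paper's proof: decompose the stopped sum into the $A_i$ and $B_i$ parts, bound $\sum_i B_i/H^{i-t_0}\ge 0$ via the stopping time $\tau_a$ and the sign condition on $B_t$, and apply the Freedman corollary to the $A_i$ part using the moment bounds from \autoref{lem: global bound} together with the geometric-series estimate. You are also correct that the main bookkeeping concern (the differences of the triple-stopped process taking the form $\mathbf{1}_{\xi_T>i,\psi\ge i,\tau_a\ge i}A_i/H^{i-t_0}$, and the $\{\xi_T>i\}$-conditioning being controlled by \autoref{lem: conditional stopped help} and \autoref{thm: D main}) is already absorbed into \autoref{lem: global bound}, which is precisely what the paper relies on.
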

\begin{proof}[Proof of~\autoref{lem:discrete phase 1 stopped concentration}]
For notational convenience, we denote $\mathbf{1}_{\tau_a,\psi\geq t, \xi > t}A_t$ as $\bar{A}_t$. The proof is based on upper bounding the three moment quantities of the stopped noise term and applying martingale concentration inequality (\textit{i.e.,}~\autoref{lem:freedman}). This is very similar to the proof of~\autoref{lem:discrete phase 2 stopped concentration}. To analyze the moment quantities, first recall from~\autoref{lem: stop difference} and~\autoref{lem: star difference} that the martingale difference of the stopped noise term at time $t$ is $\mathbf{1}_{\tau_a,\psi\geq t, \xi > t}\frac{A_t+B_t}{H^{t-t_0}}$. Next, by the definition of $\tau_a$, we could think of $\bw_{t-1,1}^2$ as $\bw_{t-1,1}^2\leq a$. Now by using~\autoref{lem: global bound} to properly bound the moment quantities of $\mathbf{1}_{\tau_a,\psi\geq t, \xi > t}\frac{A_t}{H^{t-t_0}}$ and apply the bounds to~\autoref{lem:freedman}, we have
\[
\Pr\left[\max_{1\leq t \leq t'} \left|\sum_{i=t_0 + 1}^{(t_0 + t)\wedge\psi\star\xi_T\wedge \tau_{a}} \frac{A_i}{H^{i-t_0}}\right| \geq \frac{a}{2} \ \middle|\ \mathcal{C}_{init}^T \right] < \delta'
\]
because the deviation term is $\sqrt{\frac{\log\frac{1}{\delta'}\eta\lambda_1\Lambda^2 a^2}{\lambda_1 - \lambda_2}} = O\left( a\right) \leq \frac{a}{4}$ and the summation of conditional expectation term is $\frac{\lambda_1\eta\Lambda^2a}{\lambda_1 - \lambda_2} = O\left( a\right)\leq \frac{a}{4}$.
By stopping time $\tau_a$ and~\autoref{lem: global bound}, we have
\[
\sum_{i=t_0 + 1}^{(t_0 + t)\wedge\psi\star\xi_T\wedge \tau_{a}} \frac{B_i}{H^{i-t_0}} \geq 0 \, .
\]
Combine the above two inequalities we get the desiring concentration of the stopped noise term.

\end{proof}
Now we will pull out the stopping time $\psi, \tau_a$ and $\xi_T$ together to show that $\bw_{t, 1}^2$ doubles itself efficiently with high probability.
\begin{lemma}[Pull out stopping time in an interval]\label{lem:discrete phase 1 small stopping time}
Let $t_0, T, t'\in \N, \delta, \delta'\in (0, 1)$ and $a\in (0, \frac{2}{3})$. Suppose $\bw_{t_0\wedge\psi\star\xi_T, 1}^2 \geq \frac{a}{2}$. Let $\tau$ be the stopping time of $\lbrace \bw_{t,1}^2 \geq a\rbrace$. Let $\eta = \Theta\left(\frac{(\lambda_1 - \lambda_2)}{\lambda_1\Lambda^2\log\frac{1}{\delta'}}\right)$. If $\delta' = O(\frac{\delta}{nT})$, $8\geq H^{t'} \geq 4$, $t_0 + t'\leq T$ and $T = \Omega(\frac{1}{\eta\lambda_1})$, then
\[
\Pr[\tau > t_0 + t'\, |\, \xi > T, \mathcal{C}_{init}^T ] < \delta' \, .
\]
\end{lemma}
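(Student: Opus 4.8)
I would prove this in direct analogy with \autoref{cor: local concentration}: obtain concentration for the \emph{stopped} noise term, remove the stopping times via the pull-out technique of \autoref{lem:discrete phase 2 stopping time}, and feed the resulting small noise into the ODE trick (\autoref{cor:discrete phase 1 ODE}) to certify $\bw_{t,1}^2\ge a$. What makes this more delicate than the local phase is that three stopping times must be handled simultaneously: the shifted guard $\xi_T$, the lower guard $\psi$, and the upper guard $\tau_a$.

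Condition on $\{\xi_T>T\}\cap\mathcal{C}_{init}^T$ throughout; since $t_0+t'\le T<\xi_T$ the shift $\star\xi_T$ is inert on $[0,t_0+t']$ and, by hypothesis, $\bw_{t_0,1}^2\ge a/2$. The key observation is that on the event $\{\tau>t_0+t'\}$ to be bounded we have $\bw_{t,1}^2<a$ for all $t\in(t_0,t_0+t']$, so $\tau_a$ also does not fire there; hence on that event the $\psi\star\xi_T\wedge\tau_a$-stopped quantity in \autoref{lem:discrete phase 1 stopped concentration} reduces to the $\psi$-stopped version of $M_s:=-\sum_{i=t_0+1}^{t_0+s}\tfrac{A_i+B_i}{H^{i-t_0}}$, and that lemma supplies the first hypothesis of \autoref{lem:discrete phase 2 stopping time} for $\tau_{\mathrm{pull}}=\psi-t_0$. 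Its chain hypothesis, ``$M^*_{s'}<a/2\Rightarrow\tau_{\mathrm{pull}}\ge s'+1$'', comes from the ODE trick: for $t\le s'$,
\[
\bw_{t_0+t,1}^2\ \ge\ H^{t}\Bigl(\bw_{t_0,1}^2+\sum_{i=t_0+1}^{t_0+t}\tfrac{A_i+B_i}{H^{i-t_0}}\Bigr)\ \ge\ H^{t}\,\Omega(a)\ \ge\ \tfrac1{2\Lambda'},
\]
since $B_i=2\eta(\lambda_1-\lambda_2)\bw_{i-1,1}^2(\tfrac23-\bw_{i-1,1}^2)$ is nonnegative while $\bw^2\le 2/3$ and its $H$-discounted partial sums over the interval are $\Omega(a)$ (as $\bw^2$ stays of order $a$), $|\sum A_i/H^{i-t_0}|$ is small by \autoref{lem:discrete phase 1 stopped concentration}, and $a=\Omega(1/\Lambda')$ in the intended application; thus $\psi$ has not fired.

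Pulling out the stopping time then gives that the \emph{unstopped} noise satisfies $M^*_{t'}<a/2$ with conditional probability $>1-\delta'$ (up to a constant loss in passing between conditioning on $\mathcal{C}_{init}^T$ and on $\{\xi_T>T\}\cap\mathcal{C}_{init}^T$, absorbed by taking the constant in $\eta$ smaller). On that event, the ODE trick at $t=t'$, using now $H^{t'}\ge4$ and the $\Omega(a)$ drift contribution of the $B_i$'s, forces $\bw_{t_0+t',1}^2\ge a$, i.e.\ $\tau\le t_0+t'$. Hence $\Pr[\tau>t_0+t'\mid\xi_T>T,\mathcal{C}_{init}^T]<\delta'$.

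The main obstacle is this three-fold stopping-time bookkeeping: one must verify that on the good-noise event neither $\psi$ nor $\tau_a$ fires before $\bw_{t,1}^2$ crosses $a$, so that the ODE-trick conclusion for the stopped process transfers verbatim to the real process, and --- the genuinely quantitative point --- one must choose the constants in $\eta$ so that the positive drift carried by $B_i$ strictly dominates the controlled noise, letting the ODE trick push $\bw_{t,1}^2$ past $a$ rather than merely keeping it nonnegative.
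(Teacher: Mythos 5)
Your strategy (stopped concentration, pull-out, ODE trick) is the paper's, but the pull-out step as sketched does not close. You want to invoke \autoref{lem:discrete phase 2 stopping time} with $\tau_{\mathrm{pull}}=\psi-t_0$ and claim \autoref{lem:discrete phase 1 stopped concentration} supplies its first hypothesis. It does not: that lemma bounds the process stopped at $\psi\star\xi_T\wedge\tau_a$, not at $\psi$ alone, and these are genuinely different martingales --- it is precisely the $\tau_a$- and $\xi_T$-guards that keep the bounded differences small enough for Freedman. The observation that these guards are inert on $\{\tau>t_0+t'\}\cap\{\xi_T>T\}$ lets you \emph{intersect} with that event, but the first hypothesis of \autoref{lem:discrete phase 2 stopping time} is an \emph{unconditional} bound on the $\tau_{\mathrm{pull}}$-stopped process, so the lemma cannot be quoted verbatim. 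Worse, $\tau_a$ cannot be removed by a chain argument at all: the chain step needs ``small noise $\Rightarrow$ the guard has not fired,'' and the ODE trick yields a \emph{lower} bound on $\bw_{t,1}^2$, which is the right direction for $\psi$ (fires when $\bw^2$ is small) but the wrong one for $\tau_a$ (fires when $\bw^2$ is large). The paper therefore splits the error event $\{\tau_a>t_0+t'\}$ by whether $\psi$ fires inside the window; when $\psi>t_0+t'$ the stopped and unstopped noise agree on $\{\tau_a>t_0+t',\,\xi_T>T\}$ and the stopped concentration $\mathcal{A}$ is used directly, and when $\psi\le t_0+t'$ the pull-out lemma is applied to the \emph{unstopped} noise event $\mathcal{B}$ intersected with $\{\tau_a>t_0+t',\,\xi_T>T\}$, with two contradiction claims showing each good-noise piece vanishes. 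Some version of that two-way decomposition is unavoidable; it does not collapse to a single application of \autoref{lem:discrete phase 2 stopping time}.

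A secondary issue is that the chain-condition computation is circular as written. From $\bw_{t_0,1}^2\ge a/2$ and noise $\ge -a/2$, the ODE trick gives only $\bw_{t_0+t,1}^2\ge H^t\cdot 0=0$, not $H^t\,\Omega(a)$; the appeal to an ``$\Omega(a)$ drift contribution of the $B_i$'s (as $\bw^2$ stays of order $a$)'' assumes the conclusion, and in any case the push past $a$ comes from the multiplier $H^{t'}\ge 4$, not from $\sum B_i/H^{i-t_0}$ (which is merely nonnegative). The fix is to tighten the noise deviation to, say, $a/4$ by shrinking the constant in $\eta$; then $\bw_{t_0+t,1}^2\ge H^t a/4$, which gives both the $\psi$-chain condition (since $a/4\ge 1/(2\Lambda')$ for the $a$'s that arise in \autoref{thm: global efficient stopping}) and the terminal step $H^{t'}a/4\ge a$.
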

\begin{proof}[Proof of~\autoref{lem:discrete phase 1 small stopping time}]

Let $\tau_{a}$ be the stopping time $\lbrace \bw_{t\wedge\psi\star\xi_T,1}^2 \geq a\rbrace$. Notice that now we only have controls on $\bw_{t\wedge\psi\star\xi_T\wedge\tau_a, 1}^2$ via the moment information in~\autoref{lem: global bound}. To conclude a statement about $\tau$, we need to pull out $\psi, \tau_a, \xi_T$ from $\bw_{t,1}^2$. $\xi_T$ will be pull out by paying union bounds in the conditioning. $\tau_a$ and $\psi$ will be pulled out similar to~\autoref{lem:discrete phase 2 stopping time}.

The main goal is to upper bound the probability of the event $\tau>t_0+t'$. First, observe that this event implies $\tau_a>t_0+t'$ and thus we have
\[
\Pr[\tau > t_0 + t'\, |\, \xi > T, \mathcal{C}_{init}^T ] \leq\Pr[\tau_a > t_0 + t'\, |\, \xi > T, \mathcal{C}_{init}^T ] \, .
\]
Next, we further partition the probability space with the event $\psi>t_0+t'$ and its complement. Namely,
\begin{align*}
\Pr[\tau_a > t_0 + t'\, |\, \xi > T, \mathcal{C}_{init}^T ]&\leq\Pr[\tau_a > t_0 + t'\, ,\ \psi>t_0+t'\, |\, \xi > T, \mathcal{C}_{init}^T ]\\
&+\Pr[\tau_a > t_0 + t'\, ,\ \psi\leq t_0+t'\, |\, \xi > T, \mathcal{C}_{init}^T ]\\
&=\circled{A}+\circled{B}\, .
\end{align*}
In the following, we show that both \circled{A} and \circled{B} are at most $\delta'/2$ and thus complete the proof. Let us start with defining two error events on the (stopped) noise term.
\[
\mathcal{A} := \left\{\min_{1\leq t \leq t'} \sum_{i=t_0 + 1}^{(t_0 + t)\wedge\psi\star\xi\wedge \tau_{a}} \frac{A_i + B_i}{H^{i-t_0}} \leq - \frac{a}{2}\right\} \text{ and }\mathcal{B} :=\left\{\min_{1\leq t \leq t'} \sum_{i=t_0 + 1}^{t_0 + t} \frac{A_i + B_i}{H^{i-t_0}} \leq - \frac{a}{2}\right\} \, .
\]
Now, we partition the probability space again using $\mathcal{A},\mathcal{B}$ and their complements as follows.
\begin{align*}
\circled{A}&=\Pr\left[\tau_a , \psi > t_0 + t', \mathcal{A}  \ \middle|\ \xi_T > T, \mathcal{C}_{init}^T \right]  + \Pr\left[\tau_a, \psi > t_0 + t', \neg\mathcal{A}  \ \middle|\ \xi_T > T , \mathcal{C}_{init}^T \right] \, ,\\
\circled{B}&=\Pr\left[\tau_a  > t_0 + t',\psi \leq t_0 + t', \mathcal{B} \middle| \xi_T > T,  \mathcal{C}_{init}^T \right]+ \Pr\left[\tau_a  > t_0 + t',\psi \leq t_0 + t',  \neg\mathcal{B}  \middle| \xi_T > T,  \mathcal{C}_{init}^T \right] \, .
\end{align*}
We show that for both \circled{A} and \circled{B}, the first term is small and the second term is $0$.

\begin{claim}[The first term is small]\label{claim:global pull out 1}
\[
\Pr[\mathcal{A}\, |\, \mathcal{C}_{init}^T]<\frac{\delta'}{4} \text{ and } \Pr\left[ \mathcal{B}, \tau_a > t_0 + t', \xi_T > T \ \middle|\ \mathcal{C}_{init}^T \right] < \frac{\delta'}{4} \, .
\]
\end{claim}
\begin{proof}[Proof of~\autoref{claim:global pull out 1}]
Note that by~\autoref{lem:discrete phase 2 stopped concentration} we know that $\mathcal{A}$ happens with small probability when conditioning on $\mathcal{C}_{init}^T$ as desired.

As for $\mathcal{B}$, we need to apply the pull-out lemma (see~\autoref{lem:discrete phase 2 stopping time}) as follows. It suffices to check that if $\neg\mathcal{B}$ is true, then we have $\psi > t_0 + t$. By~\autoref{cor:discrete phase 1 ODE} we have for all $t\in [t']$
\[
\bw_{t_0 + t,1}^2 \geq H^{t}\left( \bw_{t_0,1}^2 + \sum_{i=t_0 + 1}^{t_0 + t} \frac{A_i + B_i}{H^{i-t_0}}\right)\geq H^{t}(a - \frac{a}{2}) \geq \frac{a}{2}
\]
as desired. By~\autoref{lem:discrete phase 2 stopping time}, this shows that $\Pr\left[ \mathcal{B}, \tau_a > t_0 + t', \xi_T > T \ \middle|\ \mathcal{C}_{init}^T \right] < \delta'/4$ as desired.
\end{proof}

\begin{claim}[The second term is $0$]\label{claim:global pull out 2}
\[
\Pr[\tau_a > t_0 + t' , \xi_T > T, \psi > t_0 + t', \neg\mathcal{A}] = 0 \text{ and }\Pr[\tau_a > t_0 + t' , \xi_T > T, \psi \leq t_0 + t', \neg\mathcal{B}] = 0 \, .
\]
\end{claim}
\begin{proof}[Proof of~\autoref{claim:global pull out 2}]
For the first equality, for the sake of contradiction assuming all the events $\tau_a > t_0 + t', \xi_T > T, \psi> t_0 + t'$ and $\neg\mathcal{A}$ are happening. By~\autoref{cor:discrete phase 1 ODE}, we have
\begin{align*}
\bw_{(t_0 + t')\wedge\psi\star\xi,1}^2 &\geq H^{t'\wedge(\psi - t_0)\star(\xi - t_0)}\left( \bw_{t_0,1}^2 + \sum_{i=t_0 + 1}^{(t_0 + t)\wedge\psi\star\xi} \frac{A_i + B_i}{H^{i-t_0}}\right)\\
&=H^{t'}\left( \bw_{t_0,1}^2 + \sum_{i=t_0 + 1}^{(t_0 + t)\wedge\psi\star\xi\wedge\tau_a} \frac{A_i + B_i}{H^{i-t_0}}\right)\geq 4(a - \frac{a}{2}) = 2a
\end{align*}
which contradicts to $\tau_a > t_0 + t'$. Thus, these events cannot happen simultaneously.

Similarly, for the second equality, notice that when all the events $\tau_a > t_0 + t', \xi_T > T, \tau\leq t_0 + t'$ and $\neg\mathcal{B}$ are happening, by the second condition of~\autoref{lem:discrete phase 2 stopping time} we checked in~\autoref{claim:global pull out 1}, we have $\tau > t_0 + t'$ which contradicts to $\tau\leq t_0 + t'$.
\end{proof}

To wrap up, by~\autoref{claim:global pull out 2} we know that the second term of \circled{A} vanishes and thus
\[
\circled{A}=\Pr\left[\tau_a , \psi > t_0 + t', \mathcal{A}  \ \middle|\ \xi_T > T, \mathcal{C}_{init}^T \right]\leq\Pr\left[\mathcal{A}  \ \middle|\ \xi_T > T, \mathcal{C}_{init}^T \right]\leq\frac{\Pr[\mathcal{A}\, |\, \mathcal{C}_{init}^T]}{\Pr[\xi_T>T\, |\, \mathcal{C}_{init}^T]} \, .
\]
As $\Pr[\mathcal{A}\, |\, \mathcal{C}_{init}^T]\leq\delta'/4$ by~\autoref{claim:global pull out 1} and $\Pr\left[\xi_T > T \ \middle|\  \mathcal{C}_{init}^T \right] \geq \frac{1}{2}$ from~\autoref{eq: global D stopping init}, we have $\circled{A}\leq\delta'/2$ as desired. Similarly, as the second term of \circled{B} vanishes by~\autoref{claim:global pull out 2}, we have
\[
\circled{B}=\Pr\left[\tau_a  > t_0 + t',\psi \leq t_0 + t', \mathcal{B}  \middle| \xi_T > T,  \mathcal{C}_{init}^T \right]= \frac{\Pr\left[\tau_a  > t_0 + t', \xi_T > T,\psi \leq t_0 + t', \mathcal{B}  \middle|  \mathcal{C}_{init}^T \right] }{\Pr[\xi_T > T\, |\,  \mathcal{C}_{init}^T ]} \, .
\]
As $\Pr\left[ \mathcal{B}, \tau_a > t_0 + t', \xi_T > T \ \middle|\ \mathcal{C}_{init}^T \right]<\delta'/4$ by~\autoref{claim:global pull out 1} and $\Pr\left[\xi_T > T \ \middle|\  \mathcal{C}_{init}^T \right] \geq \frac{1}{2}$ from~\autoref{eq: global D stopping init}, we have $\circled{B}\leq\delta'/2$ as desired. In conclusion, we have
\[
\Pr[\tau > t_0 + t'\, |\, \xi > T, \mathcal{C}_{init}^T ]\leq\circled{A}+\circled{B}<\delta' \, .
\]

\end{proof}

\subsection{Interval Analysis: From Global to Local}\label{sec:phase 1 wrap up}
In this section, we proceed with the following interval scheme to show the improvement of $\bw_{t, 1}^2$ from the global regime to the local regime
\[
\frac{1}{\Lambda'} \rightarrow 2\frac{1}{\Lambda'} \rightarrow \dotsb \rightarrow 2^{\lfloor \log \frac{2\Lambda'}{3} \rfloor}\frac{1}{\Lambda'} \rightarrow \frac{2}{3} \, .
\]
We first show in~\autoref{lem: param global} on how to choose the learning rate without dependency on $T$ and then show that $\bw_{t, 1}^2$ is going to reach $2/3$ efficiently.

\begin{lemma}[Choice of parameters in global convergence]\label{lem: param global}
Given $t'$ such that $8\geq H^{t'}\geq 4$, there exists 
\[
T =\Theta\left(\frac{\lambda_1\log\frac{n}{\delta}\log^2 \frac{n\lambda_1}{\delta(\lambda_1 - \lambda_2)^2}}{\delta^2(\lambda_1 - \lambda_2)^2}\right)
\]
such that
\[
\eta = \Theta\left(\frac{\lambda_1 - \lambda_2}{\lambda_1\Lambda_T^2 \log \frac{nT}{\delta}}\right),\ T \geq t'\log\Lambda'.
\]
\end{lemma}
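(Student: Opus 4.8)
The plan is to treat this as a fixed-point statement: the quantities $\eta$, $\Lambda_T$, the doubling length $t'$, and $T$ are mutually coupled, and we must choose $T$ of the claimed order so that the prescribed formula for $\eta$ together with the constraint $T\geq t'\log\Lambda'$ is self-consistent. First I would make every quantity explicit in terms of $T$. Recalling from~\autoref{lem:D init} that $\Lambda_T=\Lambda_{\delta/4,\ \delta/8n^2T}=\Theta\big(\tfrac1\delta\sqrt{\log\tfrac{nT}{\delta}}\big)$ and $\Lambda'=\Lambda'_{\delta/4}=\Theta\big(\tfrac{n}{\delta^2}\log\tfrac n\delta\big)$ (absorbing the polynomial factors of $n$ and the universal constants inside the logarithms, which only changes them by a constant factor), substituting $\Lambda_T^2=\Theta(\tfrac1{\delta^2}\log\tfrac{nT}{\delta})$ into the prescribed learning rate gives $\eta=\Theta\big(\tfrac{(\lambda_1-\lambda_2)\delta^2}{\lambda_1\log^2(nT/\delta)}\big)$, and then $8\geq H^{t'}\geq 4$ with the expansion factor $H=1+\tfrac23(\lambda_1-\lambda_2)\eta$ from~\autoref{lem:discrete phase 1 linearization} (and $\eta$ small) forces $t'=\Theta\big(\tfrac{1}{(\lambda_1-\lambda_2)\eta}\big)=\Theta\big(\tfrac{\lambda_1\log^2(nT/\delta)}{(\lambda_1-\lambda_2)^2\delta^2}\big)$. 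Since $\log\Lambda'=\Theta(\log\tfrac n\delta)$, the requirement $T\geq t'\log\Lambda'$ collapses to the single implicit inequality $T\geq c\,C\log^2\tfrac{nT}{\delta}$, where $C:=\tfrac{\lambda_1\log(n/\delta)}{(\lambda_1-\lambda_2)^2\delta^2}$ and $c$ is an absolute constant.

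The second step is to solve this fixed-point inequality. Writing $P:=\tfrac{n\lambda_1}{\delta(\lambda_1-\lambda_2)^2}$ and noting $P\geq 1/\delta$ (since $\lambda_1\geq\lambda_1-\lambda_2$), I would take $T:=K\,C\log^2 P$ for a sufficiently large absolute constant $K$ — this is exactly the order asserted in the lemma, because $C\log^2 P=\tfrac{\lambda_1\log(n/\delta)\log^2(n\lambda_1/\delta(\lambda_1-\lambda_2)^2)}{\delta^2(\lambda_1-\lambda_2)^2}$. Then I would check that substituting this $T$ back in does not blow up the logarithm: on the one hand $\tfrac{nT}{\delta}\geq P$, so $\log\tfrac{nT}{\delta}\geq\log P$; on the other hand $\tfrac{nT}{\delta}=P\cdot\tfrac{K\log(n/\delta)\log^2 P}{\delta^2}$, whose logarithm is $O(\log P)$ because $\log\tfrac1\delta\leq\log P$ and the $\log\log$ terms are lower order. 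Hence $\log\tfrac{nT}{\delta}=\Theta(\log P)$, so $c\,C\log^2\tfrac{nT}{\delta}=\Theta(C\log^2 P)=\Theta(T/K)$, and choosing $K$ larger than the hidden constant gives $T\geq c\,C\log^2\tfrac{nT}{\delta}$ as required. The other, weaker constraints under which this lemma is invoked ($T=\Omega(1/(\eta\lambda_1))$, $t_0+t'\le T$, etc.) are dominated by this bound and follow from the same estimates.

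The main obstacle is precisely this circularity: $\eta$, and therefore the length $t'$ of a single doubling interval, depends on $\Lambda_T$, which depends on $T$, while $T$ must in turn dominate $t'\log\Lambda'$. The only genuine content is showing that the logarithm $\log\tfrac{nT}{\delta}$ self-stabilizes — does not grow — under the substitution $T\mapsto\Theta(C\log^2(\cdot))$, which relies on the standing convention (cf. the footnote following~\autoref{table:compare}) that $n$, $1/\delta$, $1/(\lambda_1-\lambda_2)$ and $\lambda_1/(\lambda_1-\lambda_2)$ are polynomially related, so that nested logarithms collapse to a single $\Theta(\log P)$. Everything else in the argument is routine $\Theta(\cdot)$ bookkeeping.
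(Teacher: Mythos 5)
Your argument is correct and follows essentially the same route as the paper's proof: write $t'=\Theta\bigl(1/(\eta(\lambda_1-\lambda_2))\bigr)$, substitute $\eta=\Theta\bigl((\lambda_1-\lambda_2)\delta^2/(\lambda_1\log^2(nT/\delta))\bigr)$ and $\log\Lambda'=\Theta(\log(n/\delta))$ to reduce $T\ge t'\log\Lambda'$ to the implicit inequality $T\ge\Theta\bigl(A\log^2(nT/\delta)\bigr)$ with $A=\lambda_1\log(n/\delta)/(\delta^2(\lambda_1-\lambda_2)^2)$, and then solve by setting $T=\Theta(A\log^2 nA)$, which is exactly the stated form. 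You are slightly more explicit than the paper in verifying that $\log(nT/\delta)=\Theta(\log P)$ is self-consistent for this choice; note, though, that this step does not actually need the ``standing convention'' of polynomial relatedness you cite from the footnote near Table~\ref{table:compare} (that footnote only explains how the table's logarithmic degrees are graded) --- your own inequalities $P\ge 1/\delta$, $n/\delta\le P$, and the domination of the $\log\log$ terms already close the loop unconditionally.
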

\begin{proof}[Proof of~\autoref{lem: param global}]
Since $8\geq H^{t'}\geq 4$, we have that $t' = \Theta(1/\eta(\lambda_1 - \lambda_2))$. Now
\[
t'\log\Lambda' = \Theta\left(\frac{\lambda_1\log\Lambda'\log^2\frac{nT}{\delta}}{\delta^2(\lambda_1 - \lambda_2)^2}\right) \, .
\]
For notational convenience we let $A = \frac{\lambda_1\log\Lambda'}{\delta^2(\lambda_1 - \lambda_2)^2}$. Then we need $T\geq A\log^2\frac{nT}{\delta}$ and 
\[
T = \Theta(A \log^2 nA) = \Theta\left(\frac{\lambda_1\log\frac{n}{\delta}\log^2 \frac{n\lambda_1}{\delta(\lambda_1 - \lambda_2)^2}}{\delta^2(\lambda_1 - \lambda_2)^2}\right)
\]
satisfied the requirement as desired.
\end{proof}

\begin{theorem}[From global to local]\label{thm: global efficient stopping}
Let $n\in\N, \epsilon,\delta\in(0, 1)$. Let $T=\Theta\left(\frac{\lambda_1\log\frac{n}{\delta}\log^2 \frac{n\lambda_1}{\delta(\lambda_1 - \lambda_2)^2}}{\delta^2(\lambda_1 - \lambda_2)^2}\right)$. Let $\tau$ be the stopping time of $\bw_{t, 1}^2 \geq \frac{2}{3}$. Let 
\[
\eta = O\left(\frac{(\lambda_1 - \lambda_2)}{\lambda_1\Lambda^2\log\frac{nT}{\delta}}\right), T_0 = \frac{\lceil \log \frac{2}{3\Lambda'} \rceil + 1}{\eta(\lambda_1 - \lambda_2)} \, .
\]
Then we have 
\[
\Pr\left[\tau > T_0\right] < \delta \, .
\]
\end{theorem}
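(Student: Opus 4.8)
The plan is to run the geometric scheme displayed above. First I would condition on the two ``global'' good events supplied by~\autoref{eq: global D stopping init}, writing $\mathcal{E}:=\mathcal{C}_{init}^{T}\cap\{\xi_{T}>T\}$; by that display $\Pr[\neg\mathcal{E}]\le\Pr[\neg\mathcal{C}_{init}^{T}]+\Pr[\xi_{T}<T\mid\mathcal{C}_{init}^{T}]\le\tfrac{\delta}{2}+\tfrac{\delta}{4n^{2}}$, so it suffices to prove $\Pr[\tau>T_{0}\mid\mathcal{E}]<\tfrac{\delta}{4}$. Next I would cut $[0,T_{0}]$ into $\ell$ consecutive blocks of equal length $t'$, where $t'$ is chosen with $4\le H^{t'}\le 8$ (so $t'=\Theta(1/\eta(\lambda_{1}-\lambda_{2}))$) and $\ell=\Theta(\log\Lambda')$ is the number of doublings needed to climb from $\bw_{0,1}^{2}\ge1/\Lambda'$ up to $\tfrac{2}{3}$; with $t_{k}:=kt'$ this makes $\ell t'=\Theta(T_{0})$, matching the displayed $T_{0}$ up to the choice of $t'$, and by~\autoref{lem: param global} the choice of $\eta$ and $T$ guarantees $t_{\ell}\le T$ together with all numerical hypotheses ($\delta'=\Theta(\delta/nT)$, $8\ge H^{t'}\ge 4$, $T=\Omega(1/\eta\lambda_{1})$, and $\eta=\Theta((\lambda_1-\lambda_2)/(\lambda_1\Lambda^2\log\tfrac{1}{\delta'}))$) needed below. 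Set the per-block targets $a_{k}=2^{k}/\Lambda'$ for $0\le k<\ell$ and $a_{\ell}=\tfrac{2}{3}$ (note $a_{\ell-1}\in[\tfrac{1}{3},\tfrac{2}{3})$).

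The core is an induction on $k$ showing that, conditioned on $\mathcal{E}$, with probability at least $1-k\delta'$ one has $\bw_{t_{k},1}^{2}\ge a_{k}$ and in addition $\bw_{s,1}^{2}\ge\tfrac{1}{2} a_{k}$ for every $s$ in the $k$-th block (the latter keeps the stopping time $\psi$ from ever firing). The base case $k=0$ is immediate because $\mathcal{C}_{init}^{T}$ already gives $\bw_{0,1}^{2}\ge 1/\Lambda'=a_{0}$. For the inductive step I would restrict to the inductively good event $\{\bw_{t_{k-1},1}^{2}\ge a_{k-1}\}$; on $\mathcal{E}$ this forces $\psi>t_{k-1}$ and $\xi_{T}>T\ge t_{k-1}$, so $\bw_{t_{k-1}\wedge\psi\star\xi_{T},1}^{2}=\bw_{t_{k-1},1}^{2}\ge a_{k-1}$, which is exactly the starting hypothesis of~\autoref{lem:discrete phase 1 small stopping time} for the $k$-th block with $t_{0}=t_{k-1}$ and target $a_{k}$ (for the last block one uses target $\tfrac{2}{3}$, which is fine since $a_{\ell-1}\ge\tfrac{1}{3}$). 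That lemma gives $\Pr[\tau_{a_{k}}>t_{k}\mid\mathcal{E}]<\delta'$, and re-reading its proof (the ODE-trick estimate~\autoref{cor:discrete phase 1 ODE}, exactly as in~\autoref{claim:global pull out 2}) upgrades this to the deterministic-time statement $\bw_{t_{k},1}^{2}\ge a_{k}$ and $\bw_{s,1}^{2}\ge\tfrac{1}{2}a_{k}$ throughout the block on the same event; in particular $\psi$ does not fire in the block, so the chain continues. A union bound with the induction hypothesis closes the step.

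Putting the pieces together, on the conditionally-probability-$(1-\ell\delta')$ good event we reach $\bw_{t_{\ell},1}^{2}\ge a_{\ell}=\tfrac{2}{3}$, so the hitting time $\tau$ of $\{\bw_{t,1}^{2}\ge\tfrac{2}{3}\}$ satisfies $\tau\le t_{\ell}\le T_{0}$; equivalently the final block directly yields $\Pr[\tau>T_{0}\mid\mathcal{E}]\le\ell\delta'$. Since $\ell=\Theta(\log\Lambda')$ and $T\ge t'\log\Lambda'$ by~\autoref{lem: param global}, $\ell\delta'=\Theta(\delta\log\Lambda'/nT)<\tfrac{\delta}{4}$. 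Combining with the bound on $\Pr[\neg\mathcal{E}]$ gives $\Pr[\tau>T_{0}]\le\Pr[\neg\mathcal{E}]+\Pr[\tau>T_{0}\mid\mathcal{E}]<\tfrac{\delta}{2}+\tfrac{\delta}{4n^{2}}+\tfrac{\delta}{4}<\delta$, as claimed.

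The main obstacle is not any single estimate but the bookkeeping of nested conditioning: across the blocks one must simultaneously keep $\xi_{T}$ past $T$ (which via~\autoref{lem: global bound} is what makes $|y_{t}|$ and hence the bounded differences of the noise small), keep $\psi$ past the current time (so that $1/n=O(\Lambda^{2}\bw_{t,1}^{2})$ and the shifted-stopped process agrees with the true process), and keep the per-block stopping time $\tau_{a}$ from firing early — and then pull all of these out without inflating the failure probability. Fortunately this is exactly what the pull-out machinery already accomplished: $\xi_{T}$ is handled globally by~\autoref{thm: D main} (paid once, inside $\mathcal{E}$), while $\psi$ and $\tau_{a}$ are handled block-by-block inside~\autoref{lem:discrete phase 1 small stopping time}. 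The only genuinely new work is the clean union bound over the $\Theta(\log\Lambda')$ blocks, checking that~\autoref{lem: param global}'s parameters satisfy the hypotheses of~\autoref{lem:discrete phase 1 small stopping time} uniformly over all blocks, and the routine observation that the lemma's hitting-time conclusion can be read off its proof as a deterministic-time lower bound, so the output of one block becomes the input of the next.
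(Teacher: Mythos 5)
Your proof is correct and follows the same overall strategy as the paper's: the geometric doubling scheme from $1/\Lambda'$ to $2/3$, the block length $t'$ chosen so that $4\le H^{t'}\le8$, the per-block application of \autoref{lem:discrete phase 1 small stopping time}, the use of \autoref{lem: param global} to satisfy its hypotheses, and a union bound over $\Theta(\log\Lambda')$ blocks, with the global events $\mathcal{C}_{init}^T$ and $\{\xi_T>T\}$ paid for up front via \autoref{eq: global D stopping init}.

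The one genuine difference is in the bookkeeping. The paper keeps track of the level-hitting times $\tau_{v_i}$ and uses the combinatorial decomposition $\Pr[\tau_{v_m}>mt']\le\sum_i\Pr[\tau_{v_i}>\tau_{v_{i-1}}+t']$, then applies \autoref{lem:discrete phase 1 small stopping time} with $t_0=\tau_{v_{i-1}}$ (a stopping time rather than a deterministic time, so implicitly a strong-Markov restart is being invoked). You instead induct at the deterministic endpoints $t_k=kt'$ and maintain the stronger invariant that the value $\bw_{t_k,1}^2$ is above $a_k$ (and that the trajectory never dips below $a_j/2$ inside block $j$, so $\psi$ never fires). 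As you note, the conclusion of \autoref{lem:discrete phase 1 small stopping time} as stated is a hitting-time bound, not a deterministic-time bound; to close your induction you must open its proof and read off the intermediate ODE-trick estimate $\bw_{t_0+t,1}^2\geq H^t\cdot a/2$, which on the good noise event $\neg\mathcal{B}$ gives $\bw_{t_0+t',1}^2\geq 2a$ at the deterministic endpoint and $\geq a/2$ throughout the block. That is a legitimate move and makes the hypothesis of the next block's lemma invocation ($\bw_{t_k\wedge\psi\star\xi_T,1}^2\geq a_k/2$) manifestly a deterministic-$\mathcal{F}_{t_k}$-measurable event, at the cost of not being able to cite the lemma as a black box. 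The paper's hitting-time chain avoids re-opening the lemma at the cost of the (unspoken) strong-Markov step. Both routes reach the same constants ($\tfrac{\delta}{2}+\tfrac{\delta}{4n^2}+\tfrac{\delta}{4}<\delta$) and the same conclusion.
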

\begin{proof}[Proof of~\autoref{thm: global efficient stopping}]
Choose $t'$ such that $8\geq H^{t'}\geq 4$ and let $m = \lceil \log \frac{2}{3\Lambda'} \rceil + 1$, $v_i = \frac{1}{\Lambda'}2^i$ for $i = 0,\dotsb, m-1$ and $v_{m} = \frac{2}{3}$. Let $\tau_{v_i}$ be the stopping time of $\lbrace \bw_{t, 1}^2 \geq v_i \rbrace$ and let $T_0 = mt'$. We will apply~\autoref{lem:discrete phase 1 small stopping time} with $\delta' = \frac{\delta}{4m}$. Notice that since $\log \Lambda' = O(nT)$, we can choose $\delta'$ this way. Now we have
\begin{align*}
\Pr[\tau > T_0] &= \Pr[\tau_{v_m} > mt']\\
&\leq \Pr[\tau_{v_m} > mt'\, |\, \xi_T > T, \mathcal{C}_{init}^T ] + \Pr[\xi \leq T\, |\, \mathcal{C}_{init}^T ] +  \Pr[\neg\mathcal{C}_{init}^T ] \, .
\intertext{By~\autoref{eq: global D stopping init} and union bound, we have}
&< \sum_{i=1}^m\Pr[\tau_{v_i}>it', \tau_{v_{i-1}}\leq (i-1)t'\, |\, \xi_T > T, \mathcal{C}_{init}^T ] + \frac{\delta}{4n^2} + \frac{\delta}{2}\\
&\leq \sum_{i=1}^m\Pr[\tau_{v_i}>\tau_{v_{i-1}} + t'\, |\, \xi_T > T, \mathcal{C}_{init}^T ] + \frac{\delta}{4n^2} + \frac{\delta}{2} \, .
\intertext{By~\autoref{lem:discrete phase 1 small stopping time}, each summand can be bounded by $\frac{\delta}{4m}$, we have}
&< \frac{\delta m}{4m} + \frac{\delta}{4n^2} + \frac{\delta}{2}\leq \delta
\end{align*}
as desired.
\end{proof}
\subsection{Combining Theorem \ref{thm: global efficient stopping} with the local analysis}\label{sec: combine}
In this section, since we have shown that $\bw_{t, 1}^2$ efficiently reaches $2/3$ in~\autoref{thm: global efficient stopping}, by combining~\autoref{thm: global efficient stopping}, the local convergence (\autoref{thm:discrete phase 2}) and the finite continual learning (\autoref{thm: finite continual learning}), we derive~\autoref{thm:discrete global main}. 
\begin{proof}[Proof of~\autoref{thm:discrete global main}]
Let $\tau$ to be the hitting time of $\bw_{t, 1}^2 > 1-\frac{\epsilon}{2}$. With \[
\eta = \Theta\left(\frac{\lambda_1 - \lambda_2}{\lambda_1}\cdot\left(\frac{\epsilon}{\log\frac{\log\frac{n}{\epsilon}}{\delta}} \bigwedge \frac{\delta^2}{\log^2 \frac{\lambda_1 n}{\delta(\lambda_1 - \lambda_2)^2}}\right)\right) \, ,
\]
we can apply~\autoref{thm: global efficient stopping},~\autoref{thm:discrete phase 2} to get that
\[
\Pr[\tau > T] < \frac{\delta}{2}
\]
where $T = \Theta(\frac{\log\frac{1}{\epsilon} + \log\Lambda'}{\eta(\lambda_1 - \lambda_2)}) =\Theta(\frac{\log\frac{1}{\epsilon} + \log\frac{n}{\delta}}{\eta(\lambda_1 - \lambda_2)})$. Now we initialize~\autoref{thm: finite continual learning} with $t_0 = \Theta(\log\frac{1}{\epsilon} + \log\frac{n}{\delta})$ with failure probability $\frac{\delta}{2}$ to get
\[
\Pr[\exists 1\leq t\leq T, \bw^2_{\tau + t, 1} < 1-\epsilon] < \frac{\delta}{2} \, .
\]
Since $T\in [\tau, \tau + T]$ if $\tau \leq T$, now by union bounding two inequalities, we have
\[
\Pr[\bw^2_{T, 1} < 1-\epsilon] < \delta \, .
\]
\end{proof}

\section{The Cross Term is Small}\label{sec: D main}

In this section, we will prove~\autoref{thm: D main} to finish the proof of the global convergence. We recall the motivation again. In order to keep the bounded differences of the noise in the global convergence small, we need to make $f_{t, n}(\bw_{t-1})$ small with high probability. Concretely, the stopping time $\xi_{p,\delta}$ stops whenever $|f_{t,n}(\bw_{(t-1)\wedge\psi_{p,\delta}})|>2\Lambda_{p,\delta}$ (see~\autoref{def:stopping time yt}). Therefore, the main goal of this section is to show that $\xi_{p, \delta}$ is large with high probability in~\autoref{thm: D main}. 

\stoppingmain*

In order to prove~\autoref{thm: D main}, we consider the auxiliary processes $(f_{t,2},\dots,f_{t,n})$ (see~\autoref{def: auxiliary stopping}) as a vector and use a vector linearization as well as the ODE trick in~\autoref{cor:D linearization vector} and~\autoref{cor:D ODE}. We then obtain the concentration on the stopped processes in~\autoref{lem:D stopped concentration}. Finally, to prove the main theorem by induction, we prove the induction step in~\autoref{lem: D main} by carefully pulling out the stopping time to finish the proof.





To bound the stopping time $\xi$, we need to show the concentration of $f_{t,j}(\bw_{(t-1)\wedge\psi})$ and as before the linearization and the ODE trick would be our main tools. 

\subsection{Linearization and ODE trick for controlling $|y_t|$}
Let us start with the linearization and the ODE trick for function $f_{t,j}$ in this subsection.

\begin{lemma}[Linearization]\label{lem:D linearization}
Let $t\in[T]$, $s\in[t-1]$. Let $\bw_s=\bw_{s-1}+\eta\bz_s$ where $\bz_s=y_s(\bx_s-y_s\bw_{s-1})$. Then there exists $\overline{\bw}_{s-1}=\bw_{s-1}+c\eta\bz_s$ for some $c\in[0,1]$ such that for all $j,\, 2\leq j\leq n$,
\[
f_{t,j}(\bw_s) = (1-\eta(\lambda_1-\lambda_j))f_{t,j}(\bw_{s-1}) + \eta \sum_{i=2}^{j-1}(\lambda_i-\lambda_{i+1})f_{t,i}(\bw_{s-1}) +  A_{s,j}^{(t)} 
\]
where
\[
A_{s,j}^{(t)} = \eta \nabla f_{t,j}(\bw_{s-1})^\top \left(\bz_{s}-\Exp[\bz_{s}\ |\ \mathcal{F}_{s-1}]\right)  + \eta^2 \bz_{s}^\top \nabla^2 f_{t,j}(\overline{\bw}_{s-1})\bz_{s} \, .
\]
\end{lemma}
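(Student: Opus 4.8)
The plan is to read off the identity from a second-order Taylor expansion of $f_{t,j}$ along the segment from $\bw_{s-1}$ to $\bw_s=\bw_{s-1}+\eta\bz_s$, and then to identify the conditional mean of the first-order term with the claimed linear combination of the $f_{t,i}$'s. Writing Taylor's formula with Lagrange remainder, there is $c\in[0,1]$ and $\overline{\bw}_{s-1}=\bw_{s-1}+c\eta\bz_s$ with
\[
f_{t,j}(\bw_s)=f_{t,j}(\bw_{s-1})+\eta\,\nabla f_{t,j}(\bw_{s-1})^\top\bz_s+\eta^2\,\bz_s^\top\nabla^2 f_{t,j}(\overline{\bw}_{s-1})\bz_s,
\]
where the harmless constant $\tfrac12$ in front of the Hessian term is folded into $A_{s,j}^{(t)}$ (it is immaterial, since $A_{s,j}^{(t)}$ only ever enters the later analysis through bounds). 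Now split $\eta\,\nabla f_{t,j}(\bw_{s-1})^\top\bz_s=\eta\,\nabla f_{t,j}(\bw_{s-1})^\top(\bz_s-\Exp[\bz_s\mid\mathcal F_{s-1}])+\eta\,\nabla f_{t,j}(\bw_{s-1})^\top\Exp[\bz_s\mid\mathcal F_{s-1}]$; the first summand together with the Hessian term is exactly $A_{s,j}^{(t)}$, so everything reduces to showing that the \emph{drift} term $\nabla f_{t,j}(\bw_{s-1})^\top\Exp[\bz_s\mid\mathcal F_{s-1}]$ equals $-(\lambda_1-\lambda_j)f_{t,j}(\bw_{s-1})+\sum_{i=2}^{j-1}(\lambda_i-\lambda_{i+1})f_{t,i}(\bw_{s-1})$.

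For that I would compute $\nabla f_{t,j}$ directly from the quotient form $f_{t,j}(\bw)=\big(\sum_{i=2}^j\bx_{t,i}\bw_i\big)/\bw_1$: one gets $\partial_{\bw_1}f_{t,j}(\bw)=-f_{t,j}(\bw)/\bw_1$, $\partial_{\bw_i}f_{t,j}(\bw)=\bx_{t,i}/\bw_1$ for $2\le i\le j$, and $0$ otherwise. Two consequences: (i) $f_{t,j}$ is homogeneous of degree $0$, hence $\nabla f_{t,j}(\bw)^\top\bw=0$; and (ii) $\nabla f_{t,j}(\bw)^\top\diag(\lambda)\bw=-\lambda_1 f_{t,j}(\bw)+\frac1{\bw_1}\sum_{i=2}^j\lambda_i\bx_{t,i}\bw_i$. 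Using $\Exp[\bx_s\bx_s^\top\mid\mathcal F_{s-1}]=\diag(\lambda)$ in the diagonal case, $\Exp[\bz_s\mid\mathcal F_{s-1}]=\diag(\lambda)\bw_{s-1}-(\bw_{s-1}^\top\diag(\lambda)\bw_{s-1})\,\bw_{s-1}$ (this is exactly the RHS of the continuous Oja drift, cf.\ \autoref{eq:continuous Oja 2}); the scalar multiple of $\bw_{s-1}$ is annihilated by (i), leaving only the term $\nabla f_{t,j}(\bw_{s-1})^\top\diag(\lambda)\bw_{s-1}$.

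The last step is to rewrite $\frac1{\bw_1}\sum_{i=2}^j\lambda_i\bx_{t,i}\bw_i$ through the telescoping identity $\bx_{t,i}\bw_i/\bw_1=f_{t,i}(\bw)-f_{t,i-1}(\bw)$ (with the convention $f_{t,1}\equiv 0$) followed by Abel summation, which yields $\lambda_j f_{t,j}(\bw)+\sum_{i=2}^{j-1}(\lambda_i-\lambda_{i+1})f_{t,i}(\bw)$; combining with the $-\lambda_1 f_{t,j}(\bw)$ term gives precisely $-(\lambda_1-\lambda_j)f_{t,j}(\bw)+\sum_{i=2}^{j-1}(\lambda_i-\lambda_{i+1})f_{t,i}(\bw)$, and assembling everything produces the stated equation. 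This is an elementary computation; the only places that need a word of care are (a) Taylor's theorem requires $f_{t,j}\in C^2$ on the whole segment $[\bw_{s-1},\bw_s]$, i.e.\ that the first coordinate does not vanish there, which holds in the regime where the lemma is actually invoked (the stopping time $\psi$ keeps $\bw_{s,1}^2$ bounded away from $0$ and $\eta$ is small, so the step $\eta\bz_s$ is tiny by \autoref{lem:oja w norm ub}), and (b) the bookkeeping in the Abel summation with the boundary convention $f_{t,1}=0$. There is no substantive obstacle beyond these two; the "hard part" is really just keeping the summation-by-parts indices straight.
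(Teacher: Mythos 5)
Your proof is correct and follows essentially the same route as the paper's: second-order Taylor expansion around $\bw_{s-1}$, splitting the first-order term into a martingale-difference part (which together with the Hessian term is $A_{s,j}^{(t)}$) and a drift part, and then telescoping (Abel summation with $f_{t,1}\equiv 0$) to identify the drift as $-(\lambda_1-\lambda_j)f_{t,j}+\sum_{i=2}^{j-1}(\lambda_i-\lambda_{i+1})f_{t,i}$. Your use of degree-$0$ homogeneity ($\nabla f_{t,j}(\bw)^\top\bw=0$ by Euler's identity) to kill the $(\bw_{s-1}^\top\diag(\lambda)\bw_{s-1})\bw_{s-1}$ component of $\Exp[\bz_s\mid\mathcal F_{s-1}]$ is a slightly tidier packaging of the cancellation the paper carries out by hand, and your caveats about the omitted $\tfrac12$ on the Hessian term and about $\bw_{1}\neq 0$ along the whole segment are both legitimate points that the paper leaves implicit.
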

\begin{proof}[Proof of~\autoref{lem:D linearization}]
This is a direct application of Taylor expansion. Concretely, there exists $\overline{\bw}_{s-1}=\bw_{s-1}+c\eta\bz_s$ for some $c\in[0,1]$ such that 
\begin{align*}
f_{t,j}(\bw_s) &= f_{t,j}(\bw_{s-1}) + \eta\nabla f_{t,j}(\bw_{s-1})^\top \bz_{s}  + \eta^2 \bz_{s}^\top \nabla^2 f_{t,j}(\overline{\bw}_{s-1})\bz_{s} \, .
\intertext{Note that $\frac{\partial f_{t,j}(\bw)}{\partial\bw_1}=-\frac{f_{t,j}(\bw)}{\bw_{1}}$ and $\frac{\partial f_{t,j}(\bw)}{\partial\bw_i}=\frac{\mathbf{1}_{i\leq j}\cdot\bx_{t,i}}{\bw_1}$ for $i=2,\dots,n$.We have}
&=f_{t,j}(\bw_{s-1})-\eta\frac{f_{t,j}(\bw_{s-1})}{\bw_{s-1,1}}\cdot\bz_{s,1} + \eta\frac{\sum_{i=2}^j\bx_{s,i}\bz_{s,i}}{\bw_{s-1,1}} + \eta^2 \bz_{s}^\top \nabla^2 f_{t,j}(\overline{\bw}_{s-1})\bz_{s} \, .
\intertext{Next, recall that $\Exp[\bz_{s,i}\ |\ \mathcal{F}_{s-1}]=(\lambda_i-\bw_{s-1}^\top\diag(\lambda)\bw_{s-1})\cdot\bw_{s-1,i}$. By adding and subtracting the expectations, the equation becomes}
&=f_{t,j}(\bw_{s-1}) -\eta\lambda_1f_{t,j}(\bw_{s-1}) + \eta\frac{\sum_{i=2}^j\lambda_i\bx_{s,i}\bw_{s-1,i}}{\bw_{s-1,1}}\\
&+\eta\left(\bw_{s-1}^\top\diag(\lambda)\bw_{s-1}\right)\cdot\left(f_{t,j}(\bw_{s-1}) -\frac{\sum_{i=2}^j\bx_{s,i}\bw_{s-1,i}}{\bw_{s-1,1}}\right) + A_{s, j}^{(t)} \, .
\intertext{Observe that the two terms in the parenthesis becomes $0$ after cancelling out with each other. Finally, by adding and subtracting $\eta\lambda_if_{t,i}(\bw_{s-1})$ for each $i=2,3,\dots,j$, we have}
&= \left(1-\eta(\lambda_1-\lambda_j)\right)\cdot f_{t,j}(\bw_{s-1}) + \eta\sum_{i=2}^{j-1}(\lambda_i-\lambda_{i+1})f_{t,i}(\bw_{s-1}) + A_{s, j}^{(t)} 
\end{align*}
as desired.
\end{proof}

We can write the above lemma in a vector form. For any $t\in[T]$, let $\mathbf{f}_t(\bw),\bA_{s}^{(t)}\in\Real^{n-1}$ be $(n-1)$-dimensional vectors where the $i^\text{th}$ coordinates of them are $f_{t,i+1}(\bw)$, $A_{s,i+1}^{(t)}$ respectively. The following is an immediate corollary of~\autoref{lem:D linearization} by rewriting everything into a vector form.

\begin{corollary}[Linearization in a vector form]\label{cor:D linearization vector}
For any $t\in[T]$ and $s\in[t-1]$, we have
\[
\bff_t(\bw_s) = H\mathbf{f}_t(\bw_{s-1})+\bA_{s}^{(t)}
\]
where
\[
H=\begin{pmatrix}
1-\eta(\lambda_1-\lambda_2)&0&0&\cdots&0\\
\eta(\lambda_2-\lambda_3)&1-\eta(\lambda_1-\lambda_3)&0&\cdots&0\\
\eta(\lambda_2-\lambda_3)&\eta(\lambda_3-\lambda_4)&1-\eta(\lambda_1-\lambda_4)&\cdots&0\\
\vdots&\vdots&\vdots&\ddots&\vdots\\
\eta(\lambda_2-\lambda_3)&\eta(\lambda_3-\lambda_4)&\eta(\lambda_4-\lambda_5)&\cdots&1-\eta(\lambda_1-\lambda_n)
\end{pmatrix} \, .
\]
\end{corollary}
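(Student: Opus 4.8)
The plan is to transcribe \autoref{lem:D linearization} coordinate-by-coordinate into a single matrix-vector identity; there is essentially no analytic content here, only index bookkeeping. Recall that $\bff_t(\bw)\in\Real^{n-1}$ has $k$-th coordinate $f_{t,k+1}(\bw)$ for $1\le k\le n-1$, and likewise $\bA_s^{(t)}$ has $k$-th coordinate $A_{s,k+1}^{(t)}$. So the $k$-th coordinate of the claimed equation $\bff_t(\bw_s)=H\bff_t(\bw_{s-1})+\bA_s^{(t)}$ is exactly \autoref{lem:D linearization} applied with $j=k+1$, once we identify the row $(H_{k,1},\dots,H_{k,n-1})$.

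First I would write out, for $j=k+1$,
\[
f_{t,j}(\bw_s) = \bigl(1-\eta(\lambda_1-\lambda_j)\bigr)f_{t,j}(\bw_{s-1}) + \eta\sum_{i=2}^{j-1}(\lambda_i-\lambda_{i+1})f_{t,i}(\bw_{s-1}) + A_{s,j}^{(t)},
\]
and match coefficients against $\sum_{\ell=1}^{n-1}H_{k,\ell}\,f_{t,\ell+1}(\bw_{s-1})$. The coefficient of $f_{t,j}(\bw_{s-1})=f_{t,k+1}(\bw_{s-1})$ is $1-\eta(\lambda_1-\lambda_{k+1})$, which forces the diagonal entry $H_{k,k}=1-\eta(\lambda_1-\lambda_{k+1})$. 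For $2\le i\le j-1$, setting $\ell=i-1$ (so $1\le\ell\le k-1$, strictly below the diagonal), the coefficient of $f_{t,i}=f_{t,\ell+1}$ is $\eta(\lambda_i-\lambda_{i+1})=\eta(\lambda_{\ell+1}-\lambda_{\ell+2})$, hence $H_{k,\ell}=\eta(\lambda_{\ell+1}-\lambda_{\ell+2})$. Since the sum in \autoref{lem:D linearization} stops at $i=j-1$, no term ever lands in a column $\ell>k$, so $H_{k,\ell}=0$ above the diagonal. Reading these entries back in the original indexing gives precisely the displayed lower-triangular $H$: its $k$-th row is $\eta(\lambda_2-\lambda_3),\eta(\lambda_3-\lambda_4),\dots,\eta(\lambda_k-\lambda_{k+1}),\,1-\eta(\lambda_1-\lambda_{k+1}),\,0,\dots,0$. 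Collecting the $A_{s,j}^{(t)}$ into $\bA_s^{(t)}$ by definition completes the identity.

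The only point requiring care — the "main obstacle," such as it is — is the off-by-one shift between the function index $j\in\{2,\dots,n\}$ and the vector/matrix index $k=j-1\in\{1,\dots,n-1\}$; one has to check that the summation range $i=2,\dots,j-1$ translates to column indices $1,\dots,k-1$, so that the claimed sub-diagonal pattern holds and nothing spills above the diagonal. No inequality (in particular not $\lambda_i\ge\lambda_{i+1}$) is needed — this is a purely algebraic rewriting. I would close by noting that $H$ depends only on $\eta$ and the eigenvalues, not on $s$ or $t$, which is exactly what lets us later iterate it via the matrix ODE trick (\autoref{lem:ODE trick matrix}) in \autoref{cor:D ODE}.
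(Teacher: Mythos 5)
Your proof is correct and follows the same route as the paper, which simply states the corollary as an immediate rewriting of \autoref{lem:D linearization} into vector form; you have merely carried out the index bookkeeping (the shift $k=j-1$ and the translation of the summation range $i=2,\dots,j-1$ into columns $\ell=1,\dots,k-1$) explicitly, and it checks out against the displayed matrix $H$.
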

By the ODE trick for vector (see~\autoref{lem:ODE trick matrix}), we immediately have the following corollary for a closed form solution to $\bff_{t}(\bw_s)$.
\begin{corollary}[ODE trick]\label{cor:D ODE}
For any $t\in[T]$,  $s\in[t -1]$, we have
\[
\bff_{t}(\bw_{s})=H^s\bff_{t}(\bw_{0}) + \sum_{s'= 1}^{s}H^{s - s'}\bA_{s'}^{(t)} \, .
\]
\end{corollary}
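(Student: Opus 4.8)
The statement is an immediate consequence of unrolling the one-step recursion of~\autoref{cor:D linearization vector}, so the plan is short. First I would recall that~\autoref{cor:D linearization vector} gives, for every $t\in[T]$ and every $s\in[t-1]$, the linear recursion $\bff_t(\bw_s)=H\bff_t(\bw_{s-1})+\bA_s^{(t)}$, in which the transition matrix $H$ is \emph{the same} at every step (it does not depend on $s$). This is precisely the hypothesis of the vector ODE trick,~\autoref{lem:ODE trick matrix}, applied with $X_s:=\bff_t(\bw_s)$, $H_s:=H$ for all $s$, $A_s:=\bA_s^{(t)}$, and $t_0:=0$. Because the matrices $H_s$ all coincide, the telescoped matrix product $\prod_{i=1}^{s}H_i$ collapses to the power $H^s$, and the partial product standing in front of the $s'$-th noise term collapses to $H^{s-s'}$; substituting these into the conclusion of~\autoref{lem:ODE trick matrix} produces exactly
\[
\bff_t(\bw_s)=H^s\bff_t(\bw_0)+\sum_{s'=1}^{s}H^{s-s'}\bA_{s'}^{(t)} .
\]

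Alternatively, and just as quickly, I would give a self-contained induction on $s$. The base case $s=0$ is trivial since $H^0=I$ and the sum is empty, so both sides equal $\bff_t(\bw_0)$. For the inductive step, assume the formula at $s-1$, apply the one-step recursion, and distribute $H$ through the sum:
\[
\bff_t(\bw_s)=H\left(H^{s-1}\bff_t(\bw_0)+\sum_{s'=1}^{s-1}H^{s-1-s'}\bA_{s'}^{(t)}\right)+\bA_s^{(t)}=H^s\bff_t(\bw_0)+\sum_{s'=1}^{s}H^{s-s'}\bA_{s'}^{(t)} ,
\]
which closes the induction.

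There is no real obstacle in this corollary; the only point worth a moment's care is that the matrix $H$ in~\autoref{cor:D linearization vector} is time-homogeneous, which is exactly what lets the general matrix-product form of~\autoref{lem:ODE trick matrix} simplify to honest powers $H^{s-s'}$. I would flag this because the lower-triangular structure of $H$ (same diagonal entries of the form $1-\eta(\lambda_1-\lambda_j)$ at every step) is what will later make $\|H^{s-s'}\|$ controllable when the noise terms $\bA_{s'}^{(t)}$ are bounded in the proof of~\autoref{thm: D main}.
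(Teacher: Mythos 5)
Your proof is correct and follows essentially the same route as the paper: apply the vector ODE trick (\autoref{lem:ODE trick matrix}) to the one-step recursion from \autoref{cor:D linearization vector} with time-homogeneous transition matrix $H$ and $t_0=0$, so the telescoping products collapse to powers of $H$. The alternative direct induction you sketch is a fine self-contained rephrasing of the same telescoping.
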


\subsection{Concentration of the noise term}\label{sec:D concentration}
We want to control the noise term in~\autoref{cor:D ODE}. However, same as the situation before, we cannot get the concentration for the noise terms of the ODE trick directly. As a consequence, we have to introduce a new stopping time $\tau_t$ to make sure the bounded difference of the stopped processes are small enough for the martingale concentration inequality.

For a fixed $t\in[T]$, we define a stopping time $\tau_t$ for the noise terms from $s=1,2,\dots,t-1$ as follows.
First, we work on a slightly different filtration $\{\mathcal{F}_s^{(t)}\}_{s\in[t-1]}$ than the natural filtration $\{\mathcal{F}_s\}_{s\in[t-1]}$. The key idea is that the stopping time can depend on $\bx_t$ since we only look at the noise term up to $t-1$. Concretely, for each $s\in[t-1]$, let $\mathcal{F}_{s}^{(t)}$ be the $\sigma$-algebra generated by $\{\bx_1,\bx_2,\dots,\bx_s\}\cup\{\bx_t\}$. Note that $\{\mathcal{F}_s^{(t)}\}_{s\in[t-1]}$ is well-defined and $\{A_{t,s,j}\}_{s\in[t-1]}$ is an adapted random process with respect to $\{\mathcal{F}_s^{(t)}\}_{s\in[t-1]}$, \textit{i.e.,} $A_{t,s,j}$ lies in $\mathcal{F}_s^{(t)}$ for all $s\in[t-1]$.
Also, note that $\Exp[\bz_s\ |\ \mathcal{F}_{s-1}]=\Exp[\bz_s\ |\ \mathcal{F}_{s-1}^{(t)}]$. That is, the conditional expectation and conditional variance of $\bz$ are the same with respect to $\{\mathcal{F}_s\}$ and $\{\mathcal{F}_s^{(t)}\}$. The following is the definition of the stopping time $\tau_t$.

\begin{definition}[Stopping time for $\bff_t$]\label{def:stopping time ft}
Let $T\in\Nz,p,\delta\in(0,1)$. For every $t\in[T]$, let  $\Lambda_{p,\delta}$ be the parameter specified in~\autoref{lem:D init} and $\xi,\psi$ be the stopping times specified in~\autoref{def:stopping time yt}. Define $\tau_t$ to be the stopping time for the first $s$ such that $\{\|\bff_t(\bw_{s\wedge\psi\star\xi})\|_\infty>2\Lambda_{p, \delta} \}$.
\end{definition}

The goal of this subsection is to show the concentration of the stopped noise vector as follows.

\begin{lemma}[Concentration for the stopped noise term]\label{lem:D stopped concentration}
Let $T\in \N_{\geq 0}, p,\delta, \delta'\in(0, 1), t\in [T]$. Let $\Lambda_{p,\delta}$ be the parameter specified before and $\xi,\psi,\tau_t$ be the stopping times as chosen before.
Let $\eta = \Theta\left(\frac{(\lambda_1 - \lambda_2)}{\lambda_1\Lambda_{p,\delta}^2\log\frac{1}{\delta'}}\right)$. If $T = \Omega(\frac{1}{\eta\lambda_1}), p \leq\delta$ and the following condition is true
\begin{equation*}
\forall 1\leq t' \leq t-1,\ \Pr[\xi = t'\, |\, \xi \geq t', \mathcal{C}_{init}^{p,\delta}] \leq \frac{1}{n^2T} \, ,
\end{equation*}
then for all $\overline{s}\in[t-1]$,
\[
\Pr\left[\exists i\in [n-1],\ \sum\nolimits_{s=1}^{\overline{s}\wedge\psi\star\xi_{p, \delta}\wedge\tau_t}\left(H^{\overline{s}-s}\bA_{t,s}\right)_i\geq \Lambda_{p,\delta}\ |\  \mathcal{C}_{init}^{p,\delta} \right] < n\delta' \, .
\]
\end{lemma}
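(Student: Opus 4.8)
The plan is to follow the template of \autoref{lem:discrete phase 2 stopped concentration} and \autoref{lem:discrete phase 1 stopped concentration}: fix the terminal index $\overline s\in[t-1]$ and a coordinate $i\in[n-1]$, view the stopped noise sum as a martingale-plus-small-bias adapted process in the running index, bound its bounded difference, conditional variance, and conditional-expectation bias, and apply \autoref{cor:freedman}; a union bound over the $n-1$ coordinates then produces the factor $n$.

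Before that I would isolate a deterministic property of the matrix $H$ from \autoref{cor:D linearization vector}. For $\eta$ small enough all entries of $H$ are nonnegative, and each of its rows sums to exactly $1-\eta(\lambda_1-\lambda_2)$: in row $j$ the off-diagonal mass $\eta\sum_{i=2}^{j-1}(\lambda_i-\lambda_{i+1})=\eta(\lambda_2-\lambda_j)$ exactly makes up the deficit $\eta(\lambda_1-\lambda_j)$ of the diagonal entry $1-\eta(\lambda_1-\lambda_j)$. Writing $\gamma:=1-\eta(\lambda_1-\lambda_2)$, this gives $H\mathbf{1}=\gamma\mathbf{1}$ and, since $H^k\geq0$ entrywise, $|(H^kv)_i|\leq\gamma^k\|v\|_\infty$ for every $k\geq0$ and every $v$. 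Hence the matrix multipliers $H^{\overline s-s}$ in the ODE trick of \autoref{cor:D ODE} behave like the scalar geometric weights $\gamma^{\overline s-s}$, so that any sum $\sum_{s\leq\overline s}\gamma^{\overline s-s}(\cdots)$ has effective length $O(1/(1-\gamma))=O(1/(\eta(\lambda_1-\lambda_2)))$ rather than $T$; this is the analogue here of the geometric-series step used in the earlier stopped-concentration proofs.

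Now fix $i$ and $\overline s$ and set $M_k=\sum_{s=1}^{k\wedge\psi\star\xi\wedge\tau_t}(H^{\overline s-s}\bA^{(t)}_s)_i$. By \autoref{lem: stop difference} and \autoref{lem: star difference} its increment at step $s$ is $\mathbf{1}_{s\leq\psi}\,\mathbf{1}_{s\leq\tau_t}\,\mathbf{1}_{\xi>s}\,(H^{\overline s-s}\bA^{(t)}_s)_i$, where the first two indicators are $\mathcal{F}^{(t)}_{s-1}$-measurable and $\mathbf{1}_{\xi>s}$ is only $\mathcal{F}^{(t)}_s$-measurable. Splitting $\bA^{(t)}_s$ into its first-order part $\eta\nabla f_t(\bw_{s-1})^\top(\bz_s-\Exp[\bz_s\mid\mathcal{F}_{s-1}])$ and its second-order part $\eta^2\bz_s^\top\nabla^2 f_t(\overline\bw_{s-1})\bz_s$, I would use the explicit formulas for $\nabla f_{t,j}$ and $\nabla^2 f_{t,j}$ together with the three stopping-time guarantees on the active event — $\bw_{s-1,1}^2\geq1/(2\Lambda')$ from $\psi$, $\|\bff_t(\bw_{s-1})\|_\infty\leq2\Lambda$ from $\tau_t$, and $|f_{s,n}(\bw_{s-1})|\leq2\Lambda$ from $\xi$ (so $|y_s|=O(\Lambda|\bw_{s-1,1}|)$, and always $|y_s|\le2$ by Cauchy--Schwarz and \autoref{lem:oja w norm ub}) — to collapse all the gradient/Hessian expressions to quantities controlled by $\Lambda$ alone: $|\nabla f_t(\bw_{s-1})^\top\bz_s|=O(\Lambda)$ and $|\bz_s^\top\nabla^2 f_t(\overline\bw_{s-1})\bz_s|=O(\Lambda^3)$ almost surely, the dangerous $1/\bw_{s-1,1}$ factors cancelling against the $\bw_{s-1,1}$ hidden in $y_s$. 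This yields a bounded difference $O(\eta\Lambda)$ (using $\eta\Lambda^2=O(1)$ for the stated $\eta$), a conditional variance $O(\eta^2\lambda_1\Lambda^2)$ per step, a regular conditional-expectation bias $O(\eta^2\Lambda^3)$ per step, and — via \autoref{lem: conditional stopped help} applied with the hypothesis $\Pr[\xi=t'\mid\xi\geq t',\mathcal{C}^{p,\delta}_{init}]\le1/(n^2T)$ — an extra conditional-expectation bias $O(\eta\Lambda/(n^2T))$ per step coming from the non-predictable indicator $\mathbf{1}_{\xi>s}$. Multiplying by $\gamma^{\overline s-s}$ and summing: total variance $O(\eta\lambda_1\Lambda^2/(\lambda_1-\lambda_2))$ and total bias $O(\eta\Lambda^3/(\lambda_1-\lambda_2)+\Lambda/(n^2T(\lambda_1-\lambda_2)))$, both $o(\Lambda)$ using $\eta\Lambda^2/(\lambda_1-\lambda_2)=O(1/\lambda_1)$ and $T=\Omega(1/(\eta\lambda_1))$. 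Feeding these into \autoref{cor:freedman} with $\eta=\Theta((\lambda_1-\lambda_2)/(\lambda_1\Lambda^2\log(1/\delta')))$ makes the deviation term $O(\sqrt{\eta\lambda_1\Lambda^2\log(1/\delta')/(\lambda_1-\lambda_2)})=O(1)\le\Lambda$, so with an appropriate constant in $\eta$ the whole sum is $<\Lambda_{p,\delta}$ except with probability $<\delta'$ conditioned on $\mathcal{C}^{p,\delta}_{init}$; union bounding over $i\in[n-1]$ gives the claimed $n\delta'$.

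The main obstacle is the bookkeeping for the \emph{shifted} stopped process: because $\xi$ inspects $\bx_s$, the increments $\mathbf{1}_{\xi>s}\bA^{(t)}_s$ are not martingale differences, and one must carry the two ordinary, predictable stops $\psi$ and $\tau_t$ in parallel with the shifted stop at $\xi$, invoking \autoref{lem: conditional stopped help} to certify that conditioning on $\{\xi>s\}$ perturbs the conditional first and second moments of $\bz_s$ only by $O(1/(n^2T))$. The secondary technical point — genuinely where most of the computation sits — is checking that the gradient and Hessian bounds for the $f_{t,j}$'s really do reduce to expressions in $\Lambda$ and $\Lambda'$ only once all three stopping-time conditions are imposed simultaneously, i.e. that the apparent blow-up from $1/\bw_{s-1,1}$ is always absorbed by the corresponding factor inside $y_s$.
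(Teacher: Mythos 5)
Your overall strategy is exactly the paper's: decompose $\bA^{(t)}_s$ into a first-order Taylor term $\eta\nabla f_{t,j}^\top\bar{\bz}_s$ and a second-order remainder $\eta^2\bz_s^\top\nabla^2 f_{t,j}\bz_s$, bound bounded difference / conditional variance / conditional-expectation bias of the triply-stopped increments, exploit the geometric decay of the row sums of powers of $H$ to get an effective series length $O(1/(\eta(\lambda_1-\lambda_2)))$, apply the Freedman corollary coordinatewise, and union-bound over $i\in[n-1]$. You also correctly identify that the shifted stop at $\xi$ creates a non-predictable indicator $\mathbf{1}_{\xi>s}$ that must be handled through \autoref{lem: conditional stopped help} under the hypothesis $\Pr[\xi=t'\mid\xi\geq t',\mathcal{C}_{init}]\leq1/(n^2T)$, which is precisely what the paper does in \autoref{lem:D diff var}. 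Your observation that $H$ is entrywise nonnegative with constant row sum $1-\eta(\lambda_1-\lambda_2)$, hence $\|H^k_{(j)}\|_1=(1-\eta(\lambda_1-\lambda_2))^k$, is a slightly cleaner route to the same bound the paper extracts from the explicit $VDV^{-1}$ factorization.

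There is, however, an arithmetic error in the core moment computation. You claim $|\nabla f_{t,j}(\bw_{s-1})^\top\bz_s|=O(\Lambda)$ on the stopped event and deduce a per-step bounded difference of $O(\eta\Lambda)$ and per-step variance $O(\eta^2\lambda_1\Lambda^2)$. This is off by a factor of $\Lambda$ (respectively $\Lambda^2$). The first coordinate of the gradient is $(\nabla f_{t,j})_1=-f_{t,j}(\bw_{s-1})/\bw_{s-1,1}$, and the $\tau_t$-stop only caps $|f_{t,j}|$ by $2\Lambda$, so $\|\nabla f_{t,j}\|_2=O(\Lambda/|\bw_{s-1,1}|)$, not $O(1/|\bw_{s-1,1}|)$; combined with $\|\bz_s\|_2=O(|y_s|)=O(\Lambda|\bw_{s-1,1}|)$ from the $\xi$-stop, the $1/\bw_{s-1,1}$ factors do cancel but you are still left with $O(\Lambda^2)$, which is exactly what the paper shows in \autoref{lem:D diff var}: bounded difference $O(\eta\Lambda^2)$, per-step variance $O(\eta^2\lambda_1\Lambda^4)$, per-step bias $O(\eta^2\lambda_1\Lambda^3)$. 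Plugging the correct figures into your geometric summation gives total variance $O(\eta\lambda_1\Lambda^4/(\lambda_1-\lambda_2))$ and deviation $O(\sqrt{\eta\lambda_1\Lambda^4\log(1/\delta')/(\lambda_1-\lambda_2)})=O(\Lambda)$, not $O(1)$ — the lemma still follows because $\eta$ is defined with a $\Theta(\cdot)$ whose hidden constant can be taken small enough to push this below $\Lambda/2$, but the slack is genuinely of order $\Lambda$, not $\Lambda^2$ as your calculation suggests. You should also carry the $\lambda_1$ factor in the second-order bias (it is $O(\eta^2\lambda_1\Lambda^3)$, not $O(\eta^2\Lambda^3)$), though since $\lambda_1\leq1$ that particular slip is only a loss of tightness.
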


To enable martingale concentration, we have to bound the three moment quantities of the martingale difference of $M_{t,s}\in\Real^{n-1}$ where $M_{t,s,j}$ is the $j^\text{th}$ entry of $\sum_{s'= 1}^{s\wedge\psi\star\xi\wedge\tau_t}H^{\overline{s} - s'}\bA_{t,s'}$ for every $s\in[\overline{s}]$ and $j\in[n-1]$.
By~\autoref{lem: stop difference} and~\autoref{lem: star difference}, the difference can be rewritten as
\[
M_{t,s}-M_{t,s-1}=\mathbf{1}_{\tau_t \geq s\wedge\psi\star\xi, \xi > s\wedge \psi, \psi \geq s}H^{\overline{s} - s}\bA_{t, s}=\mathbf{1}_{\tau_t,\psi \geq s, \xi > s}H^{\overline{s} - s}\bA_{t, s} \, .
\]

Before we bound the bounded differences and the moments for $\mathbf{1}_{\tau_t, \psi \geq s, \xi > s}H^{\overline{s}-s}\bA_{t,s}$, recall that in~\autoref{lem: conditional stopped help} we showed that the conditional expectation of $\bx_{s,i}\bx_{s,j}$ would not change by too much when conditioning on the event $\xi>s$.
Let us start with bounding the three moment quantities of the stopped process of $\bA_{t,s}$ in~\autoref{lem:D diff var} and then extend to that of $H^{\overline{s}-s}\bA_{t,s}$ in~\autoref{lem:D diff var vector}.
\begin{lemma}[Structure of the stopped $\bA_{t,s}$]\label{lem:D diff var}
Let $T\in\N,\eta\in (0, 1), t\in [T]$ and $s\in[t-1]$. Let $\Lambda$ be the parameter specified before and $\xi,\psi,\tau_t$ be the stopping times as chosen before. 
If $\eta = O\left(\frac{1}{\Lambda}\right), T = \Omega(\frac{1}{\eta\lambda_1}),p \leq\delta$ and the following condition holds
\begin{equation}
\forall 1\leq t' \leq t-1,\ \Pr[\xi = t'\, |\, \xi \geq t', \mathcal{C}_{init}^{p,\delta}] \leq \frac{1}{n^2T}
\end{equation}
then the following holds almost surely.
\begin{align*}
\intertext{$\bullet$ (Bounded difference) We have}
&\left\|\mathbf{1}_{\tau_t,\psi \geq s, \xi > s}\bA_{s}^{(t)}\right\|_{\infty}=O(\eta\Lambda^2) \, .
\intertext{$\bullet$ (Conditional expectation) We have}
&\left\|\Exp[\mathbf{1}_{\tau_t,\psi \geq s, \xi > s}\bA_{s}^{(t)}\ |\ \mathcal{F}_{s-1}^{(t)},  \mathcal{C}_{init}^{p,\delta}]\right\|_{\infty}=O(\eta^2\lambda_1\Lambda^3) \, .
\intertext{$\bullet$ (Conditional variance) We have}
&\left\|\Exp[\mathbf{1}_{\tau_t,\psi \geq s, \xi > s}\bA_{s}^{(t)}{\bA_{s}^{(t)}}^\top \ |\ \mathcal{F}_{s-1}^{(t)},  \mathcal{C}_{init}^{p,\delta}]\right\|_{max}=O(\eta^2\lambda_1\Lambda^4) \, .
\end{align*}
where the $\|\cdot\|_{max}$ is the entrywise maximum of a matrix. 
\end{lemma}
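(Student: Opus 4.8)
The plan is to substitute into $A_{s,j}^{(t)}=\eta\nabla f_{t,j}(\bw_{s-1})^\top(\bz_s-\Exp[\bz_s\mid\mathcal F_{s-1}])+\eta^2\bz_s^\top\nabla^2 f_{t,j}(\overline\bw_{s-1})\bz_s$ the explicit derivatives recorded in the proof of \autoref{lem:D linearization} (the gradient of $f_{t,j}$ is $-f_{t,j}(\bw)/\bw_1$ in coordinate $1$ and $\bx_{t,i}/\bw_1$ in coordinates $2,\dots,j$; the Hessian is supported on the first row and column) together with $\bz_s=y_s(\bx_s-y_s\bw_{s-1})$, and then bound every scalar on the event $\{\tau_t\geq s,\psi\geq s,\xi>s\}$. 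The key preliminary is to record what each stopping time buys: on this event $(s-1)\wedge\psi=s-1$ and $(s-1)\wedge\psi\star\xi=s-1$, so all stopped quantities collapse to $\bw_{s-1}$, and then (i) $\|\bw_{s-1}\|_2^2=1\pm O(\eta)$ by \autoref{lem:oja w norm ub}; (ii) $\bw_{s-1,1}^2\geq 1/(2\Lambda')$ by the definition of $\psi$; (iii) $|f_{t,j}(\bw_{s-1})|\leq 2\Lambda$ for all $j$ by the definition of $\tau_t$; (iv) $|f_{s,n}(\bw_{s-1})|\leq 2\Lambda$ by the definition of $\xi$, which with $y_s=\bw_{s-1,1}(\bx_{s,1}+f_{s,n}(\bw_{s-1}))$ and Cauchy--Schwarz gives $|y_s|=O(1)$ and the normalized bounds $|\bz_{s,1}/\bw_{s-1,1}|=|(\bx_{s,1}+f_{s,n})\bx_{s,1}-y_s^2|=O(\Lambda)$ and $|\sum_{i=2}^j\bx_{t,i}\bz_{s,i}|/\bw_{s-1,1}=O(\Lambda)$; and (v) since $\eta=O(1/\Lambda)$, the Taylor point $\overline\bw_{s-1}=\bw_{s-1}+c\eta\bz_s$ has $\overline\bw_{s-1,1}^2=\Theta(\bw_{s-1,1}^2)$, so the same estimates hold at $\overline\bw_{s-1}$ up to constants. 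Feeding (i)--(v) into the explicit formula proves the bounded-difference bound directly: the linear term is $\eta\cdot O(\Lambda)\cdot O(\Lambda)=O(\eta\Lambda^2)$ and the Hessian term is $\eta^2\cdot O(\Lambda)\cdot O(\Lambda^2)=O(\eta^2\Lambda^3)=O(\eta\Lambda^2)$ because $\eta\Lambda=O(1)$.

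For the conditional expectation I split $A_{s,j}^{(t)}$ into its linear and quadratic parts. For the linear part the factors $\mathbf 1_{\tau_t\geq s}$, $\mathbf 1_{\psi\geq s}$ and $\nabla f_{t,j}(\bw_{s-1})$ are $\mathcal F^{(t)}_{s-1}$-measurable and can be pulled out; the only subtlety is that $\mathbf 1_{\xi>s}$ is \emph{not} $\mathcal F^{(t)}_{s-1}$-measurable, so what is left is $\Pr[\xi>s\mid\mathcal F^{(t)}_{s-1}]\,\nabla f_{t,j}(\bw_{s-1})^\top(\Exp[\bz_s\mid\xi>s,\mathcal F^{(t)}_{s-1}]-\Exp[\bz_s\mid\mathcal F_{s-1}])$, and since $\bz_s$ is a quadratic form in $\bx_s$ with $\mathcal F^{(t)}_{s-1}$-measurable coefficients, \autoref{lem: conditional stopped help} (applied with $\delta'=O(1/(n^2T))$, supplied by the quantitative hypothesis on $\xi$) controls this difference entrywise by $O(1/(n^2T))$; dotting against $\nabla f_{t,j}(\bw_{s-1})$, whose $\ell_1$-norm is $O((\Lambda+\sqrt n)\sqrt{\Lambda'})$, and using $T=\Omega(1/(\eta\lambda_1))$ and $\Lambda\geq 1$, makes this contribution $O(\eta^2\lambda_1\Lambda^3)$. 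For the quadratic part the almost-sure bound is too weak, so instead I retain one explicit factor of $\bx_{s,1}^2$ or $y_s^2$ and use $\Exp[\bx_{s,i}\bx_{s,j}\mid\mathcal F_{s-1}]=\lambda_i\mathbf 1_{i=j}$ together with $\Exp[y_s^2\mid\mathcal F_{s-1}]=\bw_{s-1}^\top\diag(\lambda)\bw_{s-1}=O(\lambda_1)$; for example $\Exp[\mathbf 1_{\xi>s}\bz_{s,1}^2/\bw_{s-1,1}^2\mid\mathcal F^{(t)}_{s-1},\mathcal C^{p,\delta}_{init}]=\Exp[\mathbf 1_{\xi>s}((\bx_{s,1}+f_{s,n})\bx_{s,1}-y_s^2)^2\mid\cdots]=O(\lambda_1\Lambda^2)$ after expanding and again absorbing $\mathbf 1_{\xi>s}$ via \autoref{lem: conditional stopped help}, so the Hessian term has conditional expectation $\eta^2\cdot O(\Lambda)\cdot O(\lambda_1\Lambda^2)=O(\eta^2\lambda_1\Lambda^3)$, the cross-term $(\sum_i\bx_{t,i}\bz_{s,i})\bz_{s,1}/\overline\bw_{s-1,1}^2$ being handled identically after a Cauchy--Schwarz. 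The conditional second moment $\|\Exp[\mathbf 1_{\cdots}\bA_s^{(t)}\bA_s^{(t)\top}\mid\cdots]\|_{\max}$ is treated the same way: the dominant contribution is $\eta^2\Exp[\mathbf 1_{\xi>s}(\nabla f_{t,j}(\bw_{s-1})^\top\bz_s')(\nabla f_{t,j'}(\bw_{s-1})^\top\bz_s')\mid\cdots]$ with $\bz_s'=\bz_s-\Exp[\bz_s\mid\mathcal F_{s-1}]$, which expands into terms of the type $f_{t,j}^2\cdot\bz_{s,1}^2/\bw_{s-1,1}^2$ whose conditional expectation is $O(\Lambda^2)\cdot O(\lambda_1\Lambda^2)=O(\lambda_1\Lambda^4)$, while the products involving the Hessian term are smaller by a factor $\eta\Lambda=O(1)$; hence $O(\eta^2\lambda_1\Lambda^4)$.

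The main obstacle is the bookkeeping forced by the \emph{non-$\mathcal F^{(t)}_{s-1}$-measurability of $\mathbf 1_{\xi>s}$} (because $\xi$ may stop on account of $\bx_s$): this is precisely why the shifted stopped process is used and why \autoref{lem: conditional stopped help} and the hypothesis $\Pr[\xi=t'\mid\xi\geq t',\mathcal C^{p,\delta}_{init}]\leq 1/(n^2T)$ (an induction hypothesis in the eventual proof of \autoref{thm: D main}) are needed, and one must carefully track at each step which of $\tau_t,\psi,\xi$ is measurable with respect to $\mathcal F^{(t)}_{s-1}$. The secondary difficulty is resisting the almost-sure bounds for the second-order term and the second moment, and instead peeling off one factor of the form $\bx_{s,1}^2$ or $y_s^2$ whose conditional expectation supplies the extra $\lambda_1$ needed to reach the sharper $\lambda_1\Lambda^3$ and $\lambda_1\Lambda^4$ targets; the parameter constraints $\eta=O((\lambda_1-\lambda_2)/(\lambda_1\Lambda^2\log\tfrac1{\delta'}))$, $\eta=O(1/\Lambda)$, $T=\Omega(1/(\eta\lambda_1))$ and $p\leq\delta$ are exactly what is used to make all error and lower-order terms fit under the stated bounds.
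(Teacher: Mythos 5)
Your proposal is correct and follows essentially the same strategy as the paper's proof in Appendix~\ref{sec: annoying diff var}: expand $\bA_{s,j}^{(t)}$ via the explicit gradient and (rank-one-decomposed) Hessian, use the definitions of $\psi,\tau_t,\xi$ (with the shifted-stop trick) to get the almost-sure bounds $|f_{t,j}(\bw_{s-1})|=O(\Lambda)$ and $|y_s|/|\bw_{s-1,1}|=O(\Lambda)$, invoke \autoref{lem: conditional stopped help} together with the hypothesis $\Pr[\xi=t'\mid\xi\geq t',\mathcal C_{init}]\leq 1/(n^2T)$ to absorb the non-$\mathcal F^{(t)}_{s-1}$-measurable indicator $\mathbf 1_{\xi>s}$, and peel off one factor of $y_s^2$ (conditional expectation $O(\lambda_1)$) to obtain the extra $\lambda_1$ in the moment bounds. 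The only cosmetic deviation is that you bound the perturbation-to-expectation term via $\|\nabla f_{t,j}\|_1$ and an entrywise bound where the paper uses $\|\nabla f_{t,j}\|_2$ and Cauchy--Schwarz; both close once $T=\Omega(1/(\eta\lambda_1))$ is invoked. Your observation that $\overline\bw_{s-1,1}^2=\Theta(\bw_{s-1,1}^2)$ when $\eta=O(1/\Lambda)$ correctly handles the fact that the Hessian is evaluated at the Taylor point, matching the paper's coefficient bounds on $\alpha_1,\alpha_2,\alpha_3$.
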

\begin{proof}[Proof of~\autoref{lem:D diff var}]
The proof is basically direct verification using the definition of stopping time and~\autoref{lem: conditional stopped help}. We postpone the proof to~\autoref{sec: annoying diff var}.
\end{proof}

Now note that given $\overline{s}\in [t-1]$ the stopped process $\left\{\sum_{s'= 1}^{s\wedge\psi\star\xi\wedge\tau_t}H^{\overline{s} - s'}\bA_{t,s'}\right\}_{s\in[\overline{s}]}$ is an adapted stochastic process with respect to $\{\mathcal{F}_{s}^{(t)}\}_{s\in[\bar{s}]}$. Furthermore, it has small bounded difference and  moments. Concretely we have the following.

\begin{lemma}[Structure of the stopped $H^{\overline{s}-s}\bA_{t,s}$]\label{lem:D diff var vector}
Let $T\in \N,\eta,\delta\in (0, 1), t\in [T], \bar{s}\in [t-1]$. Let $\Lambda$ be the parameter specified before and $\xi,\tau_t$ be the stopping times as chosen before.
For any $s\in[\bar{s}]$ and $j\in[n-1]$, let $M_{t,s,j}$ be the $j^\text{th}$ entry of $\sum_{s'= 1}^{s\wedge\psi\star\xi\wedge\tau_t}H^{\overline{s} - s'}\bA_{t,s'}$. If $\eta = O\left(\frac{1}{\Lambda}\right), T = \Omega\left(\frac{1}{\eta\lambda_1}\right),p \leq\delta$ and the following condition is true
\begin{equation*}
\forall 1\leq t' \leq t-1,\ \Pr[\xi = t'\, |\, \xi \geq t', \mathcal{C}_{init}^{p,\delta}] \leq \frac{1}{n^2T} \, ,
\end{equation*}
then the following holds.
\begin{align*}
\intertext{$\bullet$ (Bounded difference) For any $j\in[n-1]$, we have}
&\max_{s\in [\bar{s}]}\left|M_{t,s,j}-M_{t,s-1,j}\right|=O(\eta\Lambda^2 )\, \text{almost surely}.
\intertext{$\bullet$ (Conditional expectation) For any $j\in[n-1]$, we have}
&\sum_{s=1}^{\overline{s}}\Exp\left[M_{t,s,j}-M_{t,s-1,j}\ \middle| \ \mathcal{F}_{s-1}^{(t)},  \mathcal{C}_{init}^{p,\delta}\right]=O\left(\frac{\eta\lambda_1\Lambda^3}{\lambda_1-\lambda_2}\right) \, .
\intertext{$\bullet$ (Conditional variance) For any $j\in[n-1]$, we have}
&\sum_{s=1}^{\overline{s}}\Var\left[M_{t,s,j} -  M_{t,s-1,j}\ \middle| \ \mathcal{F}_{s-1}^{(t)},  \mathcal{C}_{init}^{p,\delta}\right]=O\left(\frac{\eta\lambda_1\Lambda^4}{\lambda_1-\lambda_2}\right) \, .
\end{align*}

\end{lemma}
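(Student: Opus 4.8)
The plan is to reduce all three estimates to the entrywise bounds on $\bA_s^{(t)}$ already supplied by \autoref{lem:D diff var}, together with one structural fact about the matrix $H$ of \autoref{cor:D linearization vector}: \emph{under $\eta=O(1/\Lambda)$, $H$ has nonnegative entries and every row of $H$ sums to exactly $1-\eta(\lambda_1-\lambda_2)\in(0,1)$.} Nonnegativity is immediate (the sub-diagonal entries $\eta(\lambda_i-\lambda_{i+1})$ are $\geq0$ since the eigenvalues are non-increasing, and the diagonal entries $1-\eta(\lambda_1-\lambda_j)\geq 1-\eta\lambda_1>0$); and summing row $j$, the sub-diagonal part telescopes to $\eta\sum_{i=2}^{j-1}(\lambda_i-\lambda_{i+1})=\eta(\lambda_2-\lambda_j)$, giving row sum $1-\eta(\lambda_1-\lambda_j)+\eta(\lambda_2-\lambda_j)=1-\eta(\lambda_1-\lambda_2)$. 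Hence $H$ maps the all-ones vector to $(1-\eta(\lambda_1-\lambda_2))$ times itself, so by iteration every row of $H^{k}$ sums to $(1-\eta(\lambda_1-\lambda_2))^{k}$; since $H^{k}\geq0$ entrywise, this says exactly that its $\ell_\infty\!\to\!\ell_\infty$ operator norm is $(1-\eta(\lambda_1-\lambda_2))^{k}$. This geometric contraction is what makes the stopped-noise tails summable, just as the geometric weights in the ODE trick discount the old noise.

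With this in hand I would treat the three quantities in turn. By \autoref{lem: stop difference} and \autoref{lem: star difference} the increment is $M_{t,s}-M_{t,s-1}=\mathbf{1}_{\tau_t,\psi\geq s,\xi>s}\,H^{\bar s-s}\bA_s^{(t)}$, and the scalar indicator commutes with the deterministic matrix $H^{\bar s-s}$. For the bounded difference,
\[
\bigl|M_{t,s,j}-M_{t,s-1,j}\bigr|=\bigl|\bigl(H^{\bar s-s}(\mathbf{1}_{\tau_t,\psi\geq s,\xi>s}\bA_s^{(t)})\bigr)_j\bigr|\leq(1-\eta(\lambda_1-\lambda_2))^{\bar s-s}\bigl\|\mathbf{1}_{\tau_t,\psi\geq s,\xi>s}\bA_s^{(t)}\bigr\|_\infty=O(\eta\Lambda^2),
\]
invoking the first bullet of \autoref{lem:D diff var} and $(1-\eta(\lambda_1-\lambda_2))^{\bar s-s}\leq1$. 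For the conditional expectation I would pull the deterministic $H^{\bar s-s}$ out of the conditional expectation, apply the second bullet of \autoref{lem:D diff var}, and sum the geometric factors $\sum_{s=1}^{\bar s}(1-\eta(\lambda_1-\lambda_2))^{\bar s-s}\leq\frac{1}{\eta(\lambda_1-\lambda_2)}$, obtaining $\sum_{s}\bigl|\Exp[M_{t,s,j}-M_{t,s-1,j}\mid\mathcal{F}_{s-1}^{(t)},\mathcal{C}_{init}^{p,\delta}]\bigr|=O(\eta^2\lambda_1\Lambda^3)\cdot\frac{1}{\eta(\lambda_1-\lambda_2)}=O\!\bigl(\tfrac{\eta\lambda_1\Lambda^3}{\lambda_1-\lambda_2}\bigr)$.

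For the conditional variance I would use $\Var[M_{t,s,j}-M_{t,s-1,j}\mid\cdot]\leq\Exp\bigl[\bigl(H^{\bar s-s}(\mathbf{1}_{\cdots}\bA_s^{(t)})\bigr)_j^2\mid\cdot\bigr]$, expand the square as $\sum_{l,l'}(H^{\bar s-s})_{jl}(H^{\bar s-s})_{jl'}\Exp[\mathbf{1}_{\cdots}(\bA_s^{(t)})_l(\bA_s^{(t)})_{l'}\mid\cdot]$, bound each expectation by $\|\Exp[\mathbf{1}_{\cdots}\bA_s^{(t)}{\bA_s^{(t)}}^\top\mid\cdot]\|_{max}=O(\eta^2\lambda_1\Lambda^4)$ (third bullet of \autoref{lem:D diff var}), and then use the row-sum identity together with nonnegativity to write $\sum_{l,l'}(H^{\bar s-s})_{jl}(H^{\bar s-s})_{jl'}=\bigl(\sum_l(H^{\bar s-s})_{jl}\bigr)^2=(1-\eta(\lambda_1-\lambda_2))^{2(\bar s-s)}$. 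Summing the squared geometric series, $\sum_{s=1}^{\bar s}(1-\eta(\lambda_1-\lambda_2))^{2(\bar s-s)}=O\!\bigl(\tfrac{1}{\eta(\lambda_1-\lambda_2)}\bigr)$, yields the claimed $O\!\bigl(\tfrac{\eta\lambda_1\Lambda^4}{\lambda_1-\lambda_2}\bigr)$.

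Given \autoref{lem:D diff var}, the lemma above is pure matrix-norm bookkeeping; the step that requires genuine care is \autoref{lem:D diff var} itself, which I am taking as a black box. There one has to handle the shifted stopped process and the conditioning on $\{\xi>s\}$: the non-stopped event forces $\|\bff_t(\bw_{s-1})\|_\infty=O(\Lambda)$ (controlling $\nabla f_{t,j}$ and $\nabla^2 f_{t,j}$ entrywise), $\{\psi\geq s\}$ gives $1/n=O(\Lambda^2\bw_{s-1,1}^2)$, $\{\xi>s\}$ gives $|y_s|=O(\Lambda|\bw_{s-1,1}|)$, and \autoref{lem: conditional stopped help} — with the hypothesis $\Pr[\xi=t'\mid\xi\geq t',\mathcal{C}_{init}^{p,\delta}]\leq 1/n^2T$ — is used to absorb the $O(1/n^2T)$ shift of the conditional second moments induced by conditioning on $\{\xi>s\}$. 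That is where I expect the real work to lie; the present lemma then follows mechanically from it.
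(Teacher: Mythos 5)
Your proof is correct, and the reduction to \autoref{lem:D diff var} together with the geometric summation is the same as in the paper; the one point of genuine divergence is how the row-$\ell_1$ bound on $H^{\bar s - s}$ is obtained. The paper explicitly diagonalizes $H = V D V^{-1}$ with $V$ the lower-triangular all-ones matrix and $D = \diag(1-\eta(\lambda_1-\lambda_2),\dots,1-\eta(\lambda_1-\lambda_n))$, writes out the general form of $V D' V^{-1}$, and telescopes each row's $\ell_1$ norm to $d_1'$, getting $\|H^k_{(j)}\|_1 = (1-\eta(\lambda_1-\lambda_2))^k$ provided $d_1'\geq\cdots\geq d_{n-1}'\geq 0$. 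You instead note that $H$ is entrywise nonnegative with every row summing to exactly $1-\eta(\lambda_1-\lambda_2)$, i.e.\ $H\mathbf{1}=(1-\eta(\lambda_1-\lambda_2))\mathbf{1}$; since nonnegativity and the eigenvector property propagate under powers, $H^k\geq 0$ with constant row sum $(1-\eta(\lambda_1-\lambda_2))^k$, which is precisely the row $\ell_1$ (equivalently $\ell_\infty\!\to\!\ell_\infty$) norm. Your route is slightly more elementary — no need to write down $V^{-1}$ or verify the structure of $VD'V^{-1}$ — at the small cost of having to check nonnegativity of $H$, which requires $\eta\lambda_1\leq 1$ (ensured by $\eta=O(1/\Lambda)$, $\Lambda\geq 1$, $\lambda_1\leq 1$); the paper's monotonicity hypothesis $d_1'\geq\cdots\geq 0$ amounts to the same constraint. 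Everything downstream — bounding $\Var$ by the second moment, expanding the quadratic form, applying the $\|\cdot\|_{max}$ bound on $\Exp[\bA_{t,s}\bA_{t,s}^\top]$, and summing the (squared) geometric series — matches the paper line for line, and both arguments produce identical constants.
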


\begin{proof}[Proof of~\autoref{lem:D diff var vector}]
For notational convenience, given a matrix $A$, we will denote its $j^\text{th}$ row as $A_{(j)}$ for the rest of the proof. First, notice that the multiplier matrix $H=VDV^{-1}$ is invertible where
\[
V=\begin{pmatrix}
1&0&0&\cdots&0\\
1&1&0&\cdots&0\\
1&1&1&\cdots&0\\
\vdots&\vdots&\vdots&\ddots&\vdots\\
1&1&1&\cdots&1
\end{pmatrix}\ \ ,\ \ V^{-1}=\begin{pmatrix}
1&0&0&\cdots&0\\
-1&1&0&\cdots&0\\
0&-1&1&\cdots&0\\
\vdots&\vdots&\vdots&\ddots&\vdots\\
0&0&0&\cdots&1
\end{pmatrix}
\]
and $D=\diag(d_1,d_2,\dots,d_{n-1})$ such that $d_i=1-\eta(\lambda_1-\lambda_{i+1})$ for every $i=1,2,\dots,n-1$. To see the above, observe that for any diagonal matrix $D'=\diag(d_1',d_2',\dots,d_{n-1}')$, we have
\[
VD'V^{-1} = \begin{pmatrix}
d_1'&0&0&\cdots&0\\
d_1'-d_2'&d_2'&0&\cdots&0\\
d_1'-d_2'&d_2'-d_3'&d_3'&\cdots&0\\
\vdots&\vdots&\vdots&\ddots&\vdots\\
d_1'-d_2'&d_2'-d_3'&d_3'-d_4'&\cdots&d_{n-1}'
\end{pmatrix} \, .
\]
Note that if $d_1'\geq d_2'\geq\cdots\geq d_{n-1}'\geq 0$, then we have
\begin{equation}\label{eq: l1 norm matrix}
\|\left(VD'V^{-1}\right)_{(i)}\|_1 = d_i' + \sum_{j = 1}^{i-1}d_j' - d_{j+1}' = d_1' \, .
\end{equation}
\begin{itemize}
\item (Bounded difference) Fixed $j\in [n]$. First we have for all $s\in [\bar{s}]$,
\[
|M_{t, s, j} - M_{t, s-1, j}| = |\mathbf{1}_{\tau_t ,\psi\geq s, \xi > s}H^{\overline{s} - s}\bA_{t, s}|   \leq O(\eta\Lambda^2)
\] 
by~\autoref{eq: l1 norm matrix}.

\item (Conditional expectation) Similarly, we have
\begin{align*}
\Exp\left[M_{t,s,j}-M_{t,s-1,j}\ \middle| \ \mathcal{F}_{s-1}^{(t)},  \mathcal{C}_{init}^{p,\delta}\right] &= \Exp\left[\mathbf{1}_{\tau_t ,\psi\geq s, \xi > s}H^{\overline{s} - s}\bA_{t, s}\ \middle| \ \mathcal{F}_{s-1}^{(t)},  \mathcal{C}_{init}^{p,\delta}\right]\\
&\leq \left\|H^{\overline{s} - s}_{(j)}\right\|_1\left\|\Exp\left[\mathbf{1}_{\tau_t ,\psi\geq s, \xi > s}\bA_{t, s} \ \middle| \ \mathcal{F}_{s-1}^{(t)},  \mathcal{C}_{init}^{p,\delta}\right]\right\|_\infty \, .
\intertext{By~\autoref{eq: l1 norm matrix} and~\autoref{lem:D diff var}, we have}
&\leq (1-\eta(\lambda_1 - \lambda_2))^{\overline{s} - s}\cdot O(\eta^2\lambda_1\Lambda^3)   \, .
\end{align*}
So by geometric series, we have
\[
\sum_{s=1}^{\overline{s}}\Exp\left[M_{t,s,j}-M_{t,s-1,j}\ \middle| \ \mathcal{F}_{s-1}^{(t)},  \mathcal{C}_{init}^{p,\delta}\right]=O\left(\frac{\eta\lambda_1\Lambda^3}{\lambda_1-\lambda_2}\right).
\]

\item (Conditional variance) Similarly, we have
\begin{align*}
&\Var\left[M_{t,s,j} -  M_{t,s-1,j}\ \middle| \ \mathcal{F}_{s-1}^{(t)},  \mathcal{C}_{init}^{p,\delta}\right]\\
=&\ \Exp\left[\mathbf{1}_{\tau_t ,\psi\geq s, \xi > s}(H^{\overline{s} - s}\bA_{t, s})^2\ \middle| \ \mathcal{F}_{s-1}^{(t)}, \mathcal{C}_{init}^{p,\delta}\right]\\
=&\ (H^{\overline{s} - s}_{(j)})^\top \Exp\left[\mathbf{1}_{\tau_t ,\psi\geq s, \xi > s}\bA_{t, s}\bA_{t, s}^\top \ \middle| \ \mathcal{F}_{s-1}^{(t)},  \mathcal{C}_{init}^{p,\delta}\right](H^{\overline{s} - s}_{(j)})\\
\leq&\ \left\|H^{\overline{s} - s}_{(j)}\right\|_1\cdot\left\|\Exp\left[\mathbf{1}_{\tau_t ,\psi\geq s, \xi > s}\bA_{t, s}\bA_{t, s}^\top \ \middle| \ \mathcal{F}_{s-1}^{(t)},  \mathcal{C}_{init}^{p,\delta}\right]\right\|_{max}\cdot\left\|H^{\overline{s} - s}_{(j)}\right\|_1 \, .
\intertext{By~\autoref{eq: l1 norm matrix} and~\autoref{lem:D diff var}, we have}
&\leq  (1-\eta(\lambda_1 - \lambda_2))^{2(\overline{s} - s)}\cdot O(\eta^2\lambda_1\Lambda^4) \, .
\end{align*}
So by geometric series, we have
\[
\sum_{s=1}^{\overline{s}}\Var\left[M_{t,s,j} -  M_{t,s-1,j}\ \middle| \ \mathcal{F}_{s-1}^{(t)},  \mathcal{C}_{init}^{p,\delta}\right]=O\left(\frac{\eta\lambda_1\Lambda^4}{\lambda_1-\lambda_2}\right) \, .
\]
\end{itemize}
\end{proof}

As a consequence of~\autoref{lem:D diff var vector}, we are able to prove the concentration for the stopped noise term.

\begin{proof}[Proof of~\autoref{lem:D stopped concentration}]
The proof is based on applying the corollary of Freedman's inequality (see~\autoref{cor:freedman}) on each coordinate using~\autoref{lem:D diff var vector}. We have
\[
\Pr\left[\sum\nolimits_{s=1}^{\bar{s}\wedge\psi\star\xi_{p, \delta}\wedge\tau_t}\left(H^{\overline{s}-s}\bA_{t,s}\right)_i\geq \Lambda_{p, \delta}\ |\  \mathcal{C}_{init}^{p,\delta}\right] < \delta' \, .
\]
by noticing that the deviation term is $O(\sqrt{\frac{\eta\lambda_1\Lambda_{p, \delta}^4\log\frac{1}{\delta'}}{\lambda_1-\lambda_2}}) <\frac{\Lambda_{p, \delta}}{2}$ and the sum of conditional expectation term is $O(\frac{\eta\lambda_1\Lambda_{p, \delta}^3}{\lambda_1-\lambda_2}) <\frac{\Lambda_{p, \delta}}{2}$. Now we obtain the desired inequality by union bounding over $i\in [n-1]$.
\end{proof}

\subsection{Wrap up}\label{sec:D wrap up}
First fix $\delta, \delta'$ in the~\autoref{lem:D stopped concentration} as $\frac{\delta}{4n^2T}, \frac{\delta}{4n^3T^2}$ respectively. The following lemma proves the inductive step toward the main theorem.

\begin{lemma}[Inductive step]\label{lem: D main}
Let $T\in \N_{\geq 0}, p,\delta\in(0, 1)$ be the parameters and $\xi$ be the stopping times as chosen before. Let $\eta = \Theta\left(\frac{(\lambda_1 - \lambda_2)}{\lambda_1\Lambda_{p,\delta/4n^2T}^2\log\frac{nT}{\delta}}\right)$. If $T = \Omega(\frac{1}{\eta\lambda_1})$, $t\in[T]$, $p\leq \delta$ and for every $t'\in[t-1]$, 
\[
\Pr\left[\xi = t'\, \middle|\, \mathcal{C}_{init}^{p,\frac{\delta}{4n^2T}}\right] < \frac{\delta}{2n^2T} \, ,
\]
then
\[
\Pr\left[\xi = t\, \middle|\, \mathcal{C}_{init}^{p,\frac{\delta}{4n^2T}}\right] < \frac{\delta}{2n^2T} \, .
\]
\end{lemma}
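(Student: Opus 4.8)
The plan is to show that $\xi$ can stop exactly at time $t$ only if one of the noise vectors produced by the ODE trick for $\bff_t$ is abnormally large, or the random initialization was bad, and to bound the probability of each. First I would record the containment $\{\xi=t\}\subseteq\{\tau_t\le t-1\}$: on $\{\xi=t\}$ we have $\xi\ge t$, so for every $s\le t-1$ the shifted stop at $\xi$ is vacuous and $\bw_{(s\wedge\psi)\star\xi}=\bw_{s\wedge\psi}$; hence if additionally $\tau_t\ge t$, then taking $s=t-1$ in the definition of $\tau_t$ (\autoref{def:stopping time yt}) gives $\|\bff_t(\bw_{(t-1)\wedge\psi})\|_\infty\le 2\Lambda$, so $|f_{t,n}(\bw_{(t-1)\wedge\psi})|\le 2\Lambda$, contradicting $\xi=t$. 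Therefore $\Pr[\xi=t\mid\mathcal{C}_{init}^{p,\delta/4n^2T}]\le\Pr[\tau_t\le t-1\mid\mathcal{C}_{init}^{p,\delta/4n^2T}]$, and it remains to bound the right-hand side.

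Next I would convert the inductive hypothesis into the form required by \autoref{lem:D stopped concentration}. Since $\Pr[\xi=t'\mid\mathcal{C}_{init}^{p,\delta/4n^2T}]<\delta/2n^2T$ for all $t'\in[t-1]$ and $t-1\le T$, a union bound gives $\Pr[\xi\ge t'\mid\mathcal{C}_{init}^{p,\delta/4n^2T}]\ge 1-\delta/2n^2\ge 1/2$, whence $\Pr[\xi=t'\mid\xi\ge t',\mathcal{C}_{init}^{p,\delta/4n^2T}]\le 2\Pr[\xi=t'\mid\mathcal{C}_{init}^{p,\delta/4n^2T}]<\delta/n^2T\le 1/n^2T$. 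This is exactly the chain condition feeding \autoref{lem:D stopped concentration} (and, upstream, the moment bounds \autoref{lem:D diff var} and \autoref{lem:D diff var vector} through \autoref{lem: conditional stopped help}). Applying \autoref{lem:D stopped concentration} with $\delta'=\delta/4n^3T^2$ then yields, for each fixed $\bar s\in[t-1]$, that every coordinate of $\sum_{s=1}^{\bar s\wedge\psi\star\xi\wedge\tau_t}H^{\bar s-s}\bA_{t,s}$ has absolute value below $\Lambda$ except with conditional probability $O(n\delta')$.

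Then comes the pull-out step. I would show $\{\tau_t\le t-1\}\subseteq\mathcal{C}_0^{p,\delta/4n^2T}\cup E$, where $E$ is the event that some coordinate of that noise vector is $\ge\Lambda$ in absolute value for some $\bar s\le t-1$. On the complement, apply the vector ODE trick (\autoref{cor:D ODE}, with the increments of the stopped-and-shifted process identified through \autoref{lem: stop difference} and \autoref{lem: star difference}) to write $\bff_t(\bw_{\bar s\wedge\psi\star\xi\wedge\tau_t})$ as a drift $H^{(\cdot)}\bff_t(\bw_0)$ plus the above noise; on $\neg\mathcal{C}_0^{p,\delta/4n^2T}$ we have $\|\bff_t(\bw_0)\|_\infty\le\Lambda$, and every row of every power $H^m$ has $\ell_1$-norm $(1-\eta(\lambda_1-\lambda_2))^m\le 1$ (\autoref{eq: l1 norm matrix} applied to $H^m=VD^mV^{-1}$), so the drift has $\ell_\infty$-norm $\le\Lambda$ and the total is $<2\Lambda$ for every $\bar s\le t-1$. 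If $\tau_t=\bar s_0\le t-1$, the stop at $\tau_t$ is vacuous at index $\bar s_0$ (since $\bar s_0\wedge\psi\star\xi\le\bar s_0=\tau_t$), so $\bff_t(\bw_{\bar s_0\wedge\psi\star\xi})$ has $\ell_\infty$-norm $<2\Lambda$, contradicting the definition of $\tau_t$; hence $\tau_t\ge t$ on $\neg\mathcal{C}_0^{p,\delta/4n^2T}\cap E^c$. Collecting probabilities, $\Pr[\mathcal{C}_0^{p,\delta/4n^2T}\mid\mathcal{C}_{init}^{p,\delta/4n^2T}]<\delta/4n^2T$ by \autoref{lem:D init}, and a union bound of the per-$\bar s$ estimate over the at most $T$ values of $\bar s$ (after tuning the hidden constant in $\delta'$) bounds $\Pr[E\mid\mathcal{C}_{init}^{p,\delta/4n^2T}]$ by $\delta/4n^2T$, so $\Pr[\xi=t\mid\mathcal{C}_{init}^{p,\delta/4n^2T}]<\delta/2n^2T$, which closes the induction.

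The main obstacle is the pull-out step. One must keep precise track of the three interacting stopping operations — the ordinary stop at $\psi$, the \emph{shifted} stop at $\xi$ (which is what makes the $s$-th increment control $y_s$ rather than $y_{s-1}$, cf.\ the discussion after \autoref{def:stopping time yt}), and the ordinary stop at $\tau_t$ — while applying the ODE trick, and ensure that after the extra conditioning on $\{\xi>s\}$ (absorbed via \autoref{lem: conditional stopped help}) the relevant increment is still adapted so that \autoref{lem:D stopped concentration} genuinely applies to $\bff_t(\bw_{\bar s\wedge\psi\star\xi\wedge\tau_t})$. A secondary nuisance is verifying that the failure-probability budget (initialization event $\delta/4n^2T$, Freedman parameter $\delta'=\delta/4n^3T^2$, the two signs, and the union over $\bar s\le t-1$) indeed closes the induction at exactly the target $\delta/2n^2T$.
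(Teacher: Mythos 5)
Your proposal is correct and rests on the same ingredients as the paper's proof (\autoref{lem:D init}, \autoref{lem:D stopped concentration}, \autoref{cor:D ODE}, and the chain-verification that feeds the conditional-stopping hypothesis), but it organizes the pull-out of $\tau_t$ differently. The paper introduces the \emph{unstopped} noise events $\mathcal{A}_{s_0}$, decomposes $\mathcal{A}_{t-1}\cup\mathcal{C}_0$ through the chain $\bigcup_s(\mathcal{A}_s\cap\neg\mathcal{A}_{s-1})$, and then uses \autoref{eq: D stopping time} to replace each $\mathcal{A}_s$ by its stopped version $\mathcal{A}_s^{\tau_t}$ so \autoref{lem:D stopped concentration} applies. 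You skip this and directly define $E=\bigcup_{\bar s\le t-1}\mathcal{A}_{\bar s}^{\tau_t}$, bound $\Pr[E\mid\mathcal{C}_{init}]$ by a union bound over $\bar s$ of the per-$\bar s$ stopped concentration, and prove the single inclusion $\{\tau_t\le t-1\}\subseteq\mathcal{C}_0\cup E$. The two are logically equivalent and give the same budget ($T\cdot n\delta'+\Pr[\mathcal{C}_0]$); your version is arguably cleaner in that the unstopped events never need to be mentioned.

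One point you gloss over (as does the paper) is the passage from the fixed-power noise that $\neg E$ controls, $\sum_s H^{\bar s_0 - s}\bA_{t,s}$, to the ODE-trick expansion of $\bff_t(\bw_{\bar s_0\wedge\psi\star\xi})$, whose noise carries the random power $H^{(\bar s_0\wedge\psi\star\xi)-s}$. These agree only because $\tau_t=\bar s_0\le t-1$ finite forces $\bar s_0\le\psi$ and $\bar s_0<\xi$, so $\bar s_0\wedge\psi\star\xi=\bar s_0$; this holds since the process $\bff_t(\bw_{s\wedge\psi\star\xi})$ freezes by time $\min(\psi,\xi-1)$ and a crossing of $2\Lambda$ cannot first happen in the frozen regime. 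You should state this explicitly rather than only noting ``$\bar s_0\wedge\psi\star\xi\le\bar s_0=\tau_t$'' — that containment alone identifies the summation range but not the power of $H$. This is a presentation gap, not a flaw; the argument is sound.
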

\begin{proof}[Proof of~\autoref{lem: D main}]
Fix $T\in\N$ such that $T=\Omega(\frac{1}{\eta\lambda_1})$ and $t\in[T]$, we have
\begin{align*}
\Pr\left[\xi = t\, \middle|\, \mathcal{C}_{init}^{p,\frac{\delta}{4n^2T}}\right]&= \Pr\left[|f_{t, n}(\bw_{(t-1)\wedge\psi\star\xi})| > 2\Lambda, \xi\geq t\, \middle|\, \mathcal{C}_{init}^{p,\frac{\delta}{4n^2T}}\right]\\
&\leq \Pr\left[\tau_t < t \, \middle|\,  \mathcal{C}_{init}^{p,\frac{\delta}{4n^2T}}\right]
\end{align*}
where the last inequality is due to~\autoref{def:stopping time ft}. Intuitively, when the noise term of the linearization for $\bff_t$ is small, then we expect $\tau_t$ would not stop by the ODE trick. Formally, denote the event where the noise term is large as $\mathcal{A}_{s_0} = \lbrace \exists i\in [n], \sum_{s=1}^{s_0\wedge\psi\star\xi}\left(H^{s_0-s}\bA_{t,s}\right)_i\geq \Lambda\rbrace$ and denote its stopped version as $\mathcal{A}_{s_0}^{\tau_t} = \lbrace \exists i\in [n], \sum_{s=1}^{s_0\wedge\psi\star\xi\wedge\tau_t}\left(H^{s_0-s}\bA_{t,s}\right)_i\geq \Lambda\rbrace$. Recall from~\autoref{lem:D init} that $\mathcal{C}_0^{p,\frac{\delta}{4n^2T}}$ is the event
\[
\left\lbrace\exists j\in[n],\ t\in[T],\ \left|f_{t,j}(\bw_0)\right|> \Lambda_{p, \delta}\right\rbrace \, .
\]
Now, we partition the probability space and rewrite the error probability as follows.
\begin{align*}
\Pr\left[\tau_t < t \ \middle| \ \mathcal{C}^{p,\frac{\delta}{4n^2T}}_{init}\right] & \leq \Pr\left[\tau_t < t,  \neg\mathcal{C}_0^{p,\frac{\delta}{4n^2T}},\neg\mathcal{A}_{t-1} \ \middle| \ \mathcal{C}^{p,\frac{\delta}{4n^2T}}_{init}\right] + \Pr\left[\mathcal{A}_{t-1}\cup\mathcal{C}_0^{p,\frac{\delta}{4n^2T}}\ \middle| \ \mathcal{C}^{p,\frac{\delta}{4n^2T}}_{init} \right]\\
&=\circled{A}+\circled{B} \, .
\end{align*}
Notice that the concentration analysis in the previous subsection only works for the \textit{stopped version}, \textit{i.e.,} $\mathcal{A}_{s_0}^{\tau_t}$. To go from $\mathcal{A}_{s_0}$ to $\mathcal{A}_{s_0}^{\tau_t}$, observe that the events $\neg\mathcal{C}_0^{p,\frac{\delta}{4n^2T}}$ and $\neg\mathcal{A}_{s_0}$ imply the noise term to be small and thus the ODE trick (see~\autoref{cor:D ODE}) gives the following useful equality for every $1\leq s_0\leq t-1$
\begin{equation}\label{eq: D stopping time}
\Pr\left[ \tau_t \geq s_0 + 1 \ \middle| \ \neg{\mathcal{C}}_0^{p,\frac{\delta}{4n^2T}},\neg \mathcal{A}_{s_0}\right] = 1 \, .
\end{equation}

From~\autoref{eq: D stopping time}, we immediately have $\circled{A}=0$. As for \circled{B}, by the chain rule of expectation, we have
\begin{align*}
\circled{B} &= \sum_{s = 1}^{t-1}\Pr\left[\mathcal{A}_{s}, \neg\mathcal{A}_{s-1}, \neg\mathcal{C}^{p,\frac{\delta}{4n^2T}}_0 \ \middle| \  \mathcal{C}^{p,\frac{\delta}{4n^2T}}_{init}  \right] + \Pr\left[\mathcal{C}^{p,\frac{\delta}{4n^2T}}_0 \ \middle| \  \mathcal{C}^{p,\frac{\delta}{4n^2T}}_{init}  \right] \, .
\end{align*}
Notice that the last term of the above equation can be upper bounded by $\delta/(4n^2T)$ due to~\autoref{lem:D init}. As for each term in the summation, by~\autoref{eq: D stopping time}, we have
\[
\Pr\left[\mathcal{A}_{s}, \neg\mathcal{A}_{s-1}, \neg\mathcal{C}^{p,\frac{\delta}{4n^2T}}_0 \ \middle| \  \mathcal{C}^{p,\frac{\delta}{4n^2T}}_{init}  \right]=\Pr\left[\mathcal{A}_{s}^{\tau_t}, \neg\mathcal{A}_{s-1}, \neg\mathcal{C}^{p,\frac{\delta}{4n^2T}}_0 \ \middle| \  \mathcal{C}^{p,\frac{\delta}{4n^2T}}_{init}  \right]\leq\Pr\left[\mathcal{A}_{s}^{\tau_t}\ \middle| \  \mathcal{C}^{p,\frac{\delta}{4n^2T}}_{init}  \right] \, .
\]
Finally, to invoke the concentration we proved in the previous subsection (\textit{i.e.}~\autoref{lem:D stopped concentration}), we have to verify the condition as follows. For every $1\leq t'\leq t-1$,
\[
\Pr[\xi = t'\, |\, \xi \geq t', \mathcal{C}_{init}^{p,\frac{\delta}{4n^2T}}] \leq \frac{\Pr[\xi = t'\, |\, \mathcal{C}_{init}^{p,\frac{\delta}{2nT}}]}{1-\Pr[\xi < t'\, |\, \mathcal{C}_{init}^{p,\frac{\delta}{4n^2T}}]} < \frac{\frac{\delta}{4n^2T}}{1- \frac{(t-1)\delta}{2nT}} \leq \frac{1}{n^2T}
\]
where the inequalities hold due to the induction hypothesis in the lemma statement. As a result,~\autoref{lem:D stopped concentration} gives $\Pr[\mathcal{A}_{s}^{\tau_t}\, | \,  \mathcal{C}^{p,\frac{\delta}{4n^2T}}_{init}  ]<(t-1)n\delta/(4n^3T^2)$. This gives us
\[
\circled{B}\leq\sum_{s=1}^{t-1}\frac{(t-1)n\delta}{4n^3T^2}+\frac{\delta}{4n^2T}<\frac{\delta}{2n^2T} \, .
\]
Recall that~\autoref{eq: D stopping time} implies $\circled{A}=0$, we conclude that
\[
\Pr\left[\xi = t\, \middle|\, \mathcal{C}_{init}^{p,\frac{\delta}{4n^2T}}\right]\leq\circled{A}+\circled{B}<\frac{\delta}{2n^2T}
\]
as desired.

\end{proof}

Now the main theorem can be derived as a corollary.
\stoppingmain*
\begin{proof}[Proof of~\autoref{thm: D main}]
The proof proceed by induction. For the base case, we have
\[
\Pr[\xi = 1\, |\, \mathcal{C}_{init}^{p,\frac{\delta}{4n^2T}}] = \Pr\left[ |f_{1,j}(\bw_0)|> 2\Lambda \ \middle|\  \mathcal{C}_{init}^{p,\frac{\delta}{4n^2T}}\right] < \frac{\delta}{2n^2T} \, .
\]
The induction step is exactly~\autoref{lem: D main} and this gives us the first conclusion. The second conclusion can be obtained from union bounding over $T$.
\end{proof}


\section{Conclusions and Future Directions}

In this work, our contributions are two-fold. First, we give the first convergence rate analysis for the biological Oja's rule in solving streaming PCA. In particular, the rate we show is nearly optimal. This provides theoretical evidences for a biologically-plausible streaming PCA mechanism to converge in a biologically-realistic time scale in the retina-optical nerve pathway. Second, we introduce a novel one-shot framework to analyze stochastic dynamics. The framework is simple and captures the intrinsic behaviors of the dynamics. Finally, as a byproduct, the convergence rate we get for the biological Oja's rule outperforms the state-of-the-art upper bound for streaming PCA (using ML Oja's rule).

In this section, we discuss the biological and algorithmic significance of our results and point out potential future directions.

\subsection{Biological aspects}
\paragraph{Spiking Oja's Rule.} In this paper, we simplify the biological dynamic using a rate-based model. It would be interesting to design a spiking version of the learning rule to solve streaming PCA. On the other hand, it has been shown that Spike Timing Dependent Plasticity (STDP) has self-normalizing behaviors \cite{Abbott2000}, so the higher-order terms in biological Oja's rule might not be needed for the normalization in the spiking version.

\paragraph{Convergence rate analysis for other biological-plausible learning rules.}
As mentioned in~\autoref{sec:related works}, there are plenty of Hebbian-type learning rules that had been proposed to solve some computational problems~\cite{S77,BCM82,Sanger89,XOS92,OBL00,Aparin12,PHC15}. Nevertheless, most of them do not have an efficiency guarantee and we think it would be of interest to use our frameworks to systematically analyze the convergence rates of these update rules. This is not only a natural theoretical question but also could potentially provide insights on how these biologically-plausible algorithms are different from standard algorithms.
\paragraph{Convergence rate analysis for biologically-plausible learning rules for online $k$-PCA}
In this work, we focus on biological Oja's rule in finding the top eigenvector of the covariance matrix. It is a natural question to ask: whether there is a biologically-plausible algorithm for finding top $k$ eigenvectors (a.k.a. the $k$-PCA problem)? In the setting of ML Oja's rule, this can be achieved by \textit{QR decomposition}~\cite{AL17}. As mentioned in~\autoref{sec:related works}, computational neuroscientists have proposed several variants of biological Oja's rule to solve streaming $k$-PCA~\cite{Oja92,Sanger89,Foldiak1989,Leen1991,Rubner1989,KDT94,PHC15}. Some networks use feedforward connections only but the learning rules are not local~\cite{Oja92,Sanger89} while some use Hebbian learning on the feedforward connection and use anti-Hebbian learning on the recurrent connection to decorrelate the outputs~\cite{Foldiak1989,Leen1991,Rubner1989,KDT94,PHC15}. However, there is no convergence rate analysis for these networks and even the results on the global convergence in the limit are not known for most of these networks. Therefore, it will be interesting to apply our framework to derive a convergence rate analysis for these biologically-plausible learning rules to solve online $k$-PCA.

\subsection{Algorithmic aspects}
\paragraph{Improving the guarantees for biological Oja's rule.} In this paper, we mainly focus on the situation when $\lambda_1 > \lambda_2$ while some of the previous works also considered the gap-free setting. We believe our framework can be easily extended to the gap-free setting and leave it as future work. Also, there are some logarithmic terms (e.q. additive $\log\log\log(1/\epsilon)$ in the local convergence) in the convergence rate and do not seem to be inherent. It would be interesting to find out the optimal logarithmic dependency.

On the other hand, we suspect the $\log(1/\epsilon)$ term in the convergence rate of biological Oja's rule might be necessary. Thus, showing a lower bound with $\log(1/\epsilon)$ would be of great interest. Note that there exists (non-streaming) algorithm which solves PCA using only $O\left(\lambda_1\epsilon^{-1}\gap^{-2}\right)$ samples so the lower bound should be tailored to the dynamic.

\paragraph{Tighter analysis for ML Oja's rule.} 
Using the objective function from \cite{AL17}, one can also easily generalize our framework to ML Oja's rule and tighten the bounds for both the local and global convergence rates. 

\paragraph{Other Stochastic Dynamics.} There are many stochastic optimization problems in machine learning where the optimal analysis still remains elusive, \textit{e.g.,} stochastic gradient dynamics of matrix completion, low-rank approximation, nonnegative matrix factorization, etc. It is of great interest to apply our \textit{one-shot} framework to analyze other important stochastic dynamics.

\subsection*{Acknowledgements}
We thank Nancy Lynch for valuable comments on the presentation of this paper, thank Kai-Min Chung for helpful discussions, and thank Rohit Agrawal, Boaz Barak, Chun-Hsiang Chan, Yanlin Chen, and Santhoshini Velusamy for comments on the draft of this paper. We thank Nancy Lynch again for organizing a brain algorithm reading group at MIT in Spring 2019 and all the participants for the inspiring conversation.

\bibliography{mybib}
\bibliographystyle{alpha}

\appendix
\section{Oja's derivation for the biological Oja's rule}\label{sec:oja derivation}
Recall that Oja wanted to use the following normalized update rule to solve the streaming PCA problem.
\begin{equation}\label{eq:oja normalized app}
\bw_t = \frac{\left(I+\eta_t\bx_t\bx_t^\top\right)\bw_{t-1}}{\|\left(I+\eta_t\bx_t\bx_t^\top\right)\bw_{t-1}\|_2} \, .
\end{equation}
Oja applied \textit{Taylor's expansion} on the normalization term and truncated the higher-order term of $\eta_t$. Concretely, we have
\begin{align}
\|\left(I+\eta_t\bx_t\bx_t^\top\right)\bw_{t-1}\|_2^{-1} &= \left(\sum_{i=1}^n\left(\bw_{t-1,i}+\eta_ty_t\bx_{t,i}\right)^2\right)^{-1/2}\nonumber\\
&=\left(\sum_{i=1}^n\bw_{t-1,i}^2+2\eta_ty_t\bx_{t,i}\bw_{t-1,i}+O(\eta_t^2)\right)^{-1/2} \, .\nonumber
\intertext{As $\ y_t=\bx_t^\top\bw_{t-1}$ and $\|\bw_{t-1}\|_2$ is expected to be $1$, the equation approximately becomes}
&=\left(1+2\eta_ty^2_t+O(\eta_t^2)\right)^{-1/2}=1-\eta_ty_t^2+O(\eta_t^2) \, .\label{eq:oja taylor app}
\end{align}
Replace the denominator of~\autoref{eq:oja normalized app} with~\autoref{eq:oja taylor app} and truncate the $O(\eta_t^2)$ term, one gets exactly~\autoref{eq:oja}.

\section{Details of the Linearizations in Continuous Oja's Rule}\label{sec:continuous oja linearization}
Recall that the dynamic of the continuous Oja's rule is the following.
\[
\frac{d\bw_t}{dt} = \diag(\lambda)\bw_t-\bw_t^\top\diag(\lambda)\bw_t\bw_t \, .
\]
Before proving the two convergence theorems of continuous Oja's rule using different linearizations, let us first prove the following lemma on some basic properties. 

\begin{lemma}[Properties of continuous Oja's rule]\label{lem:continuous properties}
Let $\bw_0\in\Real^n$ such that $\|\bw_0\|_2=1$ and $\bw_{0,1}>0$.
For any $t\geq0$, we have
\begin{enumerate}
\item $\|\bw_t\|_2=1$,
\item $\frac{d\bw_{t,1}}{dt}\geq(\lambda_1-\lambda_2)\bw_{t,1}(1-\bw_{t,1}^2)$, and
\item $\bw_{t,1}$ is non-decreasing
\end{enumerate}
almost surely.
\end{lemma}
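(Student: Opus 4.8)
The plan is to reduce everything to the deterministic ODE \autoref{eq:continuous Oja 2}, which holds almost surely; then the ``almost surely'' in the statement is inherited from that reduction, and the three claims become assertions about the solution of a plain (non-random) ODE. Write $g(t) := \bw_t^\top\diag(\lambda)\bw_t$ and $h(t):=\lambda_1-g(t)$, both continuous along the trajectory. I would first observe that the right-hand side of \autoref{eq:continuous Oja 2} is locally Lipschitz, so a unique solution exists locally; global existence on $[0,\infty)$ then follows once Claim~1 supplies the a priori bound $\|\bw_t\|_2=1$.

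The first substantive step is positivity of $\bw_{t,1}$, which I need so that later I may multiply inequalities by $\bw_{t,1}$ without flipping them. The first coordinate of \autoref{eq:continuous Oja 2} reads $\frac{d\bw_{t,1}}{dt}=(\lambda_1-g(t))\bw_{t,1}=h(t)\bw_{t,1}$, a scalar linear equation, so $\bw_{t,1}=\bw_{0,1}\exp\!\big(\int_0^t h(s)\,ds\big)$; since $\bw_{0,1}>0$ this gives $\bw_{t,1}>0$ for all $t\ge0$. For Claim~1, set $u(t):=\|\bw_t\|_2^2$ and differentiate: using \autoref{eq:continuous Oja 2}, $\frac{du}{dt}=2\bw_t^\top\frac{d\bw_t}{dt}=2g(t)\big(1-u(t)\big)$. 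This is linear with $u(0)=1$, solved by the constant $u\equiv1$; by uniqueness (or explicitly $u(t)-1=(u(0)-1)e^{-2\int_0^t g}$) we conclude $\|\bw_t\|_2=1$ for all $t$, which in particular rules out finite-time blow-up and hence completes global existence.

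For Claim~2 I would use the unit-norm identity: $\frac{d\bw_{t,1}}{dt}=\bw_{t,1}\big(\lambda_1-\sum_{i=1}^n\lambda_i\bw_{t,i}^2\big)$, and then bound $\sum_{i\ge2}\lambda_i\bw_{t,i}^2\le\lambda_2\sum_{i\ge2}\bw_{t,i}^2=\lambda_2(1-\bw_{t,1}^2)$ using $\sum_i\bw_{t,i}^2=1$; hence $\lambda_1-\sum_i\lambda_i\bw_{t,i}^2\ge\lambda_1(1-\bw_{t,1}^2)-\lambda_2(1-\bw_{t,1}^2)=(\lambda_1-\lambda_2)(1-\bw_{t,1}^2)$, and multiplying by $\bw_{t,1}>0$ gives the claimed differential inequality. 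Claim~3 is then immediate: $\lambda_1>\lambda_2$, $\bw_{t,1}>0$, and $1-\bw_{t,1}^2\ge0$ (from Claim~1) make the lower bound in Claim~2 nonnegative, so $\bw_{t,1}$ is non-decreasing. I do not expect a genuine obstacle here; the only points needing care are the well-posedness of the ODE on all of $[0,\infty)$ and the positivity of $\bw_{t,1}$, both dispatched above, after which the three claims are short computations.
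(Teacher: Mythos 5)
Your proof is correct and follows essentially the same route as the paper's: differentiate $\|\bw_t\|_2^2$ to obtain Claim~1, bound $\sum_{i\geq 2}\lambda_i\bw_{t,i}^2\leq\lambda_2(1-\bw_{t,1}^2)$ using the unit-norm identity to obtain Claim~2, and combine these for Claim~3. In fact your write-up is a bit more careful than the paper's: the paper silently relies on $\bw_{t,1}>0$ both when multiplying the scalar inequality by $\bw_{t,1}$ in Claim~2 and when concluding Claim~3, whereas you establish this explicitly via $\bw_{t,1}=\bw_{0,1}\exp\bigl(\int_0^t h(s)\,ds\bigr)$, and you also address well-posedness and global existence of the ODE, which the paper does not mention.
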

\begin{proof}[Proof of~\autoref{lem:continuous properties}]
In the following, everything holds almost surely so we would not mention this condition every time.
First, consider
\begin{align*}
\frac{d\|\bw_t\|_2^2}{dt} &= 2\bw_t^\top\frac{d\bw_t}{dt} = 2\bw_t^\top\left(\diag(\lambda)\bw_t-\bw_t^\top\diag(\lambda)\bw_t\bw_t\right)\\
&= 2\bw_t^\top\diag(\lambda)\bw_t\cdot\left(1-\|\bw_t\|_2^2\right) \, .
\end{align*}
As $1-\|\bw_0\|_2^2=0$, by induction, we have $\|\bw_t\|_2=1$ for all $t\geq0$.

For the second item of the lemma, we have
\begin{align*}
\frac{d\bw_{t,1}}{dt}&=\left(\lambda_1-\left(\sum_{i\in[n]}\lambda_i\bw_{t,i}^2\right)\right)\bw_{t,1} \geq (\lambda_1 - \lambda_2)\bw_{t, 1}(1 - \bw_{t, 1}^2)\\
&=\lambda_1(\bw_{t,1}-\bw_{t,1}^3) - \sum_{i=2}^n\lambda_i\bw_{t,i}^2\bw_{t,1} \, .
\intertext{From the first item, we have $\sum_{i=2}^n\bw_{t,i}^2=1-\bw_{t,1}^2$. Thus, we have}
&\geq\lambda_1(\bw_{t,1}-\bw_{t,1}^3) - \lambda_2(1-\bw_{t,1}^2)\bw_{t,1}=(\lambda_1-\lambda_2)\bw_{t,1}(1-\bw_{t,1}^2) \, .
\end{align*}

The last item of the lemma is then an immediate corollary of the first two items.
\end{proof}

Now, we restate and prove~\autoref{thm:continuous lin at 0} as follows.
\contlinzero*
\begin{proof}[Proof of~\autoref{thm:continuous lin at 0}]
Observe that for any $t\geq0$ such that $\bw_{t, 1}^2\leq 1 -\epsilon$, by the second item of~\autoref{lem:continuous properties}, we have
\[
\frac{d\bw_{t,1}}{dt}\geq (\lambda_1 - \lambda_2)\bw_{t, 1}(1 - \bw_{t, 1}^2) \geq \epsilon(\lambda_1 - \lambda_2) \bw_{t, 1} \, .
\]
Let $\tau=\frac{10\log(1/\bw_{0,1}^2)}{\epsilon(\lambda_1-\lambda_2)}$ and assume $\bw_{\tau,1}^2\leq1-\epsilon$ for the sake of contradiction. From the above linearization and $\bw_{t,1}$ being non-decreasing (the third item of~\autoref{lem:continuous properties}), we have
\[
\bw_{\tau,1}\geq e^{\epsilon(\lambda_1-\lambda_2)\tau}\cdot\bw_{0,1}>1 \, ,
\]
which is a contradiction to the first item of~\autoref{lem:continuous properties}. Thus, we conclude that for any $t=\Omega\left(\frac{\log(1/\bw_{0,1}^2)}{\epsilon(\lambda_1-\lambda_2)}\right)$, $\bw_{t,1}^2>1-\epsilon$.
\end{proof}

Now, we restate and prove~\autoref{thm:continuous lin at 1} as follows.
\contlinone*
\begin{proof}[Proof of~\autoref{thm:continuous lin at 1}]
Observe that for any $t\geq0$, by the second item of~\autoref{lem:continuous properties}, we have
\begin{align*}
\frac{d(\bw_{t, 1}-1)}{dt} &\geq (\lambda_1 - \lambda_2)\bw_{t, 1}(1 - \bw_{t, 1}^2)\\
&= -(\lambda_1 - \lambda_2)(\bw_{t, 1} - 1)(\bw_{t, 1} + \bw_{t, 1}^2) \, .
\intertext{As $\bw_{t,1}$ is non-decreasing (the third item of~\autoref{lem:continuous properties}) and at most 1, we have}
&\geq -(\lambda_1 - \lambda_2)\bw_{0,1}(\bw_{t, 1} - 1) \, .
\end{align*}
By solvign the linear ODE, we have
\[
\bw_{t, 1}-1\geq(\bw_{0,1}-1)\cdot e^{-(\lambda_1-\lambda_2)\bw_{0,1}t} \, .
\]
Thus, for any $t\geq \Omega\left(\frac{\log(1/\epsilon)}{\bw_{0,1}(\lambda_1 - \lambda_2)}\right)$, we have $\bw_{t, 1}^2 > 1 - \epsilon$ \, . 
\end{proof}

\section{Why the Analysis of ML Oja's Rule Cannot be Applied to Biological Oja's Rule?}\label{sec:compare bio ML}
In this section, we would like to discuss what makes biological Oja's rule much harder to analyze comparing to the previous approaches for ML Oja's rule. We study this problem through the lens of their corresponding continuous dynamics. Observe that, to study ML Oja's rule, it suffices to study the following dynamic
\[
\frac{d\bw_t}{dt} = \diag(\lambda)\bw_t \, .
\]
The dynamic of the objective function $\sum_{i=2}^n\bw_{t,i}^2/\bw_{t,1}^2$ would be
\begin{align*}
\frac{d\frac{\sum_{i=2}^n\bw_{t,i}^2}{\bw_{t,1}^2}}{dt} &= \frac{-2\sum_{i=2}^n\bw_{t,i}^2}{\bw_{t,1}^3}\lambda_1\bw_{t,1} + \sum_{i=2}^n\frac{2\bw_{t,i}}{\bw_{t,1}^2}\lambda_i\bw_{t,i}\\
&\leq -2(\lambda_1 - \lambda_2)\frac{\sum_{i=2}^n\bw_{t,i}^2}{\bw_{t,1}^2} \, .
\end{align*}
Namely, the continuous dynamic is just a linear ODE with \textit{slope} being independent to the value of $\bw_t$. In comparison, the dynamic of the biological Oja's rule is the following.
\[
 \frac{d\bw_{t,1}}{dt}\geq(\lambda_1-\lambda_2)\bw_{t,1}(1-\bw_{t,1}^2)
\]
where you have to use at least two objective functions with different linearizations to get tight analysis. Furthermore, for any linearization, there exist some values of $\bw_t$ that make the improvement extremely small or even vanishing. It is also not obvious to choose which two objective functions to analyze unless you are guided by the continuous dynamics. 

We remark that the discussion here only suggests the difficulty of applying previous techniques of ML Oja's rule to biological Oja's rule. It might still be the case that the two dynamics are coupled but we argue here that even this is the case, previous techniques cannot show this.

\section{Proof of Lemma~\ref{lem:D diff var}}\label{sec: annoying diff var}

\begin{proof}[Proof of~\autoref{lem:D diff var}] 
The proof is basically direct verification using the definition of $\xi, \tau_t, \psi$ and~\autoref{lem: conditional stopped help}. Let's first describe $\nabla f_{t, j}(\bw_{s-1})$ and $\nabla^2 f_{t, j}(\bw_{s-1})$ and give their corresponding bounds. We have 
\[
(\nabla f_{t, j}(\bw_{s-1}))_1 = \frac{- f_{t, j}(\bw_{s-1})}{\bw_{s-1, 1}}, \ \forall 1 < i \leq j, (\nabla f_{t, j}(\bw_{s-1}))_i = \frac{\bx_{t,i}}{\bw_{s-1,1}}
\]
and all other coordinates are zero. In particular, conditioning on $\tau_t, \psi\geq s$, we have
\begin{equation}\label{eq: first derivative bound}
\|\nabla f_{t, j}(\bw_{s-1})\|_2 = O(\frac{\Lambda}{\bw_{s-1, 1}^2}) = O(\sqrt{\Lambda'}\Lambda) \, .
\end{equation}
For $\nabla^2 f_{t, j}(\bw_{s-1})$, we have
\[
(\nabla^2 f_{t, j}(\bw_{s-1}))_{1, 1} = \frac{\sum_{i=2}^j\bx_{t, i}\bar{\bw}_{s-1, i}}{\bar{\bw}_{s-1, 1}^3} \, ,
\]
\[
\forall 1 < i \leq j,\ (\nabla^2 f_{t, j}(\bw_{s-1}))_{1, i} = (\nabla^2 f_{t, j}(\bw_{s-1}))_{i, 1} = -\frac{\bx_{t, i}}{\bar{\bw}_{s-1, 1}^2}
\]
and all other coordinates are zero. In particular, we can rewrite it as linear combination of three rank one matrices
\[
\nabla^2 f_{t, j}(\bw_{s-1}) = \alpha_1\bx_{t}^{(j, 1)}{\bx_{t}^{(j, 1)}}^\top + \alpha_2\bx_{t}^{(j, 0)}{\bx_{t}^{(j, 0)}}^\top +  \alpha_3e_1e_1^\top 
\]
where
\[
\alpha_1 = -\frac{1}{\bar{\bw}_{s-1}^2},\, \alpha_2 = \frac{1}{\bar{\bw}_{s-1}^2},\, \alpha_3 = \frac{\sum_{i=2}^j \bx_{t, i}\bar{\bw}_{s-1, i}}{\bar{\bw}_{s-1}^3} + \frac{1}{\bar{\bw}_{s-1}^2}\, \text{, and}
\]
$e_1$ is the basis vector of first coordinate and $\bx_{t, i}^{(j, a)} = \bx_{t, i}$ if $1 <i\leq j$, $\bx_{t, 1}^{(j, a)} = a$ and it is zero at all other coordinates. Now we would like to bound the coefficient. Notice that since $\eta = O(\frac{1}{\Lambda})$, 
\[
\bar{\bw}_{s-1, i} = \bw_{s-1, i} + c\eta\bz_{s, i} = \bw_{s-1, i} + O(\bw_{s-1, 1}\bx_{s, i} + \eta\bw_{s-1, i}) \, .
\]
In particular, $\bar{\bw}_{s-1, i} = O(\bw_{s-1, i} + \bw_{s-1, 1}\bx_{s, i})$. Now we can bound the coefficient $|\alpha_1| = O(\frac{1}{\bw_{s-1, 1}^2})$, $|\alpha_2| = O(\frac{1}{\bw_{s-1, 1}^2})$ and $|\alpha_3| = O\left(\frac{\Lambda}{\bw_{s-1, 1}^2}\right).$ In particular, given any vector $v$, we have
\begin{align}
|v^\top \nabla^2 f_{t, j}(\bw_{s-1})v| &=\left|\alpha_1v^\top \bx_{t}^{(j, 1)}{\bx_{t}^{(j, 1)}}^\top v + \alpha_2v^\top \bx_{t}^{(j, 0)}{\bx_{t}^{(j, 0)}}^\top v +  \alpha_3v^\top e_1e_1^\top v \right|\nonumber\\
&= \left| \alpha_1{\bx_{t}^{(j, 1)}}^\top vv^\top {\bx_{t}^{(j, 1)}} + \alpha_2{\bx_{t}^{(j, 0)}}^\top vv^\top {\bx_{t}^{(j, 0)}} +  \alpha_3e_1^\top vv^\top e_1 \right| \, .\nonumber
\intertext{By combining the bound $\alpha_i =  O\left(\frac{\Lambda}{\bw_{s-1, 1}^2}\right)$, $\|vv^\top \|_2 = \|v\|^2_2$ and $\|e_1\|_2$, $\|\bx_{t}^{(j, 0)}\|_2$, $\|\bx_{t}^{(j, 1)}\|_2\leq 2$, we have}
&\leq O\left(\frac{\Lambda}{\bw_{s-1,1}^2}\|v\|_2^2\right) \, .\label{eq: second derivative bound}
\end{align}
Now we are ready to analyze the bounds on $\bA_{s, j}^{(t)}$. For notational convenience, denote $\bz_{s}-\Exp[\bz_{s}\ |\ \mathcal{F}_{s-1}]$ as $\bar{\bz}_s$ and separate $\bA_{s, j}^{(t)}$ into two terms where $\bA_{s, j}^{(t, 1)} = \eta \nabla f_{t,j}(\bw_{s-1})^\top \bar{\bz}_s$ and $\bA_{s, j}^{(t, 2)} = \eta^2 \bz_{s}^\top \nabla f_{t,j}^2(\overline{\bw}_{s-1})\cdot\bz_{s}.$ By Cauchy-Schwarz and~\autoref{eq: first derivative bound}, We have
\[
|\bA_{s, j}^{(t, 1)}| \leq \eta\|\nabla f_{t, j}(\bw_{s-1})\|_2\|\bar{\bz}_{s}\|_2 = O\left(\eta\cdot \frac{\Lambda}{\bw_{s-1}}\cdot y_s\right)= O(\eta\Lambda^2) \, .
\]
We also have 
\begin{align*}
|\bA_{s, j}^{(t, 2)}| &= |\eta^2\bz_{s}^\top \nabla^2 f_{t, j}(\bw_{s-1})\bz_{s}| \, .\\
\intertext{By~\autoref{eq: second derivative bound}, we have}
&= O\left(\eta^2 \frac{\Lambda}{\bw_{s-1, 1}^2}\|\bz_{s}\|_2^2\right) \, .
\intertext{Because $ \|\bz_{s}\|_2^2 = O(y_s^2)$, we have}
&= O\left(\eta^2 \frac{\Lambda}{\bw_{s-1, 1}^2}y_s^2\right)\\ 
&= O(\eta^2\Lambda^3) = O(\eta\Lambda^2) \, .
\end{align*}
This gives us $|\mathbf{1}_{\psi, \tau_t \geq s, \xi > s} \bA^{(t)}_{s, j}| = O(\eta\Lambda^2)$. 
For conditional expectation, we have 
\begin{align*}
\left|\Exp[\mathbf{1}_{\psi, \tau_t \geq s, \xi > s}\bA_{s, j}^{(t, 1)}\, |\, \mathcal{F}_{s-1}^{(t)}, \mathcal{C}_{init}^{p,\delta}]\right| & = \left|\Exp[\mathbf{1}_{\psi, \tau_t \geq s, \xi > s}\eta \nabla f_{t,j}(\bw_{s-1})^\top \bar{\bz}_{s}| \mathcal{F}_{s-1}^{(t)}, \mathcal{C}_{init}^{p,\delta}]\right| \, . \\
\intertext{Notice that we have $\Exp[\bz_{s}\, |\, \mathcal{F}_{s-1}^{(t)}, \mathcal{C}_{init}^{p,\delta}] = \Exp[\bx_{s}\bx_s^\top \bw_{s-1} - \bw_{s-1}^\top \bx_{s}\bx_s^\top \bw_{s-1}\bw_{s-1}\, |\, \mathcal{F}_{s-1}^{(t)}, \mathcal{C}_{init}^{p,\delta}]$. By~\autoref{lem: conditional stopped help} applying on $\bx_s\bx_s^\top $ and Cauchy-Schwarz, we have}
&\leq O\left(\eta\|\nabla f_{t,j}(\bw_{s-1})\|_2\frac{1}{nT}\|\bx_s\|_2) + \eta\|\nabla f_{t,j}(\bw_{s-1})\|_2\|\bw_{s-1}\|_2^3\frac{1}{nT}\right) \, .
\intertext{By~\autoref{eq: first derivative bound} and $T = \Omega(\frac{1}{\eta\lambda_1})$, we have}
&\leq O\left(\frac{\eta^2\lambda_1\Lambda^2}{\sqrt{n}}\right) = O(\eta^2\lambda_1\Lambda^3) \, .
\end{align*}
For $\bA_{s, j}^{(t, 2)}$, we have
\begin{align*}
\left|\Exp[\mathbf{1}_{\psi, \tau_t \geq s, \xi > s}\bA_{s}^{(t, 2)}\, |\, \mathcal{F}_{s-1}^{(t)}, \mathcal{C}_{init}^{p,\delta}]\right| & = \left|\Exp[\mathbf{1}_{\psi, \tau_t \geq s, \xi > s}\eta^2 \bz_{s}^\top \nabla^2 f_{t,j}(\bw_{s-1})^\top \bz_{s}| \mathcal{F}_{s-1}^{(t)}, \mathcal{C}_{init}^{p,\delta}]\right| \, . \\
\intertext{Notice we have $\|\Exp[\mathbf{1}_{\psi, \tau_t \geq s, \xi > s}\bz_{s}\bz_{s}^\top\, |\, \mathcal{F}_{s-1}^{(t)}]\|_2 = O(y_s^2\lambda_1)$ by~\autoref{lem: conditional stopped help}. Again by~\autoref{eq: second derivative bound}, we have}
&\leq O\left(\eta^2 y_s^2\lambda_1 \frac{\Lambda}{\bw_{s-1, 1}^2}\right) = O(\eta^2\lambda_1\Lambda^3) \, .
\end{align*}
So we have
\[
\left|\Exp[\mathbf{1}_{\psi, \tau_t \geq s, \xi > s}\bA_{s}^{(t)}\, |\, \mathcal{F}_{s-1}^{(t)}, \mathcal{C}_{init}^{p,\delta}]\right| = O(\eta^2\lambda_1\Lambda^3) \, .
\]
For the last moment bound, fix $2\leq j, j'\leq n$. Expanding the definition, we get
\[
\Exp[\mathbf{1}_{\psi, \tau_t \geq s, \xi > s}\bA_{s, j}^{(t, 1)}\bA_{s, j'}^{(t, 1)} + \bA_{s, j}^{(t, 1)}\bA_{s, j'}^{(t, 2)} +\bA_{s, j}^{(t, 2)}\bA_{s, j'}^{(t, 1)} + \bA_{s, j}^{(t, 2)}\bA_{s, j'}^{(t, 2)}  \, |\, \mathcal{F}_{s-1}, \mathcal{C}_{init}^{p,\delta}] \, .
\]
For the first term, we have
\begin{align*}
|\Exp[\mathbf{1}_{\psi, \tau_t \geq s, \xi > s}\bA_{s, j}^{(t, 1)}\bA_{s, j'}^{(t, 1)} \, |\, \mathcal{F}_{s-1}, \mathcal{C}_{init}^{p,\delta}]| &= |\Exp[\mathbf{1}_{\psi, \tau_t \geq s, \xi > s}\eta^2\nabla f_{t, j}(\bw_{s-1})^\top \bar{\bz}_{s}\bar{\bz}_{s}^\top \nabla f_{t, j'}(\bw_{s-1}) \, |\, \mathcal{F}_{s-1}, \mathcal{C}_{init}^{p,\delta}]| \, .
\intertext{Notice we have $\|\Exp[\mathbf{1}_{\psi, \tau_t \geq s, \xi > s}\bar{\bz}_{s}\bar{\bz}_{s}^\top\, |\, \mathcal{F}_{s-1}^{(t)}, \mathcal{C}_{init}^{p,\delta}]\|_2 = O(y_s^2\lambda_1)$ by~\autoref{lem: conditional stopped help}. We have}
&\leq O(\eta^2\|\nabla f_{t, j}(\bw_{s-1})\|_2y_s^2\lambda_1\|\nabla f_{t, j'}(\bw_{s-1})\|_2) \, .
\intertext{Since we know $\|\nabla f_{t, j}(\bw_{s-1})\|_2 = O\left(\frac{\Lambda}{\bw_{s-1,1}^2}\right)$, we have}
&= O(\eta^2\lambda_1\Lambda^4) \, .
\end{align*}
For the second and third term, since they are symmetric, we will only deal with the second term. We have
\begin{align*}
|\Exp[\mathbf{1}_{\psi, \tau_t \geq s, \xi > s}\bA_{s, j}^{(t, 1)}\bA_{s, j'}^{(t, 2)} \, |\, \mathcal{F}_{s-1}, \mathcal{C}_{init}^{p,\delta}]| &= |\Exp[\mathbf{1}_{\psi, \tau_t \geq s, \xi > s}\eta^3\nabla f_{t, j}(\bw_{s-1})^\top \bar{\bz}_{s}\bz_{s}^\top \nabla^2 f_{t, j'}(\bw_{s-1})\bz_{s}^\top \, |\, \mathcal{F}_{s-1}, \mathcal{C}_{init}^{p,\delta}]| \, .
\intertext{By taking the maximum of the $\bA_{s, j}^{(t, 1)}$ and combining with~\autoref{eq: second derivative bound}, we have}
&\leq O(\eta\Lambda^2\cdot \eta^2\lambda_1\Lambda^3)\\
&= O(\eta^3\lambda_1\Lambda^5) = O(\eta^2\lambda_1\Lambda^4) \, .
\end{align*}
For the last term, we can deal with it completely analogously. In particular we have
\[
|\Exp[\mathbf{1}_{\psi, \tau_t \geq s, \xi > s}\bA_{s, j}^{(t, 2)}\bA_{s, j'}^{(t, 2)} \, |\, \mathcal{F}_{s-1}, \mathcal{C}_{init}^{p,\delta}]| \leq O(\eta\Lambda^2\cdot \eta^2\lambda_1\Lambda^3\cdot )= O(\eta^2\lambda_1\Lambda^4) \, .
\]
Combining all the terms, we get
\[
|\Exp[\mathbf{1}_{\psi, \tau_t \geq s, \xi > s}\bA_{s, j}^{(t)}\bA_{s, j'}^{(t)} \, |\, \mathcal{F}_{s-1}, \mathcal{C}_{init}^{p,\delta}]| = O(\eta^2\lambda_1\Lambda^4) \, .
\]
\end{proof}

\end{document}